\documentclass[runningheads]{llncs}
\usepackage{amsmath}%
\usepackage{pifont}%
\usepackage{amssymb}%
\usepackage{stmaryrd}%
\usepackage{array}%
\usepackage{epic}%
\usepackage[fixamsmath]{mathtools}
\usepackage{cancel}

\usepackage{url}
\usepackage{bigstrut}
\usepackage{wrapfig}

\DeclareMathAlphabet\mathbb{U}{msb}{m}{n}
\DeclareMathAlphabet\mathbfcal{OMS}{cmsy}{b}{n}
\newtheorem{fact}{Fact}
\DeclareMathAlphabet{\mathrm}    {OT1}{cmr}{m}{n}
\DeclareMathAlphabet{\mathrmbf}  {OT1}{cmr}{bx}{n}
\DeclareMathAlphabet{\mathrmit}  {OT1}{cmr}{m}{it}
\DeclareMathAlphabet{\mathrmbfit}{OT1}{cmr}{bx}{it}
\DeclareMathAlphabet{\mathsf}    {OT1}{cmss}{m}{n}
\DeclareMathAlphabet{\mathsfbf}  {OT1}{cmss}{bx}{n}
\DeclareMathAlphabet{\mathsfit}  {OT1}{cmss}{m}{sl}
\DeclareMathAlphabet{\mathtt}    {OT1}{cmtt}{m}{n}
\DeclareMathAlphabet{\mathttbf}  {OT1}{cmtt}{bx}{n}
\DeclareMathAlphabet{\mathttit}  {OT1}{cmtt}{m}{it}
\DeclareMathAlphabet{\mathpzc}   {OT1}{pzc}{m}{it}
\newcommand{\keywords}[1]{\par\addvspace\baselineskip\noindent\enspace\ignorespaces{\bfseries Keywords:\,}#1}
\newcommand{\comment}[1]{}
\setcounter{secnumdepth}{3}
\setcounter{tocdepth}{3}




%
%

\includeonly{intro,sch,sch-dom,rel-db,basics,inc-bridge-adj,cap-bridge,hierarchy,db_typ_dom,db-constr,db2log,append}



\makeatletter
\def\dual#1{\expandafter\dual@aux#1\@nil}
\def\dual@aux#1/#2\@nil{\begin{tabular}{@{}c@{}}#1\\#2\end{tabular}}
\makeatother
\begin{document}

\pagestyle{headings}
\title{The {\ttfamily FOLE} Database} 
\titlerunning{The {\ttfamily FOLE} Database}  
\author{Robert E. Kent}
\institute{Ontologos}
\maketitle

\begin{abstract}

This paper continues the discussion 
of the representation 
and interpretation
of ontologies
in the first-order logical environment {\ttfamily FOLE}.
%
%
%
The formalism and semantics of (many-sorted) first-order logic 
can be developed in both
a \emph{classification form}
and
an 
\emph{interpretation form}.
%
%
Two papers,
``The {\ttfamily ERA} of {\ttfamily FOLE}: Foundation''
and 
``The {\ttfamily ERA} of {\ttfamily FOLE}: Superstructure'',
represent 
the 
\emph{classification form},
corresponding to ideas discussed in the Information Flow Framework. 
Two papers, 
``The {\ttfamily FOLE} Table''
and the current paper,
represent 
the \emph{interpretation form},
expanding on material found in the paper 
``Database Semantics''.
%
%
Although 
the classification form 
follows the entity-relationship-attribute data model of Chen,
the interpretation form incorporates the relational data model of Codd.
%
%
%
%
In general,
the {\ttfamily FOLE} representation uses a conceptual structures approach,
that is completely compatible with 
the theory of institutions (Goguen and Burstall),
formal concept analysis (Ganter and Wille),
and information flow (Barwise and Seligman).
%
\keywords{schema, schemed domain, database.}
\end{abstract}

\tableofcontents



\newpage
\section{Introduction}\label{sec:intro}

\subsection{Knowledge Representation}

Many-sorted (multi-sorted) first-order predicate logic 
represents a community's ``universe of discourse'' as 
a heterogeneous collection of objects
by conceptually scaling
the universe according to types.
The \emph{relational model} (Codd~\cite{codd:90}) 
is an approach for the information management of a ``community of discourse''
\footnote{Examples include:
an academic discipline;
a commercial enterprise;
library science;
the legal profession;
etc.}
using the semantics and formalism of (many-sorted) first-order predicate logic. 
%
The relational model was initially discussed in two papers:
``A Relational Model of Data for Large Shared Data Banks''
by Codd \cite{codd:70} 
and
``The Entity-Relationship Model -- Toward a Unified View of Data'' 
by Chen \cite{chen:76}. 
The relational model follows many-sorted logic
by representing data in terms of many-sorted relations, 
subsets of the Cartesian product of multiple domains. 
All data is represented horizontally in terms of tuples, 
which are grouped vertically into relations. 
A database organized in terms of the relational model 
is a called relational database.
The relational model provides a method 
for modeling the data stored in a relational database 
and for defining queries upon it. 
%

\subsection{First Order Logical Environment}\label{subsec:FOLE}

\paragraph{Basics.}

The \emph{first-order logical environment} \texttt{FOLE}
is a category-theoretic representation for 
many-sorted (multi-sorted) first-order predicate logic. 
%
\footnote{Following the original discussion of {\ttfamily FOLE} (Kent~\cite{kent:iccs2013}), 
we use 
the term \emph{mathematical context} for the concept of a category,
the term \emph{passage} for the concept of a functor, and
the term \emph{bridge} for the concept of a natural transformation.
A context represents some ``species of mathematical structure''. 
A passage is a ``natural construction on structures of one species, 
yielding structures of another species'' 
(Goguen \cite{goguen:cm91}).}
%
The relational model can naturally be represented in \texttt{FOLE}.
The {\texttt{FOLE}} approach to logic, 
and hence to databases, 
relies upon two mathematical concepts:
(1) lists and (2) classifications.
Lists represent database signatures and tuples;
classifications represent data-types and logical predicates.
{\texttt{FOLE}} 
represents the header of a database table as a list of sorts, and
represents the body of a database table as a set of tuples 
classified by the header.
The notion of a list is common in category theory.
The notion of a classification is described in two books:
``Information Flow: The Logic of Distributed Systems''
by Barwise and Seligman \cite{barwise:seligman:97} and
''Formal Concept Analysis: Mathematical Foundations''
by Ganter and Wille \cite{ganter:wille:99}.

\paragraph{Philosophy.}

A relational database system
should satisfy the following properties.
It should present the data to the user as relations.
In a {\ttfamily FOLE} database this is accomplished by using {\ttfamily FOLE} tables,
which are discussed in the paper \cite{kent:fole:era:tbl}.
It should provide relational operations to manipulate the data in tabular form.
Relational operations are discussed in the paper \cite{kent:fole:rel:ops}.
%
%
A relational database is a diagram (linked collection) of tables.
%
The information in a database is accessed by specifying queries, 
which use operations such as 
\emph{select} to identify tuples, 
\emph{project} to identify attributes, and 
\emph{join} to combine tables. 
%
In this paper,
projection refers to a primitive generalization-specialization operation between pairs of relational tables
(they are specified by the database schema, project from joined table to components, or other),
whereas join is a composite operation on a linked collection of tables.
Selection is a special case of join,
which uses reference relations (tables).


%


%
%


\paragraph{Architecture.}

%


A series of papers provides a rigorous mathematical basis for {\ttfamily FOLE} 
by defining 
an architectural semantics for the relational data model,
thus providing the foundation for
the formalism and semantics of first-order logical/relational database systems.
This architecture 
consists of two hierarchies of two nodes each:
the classification hierarchy
and
the interpretation hierarchy.

%

%

\begin{itemize}
%
\item
Two papers provide a precise mathematical basis for \texttt{FOLE} classification.
The paper 
``The {\ttfamily ERA} of {\ttfamily FOLE}: Foundation''
\cite{kent:fole:era:found}
develops the notion of a \texttt{FOLE} \underline{\emph{structure}},
following the entity-relationship model of Chen~\cite{chen:76}.
This provided a basis for
the paper 
``The {\ttfamily ERA} of {\ttfamily FOLE}: Superstructure''
\cite{kent:fole:era:supstruc},
which develops the notion of a \texttt{FOLE} \underline{\emph{sound logic}}.
\newline
\item
Two papers provide a precise mathematical basis for \texttt{FOLE} interpretation.
Both of these papers expand on material found in the paper 
``Database Semantics''
\cite{kent:db:sem}.
The paper 
``The {\ttfamily FOLE} Table''
\cite{kent:fole:era:tbl},
develops the notion of a \texttt{FOLE} \underline{\emph{table}}
following the relational model of Codd~\cite{codd:90}.
This provides a basis for the current paper 
``The {\ttfamily FOLE} Database'',
which develops the notion of a \texttt{FOLE} relational \underline{\emph{database}}.
\end{itemize}
%


%
\begin{flushleft}
{{\setlength{\extrarowheight}{1.6pt}
{{
{\begin{tabular}[t]{l@{\hspace{20pt}}l}
{{{\begin{minipage}{230pt}
The architecture of
{\ttfamily FOLE}
is pictured briefly
on the right
and more completely in
Fig.\,1
of the preface of
the paper
\cite{kent:fole:equiv}.
This consists of two hierarchies of two nodes each.
\comment{
\\
\vspace{-16pt}
\begin{description}
\item[{The classification hierarchy}]
on the left
defines
{\ttfamily FOLE} Structures
\cite{kent:fole:era:found}
at the bottom
and {\ttfamily FOLE} Sound Logics
\cite{kent:fole:era:supstruc}
at the top.
%
\item[\emph{The interpretation hierarchy}]
on the right
defines
{\ttfamily FOLE} Tables \cite{kent:fole:era:tbl}
at the bottom
and {\ttfamily FOLE} Databases 
(this paper)
at the top.
%
\end{description}
\vspace{-5pt}
}
The paper
``{\ttfamily FOLE} Equivalence''
\cite{kent:fole:equiv}
proves that
{\ttfamily FOLE} sound logics
are equivalent to
{\ttfamily FOLE} databases.
%
\end{minipage}}}}
&
{{\begin{tabular}{c@{\hspace{5pt}}}
\setlength{\unitlength}{0.36pt}
\begin{picture}(180,120)(-50,-20)
\put(60.5,80.5){\makebox(0,0){\tiny{$\equiv$}}}
%
\put(-60,93){\makebox(0,0){\tiny{\tt{Relational}}}}
\put(-60,73){\makebox(0,0){\tiny{\tt{Calculus}}}}
\put(180,10){\makebox(0,0){\tiny{\sf{Relational}}}}
\put(180,-10){\makebox(0,0){\tiny{\sf{Algebra}}}}
\qbezier(100,87)(60,97)(20,87)
\put(20,87){\vector(-4,-1){0}}
\qbezier(20,75)(60,65)(100,75)
\put(100,75){\vector(4,1){0}}
\put(0.3,80){\makebox(0,0){\huge{$\circ$}}}
\put(120.3,80){\makebox(0,0){\huge{$\bullet$}}}
\put(1,10){\makebox(0,0){\huge{$\circ$}}}
\put(122,10){\makebox(0,0){\huge{$\circ$}}}
%
\put(0,68){\line(0,-1){40}}
\put(120,68){\line(0,-1){40}}
\put(46,-15){\scriptsize{\ttfamily FOLE}}
\put(22,-35){\scriptsize{\textsf{architecture}}}
\end{picture}
\end{tabular}}}
\end{tabular}}
}}}}
\end{flushleft}
In the relational model there are two approaches for database management:
the relational algebra,
which defines an imperative language,
and
the relational calculus, which defines a declarative language.
The paper 
``Relational Operations in \texttt{FOLE}''
\cite{kent:fole:rel:ops}
represents relational algebra
by
expressing the relational operations of database theory in a clear and implementable representation.
The relational calculus
will be represented in \texttt{FOLE} in a future paper.

%
%

\newpage
\subsection{Overview}\label{sub:sec:overview}
%
This paper defines an architectural semantics for the relational data model.
%
\footnote{Older architectures of data include the hierarchical model and network model.
Of these, nothing will be said.
A newer architecture of data, 
called the object-relation-object model, 
is a presentation form for the relational data model described here.}
%
It defines the notion of a relational database in terms of the
the first-order logical environment {\ttfamily FOLE}. 
%
In \S\ref{sub:sec:schema}
we define \texttt{FOLE} Schemas in the context $\mathrmbf{LIST}$.
A schema is a diagram of signatures.
%
In \S\ref{sec:sch:dom}
we define \texttt{FOLE} Schemed Domains.
A schemed domain is a diagram of signed domains.
In \S\ref{sub:sec:sch:dom:gen}
schemed domains are defined in general
in the context $\mathrmbf{DOM}$.
%
In \S\ref{sub:sec:rel:sch:typ:dom}
schemed domains are defined 
with fixed type domain (datatypes).
%
In 
\S\ref{sub:sub:sec:dom:typ:dom:lower}
schemed domains are constrained 
to a particular fixed type domain 
(the lower aspect)
in the context 
$\mathring{\mathrmbf{Dom}}(\mathcal{A})$.
In 
\S\ref{sub:sub:sec:dom:typ:dom:upper}
schemed domains are constrained 
to a fixed type domain morphism 
(the upper aspect)
in the context 
$\mathring{\mathrmbf{Dom}}$.
%
The projective components of schemed domains are defined:
signature diagrams (schemas) and type domain diagrams.
Hence, 
a schemed domain is a diagram of signed domains
consisting of the following components:
a signature diagram
and a type domain diagram
with the same shape
and a common sort diagram.
%
In \S\ref{sec:rel:db}
we define \texttt{FOLE} Databases.
A relational database is a diagram of tables.
%
%
In \S\ref{sub:sec:rel:db:gen}
databases are defined in general
in the context $\mathrmbf{DB}$.
%
In \S\ref{sub:sec:rel:db:typ:dom}
databases are defined 
with fixed type domain (datatypes).
%
In \S\ref{sub:sub:sec:rel:db:typ:dom:lower}
databases are constrained 
to a particular fixed type domain 
(the lower aspect)
in the context $\mathrmbf{Db}(\mathcal{A})$.
In \S\ref{sub:sub:sec:rel:db:typ:dom:upper}
databases are constrained 
to a fixed type domain morphism 
(the upper aspect)
in the context $\mathrmbf{Db}$.
%
Following tables,
the projective components of databases are defined:
\footnote{Tables are based upon lists (signatures) and classifications (type domains),
which are linked through sort sets.}
signed domain diagrams (schemed domains) and key diagrams, linked by tuple bridges.
Hence, 
a relational database is a diagram of tables
consisting of the following components:
a signed domain diagram
and a key set diagram
with the same shape, 
which are 
connected by
a tuple bridge.
%
%
%
The appendix \S\ref{sec:append} 
reviews general theory and database components. 
%
In \S\ref{sub:sec:append:gen:th}
we review general theory: 
Grothendieck construction in \S\ref{append:grothen:construct},
diagram contexts in \S\ref{sub:sub:sec:lax:comma:cxt}, and 
Kan extensions in \S\ref{append:kan:ext}.
In \S\ref{sub:sec:append:db:app}
we review the various database components of {\ttfamily FOLE}: 
contexts in \S\ref{sub:sub:sec:math:context},
passages in \S\ref{sub:sub:sec:pass}, and
bridges in \S\ref{sub:sec:bridges}.
%


%
\begin{table}
\begin{center}
{{\scriptsize{\setlength{\extrarowheight}{1.6pt}
{\begin{tabular}{l@{\hspace{10pt}}l}
\\
{\fbox{
\begin{tabular}[t]{|l@{\hspace{8pt}}l@{\hspace{2pt}:\hspace{5pt}}l|}
\hline

\S\ref{sub:sec:schema}
&
Fig.\,\ref{fig:schema:mor}
&
Schema Morphism: $\mathrmbf{LIST}$
\\
&
Fig.\,\ref{fig:cxt:schema}
&
Schema Context: $\mathrmbf{LIST}$
\\\hline\hline
\S\ref{sub:sec:sch:dom:gen}
&
Fig.\,\ref{fig:sch:dom:mor}
&
Schemed Domain Morphism: $\mathrmbf{DOM}$
\\
&
Fig.\,\ref{fig:cxt:sch:dom}
&
Schemed Domain Context: $\mathrmbf{DOM}$
\\\cline{1-2}
\S\ref{sub:sub:sec:dom:typ:dom:lower}
&
Fig.\,\ref{fig:sch:dom:mor:A}
&
Schemed Domain Morphism: $\mathring{\mathrmbf{Dom}}(\mathcal{A})$
\\\cline{1-1}
\S\ref{sub:sub:sec:dom:typ:dom:upper}
&
Fig.\,\ref{fig:sch:dom:mor:typ:dom}
&
Schemed Domain Morphism: $\mathring{\mathrmbf{Dom}}$
\\\hline\hline
\S\ref{sub:sec:rel:db:gen}
& 
Fig.\,\ref{fig:db:mor:gen}
& 
Database Morphism: $\mathrmbf{DB}$
\\
& 
Fig.\,\ref{fig:fole:db:cxt:var}
& 
Database Context: $\mathrmbf{DB}$
\\
& 
Fig.\,\ref{fig:db:morph:proj}
& 
Database Morphism (proj): $\mathrmbf{DB}$
\\\cline{1-2}
\S\ref{sub:sub:sec:rel:db:typ:dom:lower}
& 
Fig.\,\ref{fig:db:A:mor:adj}
& 
Database Morphism: $\mathrmbf{Db}(\mathcal{A})$
\\
& 
Fig.\,\ref{fig:fole:db:cxt:fbr:var:A}
& 
Database Context: $\mathrmbf{Db}(\mathcal{A})$
\\
& 
Fig.\,\ref{fig:rel:db:mor:var:A}
& 
Database Morphism (proj): $\mathrmbf{Db}(\mathcal{A})$
\\\cline{1-1}
\S\ref{sub:sub:sec:rel:db:typ:dom:upper}
&
Fig.\,\ref{fig:db:mor:Db}
&
Database Morphism: $\mathrmbf{Db}$
\\
&
Fig.\,\ref{fig:fole:db:cxt:typ:dom:var}
&
Database Context: $\mathrmbf{Db}$
\\
&
Fig.\,\ref{fig:rel:db:mor:var:typ:dom}
&
Database Morphism (proj): $\mathrmbf{Db}$
\comment{
\\\hline
\S\ref{sub:sec:comma:cxt:DB}
& 
Fig.\,\ref{fig:comma:cxt:morph}
& 
$\widehat{\mathrmbf{DB}}$\,-\,morphism
\\\cline{1-2}
\S\ref{sub:sec:comma:cxt:DB:A}
& 
Fig.\,\ref{fig:comma:cxt:morph:A}
& 
{$\widehat{\mathrmbf{Db}}(\mathcal{A})$\,-\,morphism}
\\\cline{1-2}
\S\ref{sub:sub:sec:comma:cxt:Db}
& 
Fig.\,\ref{fig:comma:cxt:morph:typ:dom}
& 
{$\widehat{\mathrmbf{Db}}$\,-\,morphism}
}
\\\hline\hline
\comment{
\S\ref{sub:sub:sec:comma:cxt}
& 
Fig.\,\ref{fig:comma:cxt}
& 
Comma Context: 
{{$\bigl(\mathrmbfit{F}{\,\downarrow\,}\mathrmbfit{G}\bigr)$}}
\\\cline{1-2}
}
\S\ref{append:grothen:construct}
& 
Fig.\,\ref{fig:incl:bridge:fbr:cxt}
& 
Inclusion Bridge: Fibered Context
\\\cline{1-2}
\S\ref{sub:sub:sec:math:context}
& 
Fig.\,\ref{fig:diag:comma:cxts}
& 
Diagram/Comma/Fibered Contexts
\\\hline
\end{tabular}
}}
&
{\fbox{
\begin{tabular}[t]{|l@{\hspace{8pt}}l@{\hspace{2pt}:\hspace{5pt}}l|}
\hline
\S\ref{sub:sec:overview}
&
Tbl.\,\ref{tbl:figs:tbls}
&
Figures and Tables
\\\hline\hline
\S\ref{sec:sch:dom}
& 
Tbl.\,\ref{tbl:sch:dom:cxts}
& 
{Schemed Domain Contexts} 
\\\hline\hline
\S\ref{sec:rel:db}
& 
Tbl.\,\ref{tbl:db:cxts}
& 
{Relational Database Contexts}
\\\hline\hline
\S\ref{sub:sub:sec:math:context}
& 
Tbl.\,\ref{tbl:sch:dom:morphs}
& 
Schemed Domain Morphisms
\\
& 
Tbl.\,\ref{tbl:rel:db:morphs}
& 
Relational Database Morphisms
\\\cline{1-1}
\S\ref{sub:sub:sec:pass}
& 
Tbl.\,\ref{defs:proj:pass}
& 
Projection Passages
\\
& 
Tbl.\,\ref{tbl:colim:lim:pass}
& 
Lim (Colim) Passages
\\\cline{1-1}
\S\ref{sub:sec:bridges}
& 
Tbl.\,\ref{bridge:descr}
& 
Bridges
\\
& 
Tbl.\,\ref{adjoints:composites}
& 
Bridge Adjoints and Composites
\\\hline
\end{tabular} 
}}
\end{tabular}}}}}
\end{center}
\caption{Figures and Tables}
\label{tbl:figs:tbls}
\end{table}
%


\newpage
\section{\texttt{FOLE} Schemas: $\mathrmbf{LIST}$}\label{sub:sec:schema}

\comment{
In this section
we discuss the context of schemas 
$\mathrmbf{LIST}=\mathrmbf{List}^{\scriptscriptstyle{\Uparrow}}$ 
(Fig.~\ref{fig:cxt:schema}),
which mirrors the context of signatures (lists) $\mathrmbf{List}$ 
at a higher dimension. 
%
\footnote{The contexts
$\mathrmbf{LIST}=\mathrmbf{List}^{\!\scriptscriptstyle{\Uparrow}}$ (schemas), 
and
$\mathrmbf{SET}=\mathrmbf{Set}^{\!\scriptscriptstyle{\Downarrow}}$
are defined in this section.}
%
}


A schema ${\langle{\mathrmbf{R},\mathrmbfit{S}}\rangle}$
consists of 
a shape context $\mathrmbf{R}$ 
and a diagram of signatures (lists) $\mathrmbf{R}\xrightarrow{\;\mathrmbfit{S}\;}\mathrmbf{List}$.
A schema morphism 
${\langle{\mathrmbf{R}_{2},\mathrmbfit{S}_{2}}\rangle} 
\xrightarrow{{\langle{\mathrmbfit{R},\,\sigma}\rangle}}
{\langle{\mathrmbf{R}_{1},\mathrmbfit{S}_{1}}\rangle}$
consists of 
a shape-changing passage $\mathrmbf{R}_{1}\xrightarrow{\;\mathrmbfit{R}\;\,}\mathrmbf{R}_{1}$
and a bridge $\mathrmbfit{S}_{2}\xRightarrow{\;\sigma\;\,}\mathrmbfit{R}{\,\circ\,}\mathrmbfit{S}_{1}$.
Composition is component-wise.
\begin{figure}
\begin{center}
{{\begin{tabular}{c}
\setlength{\unitlength}{0.56pt}
\begin{picture}(120,80)(8,0)
\put(5,80){\makebox(0,0){\footnotesize{$\mathrmbf{R}_{2}$}}}
\put(125,80){\makebox(0,0){\footnotesize{$\mathrmbf{R}_{1}$}}}
\put(65,0){\makebox(0,0){\footnotesize{$\mathrmbf{List}$}}}
\put(60,92){\makebox(0,0){\scriptsize{$\mathrmbfit{R}$}}}
\put(20,42){\makebox(0,0)[r]{\scriptsize{$\mathrmbfit{S}_{2}$}}}
\put(100,42){\makebox(0,0)[l]{\scriptsize{$\mathrmbfit{S}_{1}$}}}
\put(60,57){\makebox(0,0){\shortstack{\scriptsize{$\sigma$}\\\large{$\Longrightarrow$}}}}
\put(20,80){\vector(1,0){80}}
\put(10,67){\vector(3,-4){38}}
\put(110,68){\vector(-3,-4){38}}
\end{picture}
\end{tabular}}}
\end{center}
\caption{Schema Morphism: $\mathrmbf{LIST}$}
\label{fig:schema:mor}
\end{figure}
\begin{definition}\label{def:lax:com:cxt:schema:var}
The context of schemas is 
$\mathrmbf{LIST}
= \mathrmbf{List}^{\!\scriptscriptstyle{\Uparrow}}
= \bigl(\mathrmbf{Cxt}{\,\Uparrow\,}\mathrmbf{List}\bigr)$,
a diagram context over signatures (lists).
\end{definition}
\begin{proposition}\label{prop:lim:colim:schema}
%
The fibered context (Grothendieck construction) 
$\mathrmbf{LIST} =
\mathrmbf{List}^{\scriptscriptstyle{\Uparrow}} = \int\hat{\mathrmbf{List}}$ is complete and cocomplete 
and the projection 
$\mathrmbf{List}^{\scriptscriptstyle{\Uparrow}}\rightarrow\mathrmbf{Cxt}
:{\langle{\mathrmbf{R},\mathrmbfit{S}}\rangle}\mapsto\mathrmbf{R}$ 
is continuous and cocontinuous.
%
\end{proposition}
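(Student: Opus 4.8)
The plan is to present $\mathrmbf{LIST}=\mathrmbf{List}^{\scriptscriptstyle{\Uparrow}}=\int\hat{\mathrmbf{List}}$ as the total context of a \emph{bifibration} over $\mathrmbf{Cxt}$ and then transfer (co)completeness along it by the standard fibrational argument. First I would identify the data: the pseudo-passage $\hat{\mathrmbf{List}}\colon\mathrmbf{Cxt}^{\mathrm{op}}\to\mathrmbf{Cxt}$ sends a shape context $\mathrmbf{R}$ to the diagram context $[\mathrmbf{R},\mathrmbf{List}]$ and a shape-changing passage $\mathrmbf{R}_{2}\xrightarrow{\mathrmbfit{R}}\mathrmbf{R}_{1}$ to the reindexing (precomposition) passage $\mathrmbfit{R}^{*}=\mathrmbfit{R}{\,\circ\,}({-})\colon[\mathrmbf{R}_{1},\mathrmbf{List}]\to[\mathrmbf{R}_{2},\mathrmbf{List}]$; then $\int\hat{\mathrmbf{List}}$ has exactly the schemas and schema morphisms described before Definition~\ref{def:lax:com:cxt:schema:var}, and the stated passage $P\colon\mathrmbf{List}^{\scriptscriptstyle{\Uparrow}}\to\mathrmbf{Cxt}$, $\langle\mathrmbf{R},\mathrmbfit{S}\rangle\mapsto\mathrmbf{R}$, is the Grothendieck projection. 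I would then verify two lifting facts. First, $P$ is a split fibration: for $\mathrmbfit{R}\colon\mathrmbf{R}_{2}\to\mathrmbf{R}_{1}$ the morphism $\langle\mathrmbfit{R},1\rangle\colon\langle\mathrmbf{R}_{2},\mathrmbfit{R}{\,\circ\,}\mathrmbfit{S}_{1}\rangle\to\langle\mathrmbf{R}_{1},\mathrmbfit{S}_{1}\rangle$ is cartesian, and every schema morphism $\langle\mathrmbfit{R},\sigma\rangle$ factors as the vertical morphism $\langle 1,\sigma\rangle$ followed by this cartesian lift. Second, $P$ is also an opfibration: each $\mathrmbfit{R}^{*}$ has a left adjoint $\mathrmbfit{R}_{!}=\mathrm{Lan}_{\mathrmbfit{R}}$ (left Kan extension along $\mathrmbfit{R}$), which exists because $\mathrmbf{List}$ is cocomplete and the shape contexts are small, and the cocartesian lift of $\mathrmbfit{R}$ at $\langle\mathrmbf{R}_{2},\mathrmbfit{S}_{2}\rangle$ is $\langle\mathrmbfit{R},\eta\rangle\colon\langle\mathrmbf{R}_{2},\mathrmbfit{S}_{2}\rangle\to\langle\mathrmbf{R}_{1},\mathrm{Lan}_{\mathrmbfit{R}}\mathrmbfit{S}_{2}\rangle$ with $\eta\colon\mathrmbfit{S}_{2}\Rightarrow\mathrmbfit{R}{\,\circ\,}\mathrm{Lan}_{\mathrmbfit{R}}\mathrmbfit{S}_{2}$ the unit of $\mathrmbfit{R}_{!}\dashv\mathrmbfit{R}^{*}$, whose defining universal property is exactly cocartesianness. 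Hence $P$ is a bifibration.

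Next I would assemble the standard hypotheses and conclude. The base $\mathrmbf{Cxt}$ is complete and cocomplete; since $\mathrmbf{List}$ is complete and cocomplete (from \cite{kent:fole:era:tbl}), every fiber $[\mathrmbf{R},\mathrmbf{List}]$ is complete and cocomplete with (co)limits formed objectwise. The reindexing passages $\mathrmbfit{R}^{*}=\mathrmbfit{R}{\,\circ\,}({-})$ preserve all limits and all colimits, since objectwise (co)limits are read off at the objects $\mathrmbfit{R}(r)$, and the cobase-change passages $\mathrm{Lan}_{\mathrmbfit{R}}$, being left adjoints, preserve all colimits. Applying the standard result that a fibration with complete base, complete fibers, and limit-preserving reindexing has complete total category with continuous projection --- and dually the opfibration version with cocomplete base, cocomplete fibers, and colimit-preserving cobase-change --- yields that $\mathrmbf{LIST}$ is complete and cocomplete and that $P$ is continuous and cocontinuous. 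Concretely, a limit of a diagram $j\mapsto\langle\mathrmbf{R}_{j},\mathrmbfit{S}_{j}\rangle$ is $\langle\varprojlim_{j}\mathrmbf{R}_{j},\ \varprojlim_{j}(\mathrmbfit{S}_{j}{\,\circ\,}\pi_{j})\rangle$ for the limit projections $\pi_{j}\colon\varprojlim_{j}\mathrmbf{R}_{j}\to\mathrmbf{R}_{j}$, and a colimit is $\langle\varinjlim_{j}\mathrmbf{R}_{j},\ \varinjlim_{j}\mathrm{Lan}_{\iota_{j}}\mathrmbfit{S}_{j}\rangle$ for the colimit coprojections $\iota_{j}\colon\mathrmbf{R}_{j}\to\varinjlim_{j}\mathrmbf{R}_{j}$; the cone/cocone universal properties are then checked through $P$.

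The main obstacle I anticipate is the opfibration half: confirming that cocartesian lifts are genuinely left Kan extensions --- which is where the cocompleteness of $\mathrmbf{List}$ and the smallness of the shape contexts are essential, and where one must keep careful track of the direction of the defining bridge $\sigma\colon\mathrmbfit{S}_{2}\Rightarrow\mathrmbfit{R}{\,\circ\,}\mathrmbfit{S}_{1}$ --- together with checking that $\mathrm{Lan}_{(-)}$ is pseudofunctorial in $\mathrmbfit{R}$ so that the opfibrational machinery applies. The cartesian lifts, the objectwise preservation properties of the $\mathrmbfit{R}^{*}$, and the bookkeeping of cones and cocones through $P$ are routine.
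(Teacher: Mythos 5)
Your proposal is correct and follows essentially the same route as the paper: the paper's proof simply invokes Prop.~\ref{prop:lax:fibered:A} (which exhibits $\mathrmbf{List}^{\scriptscriptstyle{\Uparrow}}$ as the Grothendieck construction of the indexed adjunction $\mathrmbfit{lan}_{\mathrmbfit{K}}\dashv\mathrmbf{List}^{\mathrmbfit{K}}$ over $\mathrmbf{Cxt}$, i.e.\ your bifibration with reindexing by precomposition and cobase-change by left Kan extension) together with Prop.~\ref{prop:lim:colim:A} and Fact~\ref{fact:groth:adj:lim:colim} (the Tarlecki--Burstall--Goguen transfer of (co)completeness along a bifibration with (co)complete base and fibers), which is exactly the argument you spell out in detail.
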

\begin{proof}
Use the lax parts of 
Prop.~\ref{prop:lax:fibered:A}
and
Prop.~\ref{prop:lim:colim:A} 
in \S\,\ref{append:kan:ext},
since
the context of schemas 
$\mathrmbf{LIST}$
is the diagram context 
$\mathrmbf{LIST} = 
\mathrmbf{List}^{\scriptscriptstyle{\Uparrow}} = \int\hat{\mathrmbf{List}}$ 
and
$\mathrmbf{List}$ is complete and cocomplete.
\mbox{}\hfill\rule{5pt}{5pt}
\end{proof}
%


\paragraph{Projections.}

%
%

Projections offer an alternate representation,
defining the two primary components of 
schemas and schema morphisms: 
arity diagrams and sort diagrams.
arity diagrams and sort diagrams
come from composition with list projection passages 
$\mathrmbf{Set}\xleftarrow{\mathrmbfit{arity}}\mathrmbf{List}\xrightarrow{\mathrmbfit{sort}} \mathrmbf{Set}$.
%
%
The context $\mathrmbf{LIST}$ has projection passages
$\mathrmbf{Set}^{\!\scriptscriptstyle{\Uparrow}}
\xleftarrow{\mathring{\mathrmbfit{arity}}}
\mathrmbf{LIST}
\xrightarrow{\mathring{\mathrmbfit{sort}}}
\mathrmbf{Set}^{\!\scriptscriptstyle{\Uparrow}}$.
\begin{itemize}
\item 
The arity domain projection 
$\mathring{\mathrmbfit{arity}} 
= {(\mbox{-})} \circ \mathrmbfit{arity} 
: \mathrmbf{LIST} 
= \mathrmbf{List}^{\!\scriptscriptstyle{\Uparrow}}
\rightarrow \mathrmbf{Set}^{\!\scriptscriptstyle{\Uparrow}}$
\begin{itemize}
\item 
maps
a schema ${\langle{\mathrmbf{R},\mathrmbfit{S}}\rangle}$ 
to the arity diagram
$\mathring{\mathrmbfit{arity}}(\mathrmbf{R},\mathrmbfit{S})
=
{\langle{\mathrmbf{R},\mathrmbfit{S} \circ \mathrmbfit{arity}}\rangle}
$
with the arity passage
$\mathrmbf{R}\xrightarrow[\mathrmbfit{S}{\,\circ\,}\mathrmbfit{arity}]{\,A\;}\mathrmbf{Set}$,
and 
\item 
maps 
a schema morphism 
${\langle{\mathrmbf{R}_{2},\mathrmbfit{S}_{2}}\rangle} 
\xrightarrow{{\langle{\mathrmbfit{R},\,\sigma}\rangle}}
{\langle{\mathrmbf{R}_{1},\mathrmbfit{S}_{1}}\rangle}$
to the 
arity diagram morphism
{\footnotesize{$
\mathring{\mathrmbfit{arity}}(\mathrmbfit{R},\sigma) 
= {\langle{\mathrmbfit{R},\alpha}\rangle}
:
{\langle{\mathrmbf{R}_{2},\mathrmbfit{S}_{2}}\rangle}
\rightarrow
{\langle{\mathrmbf{R}_{1},\mathrmbfit{S}_{1}}\rangle}
$}}
with bridge
$
\mathrmbfit{A}_{2} \xRightarrow
[\,\sigma{\,\circ\,}\mathrmbfit{arity}]
{\alpha}
\mathrmbfit{R}{\;\circ\;}\mathrmbfit{A}_{1}$.
\end{itemize}
\item 
The sort domain projection 
$\mathring{\mathrmbfit{sort}} 
= {(\mbox{-})} \circ \mathrmbfit{sort} 
: \mathrmbf{LIST} 
= \mathrmbf{List}^{\!\scriptscriptstyle{\Uparrow}}
\rightarrow \mathrmbf{Set}^{\!\scriptscriptstyle{\Uparrow}}$
\begin{itemize}
\item 
maps
a schema ${\langle{\mathrmbf{R},\mathrmbfit{S}}\rangle}$ 
to the sort diagram
$\mathring{\mathrmbfit{sort}}(\mathrmbf{R},\mathrmbfit{S})
={\langle{\mathrmbf{R},\mathrmbfit{S}{\,\circ\,}\mathrmbfit{sort}}\rangle}$ 
with the sort passage
$\mathrmbf{R}\xrightarrow[\,\mathrmbfit{S}{\,\circ\,}\mathrmbfit{sort}]{\,\mathrmbfit{B}\;}\mathrmbf{Set}$,
\item 
maps 
a schema morphism 
${\langle{\mathrmbf{R}_{2},\mathrmbfit{S}_{2}}\rangle} 
\xrightarrow{{\langle{\mathrmbfit{R},\,\sigma}\rangle}}
{\langle{\mathrmbf{R}_{1},\mathrmbfit{S}_{1}}\rangle}$
to the sort diagram morphism
{\footnotesize{$
\mathring{\mathrmbfit{sort}}(\mathrmbfit{R},\,\sigma)
=
{\langle{\mathrmbfit{R},\beta}\rangle} :
{\langle{\mathrmbf{R}_{2},\mathrmbfit{B}_{2}}\rangle}
\rightarrow
{\langle{\mathrmbf{R}_{1},\mathrmbfit{B}_{1}}\rangle}$}}
with a bridge
$\mathrmbfit{B}_{2}
\xRightarrow[\,\sigma{\,\circ\,}\mathrmbfit{sort}]{\beta}
\mathrmbfit{R} \circ \mathrmbfit{B}_{1}$.
%
\end{itemize}
\end{itemize}
\begin{figure}
\begin{center}
{{\begin{tabular}{c}
\setlength{\unitlength}{0.5pt}
\begin{picture}(240,20)(0,0)
\put(0,0){\makebox(0,0){\footnotesize{$\mathrmbf{Set}^{\!\scriptscriptstyle{\Uparrow}}$}}}
\put(120,0){\makebox(0,0){\footnotesize{$\mathrmbf{LIST}$}}}
\put(240,0){\makebox(0,0){\footnotesize{$\mathrmbf{Set}^{\!\scriptscriptstyle{\Uparrow}}$}}}
\put(55,12){\makebox(0,0){\scriptsize{$\mathring{\mathrmbfit{arity}}$}}}
\put(185,12){\makebox(0,0){\scriptsize{$\mathring{\mathrmbfit{sort}}$}}}
\put(85,0){\vector(-1,0){60}}
\put(150,0){\vector(1,0){60}}
\end{picture}
\end{tabular}}}
\end{center}
\caption{\texttt{FOLE} Schema Context}
\label{fig:cxt:schema}
\end{figure}
%


%
\comment{
$\mathrmbf{LIST}=\mathrmbf{List}^{\!\scriptscriptstyle{\Uparrow}}$ 
is the context of schemas.
A schema ${\langle{\mathrmbf{R},\mathrmbfit{S}}\rangle}$
consists of 
a shape context $\mathrmbf{R}$ 
and a diagram of signatures (lists) 
$\mathrmbf{R}\xrightarrow{\;\mathrmbfit{S}\;}\mathrmbf{List}$.
A schema morphism 
${\langle{\mathrmbf{R}_{2},\mathrmbfit{S}_{2}}\rangle} 
\xrightarrow{{\langle{\mathrmbfit{R},\,\sigma}\rangle}}
{\langle{\mathrmbf{R}_{1},\mathrmbfit{S}_{1}}\rangle}$
consists of 
a shape-changing passage 
$\mathrmbf{R}_{1}\xrightarrow{\;\mathrmbfit{R}\;\,}\mathrmbf{R}_{1}$
and a bridge 
$\mathrmbfit{S}_{2}\xRightarrow{\;\sigma\;\,}\mathrmbfit{R}{\,\circ\,}\mathrmbfit{S}_{1}$.

$\mathrmbf{CLS}=\mathrmbf{Cls}^{\!\scriptscriptstyle{\Uparrow}}$.
is the context of type domain diagrams.
A type domain diagram 
${\langle{\mathrmbf{R},\mathrmbfit{A}}\rangle}$
consists of 
a shape context $\mathrmbf{R}$ 
and a diagram of type domains 
$\mathrmbf{R}\xrightarrow{\;\mathrmbfit{A}\;}\mathrmbf{Cls}$.
A morphism of type domain diagrams 
${\langle{\mathrmbf{R}_{2},\mathrmbfit{A}_{2}}\rangle}
\xrightarrow{{\langle{\mathrmbfit{R},\alpha}\rangle}}
{\langle{\mathrmbf{R}_{1},\mathrmbfit{A}_{1}}\rangle}$
consists of 
a shape-changing passage 
$\mathrmbf{R}_{1}\xrightarrow{\;\mathrmbfit{R}\;\,}\mathrmbf{R}_{1}$
and a bridge 
$\mathrmbfit{A}_{2}\xRightarrow{\;\alpha\;\,}\mathrmbfit{R}{\,\circ\,}\mathrmbfit{A}_{1}$.
}
%

\comment{
\subsection{$\mathrmbf{CLS}$}\label{sub:sec:CLS}

A type domain diagram 
${\langle{\mathrmbf{R},\mathrmbfit{A}}\rangle}$
consists of 
a shape context $\mathrmbf{R}$ 
and a diagram of type domains 
$\mathrmbf{R}\xrightarrow{\;\mathrmbfit{A}\;}\mathrmbf{Cls}$.
A morphism of type domain diagrams 
${\langle{\mathrmbf{R}_{2},\mathrmbfit{A}_{2}}\rangle}
\xrightarrow{{\langle{\mathrmbfit{R},\alpha}\rangle}}
{\langle{\mathrmbf{R}_{1},\mathrmbfit{A}_{1}}\rangle}$
consists of 
a shape-changing passage 
$\mathrmbf{R}_{1}\xrightarrow{\;\mathrmbfit{R}\;\,}\mathrmbf{R}_{1}$
and a bridge 
$\mathrmbfit{A}_{2}\xRightarrow{\;\alpha\;\,}\mathrmbfit{R}{\,\circ\,}\mathrmbfit{A}_{1}$.
Composition is component-wise.
\begin{figure}
\begin{center}
{{\begin{tabular}{c}
\setlength{\unitlength}{0.56pt}
\begin{picture}(120,80)(8,0)
\put(5,80){\makebox(0,0){\footnotesize{$\mathrmbf{R}_{2}$}}}
\put(125,80){\makebox(0,0){\footnotesize{$\mathrmbf{R}_{1}$}}}
\put(65,0){\makebox(0,0){\footnotesize{$\mathrmbf{Cls}$}}}
\put(60,92){\makebox(0,0){\scriptsize{$\mathrmbfit{R}$}}}
\put(20,42){\makebox(0,0)[r]{\scriptsize{$\mathrmbfit{A}_{2}$}}}
\put(100,42){\makebox(0,0)[l]{\scriptsize{$\mathrmbfit{A}_{1}$}}}
\put(60,57){\makebox(0,0){\shortstack{\scriptsize{$\alpha$}\\\large{$\Longrightarrow$}}}}
\put(20,80){\vector(1,0){80}}
\put(10,67){\vector(3,-4){38}}
\put(110,68){\vector(-3,-4){38}}
\end{picture}
\end{tabular}}}
\end{center}
\caption{Type Domain Diagram Morphism: $\mathrmbf{CLS}$}
\label{fig:typ:dom:dgm:mor}
\end{figure}
\begin{definition}\label{def:lax:com:cxt:typ:dom:var}
The context of schemas is 
$\mathrmbf{CLS}
= \mathrmbf{Cls}^{\!\scriptscriptstyle{\Uparrow}}
= \bigl(\mathrmbf{Cxt}{\,\Uparrow\,}\mathrmbf{Cls}\bigr)$,
a diagram context over type domains.
\end{definition}
%
%
}


\newpage
\section{\texttt{FOLE} Schemed Domains}\label{sec:sch:dom}


The schemed domain contexts defined in this paper
are listed in Tbl.~\ref{tbl:sch:dom:cxts}.
\begin{table}
\begin{center}
{\fbox{\scriptsize{\setlength{\extrarowheight}{2pt}{\begin{tabular}
{|@{\hspace{3pt}}r@{\hspace{5pt}=\hspace{5pt}}l@{\hspace{5pt}:\hspace{5pt}}l@{\hspace{20pt}}
l@{\hspace{5pt}\text{in}\hspace{5pt}}l|}
\hline
\rule{0pt}{9pt}
$\mathrmbf{DOM}$
&
$\mathrmbf{Dom}^{\scriptscriptstyle{\Uparrow}}$
&
\textit{diagram context} 
& 
Def.~\ref{def:lax:com:cxt:sch:dom:var}
&
\S~\ref{sub:sec:sch:dom:gen}
\\
&
$\mathrmbf{LIST}{\times_{\mathrmbf{Set}^{\!\scriptscriptstyle{\Uparrow}}}}\mathrmbf{CLS}$
&
\textit{pullback context}
&
Prop.~\ref{prop:pullback:sch:dom:var}
&
\S~\ref{sub:sec:sch:dom:gen}
\\\hline
\rule{0pt}{10pt}
$
{\mathring{\mathrmbf{Dom}}(\mathcal{A})}$
&
$
{\mathrmbf{Dom}(\mathcal{A})^{\scriptscriptstyle{\Uparrow}}}$ 
&
\textit{diagram context}
& 
Def.~\ref{def:dom:A:oplax:cxt}
&
\S~\ref{sub:sub:sec:dom:typ:dom:lower}
\\
$\mathring{\mathrmbf{Dom}}$
&
$\int\mathrmbf{Cls}\xrightarrow{\,\hat{\mathrmbfit{dom}}\;}\mathrmbf{Adj}$
&
\textit{Grothendieck construction}
&
Prop.~\ref{prop:fib:cxt:sch:dom:var:typ:dom:sh}
&
\S~\ref{sub:sub:sec:dom:typ:dom:upper}
\\\hline
\end{tabular}}}}}
\end{center}
\caption{Schemed Domain Contexts}
\label{tbl:sch:dom:cxts}
\end{table}
%

%
\comment{
\begin{table}
\begin{center}
{\fbox{\footnotesize{\begin{tabular}{r@{\hspace{20pt}}l}
$\mathrmbf{DOM}
= \bigl(\mathrmbf{Cxt}{\,\Uparrow\,}\mathrmbf{Dom}\bigr)$
&
\textit{diagram context}
\\
$\mathrmbf{DOM}
= \mathrmbf{LIST}{\times_{\mathrmbf{Set}^{\!\scriptscriptstyle{\Uparrow}}}}\mathrmbf{CLS}$
&
\textit{pullback context}
\end{tabular}}}}
\end{center}
\caption{Alternate Definitions of $\mathrmbf{DOM}$}
\label{tbl:alt:defs:sch:dom}
\end{table}
}
%


\subsection{General Case:
$\mathrmbf{DOM}$}\label{sub:sec:sch:dom:gen}

In this section
we discuss the context of schemed domains 
$\mathrmbf{DOM}$,
which mirrors 
the context of signed domains 
{\footnotesize{$
\mathrmbf{Dom}
$}\normalsize}
%
at a higher dimension. 
%
\footnote{
The schemed domain 
$\mathrmbf{R} \xrightarrow{\mathrmbfit{Q}} \mathrmbf{Dom}$
as developed in this paper is a derived concept built up from the basic concept of 
the signed domain 
$\mathcal{D} = {\langle{I,s,\mathrmbfit{A}}\rangle} \in \mathrmbf{Dom}$
as defined in the paper ``The {\ttfamily FOLE} Table'' \cite{kent:fole:era:tbl}.
We replace signed domains $\mathcal{D} \in \mathrmbf{Dom}$ 
with diagrams (passages) $\mathrmbf{R}\xrightarrow{\mathrmbfit{Q}}\mathrmbf{Dom}$, and 
replace signed domain morphisms $\mathcal{D}_{2}\xrightarrow{{\langle{h,f,g}\rangle}}\mathcal{D}_{1}$ in $\mathrmbf{Dom}$ 
with bridges 
$\varsigma : \mathrmbfit{Q}_{2}\Rightarrow\mathrmbfit{R}{\,\circ\,}\mathrmbfit{Q}_{1}$.}
%
A schemed domain 
(signed domain diagram)
${\langle{\mathrmbf{R},\mathrmbfit{Q}}\rangle}$
consists of 
a shape context $\mathrmbf{R}$ 
and a diagram of signed domains $\mathrmbf{R}\xrightarrow{\;\mathrmbfit{Q}\;}\mathrmbf{Dom}$.
A schemed domain morphism 
${\langle{\mathrmbf{R}_{2},\mathrmbfit{Q}_{2}}\rangle} 
\xrightarrow{{\langle{\mathrmbfit{R},\,\varsigma}\rangle}}
{\langle{\mathrmbf{R}_{1},\mathrmbfit{Q}_{1}}\rangle}$
(Fig.\,\ref{fig:sch:dom:mor})
consists of 
a shape-changing passage $\mathrmbf{R}_{1}\xrightarrow{\;\mathrmbfit{R}\;\,}\mathrmbf{R}_{1}$
and a bridge 
$\varsigma : \mathrmbfit{Q}_{2}\Rightarrow\mathrmbfit{R}{\,\circ\,}\mathrmbfit{Q}_{1}$.
Composition is component-wise.
\begin{figure}
\begin{center}
{{\begin{tabular}{c}
\setlength{\unitlength}{0.56pt}
\begin{picture}(120,80)(8,0)
\put(5,80){\makebox(0,0){\footnotesize{$\mathrmbf{R}_{2}$}}}
\put(125,80){\makebox(0,0){\footnotesize{$\mathrmbf{R}_{1}$}}}
\put(65,0){\makebox(0,0){\footnotesize{$\mathrmbf{Dom}$}}}
\put(60,92){\makebox(0,0){\scriptsize{$\mathrmbfit{R}$}}}
\put(20,42){\makebox(0,0)[r]{\scriptsize{$\mathrmbfit{Q}_{2}$}}}
\put(100,42){\makebox(0,0)[l]{\scriptsize{$\mathrmbfit{Q}_{1}$}}}
\put(60,57){\makebox(0,0){\shortstack{\scriptsize{$\varsigma$}\\\large{$\Longrightarrow$}}}}
\put(20,80){\vector(1,0){80}}
\put(10,67){\vector(3,-4){38}}
\put(110,68){\vector(-3,-4){38}}
\end{picture}
\end{tabular}}}
\end{center}
\caption{Schemed Domain Morphism: $\mathrmbf{DOM}$}
\label{fig:sch:dom:mor}
\end{figure}
\begin{definition}\label{def:lax:com:cxt:sch:dom:var}
The context of schemed domains is 
the lax comma context
$\mathrmbf{DOM}
= \mathrmbf{Dom}^{\!\scriptscriptstyle{\Uparrow}}$,
a diagram context over signed domains.
(Def.\,\ref{def:lax:oplax} in \S\,\ref{sub:sub:sec:lax:comma:cxt})
\end{definition}
\begin{proposition}\label{prop:lim:colim:dom}
%
The fibered context (Grothendieck construction) 
$\mathrmbf{DOM} =
\mathrmbf{Dom}^{\scriptscriptstyle{\Uparrow}} = \int\hat{\mathrmbf{Dom}}$ is complete and cocomplete 
and the projection 
$\mathrmbf{Dom}^{\scriptscriptstyle{\Uparrow}}\rightarrow\mathrmbf{Cxt}
:{\langle{\mathrmbf{R},\mathrmbfit{Q}}\rangle}\mapsto\mathrmbf{R}$ 
is continuous and cocontinuous.
%
\end{proposition}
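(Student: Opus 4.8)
The plan is to argue exactly as for Proposition~\ref{prop:lim:colim:schema}, with the base context $\mathrmbf{List}$ replaced by the context $\mathrmbf{Dom}$ of signed domains. By Definition~\ref{def:lax:com:cxt:sch:dom:var}, $\mathrmbf{DOM} = \mathrmbf{Dom}^{\!\scriptscriptstyle{\Uparrow}}$ is the diagram context over $\mathrmbf{Dom}$; the lax part of Prop.~\ref{prop:lax:fibered:A} in \S\,\ref{append:kan:ext} then identifies it with the fibered context (Grothendieck construction) $\int\hat{\mathrmbf{Dom}}$ of the diagram-valued (pseudo)passage $\hat{\mathrmbf{Dom}}$ that sends a shape $\mathrmbf{R}$ to the diagram context $\mathrmbf{Dom}^{\mathrmbf{R}}$. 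The lax part of Prop.~\ref{prop:lim:colim:A} in \S\,\ref{append:kan:ext} states that whenever the base context is complete and cocomplete, its diagram context is complete and cocomplete and the projection to $\mathrmbf{Cxt}$, ${\langle{\mathrmbf{R},\mathrmbfit{Q}}\rangle}\mapsto\mathrmbf{R}$, is continuous and cocontinuous. Instantiating this with the base context $\mathrmbf{Dom}$ proves everything, provided $\mathrmbf{Dom}$ is complete and cocomplete.

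So the whole statement reduces to the (co)completeness of $\mathrmbf{Dom}$. For this I would appeal to ``The {\ttfamily FOLE} Table'' \cite{kent:fole:era:tbl}: a signed domain $\mathcal{D} = {\langle{I,s,\mathrmbfit{A}}\rangle}$ is a signature (list) ${\langle{I,s}\rangle}$ together with a type domain $\mathrmbfit{A}$ carrying the same sort set, so $\mathrmbf{Dom}$ is the pullback context $\mathrmbf{List}\times_{\mathrmbf{Set}}\mathrmbf{Cls}$ of the two sort-set projections $\mathrmbf{List}\rightarrow\mathrmbf{Set}\leftarrow\mathrmbf{Cls}$. Each of $\mathrmbf{Set}$, $\mathrmbf{List}$, $\mathrmbf{Cls}$ is complete and cocomplete and both sort-set projections are continuous and cocontinuous; hence limits and colimits in $\mathrmbf{Dom}$ can be formed componentwise — take the (co)limit of sort sets in $\mathrmbf{Set}$ and lift it to the (co)limit of signatures in $\mathrmbf{List}$ and the (co)limit of type domains in $\mathrmbf{Cls}$ — giving completeness and cocompleteness of $\mathrmbf{Dom}$. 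I expect the one genuine point of friction to be this componentwise verification: the (co)limiting signature and the (co)limiting type domain must be chosen with literally the same underlying sort set so that the pair is a bona fide object of the pullback. Splitness of the cleavages of the two sort-set (op)fibrations supplies compatible choices that achieve this; and if $\mathrmbf{Dom}$ is read as an iso-comma context instead, the issue disappears, since agreement of sort sets is then only required up to the coherent isomorphism already built into objects of $\mathrmbf{Dom}$. This is routine bookkeeping of the kind already carried out for $\mathrmbf{List}$ in Prop.~\ref{prop:lim:colim:schema}, not a conceptual obstacle.

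An alternative that avoids re-deriving (co)completeness of $\mathrmbf{Dom}$ uses the pullback description $\mathrmbf{DOM} = \mathrmbf{LIST}\times_{\mathrmbf{Set}^{\!\scriptscriptstyle{\Uparrow}}}\mathrmbf{CLS}$ from Prop.~\ref{prop:pullback:sch:dom:var}: $\mathrmbf{LIST}$ is complete and cocomplete with continuous and cocontinuous projection to $\mathrmbf{Cxt}$ by Prop.~\ref{prop:lim:colim:schema}, $\mathrmbf{CLS}$ likewise by the evident analogue, and $\mathrmbf{Set}^{\!\scriptscriptstyle{\Uparrow}}$ is complete and cocomplete; combining these with the (co)continuity of the two projections into $\mathrmbf{Set}^{\!\scriptscriptstyle{\Uparrow}}$ yields (co)completeness of $\mathrmbf{DOM}$, while the projection $\mathrmbf{DOM}\rightarrow\mathrmbf{Cxt}$ inherits (co)continuity by factoring through $\mathrmbf{LIST}\rightarrow\mathrmbf{Cxt}$. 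I would present the first route as the main argument and keep this pullback route as a cross-check.
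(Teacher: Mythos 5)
Your proposal is correct and takes essentially the same route as the paper: the paper's proof likewise just invokes the lax parts of Prop.~\ref{prop:lax:fibered:A} and Prop.~\ref{prop:lim:colim:A} of \S\,\ref{append:kan:ext} together with the completeness and cocompleteness of $\mathrmbf{Dom}$. The only difference is that the paper simply asserts that $\mathrmbf{Dom}$ is complete and cocomplete (an input from ``The {\ttfamily FOLE} Table'' \cite{kent:fole:era:tbl}), whereas you additionally sketch why this holds via the fibered product $\mathrmbf{List}{\times_{\mathrmbf{Set}}}\mathrmbf{Cls}$ and offer the pullback of diagram contexts as a cross-check.
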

\begin{proof}
Use the lax parts of 
Prop.~\ref{prop:lax:fibered:A}
and
Prop.~\ref{prop:lim:colim:A} 
in \S\,\ref{append:kan:ext},
since
the context of schemed domains 
$\mathrmbf{DOM}$
is the diagram context 
$\mathrmbf{DOM} = 
\mathrmbf{Dom}^{\scriptscriptstyle{\Uparrow}} = \int\hat{\mathrmbf{Dom}}$ 
and
$\mathrmbf{Dom}$ is complete and cocomplete.
\mbox{}\hfill\rule{5pt}{5pt}
\end{proof}
%



%
%


\paragraph{Projections.}

Projections offer an alternate representation,
defining the three primary components of 
schemed domains and schemed domain morphisms: 
diagram shapes, signature domains and type domain diagrams.
Diagram shapes are direct projections,
whereas signature domains and type domain diagrams
are indirect,
coming from composition with signed domain projection passages 
%
%
(defined in detail in 
\S2.3 of the paper
``The {\ttfamily FOLE} Table''
\cite{kent:fole:era:tbl}).
The schemed domain projections are described in 
Fig.\,\ref{fig:cxt:sch:dom}
and defined as follows.
\begin{itemize}
\item 
The signature domain projection 
$\mathring{\mathrmbfit{sign}} 
= {(\mbox{-})} \circ \mathrmbfit{sign} 
: \mathrmbf{DOM} \rightarrow \mathrmbf{LIST}$
\begin{itemize}
\item 
maps
a schemed domain ${\langle{\mathrmbf{R},\mathrmbfit{Q}}\rangle}$ 
to the schema
$\mathring{\mathrmbfit{sign}}(\mathrmbf{R},\mathrmbfit{Q})
=
{\langle{\mathrmbf{R},\mathrmbfit{Q} \circ \mathrmbfit{sign}}\rangle}
=
{\langle{\mathrmbf{R},\mathrmbfit{S}}\rangle}$
with the signature passage
$\mathrmbf{R}\xrightarrow[\,\mathrmbfit{Q}{\,\circ\,}\mathrmbfit{sign}]{\,\mathrmbfit{S}\;}\mathrmbf{List}$,
and 
\item 
maps 
a schemed domain morphism 
${\langle{\mathrmbf{R}_{2},\mathrmbfit{Q}_{2}}\rangle} 
\xrightarrow{{\langle{\mathrmbfit{R},\,\varsigma}\rangle}}
{\langle{\mathrmbf{R}_{1},\mathrmbfit{Q}_{1}}\rangle}$
to the 
schema morphism
{\footnotesize{$
\mathring{\mathrmbfit{sign}}(\mathrmbfit{R},\varsigma) 
= {\langle{\mathrmbfit{R},\gamma}\rangle}
:
{\langle{\mathrmbf{R}_{2},\mathrmbfit{S}_{2}}\rangle}
\rightarrow
{\langle{\mathrmbf{R}_{1},\mathrmbfit{S}_{1}}\rangle}
$}}
with bridge
$
\mathrmbfit{S}_{2} \xRightarrow
[\,\varsigma{\,\circ\,}\mathrmbfit{sign}]
{\gamma}
\mathrmbfit{R}{\;\circ\;}\mathrmbfit{S}_{1}$.
\end{itemize}
\item 
The type domain projection 
$\mathring{\mathrmbfit{data}} 
= {(\mbox{-})} \circ \mathrmbfit{data} 
: \mathrmbf{DOM} \rightarrow \mathrmbf{CLS}$ 
\begin{itemize}
\item 
maps
a schemed domain ${\langle{\mathrmbf{R},\mathrmbfit{Q}}\rangle}$ 
to the type domain diagram
$\mathring{\mathrmbfit{data}}(\mathrmbf{R},\mathrmbfit{Q})
={\langle{\mathrmbf{R},\mathrmbfit{C}}\rangle}$ 
with the type domain passage
$\mathrmbf{R}\xrightarrow[\,\mathrmbfit{Q}{\,\circ\,}\mathrmbfit{data}]{\,\mathrmbfit{C}\;}\mathrmbf{Set}$,
\item 
and maps 
a schemed domain morphism 
${\langle{\mathrmbf{R}_{2},\mathrmbfit{Q}_{2}}\rangle} 
\xrightarrow{{\langle{\mathrmbfit{R},\,\varsigma}\rangle}}
{\langle{\mathrmbf{R}_{1},\mathrmbfit{Q}_{1}}\rangle}$
to the $\mathrmbf{CLS}$-morphism 
{\footnotesize{$
\mathring{\mathrmbfit{data}}(\mathrmbfit{R},\,\varsigma)
=
{\langle{\mathrmbfit{R},\gamma}\rangle} :
{\langle{\mathrmbf{R}_{2},\mathrmbfit{C}_{2}}\rangle}
\rightarrow
{\langle{\mathrmbf{R}_{1},\mathrmbfit{C}_{1}}\rangle}$}}
with a bridge
$\mathrmbfit{C}_{2}
\xRightarrow[\,\varsigma{\,\circ\,}\mathrmbfit{data}]{\gamma}
\mathrmbfit{R} \circ \mathrmbfit{C}_{1}$.
%
\end{itemize}
\end{itemize}
%

\paragraph{Pullback Context.}

The context $\mathrmbf{DOM}$ has projection passages
(RHS Fig.~\ref{fig:cxt:sch:dom})
$\mathrmbf{LIST}
\xleftarrow{\mathring{\mathrmbfit{sign}}}\mathrmbf{DOM}\xrightarrow{\mathring{\mathrmbfit{data}}}
\mathrmbf{CLS}$.
%
\footnote{
$\mathrmbf{LIST}=\mathrmbf{List}^{\!\scriptscriptstyle{\Uparrow}}$ 
is the context of schemas.
A schema ${\langle{\mathrmbf{R},\mathrmbfit{S}}\rangle}$
consists of 
a shape context $\mathrmbf{R}$ 
and a diagram of signatures (lists) 
$\mathrmbf{R}\xrightarrow{\;\mathrmbfit{S}\;}\mathrmbf{List}$.
A schema morphism 
${\langle{\mathrmbf{R}_{2},\mathrmbfit{S}_{2}}\rangle} 
\xrightarrow{{\langle{\mathrmbfit{R},\,\sigma}\rangle}}
{\langle{\mathrmbf{R}_{1},\mathrmbfit{S}_{1}}\rangle}$
consists of 
a shape-changing passage 
$\mathrmbf{R}_{1}\xrightarrow{\;\mathrmbfit{R}\;\,}\mathrmbf{R}_{1}$
and a bridge 
$\mathrmbfit{S}_{2}\xRightarrow{\;\sigma\;\,}\mathrmbfit{R}{\,\circ\,}\mathrmbfit{S}_{1}$.
}
%
%
\footnote{
$\mathrmbf{CLS}=\mathrmbf{Cls}^{\!\scriptscriptstyle{\Uparrow}}$.
is the context of type domain diagrams.
A type domain diagram 
${\langle{\mathrmbf{R},\mathrmbfit{A}}\rangle}$
consists of 
a shape context $\mathrmbf{R}$ 
and a diagram of type domains 
$\mathrmbf{R}\xrightarrow{\;\mathrmbfit{A}\;}\mathrmbf{Cls}$.
A morphism of type domain diagrams 
${\langle{\mathrmbf{R}_{2},\mathrmbfit{A}_{2}}\rangle}
\xrightarrow{{\langle{\mathrmbfit{R},\alpha}\rangle}}
{\langle{\mathrmbf{R}_{1},\mathrmbfit{A}_{1}}\rangle}$
consists of 
a shape-changing passage 
$\mathrmbf{R}_{1}\xrightarrow{\;\mathrmbfit{R}\;\,}\mathrmbf{R}_{1}$
and a bridge 
$\mathrmbfit{A}_{2}\xRightarrow{\;\alpha\;\,}\mathrmbfit{R}{\,\circ\,}\mathrmbfit{A}_{1}$.
}
%

%
\begin{itemize}
\item 
Using projections,
a schemed domain
${\langle{\mathrmbf{R},\mathrmbfit{S},\mathrmbfit{A}}\rangle}$
consists of 
a schema
(signature diagram)
${\langle{\mathrmbf{R},\mathrmbfit{S}}\rangle}\in\mathrmbf{LIST}$
and a type domain diagram ${\langle{\mathrmbf{R},\mathrmbfit{A}}\rangle}\in\mathrmbf{CLS}$
with common sort diagram 
$\mathring{\mathrmbfit{sort}}(\mathrmbf{R},\mathrmbfit{S})
={\langle{\mathrmbf{R},\mathrmbfit{X}}\rangle}
=\mathring{\mathrmbfit{sort}}(\mathrmbf{R},\mathrmbfit{A})
\in\mathrmbf{Set}^{\!\scriptscriptstyle{\Uparrow}}$
(and hence common shape $\mathrmbf{R}$).
The passages
$\mathrmbf{R}\xrightarrow{\mathrmbfit{S}}\mathrmbf{List}$
and
$\mathrmbf{R}\xrightarrow{\mathrmbfit{A}}\mathrmbf{Cls}$
satisfy
$\mathrmbfit{S}{\,\circ\,}\mathrmbfit{sort}=\mathrmbfit{X}=\mathrmbfit{A}{\,\circ\,}\mathrmbfit{sort}$.
%
\footnote{
$\mathrmbf{LIST}\xrightarrow{\mathring{\mathrmbfit{sort}}}\mathrmbf{Set}^{\!\scriptscriptstyle{\Uparrow}}$
is the sort projection for the fibered context of schemas (signature diagrams),
and 
$\mathrmbf{CLS}\xrightarrow{\mathring{\mathrmbfit{sort}}}\mathrmbf{Set}^{\!\scriptscriptstyle{\Uparrow}}$
is the sort projection for type domain diagrams.}
%
%
\item 
A 
schemed domain morphism
${\langle{\mathrmbf{R}_{2},\mathrmbfit{S}_{2},\mathrmbfit{A}_{2}}\rangle}
\xrightarrow{{\langle{\mathrmbfit{R},\sigma,\alpha}\rangle}}
{\langle{\mathrmbf{R}_{1},\mathrmbfit{S}_{1},\mathrmbfit{A}_{1}}\rangle}$
consists of 
a morphism of signature diagrams 
${\langle{\mathrmbf{R}_{2},\mathrmbfit{S}_{2}}\rangle}
\xrightarrow{{\langle{\mathrmbfit{R},\sigma}\rangle}}
{\langle{\mathrmbf{R}_{1},\mathrmbfit{S}_{1}}\rangle}$
in $\mathrmbf{LIST}$
and
a morphism of type domain diagrams 
${\langle{\mathrmbf{R}_{2},\mathrmbfit{A}_{2}}\rangle}
\xrightarrow{{\langle{\mathrmbfit{R},\alpha}\rangle}}
{\langle{\mathrmbf{R}_{1},\mathrmbfit{A}_{1}}\rangle}$
in $\mathrmbf{CLS}$
with common sort diagram morphism 
$\mathring{\mathrmbfit{sort}}(\mathrmbfit{R},\sigma)
={\langle{\mathrmbfit{R},\omega}\rangle}
=\mathring{\mathrmbfit{sort}}(\mathrmbfit{R},\alpha)
\in\mathrmbf{Set}^{\!\scriptscriptstyle{\Uparrow}}$
(and hence common shape-changing passage $\mathrmbf{R}_{2}\xrightarrow{\,\mathrmbfit{R}\;}\mathrmbf{R}_{1}$).
The bridges
$\mathrmbfit{S}_{2}\xRightarrow{\;\sigma\;\,}\mathrmbfit{R}{\,\circ\,}\mathrmbfit{S}_{1}$
and
$\mathrmbfit{A}_{2}\xRightarrow{\;\alpha\;\,}\mathrmbfit{R}{\,\circ\,}\mathrmbfit{A}_{1}$
satisfy
$\sigma{\,\circ\,}\mathrmbfit{sort}=\alpha{\,\circ\,}\mathrmbfit{sort}$.
\end{itemize}
\begin{proposition}\label{prop:pullback:sch:dom:var}
The context of schemed domains $\mathrmbf{DOM}$ is the \underline{pullback context} 
(fibered product)
\[\mbox{\footnotesize{$
\mathrmbf{LIST}\xleftarrow{\mathring{\mathrmbfit{sign}}}
\mathrmbf{DOM}=\mathrmbf{LIST}{\times_{\mathrmbf{Set}^{\!\scriptscriptstyle{\Uparrow}}}}\mathrmbf{CLS}
\xrightarrow{\mathring{\mathrmbfit{data}}}\mathrmbf{CLS}
$}\normalsize}\]
for the opspan of passages
$\mathrmbf{LIST}\xrightarrow{\mathring{\mathrmbfit{sort}}}\mathrmbf{Set}^{\!\scriptscriptstyle{\Uparrow}}\xleftarrow{\mathring{\mathrmbfit{sort}}}\mathrmbf{CLS}$
(Fig.~\ref{fig:cxt:sch:dom} RHS).
\end{proposition}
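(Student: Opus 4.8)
The plan is to show that the diagram-context operation $(\mbox{-})^{\!\scriptscriptstyle{\Uparrow}}=\bigl(\mathrmbf{Cxt}{\,\Uparrow\,}{\mbox{-}}\bigr)$ sends the base-level pullback to the pullback in the statement. Recall from \S2.3 of the paper ``The {\ttfamily FOLE} Table'' \cite{kent:fole:era:tbl} that the context of signed domains is itself a pullback, $\mathrmbf{Dom}=\mathrmbf{List}{\times_{\mathrmbf{Set}}}\mathrmbf{Cls}$, with projections $\mathrmbf{List}\xleftarrow{\mathrmbfit{sign}}\mathrmbf{Dom}\xrightarrow{\mathrmbfit{data}}\mathrmbf{Cls}$ for the opspan $\mathrmbf{List}\xrightarrow{\mathrmbfit{sort}}\mathrmbf{Set}\xleftarrow{\mathrmbfit{sort}}\mathrmbf{Cls}$. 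Since $\mathring{\mathrmbfit{sign}}$, $\mathring{\mathrmbfit{data}}$ and the two copies of $\mathring{\mathrmbfit{sort}}$ are, by their definitions above, exactly the post-composition passages induced by $\mathrmbfit{sign}$, $\mathrmbfit{data}$, $\mathrmbfit{sort}$, it is enough to check that $(\mbox{-})^{\!\scriptscriptstyle{\Uparrow}}$ preserves pullbacks and then apply it to that square.

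To see that $(\mbox{-})^{\!\scriptscriptstyle{\Uparrow}}$ preserves pullbacks I would exhibit the comparison passage $\langle\mathring{\mathrmbfit{sign}},\mathring{\mathrmbfit{data}}\rangle:\mathrmbf{DOM}\rightarrow\mathrmbf{LIST}{\times_{\mathrmbf{Set}^{\!\scriptscriptstyle{\Uparrow}}}}\mathrmbf{CLS}$ and verify it is an isomorphism of contexts, which reduces to a bijection on objects and a bijection on hom-sets compatible with composition. For objects: a schemed domain is a passage $\mathrmbf{R}\xrightarrow{\mathrmbfit{Q}}\mathrmbf{Dom}$, and by the $1$-categorical universal property of $\mathrmbf{Dom}=\mathrmbf{List}{\times_{\mathrmbf{Set}}}\mathrmbf{Cls}$ applied with test context $\mathrmbf{R}$ this is exactly a pair $\mathrmbf{R}\xrightarrow{\mathrmbfit{S}}\mathrmbf{List}$, $\mathrmbf{R}\xrightarrow{\mathrmbfit{A}}\mathrmbf{Cls}$ with $\mathrmbfit{S}{\,\circ\,}\mathrmbfit{sort}=\mathrmbfit{A}{\,\circ\,}\mathrmbfit{sort}$, i.e.\ a schema $\langle\mathrmbf{R},\mathrmbfit{S}\rangle\in\mathrmbf{LIST}$ and a type domain diagram $\langle\mathrmbf{R},\mathrmbfit{A}\rangle\in\mathrmbf{CLS}$ with common sort diagram --- precisely the description of an object of the pullback context given before the statement. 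For morphisms (with source and target fixed): a morphism of $\mathrmbf{DOM}$ is a shape-changing passage $\mathrmbf{R}_{2}\xrightarrow{\mathrmbfit{R}}\mathrmbf{R}_{1}$ together with a bridge $\varsigma:\mathrmbfit{Q}_{2}\Rightarrow\mathrmbfit{R}{\,\circ\,}\mathrmbfit{Q}_{1}$ valued in $\mathrmbf{Dom}$; a bridge valued in a strict pullback context decomposes uniquely into a pair of bridges valued in $\mathrmbf{List}$ and $\mathrmbf{Cls}$ whose whiskerings by $\mathrmbfit{sort}$ agree, so $\varsigma$ corresponds to $\sigma=\varsigma{\,\circ\,}\mathrmbfit{sign}:\mathrmbfit{S}_{2}\Rightarrow\mathrmbfit{R}{\,\circ\,}\mathrmbfit{S}_{1}$ and $\alpha=\varsigma{\,\circ\,}\mathrmbfit{data}:\mathrmbfit{A}_{2}\Rightarrow\mathrmbfit{R}{\,\circ\,}\mathrmbfit{A}_{1}$ subject to $\sigma{\,\circ\,}\mathrmbfit{sort}=\alpha{\,\circ\,}\mathrmbfit{sort}$, which is exactly a morphism of $\mathrmbf{LIST}{\times_{\mathrmbf{Set}^{\!\scriptscriptstyle{\Uparrow}}}}\mathrmbf{CLS}$ (the common $\mathrmbfit{R}$ being forced by equality of the images in $\mathrmbf{Set}^{\!\scriptscriptstyle{\Uparrow}}$). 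Finally, composition and identities are component-wise on both sides --- the shape part composing in $\mathrmbf{Cxt}$ and the bridge part by the usual whiskered pasting --- so these bijections are functorial and exhibit $\langle\mathring{\mathrmbfit{sign}},\mathring{\mathrmbfit{data}}\rangle$ as an isomorphism onto the pullback context.

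The main obstacle is the two-dimensional bookkeeping in the morphism step. One must confirm that a bridge $\varsigma$ valued in the pullback $\mathrmbf{Dom}$ is genuinely the same data as a compatible pair $(\sigma,\alpha)$ --- that is, that naturality of the candidate pair, component by component, is equivalent to naturality of $\varsigma$ --- and that the compatibility extracted from this decomposition is \emph{literally} the equation $\sigma{\,\circ\,}\mathrmbfit{sort}=\alpha{\,\circ\,}\mathrmbfit{sort}$ built into the pullback-context morphisms above, including the identification of the shape-changing passages. Once the pullback $\mathrmbf{Dom}=\mathrmbf{List}{\times_{\mathrmbf{Set}}}\mathrmbf{Cls}$ is in hand from \cite{kent:fole:era:tbl}, everything else is routine unwinding of the lax-comma (diagram) construction from Def.~\ref{def:lax:com:cxt:sch:dom:var}.
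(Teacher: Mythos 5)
Your proposal is correct and follows essentially the same route as the paper: the paper's own proof simply recalls the base-level pullback $\mathrmbf{Dom}=\mathrmbf{List}{\times_{\mathrmbf{Set}}}\mathrmbf{Cls}$ from ``The {\ttfamily FOLE} Table'' and says ``extend this to diagram contexts,'' which is exactly the lift through $(\mbox{-})^{\!\scriptscriptstyle{\Uparrow}}$ that you carry out. You supply the object/morphism bijections and the bridge-decomposition argument that the paper leaves implicit, but no new idea is involved.
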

\begin{proof}
%
From 
\S2.3 of the paper
``The {\ttfamily FOLE} Table''
\cite{kent:fole:era:tbl},
%
\comment{
a signed domain
$\mathcal{D} = {\langle{\mathcal{S},\mathcal{A}}\rangle}$
consists of 
a signature $\mathcal{S}={\langle{I,s,X}\rangle}$ and 
a type domain $\mathcal{A}={\langle{X,Y,\models_{\mathcal{A}}}\rangle}$ 
with common sort set $X$,
and
a signed domain morphism
${\langle{\mathcal{S}_{2},\mathcal{A}_{2}}\rangle}\xrightarrow{{\langle{h,f,g}\rangle}}{\langle{\mathcal{S}_{1},\mathcal{A}_{1}}\rangle}$
consists of a signature morphism $\mathcal{S}_{2}\xrightarrow{{\langle{h,f}\rangle}}\mathcal{S}_{1}$
and a type domain morphism $\mathcal{A}_{2}\xrightleftharpoons{{\langle{f,g}\rangle}}\mathcal{A}_{1}$
with common sort function $X_{2}\xrightarrow{f}X_{1}$.
%
Hence,
}
%
the context of signed domains $\mathrmbf{Dom}$ can be defined as the fibered product
\[\mbox{\footnotesize{$
\mathrmbf{List}\xleftarrow{\mathrmbfit{sign}}
\mathrmbf{Dom}=\mathrmbf{List}{\times_{\mathrmbf{Set}}}\mathrmbf{Cls}
\xrightarrow{\mathrmbfit{data}}\mathrmbf{Cls}
$,}\normalsize}\]
for the opspan of passages
$\mathrmbf{List}\xrightarrow{\mathrmbfit{sort}}\mathrmbf{Set}\xleftarrow{\mathrmbfit{sort}}\mathrmbf{Cls}$.
%
\begin{center}
\begin{tabular}{c}
{{\begin{tabular}{c}
\setlength{\unitlength}{0.55pt}
%
\begin{picture}(320,100)(0,2)
\put(0,80){\makebox(0,0){\footnotesize{$\mathrmbf{Set}$}}}
\put(163,80){\makebox(0,0){\footnotesize{$\mathrmbf{Dom}=\bigl(\mathrmbf{Set}{\,\downarrow\,}\mathrmbfit{sort}\bigr)$}}}
\put(320,80){\makebox(0,0){\footnotesize{$\mathrmbf{Cls}$}}}
\put(0,0){\makebox(0,0){\footnotesize{$\mathrmbf{Set}$}}}
\put(163,0){\makebox(0,0){\footnotesize{$\mathrmbf{List}=\bigl(\mathrmbf{Set}{\,\downarrow\,}\mathrmbf{Set}\bigr)$}}}
\put(320,0){\makebox(0,0){\footnotesize{$\mathrmbf{Set}$}}}
\put(45,92){\makebox(0,0){\scriptsize{$\mathrmbfit{arity}$}}}
\put(275,92){\makebox(0,0){\scriptsize{$\mathrmbfit{data}$}}}
\put(55,-12){\makebox(0,0){\scriptsize{$\mathrmbfit{arity}$}}}
\put(265,-12){\makebox(0,0){\scriptsize{$\mathrmbfit{sort}$}}}
\put(-8,40){\makebox(0,0)[r]{\scriptsize{$\mathrmbfit{1}$}}}
\put(152,40){\makebox(0,0)[r]{\scriptsize{$\mathrmbfit{sign}$}}}
\put(328,40){\makebox(0,0)[l]{\scriptsize{$\mathrmbfit{sort}$}}}
\put(68,80){\vector(-1,0){43}}
\put(252,80){\vector(1,0){43}}
\put(78,0){\vector(-1,0){55}}
\put(242,0){\vector(1,0){55}}
\put(0,65){\vector(0,-1){50}}
\put(160,65){\vector(0,-1){50}}
\put(320,65){\vector(0,-1){50}}
\qbezier(280,30)(290,30)(300,30)\qbezier(280,30)(280,20)(280,10)
\end{picture}
\end{tabular}}}
\\
\\
\comment{\footnotesize{$\begin{array}{r@{\hspace{10pt}}r@{\hspace{5pt}}l@{\hspace{5pt}:\hspace{5pt}}l}
\text{schema}
&
& \mathrmbfit{sign} 
& \mathrmbf{Dom} \rightarrow \mathrmbf{List}
\\
\text{data}
&
& \mathrmbfit{data} 
& \mathrmbf{Dom} \rightarrow \mathrmbf{Cls}
\end{array}$}}
\end{tabular}
\end{center}
Extend this to diagram contexts
$\mathrmbf{LIST} 
= \mathrmbf{List}^{\!\scriptscriptstyle{\Uparrow}}$
and
$\mathrmbf{CLS} 
= \mathrmbf{Cls}^{\!\scriptscriptstyle{\Uparrow}}$.
\mbox{}\hfill\rule{5pt}{5pt}
\end{proof}
\begin{figure}
\begin{center}
\begin{tabular}{c}
{{\begin{tabular}{c}
\setlength{\unitlength}{0.55pt}
%
\begin{picture}(240,120)(0,-20)
\put(0,80){\makebox(0,0){\footnotesize{$\mathrmbf{Set}^{\!\scriptscriptstyle{\Uparrow}}$}}}
\put(120,80){\makebox(0,0){\footnotesize{$\mathrmbf{DOM}$}}}
\put(240,80){\makebox(0,0){\footnotesize{$\mathrmbf{CLS}$}}}
\put(0,0){\makebox(0,0){\footnotesize{$\mathrmbf{Set}^{\!\scriptscriptstyle{\Uparrow}}$}}}
\put(120,0){\makebox(0,0){\footnotesize{$\mathrmbf{LIST}$}}}
\put(240,0){\makebox(0,0){\footnotesize{$\mathrmbf{Set}^{\!\scriptscriptstyle{\Uparrow}}$}}}
\put(55,92){\makebox(0,0){\scriptsize{$\mathring{\mathrmbfit{arity}}$}}}
\put(55,-12){\makebox(0,0){\scriptsize{$\mathring{\mathrmbfit{arity}}$}}}
\put(185,92){\makebox(0,0){\scriptsize{$\mathring{\mathrmbfit{data}}$}}}
\put(185,-12){\makebox(0,0){\scriptsize{$\mathring{\mathrmbfit{sort}}$}}}
\put(-8,40){\makebox(0,0)[r]{\scriptsize{$\mathrmbfit{1}$}}}
\put(112,40){\makebox(0,0)[r]{\scriptsize{$\mathring{\mathrmbfit{sign}}$}}}
\put(248,40){\makebox(0,0)[l]{\scriptsize{$\mathring{\mathrmbfit{sort}}$}}}
\put(85,80){\vector(-1,0){60}}
\put(85,0){\vector(-1,0){60}}
\put(150,80){\vector(1,0){60}}
\put(150,0){\vector(1,0){60}}
\put(0,65){\vector(0,-1){50}}
\put(120,65){\vector(0,-1){50}}
\put(240,65){\vector(0,-1){50}}
\qbezier(200,30)(210,30)(220,30)\qbezier(200,30)(200,20)(200,10)
\end{picture}
\end{tabular}}}
\\
\\
{\footnotesize{$\begin{array}{r@{\hspace{16pt}}r@{\hspace{5pt}=\hspace{5pt}}l@{\hspace{5pt}:\hspace{5pt}}l}
\text{schema}
&
\mathring{\mathrmbfit{sign}} 
& {(\mbox{-})} \circ \mathrmbfit{sign} 
& \mathrmbf{DOM} \rightarrow \mathrmbf{LIST}
\\
\text{data}
&
\mathring{\mathrmbfit{data}} 
& {(\mbox{-})} \circ \mathrmbfit{data} 
& \mathrmbf{DOM} \rightarrow \mathrmbf{CLS}
\end{array}$}}
\end{tabular}
\end{center}
\caption{Schemed Domain Context: $\mathrmbf{DOM}$}
\label{fig:cxt:sch:dom}
\end{figure}
%

%
\comment{
\begin{definition}\label{def:tup:pass}
There is a tuple passage
$\mathrmbf{DOM}^{\mathrm{op}}\xrightarrow[{(\mbox{-})}{\,\circ\,}\mathrmbfit{tup}]{\mathring{\mathrmbf{tup}}}
\mathrmbf{Set}^{\!\scriptscriptstyle{\Downarrow}}$:
\footnote{$\mathrmbf{Set}^{\!\scriptscriptstyle{\Downarrow}}
= \bigl({(\mbox{-})}^{\mathrm{op}}{\Downarrow\,}\mathrmbf{Set}\bigr)
\cong \bigl(\mathrmbf{Cat}{\,\Downarrow\,}\mathrmbf{Set}\bigr)$
is the lax comma context of diagrams over sets.}
a schemed domain ${\langle{\mathrmbf{R},\mathrmbfit{Q}}\rangle}$
is mapped to the set diagram
$\mathring{\mathrmbf{tup}}(\mathrmbf{R},\mathrmbfit{Q})
={\langle{\mathrmbf{R},\mathrmbfit{Q}^{\mathrm{op}}{\circ\;}\mathrmbfit{tup}}\rangle}$,
where
$
\mathrmbf{R}^{\mathrm{op}}
\xrightarrow{\;\mathrmbfit{Q}^{\mathrm{op}}\;}
\mathrmbf{Dom}^{\mathrm{op}}
\xrightarrow{\;\mathrmbfit{tup}\;}
\mathrmbf{Set}$;
a schemed domain morphism
${\langle{\mathrmbf{R}_{2},\mathrmbfit{Q}_{2}}\rangle} 
\xrightarrow{{\langle{\mathrmbfit{R},\,\varsigma}\rangle}}
{\langle{\mathrmbf{R}_{1},\mathrmbfit{Q}_{1}}\rangle}$
is mapped to the set diagram morphism
$\mathring{\mathrmbf{tup}}(\mathrmbf{R}_{2},\mathrmbfit{Q}_{2})
\xrightarrow[\mathring{\mathrmbf{tup}}(\mathrmbfit{R},\,\varsigma)]
{{\langle{\mathrmbfit{R},\,\varsigma^{\mathrm{op}}{\!\circ\;}\mathrmbf{tup}}\rangle}}
\mathring{\mathrmbf{tup}}(\mathrmbf{R}_{1},\mathrmbfit{Q}_{1})$,
where
$\mathrmbfit{Q}_{2}^{\mathrm{op}}{\circ\;}\mathrmbfit{tup}
\xLeftarrow
{\;\varsigma^{\mathrm{op}}{\!\circ\;}\mathrmbfit{tup}}
\mathrmbfit{R}^{\mathrm{op}}{\circ\;}
\mathrmbfit{Q}_{1}^{\mathrm{op}}{\circ\;}\mathrmbfit{tup}$.
%
\end{definition}
%
} 

\newpage
\subsection{Type Domain Indexing}
\label{sub:sec:rel:sch:typ:dom}

A type domain, which constrains a signed domain, is an indexed collection of data types.
Here we define schemed domains with fixed type domains.
Schemed domains with fixed signed domains 
(headers plus datatypes)
are trivial.

\subsubsection{Lower Aspect: $\mathring{\mathrmbf{Dom}}(\mathcal{A})$}
\label{sub:sub:sec:dom:typ:dom:lower}

Let 
$\mathcal{A} = {\langle{X,Y,\models_{\mathcal{A}}}\rangle}$ be a fixed type domain.
\begin{definition}\label{def:dom:A:oplax:cxt}
The context of $\mathcal{A}$-schemed domains is 
$\mathring{\mathrmbf{Dom}}(\mathcal{A}) 
= \mathrmbf{Dom}(\mathcal{A})^{\scriptscriptstyle{\Uparrow}}
= \mathrmbf{List}(X)^{\scriptscriptstyle{\Uparrow}}$,
a diagram context over $\mathcal{A}$-signed domains; i.e. $X$-signatures.
\end{definition}
A relational $\mathcal{A}$-schemed domain ${\langle{\mathrmbf{R},\mathrmbfit{S}}\rangle}$ is a diagram of $\mathcal{A}$-signatures,
consisting of 
a shape context $\mathrmbf{R}$ 
and a passage $\mathrmbf{R}\,\xrightarrow{\,\mathrmbfit{S}\;}\mathrmbf{List}(X)$.
A $\mathcal{A}$-schemed domain morphism
${\langle{\mathrmbf{R}_{2},\mathrmbfit{S}_{2}}\rangle} 
\xrightarrow{{\langle{\mathrmbfit{R},\,\varphi}\rangle}}
{\langle{\mathrmbf{R}_{1},\mathrmbfit{S}_{1}}\rangle}$
consists of 
a shape-changing passage $\mathrmbf{R}_{1}\xrightarrow{\;\mathrmbfit{R}\;\,}\mathrmbf{R}_{1}$
and a bridge 
$\mathrmbfit{S}_{2}\xRightarrow{\;\;\varphi\,}
\mathrmbfit{R}{\circ}\mathrmbfit{S}_{1}$.

\begin{figure}
\begin{center}
{{\begin{tabular}{c}
\setlength{\unitlength}{0.56pt}
\begin{picture}(120,80)(8,0)
\put(5,80){\makebox(0,0){\footnotesize{$\mathrmbf{R}_{2}$}}}
\put(125,80){\makebox(0,0){\footnotesize{$\mathrmbf{R}_{1}$}}}
\put(65,0){\makebox(0,0){\footnotesize{$\mathrmbf{List}(X)$}}}
\put(60,92){\makebox(0,0){\scriptsize{$\mathrmbfit{R}$}}}
\put(20,42){\makebox(0,0)[r]{\scriptsize{$\mathrmbfit{S}_{2}$}}}
\put(100,42){\makebox(0,0)[l]{\scriptsize{$\mathrmbfit{S}_{1}$}}}
\put(60,57){\makebox(0,0){\shortstack{\scriptsize{$\varphi$}\\\large{$\Longrightarrow$}}}}
\put(20,80){\vector(1,0){80}}
\put(10,67){\vector(3,-4){38}}
\put(110,68){\vector(-3,-4){38}}
\end{picture}
\end{tabular}}}
\end{center}
\caption{Schemed Domain Morphism: $\mathring{\mathrmbf{Dom}}(\mathcal{A})$}
\label{fig:sch:dom:mor:A}
\end{figure}
%


%
\begin{proposition}\label{prop:dom:A:lim:colim}
The fibered context (Grothendieck construction) of $\mathcal{A}$-schemed domains
$\mathring{\mathrmbf{Dom}}(\mathcal{A}) 
= \mathrmbf{Dom}(\mathcal{A})^{\scriptscriptstyle{\Uparrow}}
= \mathrmbf{List}(X)^{\scriptscriptstyle{\Uparrow}}$
is complete and cocomplete 
and the projection 
$\mathring{\mathrmbf{Dom}}(\mathcal{A}) = 
\mathrmbf{Dom}(\mathcal{A})^{\scriptscriptstyle{\Downarrow}} =
\mathrmbf{List}(X)^{\scriptscriptstyle{\Uparrow}}
\rightarrow\mathrmbf{Cxt}
:{\langle{\mathrmbf{R},\mathrmbfit{S}}\rangle}\mapsto\mathrmbf{R}$ 
is continuous and cocontinuous.
\end{proposition}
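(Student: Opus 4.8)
The plan is to mimic, essentially verbatim, the proofs of Proposition~\ref{prop:lim:colim:schema} and Proposition~\ref{prop:lim:colim:dom}: reduce the claim to the general results on diagram contexts collected in \S\,\ref{append:kan:ext}. By Definition~\ref{def:dom:A:oplax:cxt}, the context $\mathring{\mathrmbf{Dom}}(\mathcal{A}) = \mathrmbf{Dom}(\mathcal{A})^{\scriptscriptstyle{\Uparrow}} = \mathrmbf{List}(X)^{\scriptscriptstyle{\Uparrow}}$ is a diagram context over the category $\mathrmbf{List}(X)$ of $X$-signatures, i.e.\ the fibered context (Grothendieck construction) $\int\hat{\mathrmbf{List}(X)}$. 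Hence the only hypothesis that the machinery of Prop.~\ref{prop:lax:fibered:A} and Prop.~\ref{prop:lim:colim:A} requires of the base is that $\mathrmbf{List}(X)$ be complete and cocomplete.

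So the first step is to establish that $\mathrmbf{List}(X)$ is bicomplete. Since the fixed type domain $\mathcal{A} = {\langle{X,Y,\models_{\mathcal{A}}}\rangle}$ has sort set $X$, the $\mathcal{A}$-signed domains are exactly the $X$-signatures, so $\mathrmbf{Dom}(\mathcal{A}) = \mathrmbf{List}(X)$ is the fiber over $X$ of the sort projection $\mathrmbf{List}\xrightarrow{\mathrmbfit{sort}}\mathrmbf{Set}$ (cf.\ \S2.3 of \cite{kent:fole:era:tbl}, and the description of $\mathrmbf{List}$ used in the proof of Prop.~\ref{prop:pullback:sch:dom:var}). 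This fiber is (equivalent to) a slice category of $\mathrmbf{Set}$ — concretely a copy of $\mathrmbf{Set}^{K}$ for a set $K$ determined by $X$ — and, being a slice of the bicomplete category $\mathrmbf{Set}$, it is itself complete and cocomplete.

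The second step then invokes the lax parts of Prop.~\ref{prop:lax:fibered:A} and Prop.~\ref{prop:lim:colim:A} of \S\,\ref{append:kan:ext} with base category $\mathrmbf{List}(X)$: the former identifies $\mathrmbf{List}(X)^{\scriptscriptstyle{\Uparrow}}$ with the fibered context $\int\hat{\mathrmbf{List}(X)}$, and the latter yields that this context is complete and cocomplete and that the shape projection ${\langle{\mathrmbf{R},\mathrmbfit{S}}\rangle}\mapsto\mathrmbf{R}$ onto $\mathrmbf{Cxt}$ is continuous and cocontinuous, which is exactly the assertion. I do not expect a genuine obstacle here: the proof is a routine instantiation of the appendix results, and the only point demanding care is confirming that those results are stated for an arbitrary bicomplete base — so that they apply with $\mathrmbf{List}(X)$ in place of $\mathrmbf{List}$ or $\mathrmbf{Dom}$ — together with the bicompleteness of $\mathrmbf{List}(X)$ itself.
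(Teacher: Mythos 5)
Your proposal is correct and follows essentially the same route as the paper: the paper's proof is a one-line application of Prop.~\ref{prop:lim:colim:A} of \S\,\ref{append:kan:ext} to the base context $\mathrmbf{Dom}(\mathcal{A})=\mathrmbf{List}(X)$, citing \S\,4.2 of ``The {\ttfamily FOLE} Table'' \cite{kent:fole:era:tbl} for the completeness and cocompleteness of $\mathrmbf{List}(X)$. The only difference is that you supply the bicompleteness of $\mathrmbf{List}(X)$ yourself (correctly, via its identification with a slice of $\mathrmbf{Set}$) where the paper simply cites the earlier reference.
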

\begin{proof}
By Prop.~\ref{prop:lim:colim:A} of \S\,\ref{append:kan:ext},
since 
the context 
$\mathrmbf{Dom}(\mathcal{A}) = \mathrmbf{List}(X)$
is cocomplete and complete
(see \S\,4.2 in the paper
``The {\ttfamily FOLE} Table''
\cite{kent:fole:era:tbl}).
\hfill\rule{5pt}{5pt}
\end{proof}
%


%
\newpage
\subsubsection{Upper Aspect: $\mathring{\mathrmbf{Dom}}$}
\label{sub:sub:sec:dom:typ:dom:upper}

The 
subcontext of 
{\ttfamily FOLE} schemed domains 
(with constant type domains and constant type domain morphisms) 
is denoted by $\mathring{\mathrmbf{Dom}}\subseteq\mathrmbf{DOM}$.

\begin{itemize}
\item 
A schemed domain 
$
{\langle{\mathrmbf{R},\mathrmbfit{S},\mathcal{A}}\rangle}$ 
in $\mathring{\mathrmbf{Dom}}$
consists of a type domain $\mathcal{A}$
and 
a schemed domain 
${\langle{\mathrmbf{R},\mathrmbfit{S}}\rangle}$ 
in $\mathring{\mathrmbf{Dom}}(\mathcal{A})$
with a shape context $\mathrmbf{R}$ and
a passage
$\mathrmbf{R}\xrightarrow{\,\mathrmbfit{S}\;}\mathrmbf{Dom}(\mathcal{A})\cong\mathrmbf{List}(X)$.
%
\item 
A schemed domain morphism
${\langle{\mathrmbf{R}_{2},\mathrmbfit{S}_{2},\mathcal{A}_{2}}\rangle} 
\xrightarrow{{\langle{\mathrmbfit{R},\hat{\varphi},f,g}\rangle}}
{\langle{\mathrmbf{R}_{1},\mathrmbfit{S}_{1},\mathcal{A}_{1}}\rangle}$
in $\mathring{\mathrmbf{Dom}}$
\comment{\begin{center}
{\normalsize{
$
{\langle{\mathrmbf{R}_{2},\mathrmbfit{Q}_{2}\rangle}} 
=
{\langle{\mathrmbf{R}_{2},\mathrmbfit{S}_{2},\mathcal{A}_{2}}\rangle} 
\xrightarrow{{\langle{\mathrmbfit{R},\hat{\varphi},f,g}\rangle}}
{\langle{\mathrmbf{R}_{1},\mathrmbfit{S}_{1},\mathcal{A}_{1}}\rangle}
=
{\langle{\mathrmbf{R}_{1},\mathrmbfit{Q}_{1}}\rangle}
$}}
\end{center}}
(Fig.\,\ref{fig:sch:dom:mor:typ:dom}) 
with constant type domain morphism
${\langle{f,g}\rangle} : \mathcal{A}_{2} \rightleftarrows \mathcal{A}_{1}$
is a 
\texttt{FOLE} schemed domain morphism
${\langle{\mathrmbfit{R},\varsigma}\rangle} : 
{\langle{\mathrmbf{R}_{2},\mathrmbfit{Q}_{2}}\rangle} \rightarrow 
{\langle{\mathrmbf{R}_{1},\mathrmbfit{Q}_{1}}\rangle}$,
whose 
signed domain interpretation bridge
{\footnotesize{$
\mathrmbfit{Q}_{2}
{\,\xRightarrow{\;\,\varsigma\,}\,}
\mathrmbfit{R}{\,\circ\,}\mathrmbfit{Q}_{1}$}}
factors adjointly
\begin{equation}\label{eqn:sch:dom:mor:def}
(\acute{\varphi} \circ \mathrmbfit{inc}_{\mathcal{A}_{2}}) 
\bullet 
(\mathrmbfit{R} \circ \mathrmbfit{S}_{1}  \circ \grave{\iota}_{{\langle{f,g}\rangle}})
= \varsigma = 
(\mathrmbfit{S}_{2} \circ \acute{\iota}_{{\langle{f,g}\rangle}})
\bullet 
(\grave{\varphi} \circ \mathrmbfit{inc}_{\mathcal{A}_{1}}) 
\end{equation}
through the fiber adjunction 
$\underset{\mathrmbf{List}(X_{2})}
{\underbrace{\mathrmbf{Dom}(\mathcal{A}_{2})}}
\xrightarrow
[{\scriptscriptstyle\sum}_{f}\;\dashv\;{f^{\ast}}]
{\grave{\mathrmbfit{dom}}_{{\langle{f,g}\rangle}}
\;\dashv\;
\acute{\mathrmbfit{dom}}_{{\langle{f,g}\rangle}}}
\underset{\mathrmbf{List}(X_{1})}
{\underbrace{\mathrmbf{Dom}(\mathcal{A}_{1})}}$
using
the signed domain inclusion bridge adjointness 
$\hat{\iota}_{{\langle{f,g}\rangle}} 
= (\acute{\iota}_{{\langle{f,g}\rangle}},\grave{\iota}_{{\langle{f,g}\rangle}})$
\comment{
in terms of 
some bridge
$\acute{\varphi} : 
\mathrmbfit{R}{\,\circ\,}\mathrmbfit{S}_{1}
\Rightarrow
\mathrmbfit{S}_{2}\circ\acute{\mathrmbfit{dom}}_{{\langle{f,g}\rangle}}$
and 
the inclusion bridge
$\grave{\iota}_{{\langle{f,g}\rangle}} : 
\mathrmbfit{inc}_{\mathcal{A}_{2}}
\Leftarrow
\grave{\mathrmbfit{tbl}}_{{\langle{f,g}\rangle}}\circ\mathrmbfit{inc}_{\mathcal{A}_{1}}$;
%
or equivalently, 
in terms of their \textbf{levo} bridge adjoints
in Tbl.\,\ref{adjoints:composites} of \S\,\ref{sub:sec:bridges}.
}
%
\begin{center}
{{\footnotesize\setlength{\extrarowheight}{4pt}$\begin{array}
{|@{\hspace{5pt}}l@{\hspace{15pt}}l@{\hspace{5pt}}|}
\multicolumn{1}{l}{\text{\bfseries levo}} & \multicolumn{1}{l}{\text{\bfseries dextro}}
\\ \hline
\acute{\iota}_{{\langle{f,g}\rangle}} :
\grave{\mathrmbfit{dom}}_{{\langle{f,g}\rangle}}{\circ\;}\mathrmbfit{inc}_{\mathcal{A}_{1}} 
\Leftarrow
\mathrmbfit{inc}_{\mathcal{A}_{2}}
&
\grave{\iota}_{{\langle{f,g}\rangle}} :
\mathrmbfit{inc}_{\mathcal{A}_{1}} 
\Leftarrow
\acute{\mathrmbfit{dom}}_{{\langle{f,g}\rangle}}{\circ\;}\mathrmbfit{inc}_{\mathcal{A}_{2}}
\\
\acute{\iota}_{{\langle{f,g}\rangle}} = 
\bigl(\eta_{f}{\;\circ\;}\mathrmbfit{inc}_{\mathcal{A}_{2}}\bigr)
{\;\bullet\;} 
\bigl(\grave{\mathrmbfit{dom}}_{{\langle{f,g}\rangle}}{\;\circ\;}\grave{\iota}_{{\langle{f,g}\rangle}}\bigr)
&
\grave{\iota}_{{\langle{f,g}\rangle}} = 
\bigl(\acute{\mathrmbfit{dom}}_{{\langle{f,g}\rangle}}{\;\circ\;}\acute{\iota}_{{\langle{f,g}\rangle}}\bigr)
{\;\bullet\;}
\bigl(\varepsilon_{f}{\;\circ\;}\mathrmbfit{inc}_{\mathcal{A}_{1}}\bigr)
\rule[-7pt]{0pt}{10pt}
\\\hline
\multicolumn{1}{l}{}
\end{array}$}}
\end{center}
abstractly defined in 
\S\,\ref{append:grothen:construct}.
This consists of equivalent morphisms  
\newline\mbox{}\hfill
$\underset{\in\;\;
\mathring{\mathrmbf{Dom}}(\mathcal{A}_{2})
}
{\underbrace{{\langle{\mathrmbf{R}_{2},\mathrmbfit{S}_{2}}\rangle} 
\xrightarrow{\;{\langle{\mathrmbfit{R},\acute{\varphi}}\rangle}\;}
\overset{
{\langle{
\mathrmbf{R}_{1},\mathrmbfit{S}_{1}{\circ\,}\acute{\mathrmbfit{dom}}_{{\langle{f,g}\rangle}}
}\rangle}
}
{\overbrace{
\acute{\mathrmbfit{dom}}_{{\langle{f,g}\rangle}}(\mathrmbf{R}_{1},\mathrmbfit{S}_{1})
}}
}}
{\;\;\;\overset{\cong}{\leftrightarrow}\;\;\;}
\underset{\in\;\;
\mathring{\mathrmbf{Dom}}(\mathcal{A}_{1})
}
{\underbrace{
\overset{
{\langle{
\mathrmbf{R}_{2},\mathrmbfit{S}_{2}{\circ\,}\grave{\mathrmbfit{dom}}_{{\langle{f,g}\rangle}}
}\rangle}
}
{\overbrace{
\grave{\mathrmbfit{dom}}_{{\langle{f,g}\rangle}}(\mathrmbf{R}_{2},\mathrmbfit{S}_{2}) 
}}
\xrightarrow{\;{\langle{\mathrmbfit{R},\grave{\varphi}}\rangle}\;}
{\langle{\mathrmbf{R}_{1},\mathrmbfit{S}_{1}}\rangle}
}}$
\hfill\mbox{}\newline
We normally just use the bridge restriction 
$\acute{\varphi}$ 
or
$\grave{\varphi}$ 
for the schemed domain morphism.
We use 
$\hat{\varphi}$ to denote either of these equivalent bridges.
The original definition can be computed with the factorization in 
Disp.\,\ref{eqn:sch:dom:mor:def}.
\end{itemize}
\begin{figure}
\begin{center}
\begin{tabular}{c}
{{
\begin{tabular}{@{\hspace{5pt}}c@{\hspace{50pt}}c@{\hspace{50pt}}c@{\hspace{5pt}}}
{{\begin{tabular}[b]{c}
\setlength{\unitlength}{0.58pt}
\begin{picture}(120,180)(0,0)
\put(5,160){\makebox(0,0){\footnotesize{$\mathrmbf{R}_{2}$}}}
\put(125,160){\makebox(0,0){\footnotesize{$\mathrmbf{R}_{1}$}}}
\put(-2,80){\makebox(0,0){\footnotesize{$\mathrmbf{Dom}(\mathcal{A}_{2})$}}}
\put(122,80){\makebox(0,0){\footnotesize{$\mathrmbf{Dom}(\mathcal{A}_{1})$}}}
\put(60,5){\makebox(0,0){\normalsize{$\mathrmbf{Dom}$}}}
\put(60,172){\makebox(0,0){\scriptsize{$\mathrmbfit{R}$}}}
\put(-8,125){\makebox(0,0)[r]{\scriptsize{$\mathrmbfit{S}_{2}$}}}
\put(128,125){\makebox(0,0)[l]{\scriptsize{$\mathrmbfit{S}_{1}$}}}
\put(60,92){\makebox(0,0){\scriptsize{$\acute{\mathrmbfit{dom}}_{{\langle{f,g}\rangle}}$}}}
\put(62,72){\makebox(0,0){\scriptsize{$f^{\ast}$}}}
\put(24,38){\makebox(0,0)[r]{\scriptsize{$\mathrmbfit{inc}_{\mathcal{A}_{2}}$}}}
\put(97,38){\makebox(0,0)[l]{\scriptsize{$\mathrmbfit{inc}_{\mathcal{A}_{1}}$}}}
\put(60,130){\makebox(0,0){\shortstack{\scriptsize{$\acute{\varphi}$}\\\large{$\Longrightarrow$}
\\\tiny{\textbf{levo}}
}}}
\put(60,51){\makebox(0,0){\shortstack{\scriptsize{$\grave{\iota}_{{\langle{f,g}\rangle}}$}\\\large{$\Longrightarrow$}}}}
\put(20,160){\vector(1,0){80}}
\put(80,80){\vector(-1,0){40}}
\put(0,145){\vector(0,-1){50}}
\put(120,145){\vector(0,-1){50}}
\put(10,68){\vector(3,-4){38}}
\put(111,68){\vector(-3,-4){38}}
\end{picture}
\end{tabular}}}
&
{{\begin{tabular}[b]{c}
\setlength{\unitlength}{0.58pt}
\begin{picture}(80,180)(5,0)
\put(5,160){\makebox(0,0){\footnotesize{$\mathrmbf{R}_{2}$}}}
\put(85,160){\makebox(0,0){\footnotesize{$\mathrmbf{R}_{1}$}}}
\put(40,5){\makebox(0,0){\normalsize{$\mathrmbf{Dom}$}}}
\put(40,172){\makebox(0,0){\scriptsize{$\mathrmbfit{R}$}}}
\put(2,100){\makebox(0,0)[r]{\footnotesize{$\mathrmbfit{Q}_{2}$}}}
\put(81,100){\makebox(0,0)[l]{\footnotesize{$\mathrmbfit{Q}_{1}$}}}
\put(42,100){\makebox(0,0){\shortstack{\normalsize{$\xRightarrow{\;\;\varsigma\;\;}$}}}}
\put(20,160){\vector(1,0){45}}
\qbezier(0,150)(0,80)(33,18)\put(33,18){\vector(2,-3){0}}
\qbezier(80,150)(80,80)(47,18)\put(47,18){\vector(-2,-3){0}}
\put(-30,80){\makebox(0,0){\shortstack{\normalsize{$=$}}}}
\put(120,80){\makebox(0,0){\shortstack{\normalsize{$=$}}}}
\end{picture}
\end{tabular}}}
&
{{\begin{tabular}[b]{c}
\setlength{\unitlength}{0.58pt}
\begin{picture}(120,180)(0,0)
\put(5,160){\makebox(0,0){\footnotesize{$\mathrmbf{R}_{2}$}}}
\put(125,160){\makebox(0,0){\footnotesize{$\mathrmbf{R}_{1}$}}}
\put(-2,80){\makebox(0,0){\footnotesize{$\mathrmbf{Dom}(\mathcal{A}_{2})$}}}
\put(122,80){\makebox(0,0){\footnotesize{$\mathrmbf{Dom}(\mathcal{A}_{1})$}}}
\put(60,5){\makebox(0,0){\normalsize{$\mathrmbf{Dom}$}}}
\put(60,172){\makebox(0,0){\scriptsize{$\mathrmbfit{R}$}}}
\put(-8,125){\makebox(0,0)[r]{\scriptsize{$\mathrmbfit{S}_{2}$}}}
\put(128,125){\makebox(0,0)[l]{\scriptsize{$\mathrmbfit{S}_{1}$}}}
\put(60,92){\makebox(0,0){\scriptsize{$\grave{\mathrmbfit{dom}}_{{\langle{f,g}\rangle}}$}}}
\put(60,70){\makebox(0,0){\scriptsize{${\scriptstyle\sum}_{f}$}}}
\put(24,38){\makebox(0,0)[r]{\scriptsize{$\mathrmbfit{inc}_{\mathcal{A}_{2}}$}}}
\put(97,38){\makebox(0,0)[l]{\scriptsize{$\mathrmbfit{inc}_{\mathcal{A}_{1}}$}}}
\put(60,130){\makebox(0,0){\shortstack{
\scriptsize{$\grave{\varphi}$}\\\large{$\Longrightarrow$}
\\\tiny{\textbf{dextro}}
}}}
\put(60,50){\makebox(0,0){\shortstack{\scriptsize{$\acute{\iota}_{{\langle{f,g}\rangle}}$}\\\large{$\Longrightarrow$}}}}
\put(20,160){\vector(1,0){80}}
\put(40,80){\vector(1,0){40}}
\put(0,145){\vector(0,-1){50}}
\put(120,145){\vector(0,-1){50}}
\put(9,68){\vector(3,-4){38}}
\put(111,68){\vector(-3,-4){38}}
\end{picture}
\end{tabular}}}
\end{tabular}}}
\\\\
{\scriptsize\setlength{\extrarowheight}{4pt}$\begin{array}{|@{\hspace{5pt}}l@{\hspace{15pt}}l@{\hspace{5pt}}|}
\multicolumn{1}{l}{\text{\bfseries levo}}
& 
\multicolumn{1}{l}{\text{\bfseries dextro}} 
\\ \hline
\acute{\varphi} : 
\mathrmbfit{S}_{2}
\Rightarrow
\mathrmbfit{R}{\,\circ\,}\mathrmbfit{S}_{1}\circ\acute{\mathrmbfit{dom}}_{{\langle{f,g}\rangle}}
&
\grave{\varphi} : 
\mathrmbfit{S}_{2}\circ\grave{\mathrmbfit{dom}}_{{\langle{f,g}\rangle}}
\Rightarrow
\mathrmbfit{R}{\,\circ\,}\mathrmbfit{S}_{1}
\\
\acute{\varphi} =
(\mathrmbfit{S}_{2}\circ\eta_{{\langle{f,g}\rangle}})
\bullet
(\grave{\varphi}\circ\acute{\mathrmbfit{dom}}_{{\langle{f,g}\rangle}})
&
\grave{\varphi} =
(\acute{\varphi}\circ\grave{\mathrmbfit{dom}}_{{\langle{f,g}\rangle}})
\bullet
(\mathrmbfit{R}\,{\circ\,}\mathrmbfit{S}_{1}\circ\varepsilon_{{\langle{f,g}\rangle}})
\rule[-7pt]{0pt}{10pt}
\\\hline
\end{array}$}
\end{tabular}
\end{center}
\caption{Schemed Domain Morphism: $\mathring{\mathrmbf{Dom}}$ }
\label{fig:sch:dom:mor:typ:dom}
\end{figure}
\begin{proposition}\label{prop:fib:cxt:sch:dom:var:typ:dom:sh}
Hence,
the fibered context 
$\mathring{\mathrmbf{Dom}}$
(with constant type domain) 
is the Grothendieck construction 
of the indexed adjunction
\newline\mbox{}\hfill
$\mathring{\mathrmbf{Dom}}
= \int_\mathrmit{data}:
\mathrmbf{Cls}\xrightarrow{\,\hat{\mathrmbfit{dom}}\;}\mathrmbf{Adj}$.
\hfill\mbox{}
\newline
%
%
\end{proposition}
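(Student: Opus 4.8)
The plan is to recognize the structure just defined in this subsection as an instance of the abstract \emph{fibered context} (Grothendieck construction of an indexed adjunction) developed in \S\,\ref{append:grothen:construct}, applied to the indexed adjunction $\hat{\mathrmbfit{dom}} : \mathrmbf{Cls} \to \mathrmbf{Adj}$ assembled from the fiber adjunctions of ``The {\ttfamily FOLE} Table'' \cite{kent:fole:era:tbl}, after lifting fiberwise to diagram contexts by the $2$-functor ${(\mbox{-})}^{\scriptscriptstyle\Uparrow} : \mathrmbf{Adj} \to \mathrmbf{Adj}$ that sends an adjunction $L \dashv R$ to the post-composition adjunction $L^{\scriptscriptstyle\Uparrow} \dashv R^{\scriptscriptstyle\Uparrow}$ between diagram contexts. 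This composite passage sends a type domain $\mathcal{A}$ to $\mathrmbf{Dom}(\mathcal{A})^{\scriptscriptstyle\Uparrow} = \mathring{\mathrmbf{Dom}}(\mathcal{A})$ (Def.\,\ref{def:dom:A:oplax:cxt}) and a type domain morphism $\langle f,g\rangle : \mathcal{A}_2 \rightleftarrows \mathcal{A}_1$ to the fiber adjunction $\grave{\mathrmbfit{dom}}_{\langle f,g\rangle} \dashv \acute{\mathrmbfit{dom}}_{\langle f,g\rangle}$ between $\mathring{\mathrmbf{Dom}}(\mathcal{A}_2)$ and $\mathring{\mathrmbf{Dom}}(\mathcal{A}_1)$, acting on a diagram by post-composition with $\grave{\mathrmbfit{dom}}_{\langle f,g\rangle}$ (resp. $\acute{\mathrmbfit{dom}}_{\langle f,g\rangle}$) exactly as in Fig.\,\ref{fig:sch:dom:mor:typ:dom}, the comparison cells being inherited from the indexed-adjunction coherence of $\hat{\mathrmbfit{dom}}$.

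First I would match objects: an object of the Grothendieck construction is a pair $(\mathcal{A}, \langle\mathrmbf{R},\mathrmbfit{S}\rangle)$ with $\mathcal{A} \in \mathrmbf{Cls}$ and $\langle\mathrmbf{R},\mathrmbfit{S}\rangle \in \mathring{\mathrmbf{Dom}}(\mathcal{A})$, which is precisely the datum $\langle\mathrmbf{R},\mathrmbfit{S},\mathcal{A}\rangle$ of the first bullet of \S\,\ref{sub:sub:sec:dom:typ:dom:upper}. Then I would match morphisms: a morphism of the Grothendieck construction over $\langle f,g\rangle$ is a fiber morphism, presented on the \textbf{levo} side as $\langle\mathrmbf{R}_2,\mathrmbfit{S}_2\rangle \to \acute{\mathrmbfit{dom}}_{\langle f,g\rangle}(\mathrmbf{R}_1,\mathrmbfit{S}_1)$ in $\mathring{\mathrmbf{Dom}}(\mathcal{A}_2)$ and on the \textbf{dextro} side as $\grave{\mathrmbfit{dom}}_{\langle f,g\rangle}(\mathrmbf{R}_2,\mathrmbfit{S}_2) \to \langle\mathrmbf{R}_1,\mathrmbfit{S}_1\rangle$ in $\mathring{\mathrmbf{Dom}}(\mathcal{A}_1)$, the two interconverting through $\grave{\mathrmbfit{dom}}_{\langle f,g\rangle} \dashv \acute{\mathrmbfit{dom}}_{\langle f,g\rangle}$ by the mate formulas of the levo/dextro table in Fig.\,\ref{fig:sch:dom:mor:typ:dom}. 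These are exactly the equivalent morphisms $\langle\mathrmbfit{R},\acute{\varphi}\rangle$ and $\langle\mathrmbfit{R},\grave{\varphi}\rangle$ displayed after Disp.\,\ref{eqn:sch:dom:mor:def}, and the passage back to an honest $\mathrmbf{DOM}$-morphism, i.e. to the interpretation bridge $\varsigma : \mathrmbfit{Q}_2 \Rightarrow \mathrmbfit{R}\circ\mathrmbfit{Q}_1$, is performed by the adjoint factorization Disp.\,\ref{eqn:sch:dom:mor:def} using the signed-domain inclusion bridge adjointness $\hat{\iota}_{\langle f,g\rangle} = (\acute{\iota}_{\langle f,g\rangle},\grave{\iota}_{\langle f,g\rangle})$ --- which is precisely the datum the abstract construction of \S\,\ref{append:grothen:construct} attaches to a base morphism. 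The forgetful assignment $\langle\mathrmbf{R},\mathrmbfit{S},\mathcal{A}\rangle \mapsto \mathcal{A}$, $\langle\mathrmbfit{R},\hat{\varphi},f,g\rangle \mapsto \langle f,g\rangle$ then coincides with the Grothendieck projection onto $\mathrmbf{Cls}$, which is what the subscript ${\mathrmit{data}}$ records.

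The one step with real content is that composition agrees. In $\mathring{\mathrmbf{Dom}}$ it is the composition inherited from $\mathrmbf{DOM}$ (shape passages composed, interpretation bridges pasted); in the Grothendieck construction it is the cocycle composite built from pseudofunctoriality of $\grave{\mathrmbfit{dom}}$ (resp. $\acute{\mathrmbfit{dom}}$) and conjugacy of its comparison cells. I would verify the identity on the dextro side: for composable $\langle\mathrmbfit{R},\grave{\varphi},f,g\rangle$ and $\langle\mathrmbfit{R}',\grave{\varphi}',f',g'\rangle$, paste $\grave{\varphi}'$ after $\grave{\mathrmbfit{dom}}_{\langle f',g'\rangle}$ applied to $\grave{\varphi}$, precomposed with the pseudofunctoriality cell $\grave{\mathrmbfit{dom}}_{(\langle f,g\rangle\,\cdot\,\langle f',g'\rangle)} \cong \grave{\mathrmbfit{dom}}_{\langle f,g\rangle}\circ\grave{\mathrmbfit{dom}}_{\langle f',g'\rangle}$, and then check --- transporting through the fiber adjunctions via the conversion formulas and the naturality of the units and counits --- that reconstructing the interpretation bridge via Disp.\,\ref{eqn:sch:dom:mor:def} yields the paste $\varsigma \bullet (\mathrmbfit{R}\circ\varsigma')$. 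I expect this bookkeeping with mates --- the interchange law relating the cells $\eta_{\langle f,g\rangle}$, $\varepsilon_{\langle f,g\rangle}$, $\acute{\iota}_{\langle f,g\rangle}$, $\grave{\iota}_{\langle f,g\rangle}$ --- to be the only delicate point, and it is exactly the computation carried out once and for all in \S\,\ref{append:grothen:construct}. Hence it suffices to observe that ${(\mbox{-})}^{\scriptscriptstyle\Uparrow}\circ\hat{\mathrmbfit{dom}}$ is an indexed adjunction of the form treated there, giving $\mathring{\mathrmbf{Dom}} = \int_{\mathrmit{data}}\hat{\mathrmbfit{dom}}$; completeness and cocompleteness of the fibers (Prop.\,\ref{prop:dom:A:lim:colim}) then endow this fibered context with limits and colimits, but that is not needed for the bare identification asserted here.
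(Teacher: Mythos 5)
Your proposal is correct and follows the same route as the paper: the paper's proof simply exhibits, for each type domain morphism ${\langle{f,g}\rangle}$, the schemed domain fiber adjunction $\grave{\mathrmbfit{dom}}_{{\langle{f,g}\rangle}}\dashv\acute{\mathrmbfit{dom}}_{{\langle{f,g}\rangle}}$ and declares that these define $\mathring{\mathrmbf{Dom}}$ via the Grothendieck construction. You spell out more of what the paper leaves implicit --- notably the fiberwise lift to diagram contexts by ${(\mbox{-})}^{\scriptscriptstyle{\Uparrow}}$, the levo/dextro mate correspondence for morphisms, and the compatibility of composition --- but the underlying argument is the same.
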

\begin{proof}
For any type domain morphism
$\mathcal{A}_{2}\xrightleftharpoons{{\langle{f,g}\rangle}}\mathcal{A}_{1}$,
there is a schemed domain fiber adjunction
$\underset{\mathrmbf{List}(X_{2})}
{\underbrace{\mathrmbf{Dom}(\mathcal{A}_{2})}}
\xrightarrow
[{\scriptscriptstyle\sum}_{f}\;\dashv\;{f^{\ast}}]
{\grave{\mathrmbfit{dom}}_{{\langle{f,g}\rangle}}
\;\dashv\;
\acute{\mathrmbfit{dom}}_{{\langle{f,g}\rangle}}}
\underset{\mathrmbf{List}(X_{1})}
{\underbrace{\mathrmbf{Dom}(\mathcal{A}_{1})}}$
%
that defines the context of schemed domains $\mathring{\mathrmbf{Dom}}$
via the Grothendieck construction.
\mbox{}\hfill\rule{5pt}{5pt}
\end{proof}
%

%
\begin{proposition}\label{prop:dom:typ:dom:lim:colim}
$\mathring{\mathrmbf{Dom}}$ is a complete and cocomplete context.
\end{proposition}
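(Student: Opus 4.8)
The plan is to read off completeness and cocompleteness directly from the presentation of $\mathring{\mathrmbf{Dom}}$ established in Prop.~\ref{prop:fib:cxt:sch:dom:var:typ:dom:sh}: it is the Grothendieck construction $\int_{\mathrmit{data}}$ of the indexed adjunction $\mathrmbf{Cls}\xrightarrow{\,\hat{\mathrmbfit{dom}}\,}\mathrmbf{Adj}$, whose fibre over a type domain $\mathcal{A}={\langle{X,Y,\models_{\mathcal{A}}}\rangle}$ is the diagram context $\mathring{\mathrmbf{Dom}}(\mathcal{A})=\mathrmbf{Dom}(\mathcal{A})^{\scriptscriptstyle{\Uparrow}}=\mathrmbf{List}(X)^{\scriptscriptstyle{\Uparrow}}$ and whose transition along a type domain morphism $\mathcal{A}_{2}\xrightleftharpoons{{\langle{f,g}\rangle}}\mathcal{A}_{1}$ is the fibre adjunction $\grave{\mathrmbfit{dom}}_{{\langle{f,g}\rangle}}\;\dashv\;\acute{\mathrmbfit{dom}}_{{\langle{f,g}\rangle}}$ (equivalently, ${\scriptstyle\sum}_{f}\dashv f^{\ast}$). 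Because $\hat{\mathrmbfit{dom}}$ is \emph{adjunction}-valued, the projection $\mathring{\mathrmbf{Dom}}\rightarrow\mathrmbf{Cls}$ is at once a \emph{fibration} --- with cartesian lifts supplied by the right adjoints $\acute{\mathrmbfit{dom}}_{{\langle{f,g}\rangle}}$, which is the \textbf{levo} presentation of a morphism in Fig.\,\ref{fig:sch:dom:mor:typ:dom} --- and an \emph{opfibration} --- with cocartesian lifts supplied by the left adjoints $\grave{\mathrmbfit{dom}}_{{\langle{f,g}\rangle}}$, the \textbf{dextro} presentation. This places $\mathring{\mathrmbf{Dom}}$ inside the scope of the fibered (co)completeness machinery of \S\,\ref{append:kan:ext} (Prop.~\ref{prop:lax:fibered:A} and Prop.~\ref{prop:lim:colim:A}), and the argument reduces to checking the hypotheses of that machinery.

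First I would assemble the three standing inputs. (i) The base $\mathrmbf{Cls}$ is complete and cocomplete (the context of classifications; cf.\ ``The {\ttfamily FOLE} Table'' \cite{kent:fole:era:tbl}). (ii) Each fibre $\mathring{\mathrmbf{Dom}}(\mathcal{A})=\mathrmbf{Dom}(\mathcal{A})^{\scriptscriptstyle{\Uparrow}}=\mathrmbf{List}(X)^{\scriptscriptstyle{\Uparrow}}$ is complete and cocomplete --- this is precisely Prop.~\ref{prop:dom:A:lim:colim}. (iii) The transition functors preserve the relevant (co)limits with no additional work: on fibres, reindexing along ${\langle{f,g}\rangle}$ is post-composition of diagrams with $\acute{\mathrmbfit{dom}}_{{\langle{f,g}\rangle}}=f^{\ast}$, a right adjoint, hence preserves all limits, while opreindexing is post-composition with $\grave{\mathrmbfit{dom}}_{{\langle{f,g}\rangle}}={\scriptstyle\sum}_{f}$, a left adjoint, hence preserves all colimits (post-composition with an adjoint pair of passages is again an adjoint pair). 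It is exactly the adjunction-valuedness of $\hat{\mathrmbfit{dom}}$ that makes (iii) automatic.

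With these in hand, the (co)limit of a diagram $\mathrmbf{J}\xrightarrow{\,\mathrmbfit{D}\,}\mathring{\mathrmbf{Dom}}$ is built in the standard two-stage fashion. For a limit: project $\mathrmbfit{D}$ to $\mathrmbf{Cls}$ and form its limit $\mathcal{A}$ there; reindex every component of $\mathrmbfit{D}$ into the limiting fibre $\mathring{\mathrmbf{Dom}}(\mathcal{A})$ along the appropriate $\acute{\mathrmbfit{dom}}$; take the limit of the resulting $\mathrmbf{J}$-diagram inside that complete fibre; and reassemble it via the cartesian lifts --- where the fact that $\acute{\mathrmbfit{dom}}$ preserves limits is what forces the candidate cone to be universal. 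Dually, a colimit is obtained by forming the colimit $\mathcal{A}$ of the projection in $\mathrmbf{Cls}$, pushing the components forward into $\mathring{\mathrmbf{Dom}}(\mathcal{A})$ along the corresponding $\grave{\mathrmbfit{dom}}$, taking the colimit inside that cocomplete fibre, and reassembling via the cocartesian lifts, using that $\grave{\mathrmbfit{dom}}$ preserves colimits. Thus $\mathring{\mathrmbf{Dom}}$ has all small limits and colimits (and, as a byproduct, the projection $\mathring{\mathrmbf{Dom}}\rightarrow\mathrmbf{Cls}$ is continuous and cocontinuous).

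The one genuinely delicate point --- and the step I expect to be the main obstacle --- is the universality check in this two-stage construction: verifying that the cone (resp.\ cocone) manufactured from a base (co)limit together with a fibrewise (co)limit is terminal (resp.\ initial). This is where preservation property (iii) is consumed, and where one must move back and forth between the \textbf{levo} (fibration) and \textbf{dextro} (opfibration) descriptions of $\mathring{\mathrmbf{Dom}}$-morphisms recorded in Fig.\,\ref{fig:sch:dom:mor:typ:dom} and the adjoint factorization of Disp.\,\ref{eqn:sch:dom:mor:def}. Since that bookkeeping is carried out once and for all in \S\,\ref{append:kan:ext}, the proof amounts --- once (i)--(iii) are in place --- to a single appeal to Prop.~\ref{prop:lax:fibered:A} and Prop.~\ref{prop:lim:colim:A}.
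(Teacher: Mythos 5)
Your argument is correct and is essentially the paper's own proof: the paper simply invokes Fact\,\ref{fact:groth:adj:lim:colim} of \S\,\ref{append:grothen:construct} (Grothendieck construction of an indexed adjunction with complete/cocomplete indexing context $\mathrmbf{Cls}$ and complete/cocomplete fibers via Prop.\,\ref{prop:dom:A:lim:colim}), with the adjunction-valuedness of $\hat{\mathrmbfit{dom}}$ discharging the preservation hypotheses exactly as you observe. The only quibble is your citation: the relevant machinery here is Fact\,\ref{fact:groth:adj:lim:colim} (and Facts\,\ref{fact:groth:lim}--\ref{fact:groth:colim} behind it), not Prop.\,\ref{prop:lax:fibered:A}/Prop.\,\ref{prop:lim:colim:A} of \S\,\ref{append:kan:ext}, which concern the diagram contexts fibered over $\mathrmbf{Cxt}$ rather than the fibration over $\mathrmbf{Cls}$.
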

\begin{proof}
By Fact\,\ref{fact:groth:adj:lim:colim}
of \S\,\ref{append:grothen:construct},
since
the indexing context $\mathrmbf{Cls}$ is complete and cocomplete
(\cite{barwise:seligman:97}),
and
the fiber context 
$\mathring{\mathrmbf{Dom}}(\mathcal{A}) 
= \mathrmbf{Dom}(\mathcal{A})^{\scriptscriptstyle{\Uparrow}}
= \mathrmbf{List}(X)^{\scriptscriptstyle{\Uparrow}}$ 
is complete and cocomplete
for each type domain $\mathcal{A}$
by
Prop.\,\ref{prop:dom:A:lim:colim}
above.
\hfill\rule{5pt}{5pt}
\end{proof}
%

%

%
%

\newpage
\section{\texttt{FOLE} Databases}\label{sec:rel:db}

The relational database contexts defined in this paper
are listed in Tbl.~\ref{tbl:db:cxts}.
%
%
\begin{table}
\begin{center}
{\fbox{\scriptsize{\setlength{\extrarowheight}{2pt}{\begin{tabular}
{|@{\hspace{3pt}}r@{\hspace{5pt}=\hspace{5pt}}l@{\hspace{5pt}:\hspace{5pt}}l@{\hspace{20pt}}
l@{\hspace{5pt}\text{in}\hspace{5pt}}l|}
\hline
\rule{0pt}{9pt}
$\mathrmbf{DB}$
&
$\mathrmbf{Tbl}^{\scriptscriptstyle{\Downarrow}}$
&
\textit{diagram context} 
& 
Def.~\ref{def:db:oplax:cxt}
&
\S~\ref{sub:sec:rel:db:gen}
\\\hline
\rule{0pt}{10pt}
${\mathrmbf{Db}(\mathcal{A})}$
&
${\mathrmbf{Tbl}(\mathcal{A})^{\scriptscriptstyle{\Downarrow}}}$ 
&
\textit{diagram context}
& 
Def.~\ref{def:db:A:oplax:cxt}
&
\S~\ref{sub:sub:sec:rel:db:typ:dom:lower}
\\
$\mathrmbf{Db}$
&
$\int
\mathrmbf{Cls}^{\mathrm{op}}\!\xrightarrow{\,\hat{\mathrmbfit{db}}\;}\mathrmbf{Adj}$
&
\textit{Grothendieck construction}
&
Prop.~\ref{prop:fib:cxt:db:var:typ:dom:sh}
&
\S~\ref{sub:sub:sec:rel:db:typ:dom:upper}
\\\hline
\end{tabular}
}}}}
\end{center}
\caption{Relational Database Contexts}
\label{tbl:db:cxts}
\end{table}
%

\comment{
\begin{table}
\begin{center}
{{\footnotesize{\setlength{\extrarowheight}{3pt}
{\begin{tabular}{|@{\hspace{3pt}}l@{\hspace{10pt}}l@{\hspace{5pt}}|}
\multicolumn{1}{l}{\text{\bfseries atypical context}} & \multicolumn{1}{l}{\text{\bfseries tuple passage}}
\\ \hline
$\widehat{\mathrmbf{DB}}
= \bigl(\mathrmbf{SET}{\;\downarrow\,}\mathring{\mathrmbfit{tup}}\bigr)$
&
$\mathrmbf{DOM}^{\mathrm{op}}
\!\xrightarrow{\mathring{\mathrmbfit{tup}}}
\mathrmbf{SET}$
\\\hline
$\widehat{\mathrmbf{Db}}(\mathcal{A})
= \bigl(\mathrmbf{SET}{\;\downarrow\,}\mathring{\mathrmbfit{tup}}_{\mathcal{A}}\bigr)$
&
$\mathring{\mathrmbf{Dom}}(\mathcal{A})^{\mathrm{op}} 
\xrightarrow{\mathring{\mathrmbfit{tup}}_{\mathcal{A}}}
\mathrmbf{SET}$
\\\hline
$\widehat{\mathrmbf{Db}}
= \bigl(\mathrmbf{SET}{\;\downarrow\,}\mathring{\mathrmbfit{tup}}\bigr)$
&
$\mathring{\mathrmbf{Dom}}^{\mathrm{op}}
\!\xrightarrow{\mathring{\mathrmbfit{tup}}}
\mathrmbf{SET}$
\\\hline
\end{tabular}}}}}
\end{center}
\caption{Characterization}
\label{tbl:adb:char}
\end{table}
}
%

\subsection{General Case: $\mathrmbf{DB}$}
\label{sub:sec:rel:db:gen}

In this section
we discuss the context of relational databases $\mathrmbf{DB}$.
which mirrors 
the context of tables 
{\footnotesize{$
\mathrmbf{Tbl}
$}\normalsize}
%
at a higher dimension. 
%
\footnote{
The database 
$\mathrmbf{R}^{\mathrm{op}}\!\xrightarrow{\,\mathrmbfit{T}\;}\mathrmbf{Tbl}$
as developed in this paper is a derived concept built up from the basic concept of 
the table 
$\mathcal{T} = {\langle{\mathrmbfit{D},K,t}\rangle} \in \mathrmbf{Tbl}$
as defined in the paper ``The {\ttfamily FOLE} Table'' \cite{kent:fole:era:tbl}.
We replace tables $\mathcal{T} \in \mathrmbf{Tbl}$ 
with diagrams (passages) 
$\mathrmbf{R}^{\mathrm{op}}\!\xrightarrow{\,\mathrmbfit{T}\;}\mathrmbf{Tbl}$, and 
replace table morphisms 
$\mathcal{T}_{2}\xrightarrow{{\langle{{\langle{h,f,g}\rangle},k}\rangle}}
\mathcal{T}_{1}$ in $\mathrmbf{Tbl}$ 
with bridges 
$\xi : \mathrmbfit{T}_{2}\Leftarrow\mathrmbfit{R}^{\mathrm{op}}{\circ}\mathrmbfit{T}_{1}$.
}
%
A relational database ${\langle{\mathrmbf{R},\mathrmbfit{T}}\rangle}$ is a diagram of tables,
consisting of 
a shape context $\mathrmbf{R}$ 
and a passage $\mathrmbf{R}^{\mathrm{op}}\!\xrightarrow{\,\mathrmbfit{T}\;}\mathrmbf{Tbl}$.
A relational database morphism 
(Fig.~\ref{fig:db:mor:gen})
${\langle{\mathrmbf{R}_{2},\mathrmbfit{T}_{2}}\rangle} 
\xleftarrow{{\langle{\mathrmbfit{R},\,\xi}\rangle}}
{\langle{\mathrmbf{R}_{1},\mathrmbfit{T}_{1}}\rangle}$
consists of 
a shape-changing passage $\mathrmbf{R}_{2}\xrightarrow{\;\mathrmbfit{R}\;\,}\mathrmbf{R}_{1}$
and a bridge 
$\mathrmbfit{T}_{2}\xLeftarrow{\;\xi\,}
\mathrmbfit{R}^{\mathrm{op}}{\circ}\mathrmbfit{T}_{1}
$.
\begin{figure}
\begin{center}
{{\begin{tabular}{c}
\setlength{\unitlength}{0.56pt}
\begin{picture}(120,80)(8,0)
\put(5,80){\makebox(0,0){\footnotesize{$\mathrmbf{R}_{2}^{\mathrm{op}}$}}}
\put(125,80){\makebox(0,0){\footnotesize{$\mathrmbf{R}_{1}^{\mathrm{op}}$}}}
\put(65,0){\makebox(0,0){\footnotesize{$\mathrmbf{Tbl}$}}}
\put(65,92){\makebox(0,0){\scriptsize{$\mathrmbfit{R}^{\mathrm{op}}$}}}
\put(20,42){\makebox(0,0)[r]{\scriptsize{$\mathrmbfit{T}_{2}$}}}
\put(100,42){\makebox(0,0)[l]{\scriptsize{$\mathrmbfit{T}_{1}$}}}
\put(60,57){\makebox(0,0){\shortstack{\scriptsize{$\xi$}\\\large{$\Longleftarrow$}}}}
\put(20,80){\vector(1,0){80}}
\put(10,67){\vector(3,-4){38}}
\put(110,68){\vector(-3,-4){38}}
\end{picture}
\end{tabular}}}
\end{center}
\caption{Database Morphism: $\mathrmbf{DB}$}
\label{fig:db:mor:gen}
\end{figure}
\begin{definition}\label{def:db:oplax:cxt}
The context of databases is the 
oplax comma 
context
$\mathrmbf{DB} = \mathrmbf{Tbl}^{\scriptscriptstyle{\Downarrow}}$,
a diagram context
over tables.
(Def.\,\ref{def:lax:oplax} in \S\,\ref{sub:sub:sec:lax:comma:cxt})
\end{definition}
\begin{proposition}\label{prop:lim:colim:db:tbl}
The context of databases 
$\mathrmbf{DB}$
is complete and cocomplete 
and the projection 
$\mathrmbf{DB}
\rightarrow\mathrmbf{Cxt}
:{\langle{\mathrmbf{R},\mathrmbfit{T}}\rangle}\mapsto\mathrmbf{R}$ 
is continuous and cocontinuous.
\end{proposition}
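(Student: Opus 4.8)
The plan is to treat this as the database analogue of Proposition~\ref{prop:lim:colim:schema} and Proposition~\ref{prop:lim:colim:dom}, replacing the covariant diagram contexts $\mathrmbf{LIST}=\mathrmbf{List}^{\scriptscriptstyle{\Uparrow}}$ and $\mathrmbf{DOM}=\mathrmbf{Dom}^{\scriptscriptstyle{\Uparrow}}$ by the contravariant diagram context $\mathrmbf{DB}=\mathrmbf{Tbl}^{\scriptscriptstyle{\Downarrow}}$ of Definition~\ref{def:db:oplax:cxt}. Two inputs are needed: first, that the base context of tables $\mathrmbf{Tbl}$ is complete and cocomplete, which is established in ``The {\ttfamily FOLE} Table'' \cite{kent:fole:era:tbl}; and second, that the shape context $\mathrmbf{Cxt}$ is complete and cocomplete. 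Granting these, the statement becomes a direct instance of the general Grothendieck-construction machinery of \S\,\ref{append:kan:ext}.

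Concretely, I would first exhibit $\mathrmbf{DB}$ as the fibered context $\int\hat{\mathrmbf{Tbl}}$ of the indexed context $\mathrmbf{R}\mapsto\mathrmbf{Tbl}^{\mathrmbf{R}^{\mathrm{op}}}$ on $\mathrmbf{Cxt}$, whose reindexing along a shape-changing passage $\mathrmbf{R}_{2}\xrightarrow{\;\mathrmbfit{R}\;}\mathrmbf{R}_{1}$ is precomposition with $\mathrmbfit{R}^{\mathrm{op}}$; since $\mathrmbf{Tbl}$ is complete and cocomplete, each such reindexing has both a left and a right adjoint, given by Kan extension along $\mathrmbfit{R}^{\mathrm{op}}$. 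Then I would invoke the oplax parts of Proposition~\ref{prop:lax:fibered:A} and Proposition~\ref{prop:lim:colim:A} of \S\,\ref{append:kan:ext} --- the same two results used in the lax orientation for Proposition~\ref{prop:lim:colim:schema} and Proposition~\ref{prop:lim:colim:dom} --- to conclude both that $\mathrmbf{DB}$ is complete and cocomplete and that the projection ${\langle{\mathrmbf{R},\mathrmbfit{T}}\rangle}\mapsto\mathrmbf{R}$ onto $\mathrmbf{Cxt}$ is continuous and cocontinuous. The underlying picture is the usual one for fibered contexts: a (co)limit in $\mathrmbf{DB}$ is computed by first forming the (co)limit of the shapes in $\mathrmbf{Cxt}$, then transporting each table diagram along the appropriate Kan extension to the (co)limit shape, and finally taking the (co)limit pointwise in $\mathrmbf{Tbl}$; the projection is (co)continuous because its value on the resulting object is exactly the first step.

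The only genuine obstacle is the orientation bookkeeping. Unlike a schema or a schemed domain, a database is a diagram $\mathrmbf{R}^{\mathrm{op}}\xrightarrow{\;\mathrmbfit{T}\;}\mathrmbf{Tbl}$ whose morphisms carry a bridge $\mathrmbfit{T}_{2}\xLeftarrow{\;\xi\;}\mathrmbfit{R}^{\mathrm{op}}{\circ}\mathrmbfit{T}_{1}$ pointing ``backwards'', so one must check carefully that this is exactly the oplax comma context $\mathrmbf{Tbl}^{\scriptscriptstyle{\Downarrow}}$ covered by the $\scriptscriptstyle{\Downarrow}$-version of the appendix propositions, and that the directions match up --- left Kan extension along $\mathrmbfit{R}^{\mathrm{op}}$ supplying colimits and right Kan extension supplying limits. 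Once that identification is in place, no further computation is required; the proof reduces to a citation of \S\,\ref{append:kan:ext} together with the (co)completeness of $\mathrmbf{Tbl}$ from \cite{kent:fole:era:tbl}.
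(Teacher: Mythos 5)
Your proposal is correct and follows essentially the same route as the paper: the paper's proof likewise cites the oplax parts of Prop.~\ref{prop:lax:fibered:A} and Prop.~\ref{prop:lim:colim:A} of \S\,\ref{append:kan:ext}, together with the identification $\mathrmbf{DB} = \mathrmbf{Tbl}^{\scriptscriptstyle{\Downarrow}} = \int\hat{\mathrmbf{Tbl}}$ and the (co)completeness of $\mathrmbf{Tbl}$ from \cite{kent:fole:era:tbl}. Your additional remarks on the Kan-extension mechanism and the orientation bookkeeping are accurate elaborations of what those appendix propositions already encapsulate.
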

\begin{proof}
Use the oplax parts of 
Prop.~\ref{prop:lax:fibered:A}
and
Prop.~\ref{prop:lim:colim:A} 
in \S\,\ref{append:kan:ext},
since
the context of databases 
$\mathrmbf{DB}$
is the diagram context 
$\mathrmbf{DB} = 
\mathrmbf{Tbl}^{\scriptscriptstyle{\Downarrow}} = \int\hat{\mathrmbf{Tbl}}$ 
and
$\mathrmbf{Tbl}$ is complete and cocomplete.
\mbox{}\hfill\rule{5pt}{5pt}
\end{proof}

\paragraph{Examples.}

Since $\mathrmbf{DB}$
is complete and cocomplete, 
so is $\mathrmbf{DB}^{\mathrm{op}}$.
Here we illustrate specific limits and colimits in $\mathrmbf{DB}^{\mathrm{op}}$:
initial and terminal objects,
binary coproducts and binary products, etc.
\begin{description}
\item[initial:]
${\langle{\emptyset^{\mathrm{op}},\mathrmbfit{0}_{\mathrmbf{Tbl}}}\rangle}$
is the initial database in $\mathrmbf{DB}^{\mathrm{op}}$.
To any
database ${\langle{\mathrmbf{R},\mathrmbfit{T}}\rangle}$
there is a unique database morphism
${\langle{\emptyset^{\mathrm{op}},\mathrmbfit{0}_{\mathrmbf{Tbl}}}\rangle}
\xrightarrow{{\langle{0^{\mathrm{op}},\,1_{\mathrmbfit{0}_{\mathrmbf{Tbl}}}}\rangle}}
{\langle{\mathrmbf{R},\mathrmbfit{T}}\rangle}$
 in $\mathrmbf{DB}^{\mathrm{op}}$.
\newline
\item[terminal:]
${\langle{\mathrmbf{Tbl}^{\mathrm{op}},\mathrmbfit{1}_{\mathrmbf{Tbl}}}\rangle}$
is the terminal database in $\mathrmbf{DB}^{\mathrm{op}}$.
From any
database ${\langle{\mathrmbf{R},\mathrmbfit{T}}\rangle}$
there is a unique database morphism
${\langle{\mathrmbf{R},\mathrmbfit{T}}\rangle} 
\xrightarrow{{\langle{\mathrmbfit{T}^{\mathrm{op}},\,1_{\mathrmbfit{T}}}\rangle}}
{\langle{\mathrmbf{Tbl}^{\mathrm{op}},\mathrmbfit{1}_{\mathrmbf{Tbl}}}\rangle}$
 in $\mathrmbf{DB}^{\mathrm{op}}$.
\newline
\item[coproduct:]
Any pair of databases
${\langle{\mathrmbf{R}_{1},\mathrmbfit{T}_{1}}\rangle}$
and
${\langle{\mathrmbf{R}_{2},\mathrmbfit{T}_{2}}\rangle}$
has
the \underline{binary coproduct} database 
${\langle{\mathrmbf{R}_{12},\mathrmbfit{T}_{12}}\rangle}$
in $\mathrmbf{DB}^{\mathrm{op}}$
that combines the two databases in an independent union.
The binary coproduct
consists of 
\begin{itemize}
\item 
the (opposite of the) coproduct context
$\mathrmbf{R}_{12}^{\mathrm{op}} = 
\mathrmbf{R}_{1}^{\mathrm{op}}{+\,}\mathrmbf{R}_{2}^{\mathrm{op}}$
and 
\item 
the copairing passage 
{\footnotesize{$
\mathrmbf{R}_{12}^{\mathrm{op}} 
\!\xrightarrow
[{[\mathrmbfit{T}_{1},\mathrmbfit{T}_{2}]}]
{\,\mathrmbfit{T}_{12}\;}
\mathrmbf{Tbl}$}}.
\end{itemize}
There is an
opspan of \underline{injection} database morphisms in $\mathrmbf{DB}^{\mathrm{op}}$
\[\mbox{\footnotesize{$
{\langle{\mathrmbf{R}_{1},\mathrmbfit{T}_{1}}\rangle}
\xrightarrow{{\langle{\mathrmbfit{inc}_{1},\,1}\rangle}}
{\langle{\mathrmbf{R}_{1}{+\,}\mathrmbf{R}_{2},\mathrmbfit{T}_{12}}\rangle}
\xleftarrow{{\langle{\mathrmbfit{inc}_{2},\,1}\rangle}}
{\langle{\mathrmbf{R}_{2},\mathrmbfit{T}_{2}}\rangle}
$}\normalsize}\]
with
the opspan of 
inclusion passages 
$\mathrmbf{R}_{1}\xhookrightarrow{\mathrmbfit{inc}_{1}}
\mathrmbf{R}_{1}{\,+\,}\mathrmbf{R}_{2}
\xhookleftarrow{\mathrmbfit{inc}_{2}}\mathrmbf{R}_{2}$
and the pair of identity bridges 
$
\mathrmbfit{T}_{1}
\xLeftarrow{\;1\,}
\mathrmbfit{inc}_{1}^{\mathrm{op}}{\circ}\mathrmbfit{T}_{12}
$
and
$
\mathrmbfit{inc}_{2}^{\mathrm{op}}{\circ}\mathrmbfit{T}_{12}
\xRightarrow{\;1\,}
\mathrmbfit{T}_{2}
$.
\newline
\item[product:]
Any pair of databases
${\langle{\mathrmbf{R}_{1},\mathrmbfit{T}_{1}}\rangle}$
and
${\langle{\mathrmbf{R}_{2},\mathrmbfit{T}_{2}}\rangle}$
with the
opspan of passages 
$\mathrmbf{R}_{1}^{\mathrm{op}}\xrightarrow{\;\mathrmbfit{T}_{1}\;\,}
\mathrmbf{Tbl}
\xleftarrow{\;\mathrmbfit{T}_{2}\;\,}\mathrmbf{R}_{2}^{\mathrm{op}}$
has
a \underline{binary product} database
in $\mathrmbf{DB}^{\mathrm{op}}.$
The binary product
${\langle{\mathrmbf{R}_{12},\mathrmbfit{T}_{12}}\rangle}$
consists of 
\begin{itemize}
\item 
the (opposite of the) pullback context
$\mathrmbf{R}_{12}^{\mathrm{op}}
 = \mathrmbf{R}_{1}^{\mathrm{op}}{\times_{\mathrmbf{Tbl}}\,}\mathrmbf{R}_{2}^{\mathrm{op}}$
and 
\item 
the pairing passage 
{\footnotesize{$
\mathrmbf{R}_{12}^{\mathrm{op}} 
\!\xrightarrow[{(\mathrmbfit{T}_{1},\mathrmbfit{T}_{2})}]{\,\mathrmbfit{T}_{12}\;}
\mathrmbf{Tbl}$}}.
\end{itemize}
There is a
span of \underline{projection} database morphisms in $\mathrmbf{DB}^{\mathrm{op}}$
\[\mbox{\footnotesize{$
{\langle{\mathrmbf{R}_{1},\mathrmbfit{T}_{1}}\rangle}
\xleftarrow{{\langle{\mathrmbfit{pr}_{1},\,1}\rangle}}
{\langle{\mathrmbf{R}_{12},\mathrmbfit{T}_{12}}\rangle}
\xrightarrow{{\langle{\mathrmbfit{pr}_{2},\,1}\rangle}}
{\langle{\mathrmbf{R}_{2},\mathrmbfit{T}_{2}}\rangle}
$}\normalsize}\]
with
the span of 
projection passages 
$\mathrmbf{R}_{1}^{\mathrm{op}}\xleftarrow{\mathrmbfit{pr}_{1}^{\mathrm{op}}}
\mathrmbf{R}_{12}^{\mathrm{op}}
\xrightarrow{\mathrmbfit{pr}_{2}^{\mathrm{op}}}\mathrmbf{R}_{2}^{\mathrm{op}}$
and the pair of identity bridges 
$\mathrmbfit{pr}_{1}^{\mathrm{op}}{\circ}\mathrmbfit{T}_{1}
\xLeftarrow{\;1\,}
\mathrmbfit{T}_{12}$
and
$\mathrmbfit{pr}_{2}^{\mathrm{op}}{\circ}\mathrmbfit{T}_{12}
\xRightarrow{\;1\,}
\mathrmbfit{T}_{2}$.
\end{description}
%


%
\newpage
\paragraph{Projections.}

Projections offer an alternate representation,
defining the three primary components of 
databases and database morphisms: 
diagram shapes, schemed domains and key diagrams.
Diagram shapes are direct projections,
whereas schemed domains and key diagrams
are indirect,
coming from composition with table projection passages
Compare the following figure 
(from the {\ttfamily FOLE} table paper \cite{kent:fole:era:tbl})
to 
Fig.\,\ref{fig:fole:db:cxt:var} below. 
\begin{center}
{{\begin{tabular}{c@{\hspace{30pt}}c}
{{\begin{tabular}{c}
\setlength{\unitlength}{0.5pt}
\begin{picture}(100,120)(-20,0)
\put(0,118){\makebox(0,0){\footnotesize{$\mathrmbf{Tbl}$}}}
\put(60,60){\makebox(0,0){\footnotesize{$\mathrmbf{Dom}^{\mathrm{op}}$}}}
\put(0,0){\makebox(0,0){\footnotesize{$\mathrmbf{Set}$}}}
\put(-45,60){\makebox(0,0)[r]{\scriptsize{$\mathrmbfit{key}$}}}
\put(36,95){\makebox(0,0)[l]{\scriptsize{$\mathrmbfit{dom}$}}}
\put(36,24){\makebox(0,0)[l]{\scriptsize{$\mathrmbfit{tup}$}}}
\put(0,55){\makebox(0,0){{$\xRightarrow{\;\;\tau\;}$}}}
\put(15,105){\vector(1,-1){30}}
\put(45,45){\vector(-1,-1){30}}
\qbezier(-12,105)(-60,60)(-12,15)\put(-12,15){\vector(1,-1){0}}
\end{picture}
\end{tabular}}}
&
{{\begin{tabular}{c}
\setlength{\unitlength}{0.55pt}
\begin{picture}(320,160)(0,-35)
\put(20,80){\makebox(0,0){\footnotesize{$\mathrmbf{Set}$}}}
\put(20,0){\makebox(0,0){\footnotesize{$\mathrmbf{1}$}}}
\put(140,80){\makebox(0,0){\footnotesize{$\mathrmbf{Tbl}$}}}
\put(285,80){\makebox(0,0){\footnotesize{$\mathrmbf{Cls}^{\mathrm{op}}$}}}
\put(145,0){\makebox(0,0){\footnotesize{$\mathrmbf{List}^{\mathrm{op}}$}}}
\put(285,0){\makebox(0,0){\footnotesize{$\mathrmbf{Set}^{\mathrm{op}}$}}}
\put(215,40){\makebox(0,0){\footnotesize{$\mathrmbf{Dom}^{\mathrm{op}}$}}}
\put(80,92){\makebox(0,0){\scriptsize{$\mathrmbfit{key}$}}}
\put(182,66){\makebox(0,0)[l]{\scriptsize{$\mathrmbfit{dom}$}}}
\put(210,92){\makebox(0,0){\scriptsize{$\mathrmbfit{data}$}}}
\put(134,43){\makebox(0,0)[r]{\scriptsize{$\mathrmbfit{sign}$}}}
\put(210,-12){\makebox(0,0){\scriptsize{$\mathrmbfit{sort}^{\mathrm{op}}$}}}
\put(288,40){\makebox(0,0)[l]{\scriptsize{$\mathrmbfit{sort}^{\mathrm{op}}$}}}
\put(115,80){\vector(-1,0){70}}
\put(165,80){\vector(1,0){90}}
\put(115,0){\vector(-1,0){80}}
\put(165,0){\vector(1,0){90}}
\put(20,65){\vector(0,-1){50}}
\put(140,65){\vector(0,-1){50}}
\put(280,65){\vector(0,-1){50}}
\put(155,70){\vector(2,-1){36}}
\put(190,30){\vector(-2,-1){36}}
\put(230,50){\vector(2,1){36}}
\qbezier(250,20)(255,20)(260,20)
\qbezier(250,20)(250,15)(250,10)
\end{picture}
\end{tabular}}}
\end{tabular}}}
\end{center}
%
The context of schemed domains
$\mathrmbf{DOM}$ can also be defined as the fibered product
\[\mbox{\footnotesize{$
\mathrmbf{LIST}\xleftarrow{\mathring{\mathrmbfit{sign}}}
\mathrmbf{DOM}=\mathrmbf{LIST}{\times_{\mathrmbf{SET}}}\mathrmbf{CLS}
\xrightarrow{\mathring{\mathrmbfit{data}}}\mathrmbf{CLS}
$,}\normalsize}\]
for the opspan of passages
$\mathrmbf{LIST}
\xrightarrow{\mathring{\mathrmbfit{sort}}}
\mathrmbf{SET}\xleftarrow{\mathring{\mathrmbfit{sort}}}
\mathrmbf{CLS}$.
The database projections are described in Fig.\,\ref{fig:fole:db:cxt:var}
and are defined as follows.
\begin{itemize}
\item 
The schemed domain  projection 
$\mathring{\mathrmbfit{dom}} 
= {(\mbox{-})}^{\mathrm{op}} \circ \mathrmbfit{dom} 
: \mathrmbf{DB}^{\mathrm{op}} \rightarrow \mathrmbf{DOM}$
maps
a relational database ${\langle{\mathrmbf{R},\mathrmbfit{T}}\rangle}$ 
to the schemed domain
$\mathring{\mathrmbfit{dom}}(\mathrmbf{R},\mathrmbfit{T})
={\langle{\mathrmbf{R},\mathrmbfit{Q}}\rangle}$
with signed domain diagram
$\mathrmbf{R}\xrightarrow[\,\mathrmbfit{T}^{\mathrm{op}}\!{\circ\,}\mathrmbfit{dom}]{\;\mathrmbfit{Q}\;}\mathrmbf{Dom}$,
and maps 
a relational database morphism 
${\langle{\mathrmbf{R}_{2},\mathrmbfit{T}_{2}}\rangle} 
\xleftarrow{{\langle{\mathrmbfit{R},\,\xi}\rangle}}
{\langle{\mathrmbf{R}_{1},\mathrmbfit{T}_{1}}\rangle}$
to the schemed domain morphism
{\footnotesize{$
\mathring{\mathrmbfit{dom}}(\mathrmbfit{R},\xi) = {\langle{\mathrmbfit{R},\varsigma}\rangle}
:
{\langle{\mathrmbf{R}_{2},\mathrmbfit{Q}_{2}}\rangle}
\rightarrow
{\langle{\mathrmbf{R}_{1},\mathrmbfit{Q}_{1}}\rangle}
$}}
with schemed domain bridge
$
\mathrmbfit{Q}_{2} \xRightarrow
[\,
\xi^{\mathrm{op}}{\circ\,}\mathrmbfit{dom}
]
{\;\,\varsigma\;}
\mathrmbfit{R}{\;\circ\;}\mathrmbfit{Q}_{1}$.
%
\item 
The key projection 
$\mathring{\mathrmbfit{key}} 
= {(\mbox{-})} \circ \mathrmbfit{key} 
: \mathrmbf{DB} \rightarrow \mathrmbf{SET}$
maps
a relational database ${\langle{\mathrmbf{R},\mathrmbfit{T}}\rangle}$ 
to the key set diagram
$\mathring{\mathrmbfit{key}}(\mathrmbf{R},\mathrmbfit{T})
={\langle{\mathrmbf{R},\mathrmbfit{K}}\rangle}$ 
with the key passage
$\mathrmbf{R}^{\mathrm{op}}\!\xrightarrow[\,\mathrmbfit{T}{\,\circ\,}\mathrmbfit{key}]{\,\mathrmbfit{K}\;}\mathrmbf{Set}$,
and maps 
a relational database morphism 
${\langle{\mathrmbf{R}_{2},\mathrmbfit{T}_{2}}\rangle} 
\xleftarrow{{\langle{\mathrmbfit{R},\,\xi}\rangle}}
{\langle{\mathrmbf{R}_{1},\mathrmbfit{T}_{1}}\rangle}$
to the $\mathrmbf{SET}$-morphism 
{\footnotesize{$
\mathring{\mathrmbfit{key}}(\mathrmbfit{R},\,\xi)
=
{\langle{\mathrmbfit{R},\kappa}\rangle} :
{\langle{\mathrmbf{R}_{2},\mathrmbfit{K}_{2}}\rangle}
\leftarrow
{\langle{\mathrmbf{R}_{1},\mathrmbfit{K}_{1}}\rangle}$}}
with a bridge
$\mathrmbfit{K}_{2}
\xLeftarrow[\,\xi{\,\circ\,}\mathrmbfit{key}]{\kappa}
\mathrmbfit{R} \circ \mathrmbfit{K}_{1}$.
%
\end{itemize}
\begin{figure}
\begin{center}
{{\begin{tabular}{c@{\hspace{30pt}}c}
{{\begin{tabular}{c}
\setlength{\unitlength}{0.5pt}
\begin{picture}(100,120)(-20,0)
\put(0,118){\makebox(0,0){\footnotesize{$\mathrmbf{DB}$}}}
\put(65,60){\makebox(0,0){\footnotesize{$\mathrmbf{DOM}^{\mathrm{op}}$}}}
\put(2,-2){\makebox(0,0){\footnotesize{$\mathrmbf{SET}$}}}
\put(-40,60){\makebox(0,0)[r]{\scriptsize{$\mathring{\mathrmbfit{key}}$}}}
\put(40,95){\makebox(0,0)[l]{\scriptsize{$\mathring{\mathrmbfit{dom}}^{\mathrm{op}}$}}}
\put(36,24){\makebox(0,0)[l]{\scriptsize{$\mathring{\mathrmbfit{tup}}$}}}
\put(0,58){\makebox(0,0){{$\xRightarrow{\,\;\mathring{\tau}\,}$}}}
\put(15,105){\vector(1,-1){30}}
\put(45,45){\vector(-1,-1){30}}
\qbezier(-12,105)(-60,60)(-12,15)\put(-12,15){\vector(1,-1){0}}
\end{picture}
\end{tabular}}}
&
{{\begin{tabular}{c}
\setlength{\unitlength}{0.55pt}
\begin{picture}(320,160)(0,-35)
\put(20,80){\makebox(0,0){\footnotesize{$\mathrmbf{SET}$}}}
\put(20,0){\makebox(0,0){\footnotesize{$\mathrmbf{1}$}}}
\put(140,80){\makebox(0,0){\footnotesize{$\mathrmbf{DB}$}}}
\put(290,80){\makebox(0,0){\footnotesize{$\mathrmbf{CLS}^{\mathrm{op}}$}}}
\put(145,0){\makebox(0,0){\footnotesize{$\mathrmbf{LIST}^{\mathrm{op}}$}}}
\put(290,0){\makebox(0,0){\footnotesize{$
{\mathrmbf{Set}^{\!\scriptscriptstyle{\Uparrow}}}^{\mathrm{op}}$}}}
\put(218,40){\makebox(0,0){\footnotesize{$\mathrmbf{DOM}^{\mathrm{op}}$}}}
\put(80,92){\makebox(0,0){\scriptsize{$\mathring{\mathrmbfit{key}}$}}}
\put(182,66){\makebox(0,0)[l]{\scriptsize{$\mathring{\mathrmbfit{dom}}^{\mathrm{op}}$}}}
\put(215,92){\makebox(0,0){\scriptsize{$\mathring{\mathrmbfit{data}}^{\mathrm{op}}$}}}
\put(134,43){\makebox(0,0)[r]{\scriptsize{$\mathring{\mathrmbfit{sch}}^{\mathrm{op}}$}}}
\put(220,-12){\makebox(0,0){\scriptsize{$\mathring{\mathrmbfit{sort}}^{\mathrm{op}}$}}}
\put(288,40){\makebox(0,0)[l]{\scriptsize{$\mathring{\mathrmbfit{sort}}^{\mathrm{op}}$}}}
\put(105,80){\vector(-1,0){60}}
\put(95,0){\vector(-1,0){60}}
\put(175,80){\vector(1,0){70}}
\put(180,0){\vector(1,0){75}}
\put(20,65){\vector(0,-1){50}}
\put(140,65){\vector(0,-1){50}}
\put(280,65){\vector(0,-1){50}}
\put(155,70){\vector(2,-1){36}}
\put(190,30){\vector(-2,-1){36}}
\put(230,50){\vector(2,1){36}}
\qbezier(250,20)(255,20)(260,20)
\qbezier(250,20)(250,15)(250,10)
\end{picture}
\end{tabular}}}
\\
\multicolumn{2}{c}{
{\footnotesize{$\begin{array}{r@{\hspace{16pt}}r@{\hspace{5pt}=\hspace{5pt}}l@{\hspace{5pt}:\hspace{5pt}}l}
\text{schemed domain}
&
\mathring{\mathrmbfit{dom}} 
& {(\mbox{-})}^{\mathrm{op}} \circ \mathrmbfit{dom} 
& \mathrmbf{DB}^{\mathrm{op}} \rightarrow \mathrmbf{DOM}
\\
\text{data}
&
\mathring{\mathrmbfit{data}} 
& {(\mbox{-})}^{\mathrm{op}} \circ \mathrmbfit{data} 
& \mathrmbf{DB}^{\mathrm{op}} \rightarrow \mathrmbf{CLS}
\\
\text{schema}
&
\mathring{\mathrmbfit{sch}} 
& {(\mbox{-})}^{\mathrm{op}} \circ \mathrmbfit{sign} 
& \mathrmbf{DB}^{\mathrm{op}} \rightarrow \mathrmbf{LIST}
\\
\text{key}
&
\mathring{\mathrmbfit{key}} 
& {(\mbox{-})} \circ \mathrmbfit{key} 
& \mathrmbf{DB} \rightarrow \mathrmbf{SET}=\mathrmbf{Set}^{\!\scriptscriptstyle{\Downarrow}}
\end{array}$}}}
\end{tabular}}}
\end{center}
\caption{Database Mathematical Context: $\mathrmbf{DB}$}
\label{fig:fole:db:cxt:var}
\end{figure}

\newpage

\begin{proposition}\label{prop:db:proj}
Using projections 
$\mathrmbf{DB}$ can be described as follows.
\begin{itemize}
\item  
A database 
$\mathcal{R}={\langle{\mathrmbf{R},\mathrmbfit{Q},\mathrmbfit{K},\tau}\rangle}$
consists of
a shape context $\mathrmbf{R}$,
a schemed domain
$\mathring{\mathrmbfit{dom}}(\mathrmbf{R},\mathrmbfit{T})
={\langle{\mathrmbf{R},\mathrmbfit{Q}}\rangle}$
with 
the signed domain diagram
$\mathrmbf{R}\xrightarrow[\,\mathrmbfit{T}^{\mathrm{op}}\!{\circ\,}\mathrmbfit{dom}]{\;\mathrmbfit{Q}\;}\mathrmbf{Dom}$,
a 
key set diagram
$\mathring{\mathrmbfit{key}}(\mathrmbf{R},\mathrmbfit{T})
={\langle{\mathrmbf{R},\mathrmbfit{K}}\rangle}$
with the key passage
$\mathrmbf{R}^{\mathrm{op}}\!\xrightarrow[\,\mathrmbfit{T}{\,\circ\,}\mathrmbfit{key}]{\,\mathrmbfit{K}\;}\mathrmbf{Set}$,
and
a tuple bridge
$\mathrmbfit{K}
\xRightarrow[\mathrmbfit{T}{\,\circ\,}\tau_{\mathrm{Tbl}}]{\;\tau\;}
\mathrmbfit{Q}^{\mathrm{op}}{\circ\,}\mathrmbfit{tup}$.
%
\item 
A database morphism 
(Fig.~\ref{fig:db:morph:proj})
\newline\mbox{}\hfill
{\footnotesize{$\mathcal{R}_{2}=
{\langle{\mathrmbf{R}_{2},\mathrmbfit{Q}_{2},\mathrmbfit{K}_{2},\tau_{2}}\rangle} 
\xleftarrow{{\langle{\mathrmbfit{R},\varsigma,\kappa}\rangle}}
{\langle{\mathrmbf{R}_{1},\mathrmbfit{Q}_{1},\mathrmbfit{K}_{1},\tau_{1}}\rangle}
=\mathcal{R}_{1}$}}
\hfill\mbox{}\newline
consists of
a shape passage
$\mathrmbf{R}_{2} \xrightarrow{\mathrmbfit{R}\;} \mathrmbf{R}_{1}$,
%
a schemed domain morphism 
{\footnotesize{$
\mathring{\mathrmbfit{dom}}(\mathrmbfit{R},\xi) = {\langle{\mathrmbfit{R},\varsigma}\rangle}
:
{\langle{\mathrmbf{R}_{2},\mathrmbfit{Q}_{2}}\rangle}
\xrightarrow[\mathring{\mathrmbfit{dom}}(\mathrmbfit{R},\xi)]
{{\langle{\mathrmbfit{R},\varsigma}\rangle}} 
{\langle{\mathrmbf{R}_{1},\mathrmbfit{Q}_{1}}\rangle},
$}}
and
a key diagram morphism
{\footnotesize{$
\mathring{\mathrmbfit{key}}(\mathrmbfit{R},\xi) = 
{{\langle{\mathrmbfit{R},\kappa}\rangle}}
:
{\langle{\mathrmbf{R}_{2},\mathrmbfit{K}_{2}}\rangle} 
\leftarrow
{\langle{\mathrmbf{R}_{1},\mathrmbfit{K}_{1}}\rangle}
$}}
with the
key bridge
$\kappa = \xi{\;\circ\;}\mathrmbfit{key}
:
\mathrmbfit{K}_{2} \Leftarrow
\mathrmbfit{R}^{\mathrm{op}}{\circ\;}\mathrmbfit{K}_{1}$,
%
which satisfy the condition
\begin{equation}\label{expo:db:mor:var:shape}
{{\begin{tabular}{c}
{{\footnotesize{\setlength{\extrarowheight}{2pt}
$\begin{array}{@{\hspace{-15pt}}
l@{\hspace{5pt}=\hspace{5pt}}l}
(\underset{\textstyle{{\kappa}}}{\underbrace{{\xi}{\,\circ\,}\mathrmbfit{key}}}
){\,\bullet\,}
\tau_{2}
&
(\mathrmbfit{R}^{\mathrm{op}}{\circ\,}\tau_{1})
{\,\bullet\,}(
\underset{\textstyle{{\varsigma}^{\mathrm{op}}}}
{\underbrace{{\xi}{\,\circ\,}\mathrmbfit{dom}^{\mathrm{op}}}}
{\circ\,}\mathrmbfit{tup})
\end{array}$.
}}}
\end{tabular}}}
\end{equation}
\end{itemize}
\end{proposition}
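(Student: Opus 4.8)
The plan is to prove the statement by unfolding the definition of the diagram (oplax comma) context $\mathrmbf{DB}=\mathrmbf{Tbl}^{\scriptscriptstyle{\Downarrow}}$ and transporting, through the ``diagram'' construction $(\text{-})^{\scriptscriptstyle{\Downarrow}}$, the projective presentation of the base context $\mathrmbf{Tbl}$ recalled in the figure above: a table is an object of the comma context $\bigl(\mathrmbf{Set}{\,\downarrow\,}\mathrmbfit{tup}\bigr)$ for the tuple passage $\mathrmbf{Dom}^{\mathrm{op}}\xrightarrow{\mathrmbfit{tup}}\mathrmbf{Set}$, carrying the wide span $\mathrmbf{Set}\xleftarrow{\mathrmbfit{key}}\mathrmbf{Tbl}\xrightarrow{\mathrmbfit{dom}}\mathrmbf{Dom}^{\mathrm{op}}$ and the defining bridge $\mathrmbfit{key}\xRightarrow{\tau_{\mathrm{Tbl}}}\mathrmbfit{dom}{\,\circ\,}\mathrmbfit{tup}$. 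Since a passage into a comma context is precisely a pair of passages into its legs together with a bridge between the two composites, a passage into $\mathrmbf{Tbl}$ is exactly the data of a key passage, a signed domain passage, and a tuple bridge; this is the mechanism driving the whole argument.

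First I would treat objects. Given a database ${\langle{\mathrmbf{R},\mathrmbfit{T}}\rangle}\in\mathrmbf{DB}$, i.e.\ a passage $\mathrmbf{R}^{\mathrm{op}}\xrightarrow{\mathrmbfit{T}}\mathrmbf{Tbl}$, post-compose with the three table projections to produce the signed domain diagram $\mathrmbfit{Q}$ ($=\mathrmbfit{T}^{\mathrm{op}}{\circ\,}\mathrmbfit{dom}$ up to the usual $\mathrm{op}$-bookkeeping), the key set diagram $\mathrmbfit{K}=\mathrmbfit{T}{\circ\,}\mathrmbfit{key}$, and, by whiskering $\tau_{\mathrm{Tbl}}$ with $\mathrmbfit{T}$, the tuple bridge $\mathrmbfit{K}\xRightarrow{\tau}\mathrmbfit{Q}^{\mathrm{op}}{\circ\,}\mathrmbfit{tup}$. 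Conversely, a quadruple ${\langle{\mathrmbf{R},\mathrmbfit{Q},\mathrmbfit{K},\tau}\rangle}$ assembles, by the comma-context universal property applied pointwise and naturally in the shape, into a passage $\mathrmbf{R}^{\mathrm{op}}\to\mathrmbf{Tbl}$; these two assignments are mutually inverse, giving the object-level bijection.

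Next I would do morphisms. A database morphism is a pair ${\langle{\mathrmbfit{R},\xi}\rangle}$ with $\mathrmbf{R}_{2}\xrightarrow{\mathrmbfit{R}}\mathrmbf{R}_{1}$ and an oplax bridge $\mathrmbfit{T}_{2}\xLeftarrow{\xi}\mathrmbfit{R}^{\mathrm{op}}{\circ\,}\mathrmbfit{T}_{1}$; post-composing $\xi$ with $\mathrmbfit{dom}$ and $\mathrmbfit{key}$ yields $\varsigma=\xi^{\mathrm{op}}{\circ\,}\mathrmbfit{dom}$ and $\kappa=\xi{\circ\,}\mathrmbfit{key}$, which are exactly the schemed domain morphism and key diagram morphism listed in the statement. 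The remaining content is that these are compatible with the tuple bridges, and this is nothing but the assertion that at each object of $\mathrmbf{R}$ the component of $\xi$ is a \emph{table morphism} (its defining commuting square relating the tuple functions) together with naturality of $\xi$ in the shape. I would show that pasting these two facts along the diagram and whiskering with $\tau_{1}$ and $\tau_{2}$ produces precisely Disp.\,\ref{expo:db:mor:var:shape}; this single pasting identity is the only computational point. Functoriality --- identities and composition componentwise --- is then immediate, as for any comma/diagram context.

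The main obstacle is purely bookkeeping of variance: keeping straight the $\mathrm{op}$ on shape contexts built into the oplax comma construction, the contravariance of both $\mathrmbfit{tup}$ and $\mathrmbfit{dom}$, and the direction of the oplax bridge, so that the whiskered pasting lands as Disp.\,\ref{expo:db:mor:var:shape} with every arrow oriented correctly. Once the comma-context presentation of $\mathrmbf{Tbl}$ from \cite{kent:fole:era:tbl} is in hand, no ingredient beyond a two-dimensional diagram chase is needed.
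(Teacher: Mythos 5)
Your proposal is correct and follows essentially the same route the paper takes: the paper offers no separate proof of Prop.~\ref{prop:db:proj}, relying instead on the immediately preceding definitions of $\mathring{\mathrmbfit{dom}}$ and $\mathring{\mathrmbfit{key}}$ as post-composition with the table projection passages and on the pasting diagram of Fig.~\ref{fig:db:morph:proj}, which encodes exactly the whiskering identity you isolate as Disp.~\ref{expo:db:mor:var:shape}. Your explicit unfolding of $\mathrmbf{DB}=\mathrmbf{Tbl}^{\scriptscriptstyle{\Downarrow}}$ via the comma-context presentation $\mathrmbf{Tbl}=\bigl(\mathrmbf{Set}{\,\downarrow\,}\mathrmbfit{tup}\bigr)$ from \cite{kent:fole:era:tbl} is just that argument written out in full.
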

%


%
\begin{figure}
\begin{center}
{{
\begin{tabular}{@{\hspace{-20pt}}c@{\hspace{20pt}}c}
\begin{tabular}{c}
\setlength{\unitlength}{0.5pt}
\begin{picture}(240,180)(100,0)
\put(125,190){\makebox(0,0){\scriptsize{$\mathrmbfit{R}^{\mathrm{op}}$}}}
\put(120,60){\makebox(0,0){\scriptsize{$\mathrmbfit{id}$}}}
\put(120,0){\makebox(0,0){\scriptsize{$\mathrmbfit{id}$}}}
\put(30,180){\vector(1,0){180}}
\put(95,60){\vector(1,0){50}}\put(95,60){\vector(-1,0){0}}
\put(25,0){\vector(1,0){190}}\put(25,0){\vector(-1,0){0}}
\put(120,146){\makebox(0,0){\shortstack{\scriptsize{$\;\varsigma^{\mathrm{op}}$}\\\large{$\Longleftarrow$}}}}
\put(120,92){\makebox(0,0){\large{$\overset{\rule[-2pt]{0pt}{5pt}\kappa}{\Longleftarrow}$}}}
\qbezier[60](42,140)(110,140)(180,140)
\qbezier[150](-75,85)(120,85)(300,85)
\put(-8,0){\begin{picture}(0,0)(0,0)
\put(8,180){\makebox(0,0){\footnotesize{$\mathrmbf{R}_{2}^{\mathrm{op}}$}}}
\put(0,118){\makebox(0,0){\footnotesize{$\mathrmbf{Tbl}$}}}
\put(68,60){\makebox(0,0){\footnotesize{$\mathrmbf{Dom}^{\mathrm{op}}$}}}
\put(0,0){\makebox(0,0){\footnotesize{$\mathrmbf{Set}$}}}
\put(-6,148){\makebox(0,0)[r]{\scriptsize{$\mathrmbfit{T}_{2}$}}}
\put(-75,95){\makebox(0,0)[r]{\scriptsize{$\mathrmbfit{K}_{2}$}}}
\put(55,145){\makebox(0,0)[l]{\scriptsize{$\mathrmbfit{Q}_{2}^{\mathrm{op}}$}}}
\put(-37,72){\makebox(0,0)[r]{\scriptsize{$\mathrmbfit{key}$}}}
\put(38,93){\makebox(0,0)[l]{\scriptsize{$\mathrmbfit{dom}^{\mathrm{op}}$}}}
\put(36,26){\makebox(0,0)[l]{\scriptsize{$\mathrmbfit{tup}$}}}
\put(0,60){\makebox(0,0){\shortstack{\scriptsize{$\;\tau_{\scriptscriptstyle\mathrmbf{Tbl}}$}\\\large{$\Longrightarrow$}}}}
\put(0,165){\vector(0,-1){34}}
\put(15,105){\vector(1,-1){30}}
\put(45,45){\vector(-1,-1){30}}
\qbezier(-18,167)(-120,90)(-20,13)\put(-20,13){\vector(1,-1){0}}
\qbezier(-12,105)(-60,60)(-12,15)\put(-12,15){\vector(1,-1){0}}
\qbezier(18,167)(70,140)(66,76)\put(66,76){\vector(0,-1){0}}
\end{picture}}
\put(233,0){\begin{picture}(0,0)(0,0)
\put(8,180){\makebox(0,0){\footnotesize{$\mathrmbf{R}_{1}^{\mathrm{op}}$}}}
\put(0,118){\makebox(0,0){\footnotesize{$\mathrmbf{Tbl}$}}}
\put(-52,60){\makebox(0,0){\footnotesize{$\mathrmbf{Dom}^{\mathrm{op}}$}}}
\put(0,0){\makebox(0,0){\footnotesize{$\mathrmbf{Set}$}}}
\put(6,148){\makebox(0,0)[l]{\scriptsize{$\mathrmbfit{T}_{1}$}}}
\put(75,95){\makebox(0,0)[l]{\scriptsize{$\mathrmbfit{K}_{1}$}}}
\put(-50,145){\makebox(0,0)[r]{\scriptsize{$\mathrmbfit{Q}_{1}^{\mathrm{op}}$}}}
\put(37,72){\makebox(0,0)[l]{\scriptsize{$\mathrmbfit{key}$}}}
\put(-32,93){\makebox(0,0)[r]{\scriptsize{$\mathrmbfit{dom}^{\mathrm{op}}$}}}
\put(-36,26){\makebox(0,0)[r]{\scriptsize{$\mathrmbfit{tup}$}}}
\put(0,60){\makebox(0,0){\shortstack{\scriptsize{$\;\tau_{\scriptscriptstyle\mathrmbf{Tbl}}$}\\\large{$\Longleftarrow$}}}}
\put(0,165){\vector(0,-1){34}}
\put(-15,105){\vector(-1,-1){30}}
\put(-45,45){\vector(1,-1){30}}
\qbezier(18,167)(120,90)(20,13)\put(20,13){\vector(-1,-1){0}}
\qbezier(12,105)(60,60)(12,15)\put(12,15){\vector(-1,-1){0}}
\qbezier(-18,167)(-70,140)(-66,76)\put(-66,76){\vector(0,-1){0}}
\end{picture}}
\end{picture}
\end{tabular}
&
\begin{tabular}{c}
%
\setlength{\unitlength}{0.6pt}
\begin{picture}(120,100)(0,-10)
\put(0,80){\makebox(0,0){\scriptsize{$\mathrmbfit{R}^{\mathrm{op}} \circ \mathrmbfit{K}_{1}$}}}
\put(125,94){\makebox(0,0){\scriptsize{$\overset{\textstyle\mathrmbfit{R}^{\mathrm{op}}\circ\mathring{\mathrmbfit{tup}}
(\mathrmbf{R}_{1},\mathrmbfit{Q}_{1})}{\overbrace{\mathrmbfit{R}^{\mathrm{op}}\circ\mathrmbfit{Q}_{1}^{\mathrm{op}}\circ\mathrmbfit{tup}}}$}}}
\put(185,80){\makebox(0,0)[l]{\scriptsize{$=$}}}
\put(205,80){\makebox(0,0)[l]{\scriptsize{$\mathrmbfit{R}^{\mathrm{op}}\circ\mathrmbfit{T}_{1}\circ\mathrmbfit{dom}^{\mathrm{op}}\circ\mathrmbfit{tup}$}}}
\put(0,0){\makebox(0,0){\scriptsize{$\mathrmbfit{K}_{2}$}}}
\put(125,-13){\makebox(0,0){\scriptsize{$\overset{\textstyle\underbrace{\mathrmbfit{Q}_{2}^{\mathrm{op}}\circ\mathrmbfit{tup}}}{\mathring{\mathrmbfit{tup}}(\mathrmbf{R}_{2},\mathrmbfit{Q}_{2})}$}}}
\put(185,0){\makebox(0,0)[l]{\scriptsize{$=$}}}
\put(205,0){\makebox(0,0)[l]{\scriptsize{$\mathrmbfit{T}_{2}\circ\mathrmbfit{dom}^{\mathrm{op}}\circ\mathrmbfit{tup}$}}}
\put(145,45){\makebox(0,0)[l]{\scriptsize{$\overset{\textstyle\mathring{\mathrmbfit{tup}}(\mathrmbfit{R},\varsigma)}{\overbrace{\varsigma^{\mathrm{op}}\circ\mathrmbfit{tup}}}$}}}
\put(265,40){\makebox(0,0)[l]{\scriptsize{$\xi\circ\mathrmbfit{dom}^{\mathrm{op}}\circ\mathrmbfit{tup}$}}}
\put(55,95){\makebox(0,0){\scriptsize{$
\mathrmbfit{R}^{\mathrm{op}}{\circ\,}\tau_{1}
$}}}
\put(55,-15){\makebox(0,0){\scriptsize{$\tau_{2}$}}}
\put(-10,40){\makebox(0,0)[r]{\scriptsize{$\kappa$}}}
\put(55,80){\makebox(0,0){\large{$\Longrightarrow$}}}
\put(55,0){\makebox(0,0){\large{$\Longrightarrow$}}}
\put(0,40){\makebox(0,0){\large{$\bigg\Downarrow$}}}
\put(125,40){\makebox(0,0){\large{$\bigg\Downarrow$}}}
\put(245,40){\makebox(0,0){\large{$\bigg\Downarrow$}}}
\end{picture}
\end{tabular}
\\ 
\end{tabular}
}}
\end{center}
\caption{\texttt{FOLE} Database Morphism (proj): $\mathrmbf{DB}$}
\label{fig:db:morph:proj}
\end{figure}
%


\comment{
\begin{proposition}\label{prop:comma:cxt:DB}
The context of relational databases is 
the comma context 
\begin{center}
{\footnotesize{$
\mathrmbf{SET}\xleftarrow{\;\mathring{\mathrmbfit{key}}\;}
\mathrmbf{DB} = \bigl(\mathrmbf{SET}{\;\downarrow\,}\mathring{\mathrmbfit{tup}}\bigr)
\xrightarrow{\;\mathring{\mathrmbfit{dom}}^{\mathrm{op}}}\mathrmbf{DOM}^{\mathrm{op}}
$}}
\end{center}
for the tuple passage 
$\mathrmbf{DOM}^{\mathrm{op}}
\!\xrightarrow{\mathring{\mathrmbfit{tup}}}
\mathrmbf{SET}
$.
\end{proposition}
\begin{proof}
\mbox{}
The descriptions
for
$\mathrmbf{DB}$
given in
Prop.\,\ref{prop:db:proj} above
and
$\bigl(\mathrmbf{SET}{\;\downarrow\,}\mathring{\mathrmbfit{tup}}\bigr)$
given in
Def.\,\ref{def:comma:cxt:DB} in \S\,\ref{sub:sub:sec:sch:dom:tup}
are equivalent.
\mbox{}\hfill\rule{5pt}{5pt}
\end{proof}
}

.


%
\comment{
\begin{center}
{{\begin{tabular}{c}
\setlength{\unitlength}{0.56pt}
\begin{picture}(120,160)(8,0)
\put(65,160){\makebox(0,0){\footnotesize{$
\mathrmbf{R}_{1}^{\mathrm{op}}{+\,}\mathrmbf{R}_{2}^{\mathrm{op}}$}}}
\put(5,80){\makebox(0,0){\footnotesize{$\mathrmbf{R}_{1}^{\mathrm{op}}$}}}
\put(125,80){\makebox(0,0){\footnotesize{$\mathrmbf{R}_{2}^{\mathrm{op}}$}}}
\put(65,0){\makebox(0,0){\footnotesize{$\mathrmbf{Tbl}$}}}
\put(15,132){\makebox(0,0){\scriptsize{$\mathrmbfit{inc}_{1}^{\mathrm{op}}$}}}
\put(115,132){\makebox(0,0){\scriptsize{$\mathrmbfit{inc}_{2}^{\mathrm{op}}$}}}
\put(20,42){\makebox(0,0)[r]{\scriptsize{$\mathrmbfit{T}_{1}$}}}
\put(100,42){\makebox(0,0)[l]{\scriptsize{$\mathrmbfit{T}_{2}$}}}
\put(65,80){\makebox(0,0)[l]{\scriptsize{$\mathrmbfit{T}_{12}$}}}
%
%
\put(114,96){\vector(-3,4){38}}
\put(7,98){\vector(3,4){38}}
\put(60,140){\vector(0,-1){120}}
\put(10,67){\vector(3,-4){38}}
\put(110,68){\vector(-3,-4){38}}
\end{picture}
\end{tabular}}}
\end{center}
}
%

\newpage
\subsection{Type Domain Indexing}
\label{sub:sec:rel:db:typ:dom}

A type domain, which constrains the body of a relational table, is an indexed
collection of data types from which a table's tuples are chosen.
Here we define databases with fixed type domain (datatypes).
These are more generalized than databases with fixed signed domain (header plus datatypes).
%
\footnote{These are not defined here.
We could also define databases with fixed schema (header).}
%

\subsubsection{Lower Aspect: $\mathrmbf{Db}(\mathcal{A})$}
\label{sub:sub:sec:rel:db:typ:dom:lower}

Let $\mathcal{A}$ be a fixed type domain.
\begin{definition}\label{def:db:A:oplax:cxt}
The context of $\mathcal{A}$-databases is the 
oplax comma 
context
$\mathrmbf{Db}(\mathcal{A}) = \mathrmbf{Tbl}(\mathcal{A})^{\scriptscriptstyle{\Downarrow}}$,
a diagram context
over $\mathcal{A}$-tables.
(Def.~\ref{def:lax:oplax} in \S~\ref{sub:sub:sec:math:context})
\end{definition}
A relational $\mathcal{A}$-database ${\langle{\mathrmbf{R},\mathrmbfit{T}}\rangle}$ is a diagram of $\mathcal{A}$-tables,
consisting of 
a shape context $\mathrmbf{R}$ 
and a passage $\mathrmbf{R}^{\mathrm{op}}\!\xrightarrow{\,\mathrmbfit{T}\;}\mathrmbf{Tbl}(\mathcal{A})$.
A relational $\mathcal{A}$-database morphism
${\langle{\mathrmbf{R}_{2},\mathrmbfit{T}_{2}}\rangle} 
\xleftarrow{{\langle{\mathrmbfit{R},\,\psi}\rangle}}
{\langle{\mathrmbf{R}_{1},\mathrmbfit{T}_{1}}\rangle}$
(Fig.\,\ref{fig:db:A:mor:adj})
consists of 
a shape-changing passage $\mathrmbf{R}_{1}\xrightarrow{\;\mathrmbfit{R}\;\,}\mathrmbf{R}_{1}$
and a bridge 
$\mathrmbfit{T}_{2}\xLeftarrow{\;\;\psi\,}
\mathrmbfit{R}^{\mathrm{op}}{\circ}\mathrmbfit{T}_{1}
$.
\begin{figure}
\begin{center}
%
{{\begin{tabular}[b]{c}
\setlength{\unitlength}{0.56pt}
\begin{picture}(120,80)(8,0)
\put(5,80){\makebox(0,0){\footnotesize{$\mathrmbf{R}_{2}^{\mathrm{op}}$}}}
\put(125,80){\makebox(0,0){\footnotesize{$\mathrmbf{R}_{1}^{\mathrm{op}}$}}}
\put(65,0){\makebox(0,0){\footnotesize{$\mathrmbf{Tbl}(\mathcal{A})$}}}
\put(65,92){\makebox(0,0){\scriptsize{$\mathrmbfit{R}^{\mathrm{op}}$}}}
\put(20,42){\makebox(0,0)[r]{\scriptsize{$\mathrmbfit{T}_{2}$}}}
\put(100,42){\makebox(0,0)[l]{\scriptsize{$\mathrmbfit{T}_{1}$}}}
\put(60,57){\makebox(0,0){\shortstack{\scriptsize{$\psi$}\\\large{$\Longleftarrow$}}}}
\put(20,80){\vector(1,0){80}}
\put(10,67){\vector(3,-4){38}}
\put(110,68){\vector(-3,-4){38}}
\end{picture}
\end{tabular}}}
\end{center}
\caption{Database Morphism: $\mathrmbf{Db}(\mathcal{A})$}
\label{fig:db:A:mor:adj}
\end{figure}
%


%
\begin{proposition}\label{prop:db:A:lim:colim}
The fibered context (Grothendieck construction) of $\mathcal{A}$-databases
$\mathrmbf{Db}(\mathcal{A}) 
= \mathrmbf{Tbl}(\mathcal{A})^{\scriptscriptstyle{\Downarrow}}$
is complete and cocomplete 
and the projection 
$\mathrmbf{Db}(\mathcal{A}) = 
\mathrmbf{Tbl}(\mathcal{A})^{\scriptscriptstyle{\Downarrow}}\rightarrow\mathrmbf{Cxt}
:{\langle{\mathrmbf{R},\mathrmbfit{T}}\rangle}\mapsto\mathrmbf{R}$ 
is continuous and cocontinuous.
\end{proposition}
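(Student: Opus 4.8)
The plan is to mirror exactly the argument used for the general database context in Prop.~\ref{prop:lim:colim:db:tbl}, specialized to the fixed type domain $\mathcal{A}$. By Definition~\ref{def:db:A:oplax:cxt} the context $\mathrmbf{Db}(\mathcal{A}) = \mathrmbf{Tbl}(\mathcal{A})^{\scriptscriptstyle{\Downarrow}}$ is the oplax comma (diagram) context over $\mathrmbf{Tbl}(\mathcal{A})$, equivalently the Grothendieck construction $\int\widehat{\mathrmbf{Tbl}(\mathcal{A})}$ of the shape-indexed family of diagram categories into the fiber $\mathrmbf{Tbl}(\mathcal{A})$. So the task reduces to verifying that this fiber is bicomplete and then invoking the abstract result on diagram contexts.

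First I would record that the context of $\mathcal{A}$-tables $\mathrmbf{Tbl}(\mathcal{A})$ is complete and cocomplete; this is established in \S\,4.2 of ``The {\ttfamily FOLE} Table'' \cite{kent:fole:era:tbl} — the same source cited for the bicompleteness of $\mathrmbf{Dom}(\mathcal{A}) = \mathrmbf{List}(X)$ in Prop.~\ref{prop:dom:A:lim:colim}. With that in hand I would apply the \emph{oplax} parts of Prop.~\ref{prop:lax:fibered:A} and Prop.~\ref{prop:lim:colim:A} of \S\,\ref{append:kan:ext}: these state that whenever the target context $\mathrmbf{C}$ is complete and cocomplete, the oplax comma context $\mathrmbf{C}^{\scriptscriptstyle{\Downarrow}}$ is complete and cocomplete and the shape projection $\mathrmbf{C}^{\scriptscriptstyle{\Downarrow}}\to\mathrmbf{Cxt}:{\langle{\mathrmbf{R},\mathrmbfit{T}}\rangle}\mapsto\mathrmbf{R}$ is continuous and cocontinuous. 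Taking $\mathrmbf{C} = \mathrmbf{Tbl}(\mathcal{A})$ yields the proposition verbatim; limits and colimits in $\mathrmbf{Db}(\mathcal{A})$ are then computed shapewise and fiberwise in the manner those appendix results describe.

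The only points needing attention are the alignment of hypotheses: confirming that $\mathrmbf{Tbl}(\mathcal{A})$ genuinely meets the bicompleteness assumption of the appendix lemmas, and noting that this is the \emph{oplax} case (not the lax one used for $\mathrmbf{LIST}$ and $\mathrmbf{DOM}$) because databases are contravariant diagrams $\mathrmbf{R}^{\mathrm{op}}\to\mathrmbf{Tbl}(\mathcal{A})$. I expect no genuine obstacle — the argument is a direct specialization of Prop.~\ref{prop:lim:colim:db:tbl}, replacing the context of all tables $\mathrmbf{Tbl}$ by its fiber over the fixed type domain $\mathcal{A}$, and the bicompleteness of that fiber is already on record. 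If any care is required it is purely in the bookkeeping of passing from the generic diagram-context machinery to this fibered instance, which the cited propositions already handle.
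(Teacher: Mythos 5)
Your proposal matches the paper's own proof: the paper likewise cites Prop.~\ref{prop:lim:colim:A} of the appendix together with the completeness and cocompleteness of $\mathrmbf{Tbl}(\mathcal{A})$ from ``The {\ttfamily FOLE} Table'' \cite{kent:fole:era:tbl}. Your additional remarks on the oplax variance and the alignment of hypotheses are consistent with how the general case (Prop.~\ref{prop:lim:colim:db:tbl}) is argued, so there is nothing to correct.
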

\begin{proof}
By Prop.~\ref{prop:lim:colim:A} of \S\ref{append:kan:ext},
since 
$\mathrmbf{Tbl}(\mathcal{A})$ is complete/cocomplete \cite{kent:fole:era:tbl}.
\hfill\rule{5pt}{5pt}
\end{proof}
%


\paragraph{Projections.}

Projections offer an alternate representation,
defining the three primary components of 
$\mathcal{A}$-databases and $\mathcal{A}$-database morphisms: 
diagram shapes, schemed domains and key diagrams.
Diagram shapes are direct projections,
whereas schemed domains and key diagrams
are indirect,
coming from composition with $\mathcal{A}$-table projection passages. 
%
The database projections are described in Fig.\,\ref{fig:fole:db:cxt:fbr:var:A}
and are defined as follows.

\begin{figure}
\begin{center}
{{\begin{tabular}{c@{\hspace{30pt}}c}
{{\begin{tabular}{c}
\setlength{\unitlength}{0.5pt}
\begin{picture}(100,120)(-20,0)
\put(0,118){\makebox(0,0){\footnotesize{$\mathrmbf{Db}(\mathcal{A})$}}}
\put(70,60){\makebox(0,0){\footnotesize{${\mathring{\mathrmbf{Dom}}(\mathcal{A})}^{\mathrm{op}}$}}}
\put(0,0){\makebox(0,0){\footnotesize{$\mathrmbf{SET}$}}}
\put(-45,60){\makebox(0,0)[r]{\scriptsize{$\mathring{\mathrmbfit{key}}_{\mathcal{A}}$}}}
\put(36,95){\makebox(0,0)[l]{\scriptsize{$\mathring{\mathrmbfit{dom}_{\mathcal{A}}}^{\mathrm{op}}$}}}
\put(36,24){\makebox(0,0)[l]{\scriptsize{$\mathring{\mathrmbfit{tup}}_{\mathcal{A}}$}}}
\put(0,60){\makebox(0,0){{$\xRightarrow{\mathring{\tau}_{\mathcal{A}}}$}}}
\put(15,105){\vector(1,-1){30}}
\put(45,45){\vector(-1,-1){30}}
\qbezier(-12,105)(-60,60)(-12,15)\put(-12,15){\vector(1,-1){0}}
\end{picture}
\end{tabular}}}
&
{{\begin{tabular}{c}
\setlength{\unitlength}{0.55pt}
\begin{picture}(320,160)(0,-35)
\put(20,80){\makebox(0,0){\footnotesize{$\mathrmbf{SET}$}}}
\put(20,0){\makebox(0,0){\footnotesize{$\mathrmbf{1}$}}}
\put(140,80){\makebox(0,0){\footnotesize{$\mathrmbf{Db}(\mathcal{A})$}}}
\put(145,0){\makebox(0,0){\footnotesize{${\mathring{\mathrmbf{List}}(X)}^{\mathrm{op}}$}}}
\put(218,40){\makebox(0,0){\footnotesize{${\mathring{\mathrmbf{Dom}}(\mathcal{A})}^{\mathrm{op}}$}}}
\put(175,20){\makebox(0,0){\footnotesize{$\cong$}}}
\put(80,92){\makebox(0,0){\scriptsize{$\mathring{\mathrmbfit{key}}_{\mathcal{A}}$}}}
\put(182,66){\makebox(0,0)[l]{\scriptsize{$\mathring{\mathrmbfit{dom}}_{\mathcal{A}}^{\mathrm{op}}$}}}
\put(134,43){\makebox(0,0)[r]{\scriptsize{$\mathring{\mathrmbfit{sign}}_{\mathcal{A}}^{\mathrm{op}}$}}}
%
\put(105,80){\vector(-1,0){60}}
\put(95,0){\vector(-1,0){60}}
\put(20,65){\vector(0,-1){50}}
\put(140,65){\vector(0,-1){50}}
\put(155,70){\vector(2,-1){36}}
%
\end{picture}
\end{tabular}}}
\end{tabular}}}
\end{center}
\caption{Database Mathematical Context: $\mathrmbf{Db}(\mathcal{A})$}
\label{fig:fole:db:cxt:fbr:var:A}
\end{figure}
%

%
%




%
\begin{itemize}
\item 
The schemed domain projection
$\mathring{\mathrmbfit{dom}}_{\mathcal{A}} :
\mathrmbf{Db}(\mathcal{A})
\rightarrow
\mathring{\mathrmbf{Dom}}(\mathcal{A})$ 
maps 
\begin{itemize}
\item 
a relational $\mathcal{A}$-database
${\langle{\mathrmbf{R},\mathrmbfit{T}}\rangle}$
in $\mathrmbf{Db}(\mathcal{A})$
to the $\mathcal{A}$-schemed domain 
$\mathring{\mathrmbfit{dom}}_{\mathcal{A}}(\mathrmbf{R},\mathrmbfit{T})
={\langle{\mathrmbf{R},\mathrmbfit{S}}\rangle}$
in $\mathring{\mathrmbf{Dom}}(\mathcal{A})$ 
with signed domain diagram (schema)
$\mathrmbf{R}
\xrightarrow
[\,\mathrmbfit{T}^{\mathrm{op}}\!{\circ\,}\mathrmbfit{dom}_{\mathcal{A}}]
{\;\mathrmbfit{S}\;}
\mathrmbf{Dom}(\mathcal{A})\cong\mathrmbf{List}(X)$, and
\item 
maps 
a relational $\mathcal{A}$-database morphism
{\footnotesize{$
{\langle{\mathrmbf{R}_{2},\mathrmbfit{T}_{2}}\rangle} 
\xleftarrow{{\langle{\mathrmbfit{R},\psi}\rangle}}
{\langle{\mathrmbf{R}_{1},\mathrmbfit{T}_{1}}\rangle}
$}}
to the schemed domain morphism 
$
{\langle{\mathrmbf{R}_{2},\mathrmbfit{S}_{2}}\rangle}
\xrightarrow
[{\langle{\mathrmbfit{R},\varphi}\rangle}]
{\mathring{\mathrmbfit{dom}}_{\mathcal{A}}(\mathrmbfit{R},\psi)} 
{\langle{\mathrmbf{R}_{1},\mathrmbfit{S}_{1}}\rangle}
$
in 
$\mathring{\mathrmbf{Dom}}(\mathcal{A})$ 
with bridge 
$\varphi = \psi^{\mathrm{op}}\!{\circ\,}\mathrmbfit{dom}_{\mathcal{A}} :
\mathrmbfit{S}_{2}\Rightarrow\mathrmbfit{R}{\circ}\mathrmbfit{S}_{1}$.
\end{itemize}
\mbox{}
\item 
The key projection 
$\mathring{\mathrmbfit{key}}_{\mathcal{A}} :
\mathrmbf{Db}(\mathcal{A})
\rightarrow
\mathrmbf{SET}$
%
\begin{itemize}
\item maps a relational database 
${\langle{\mathrmbf{R},\mathrmbfit{T}}\rangle}$
in $\mathrmbf{Db}(\mathcal{A})$
to the key set diagram 
$\mathring{\mathrmbfit{key}}_{\mathcal{A}}(\mathrmbf{R},\mathrmbfit{T})=
{\langle{\mathrmbf{R},\mathrmbfit{K}}\rangle}$
in $\mathrmbf{SET}$
with 
key passage
$\mathrmbf{R}^{\mathrm{op}}\!\xrightarrow[\,\mathrmbfit{T}{\,\circ\,}\mathrmbfit{key}_{\mathcal{A}}]{\,\mathrmbfit{K}\;}\mathrmbf{Set}$, 
and 
\item 
maps 
a relational $\mathcal{A}$-database morphism
{\footnotesize{$
{\langle{\mathrmbf{R}_{2},\mathrmbfit{T}_{2}}\rangle} 
\xleftarrow{{\langle{\mathrmbfit{R},\psi}\rangle}}
{\langle{\mathrmbf{R}_{1},\mathrmbfit{T}_{1}}\rangle}
$}}
to the $\mathrmbf{SET}$-morphism 
{\footnotesize{$
{\langle{\mathrmbf{R}_{2},\mathrmbfit{K}_{2}}\rangle} 
\xleftarrow
[{\langle{\mathrmbfit{R},\kappa}\rangle}]
{\mathring{\mathrmbfit{key}}_{\mathcal{A}}(\mathrmbfit{R},\psi)} 
{\langle{\mathrmbf{R}_{1},\mathrmbfit{K}_{1}}\rangle}
$}}
with a bridge 
$\kappa = \psi{\,\circ\,}\mathrmbfit{key}_{\mathcal{A}}
:
\mathrmbfit{K}_{2}
\Leftarrow
\mathrmbfit{R}^{\mathrm{op}}{\circ}\mathrmbfit{K}_{1}
$.
\end{itemize}
%
\end{itemize}

\newpage


%
\begin{proposition}\label{prop:db:proj:A}
Using projections 
$\mathrmbf{Db}(\mathcal{A})$ can be described as follows.
%
\begin{itemize}
\item 
An $\mathcal{A}$-database is an object 
$\mathcal{R}={\langle{\mathrmbf{R},\mathrmbfit{S},\mathrmbfit{K},\tau}\rangle}$
in $\mathrmbf{Db}(\mathcal{A})$
consisting of
(i) an object
$\mathring{\mathrmbfit{key}}_{\mathcal{A}}(\mathcal{R})=
{\langle{\mathrmbf{R},\mathrmbfit{K}}\rangle}$
in $\mathrmbf{SET}=\mathrmbf{Set}^{\scriptscriptstyle{\Downarrow}}$
with shape context $\mathrmbf{R}$
and key diagram
$\mathrmbf{R}^{\mathrm{op}}\!\xrightarrow[\,\mathrmbfit{T}{\,\circ\,}\mathrmbfit{key}_{\mathcal{A}}]{\,\mathrmbfit{K}\;}\mathrmbf{Set}$,
(ii) an object 
$\mathring{\mathrmbfit{dom}}_{\mathcal{A}}(\mathcal{R})=
{\langle{\mathrmbf{R},\mathrmbfit{S}}\rangle}$
in $\mathring{\mathrmbf{Dom}}(\mathcal{A}) 
= \mathrmbf{Dom}(\mathcal{A})^{\scriptscriptstyle{\Uparrow}}$
with shape context $\mathrmbf{R}$
and schema
$\mathrmbf{R}\xrightarrow[\,\mathrmbfit{T}^{\mathrm{op}}\!{\circ\,}\mathrmbfit{dom}_{\mathcal{A}}]{\;\mathrmbfit{S}\;}\mathrmbf{Dom}(\mathcal{A})\cong\mathrmbf{List}(X)$, and
(iii) a tuple bridge
$\mathrmbfit{K}\xRightarrow[\mathrmbfit{T}{\,\circ\,}\tau_{\mathcal{A}}]{\;\tau\;}
\mathrmbfit{S}^{\mathrm{op}}{\circ\,}\mathrmbfit{tup}_{\mathcal{A}}$.
%
Hence, a database in $\mathrmbf{Db}(\mathcal{A})$ is
a $\mathrmbf{SET}$-morphism 
\newline\mbox{}\hfill
{\footnotesize{$\mathring{\mathrmbfit{key}}_{\mathcal{A}}(\mathcal{R})=
{\langle{\mathrmbf{R},\mathrmbfit{K}}\rangle}
\xrightarrow[{\langle{\mathrmbfit{1},\tau}\rangle}]{\mathring{\tau}_{\mathcal{R}}}
{\langle{\mathrmbf{R},\mathrmbfit{S}^{\mathrm{op}}{\circ\,}\mathrmbfit{tup}_{\mathcal{A}}}\rangle}
=\mathring{\mathrmbfit{tup}}_{\mathcal{A}}(\mathrmbf{R},\mathrmbfit{S})
=\mathring{\mathrmbfit{tup}}_{\mathcal{A}}(\mathring{\mathrmbfit{dom}}_{\mathcal{A}}(\mathcal{R}))$.}}
\hfill\mbox{}\newline
%
\item 
An $\mathcal{A}$-database morphism is a morphism 
\newline\mbox{}\hfill
{\footnotesize{$\mathcal{R}_{2}=
{\langle{\mathrmbf{R}_{2},\mathrmbfit{S}_{2},\mathrmbfit{K}_{2},\tau_{2}}\rangle} 
\xleftarrow{{\langle{\mathrmbfit{R},\varphi,\kappa}\rangle}}
{\langle{\mathrmbf{R}_{1},\mathrmbfit{S}_{1},\mathrmbfit{K}_{1},\tau_{1}}\rangle}
=\mathcal{R}_{1}$}}
\hfill\mbox{}\newline
in $\mathrmbf{Db}(\mathcal{A})$
consisting of 
(i)
a morphism
{\footnotesize{$\mathring{\mathrmbfit{key}}_{\mathcal{A}}(\mathcal{R}_{2})=
{\langle{\mathrmbf{R}_{2},\mathrmbfit{K}_{2}}\rangle}
\xleftarrow[{\langle{\mathrmbfit{R},\kappa}\rangle}]
{\mathring{\mathrmbfit{key}}_{\mathcal{A}}(\mathrmbfit{R},\hat{\varphi},\kappa)}
{\langle{\mathrmbf{R}_{1},\mathrmbfit{K}_{1}}\rangle}
=\mathring{\mathrmbfit{key}}_{\mathcal{A}}(\mathcal{R}_{1})$}}
in $\mathrmbf{SET}$
with the bridge
$\mathrmbfit{K}_{2}
\xLeftarrow{\;\kappa\,}
\hat{\mathrmbfit{R}}^{\mathrm{op}}{\circ\;}\mathrmbfit{K}_{1}$,
and
(ii)
an $\mathcal{A}$-schema morphism
$\mathring{\mathrmbfit{dom}}_{\mathcal{A}}(\mathcal{R}_{2})=
{\langle{\mathrmbf{R}_{2},\mathrmbfit{S}_{2}}\rangle}
\xrightarrow[{\langle{\mathrmbfit{R},\varphi}\rangle}]
{\mathring{\mathrmbfit{dom}}_{\mathcal{A}}(\mathrmbfit{R},\hat{\varphi},\kappa)} 
{\langle{\mathrmbf{R}_{1},\mathrmbfit{S}_{1}}\rangle}
=\mathring{\mathrmbfit{dom}}_{\mathcal{A}}(\mathcal{R}_{1})$
in 
$\mathring{\mathrmbf{Dom}}(\mathcal{A}) 
=\mathrmbf{Dom}(\mathcal{A})^{\!\scriptscriptstyle{\Uparrow}}$
with the bridge
$\mathrmbfit{S}_{2} \xRightarrow{\;\varphi\;} \mathrmbfit{R}{\;\circ\;}\mathrmbfit{S}_{1}$,
which satisfy the condition
{\footnotesize{${\langle{\hat{\mathrmbf{R}},\kappa}\rangle} \circ 
{\langle{\mathrmbfit{R}_{2},\tau_{2}}\rangle}
=
{\langle{\mathrmbfit{R}_{1},\tau_{1}}\rangle} \circ 
{\langle{\mathrmbfit{R},\varphi^{\mathrm{op}}{\!\circ\,}\mathrmbfit{tup}_{\mathcal{A}}}\rangle}
$}}
in $\mathrmbf{SET}$. 
Hence,
{\footnotesize{$
\mathrmbfit{R}_{2}{\;\circ\;}\hat{\mathrmbfit{R}}
=
\mathrmbfit{R}{\;\circ\;}\mathrmbfit{R}_{1}$}}
and
${\footnotesize{\kappa{\,\bullet\,}\tau_{2}
=
(\mathrmbfit{R}^{\mathrm{op}}{\circ\;}\tau_{1})
{\,\bullet\,}
(\varphi^{\mathrm{op}}{\!\circ\;}\mathrmbfit{tup}_{\mathcal{A}})}}$.
%
%
\footnote{When the shape context is $\mathrmbf{1}$,
an $\mathcal{A}$-database is an $\mathcal{A}$-table
${\langle{I,X,K,t}\rangle}\in\mathrmbf{Tbl}(\mathcal{A})$
consisting of
an $X$-sorted signature
${\langle{I,X}\rangle}\in\mathrmbf{List}(X)$,
a key set 
$K\in\mathrmbf{Set}$,
and
a tuple function
$K\xrightarrow{t}\mathrmbfit{tup}_{\mathcal{S}}(I,X)$.
When the shape passage is $\mathrmbf{1}\xrightarrow{\mathrmbfit{1}}\mathrmbf{1}$,
an $\mathcal{A}$-database morphism is an $\mathcal{A}$-table morphism
${\langle{I_{2},X_{2},K_{2},t_{2}}\rangle}
\xleftarrow{\langle{h,k}\rangle}
{\langle{I_{1},X_{1},K_{1},t_{1}}\rangle}$
in $\mathrmbf{Tbl}(\mathcal{A})$
consisting of
a $X$-sorted signature morphism ${\langle{I_{2},s_{2}}\rangle}\xrightarrow{h}{\langle{I_{1},s_{1}}\rangle}$
in $\mathrmbf{List}(X)$,
and
a key function $K_{2}\xleftarrow{k}K_{1}$,
which satisfy the condition
$k{\;\cdot\;}t_{2} = \tau_{1}{\;\cdot\;}\mathrmbfit{tup}_{\mathcal{A}}(h)$
in $\mathrmbf{Set}$.}
%
\end{itemize}
\end{proposition}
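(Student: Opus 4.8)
The plan is to unwind the definition $\mathrmbf{Db}(\mathcal{A}) = \mathrmbf{Tbl}(\mathcal{A})^{\scriptscriptstyle{\Downarrow}}$ as a diagram context and to transport, pointwise, the comma-context description of the base $\mathrmbf{Tbl}(\mathcal{A})$ through the diagram construction. From the {\ttfamily FOLE} Table paper \cite{kent:fole:era:tbl}, the context of $\mathcal{A}$-tables is a comma context with projection passages $\mathrmbfit{key}_{\mathcal{A}}$ and $\mathrmbfit{dom}_{\mathcal{A}}$ (the latter valued, up to the equivalence $\mathrmbf{Dom}(\mathcal{A})\cong\mathrmbf{List}(X)$, in $X$-signatures) together with a structure bridge $\tau_{\mathcal{A}} : \mathrmbfit{key}_{\mathcal{A}} \Rightarrow \mathrmbfit{dom}_{\mathcal{A}}^{\mathrm{op}}{\circ\,}\mathrmbfit{tup}_{\mathcal{A}}$, so that an $\mathcal{A}$-table is exactly a function $K\to\mathrmbfit{tup}_{\mathcal{A}}(I,X)$ and an $\mathcal{A}$-table morphism is exactly a commuting square of such functions. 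The whole proof is then the observation that this description propagates to diagrams.

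For objects: given a database $\langle\mathrmbf{R},\mathrmbfit{T}\rangle$, i.e. a passage $\mathrmbf{R}^{\mathrm{op}}\xrightarrow{\mathrmbfit{T}}\mathrmbf{Tbl}(\mathcal{A})$, I compose $\mathrmbfit{T}$ with the table projections to obtain the key passage $\mathrmbfit{K}=\mathrmbfit{T}{\,\circ\,}\mathrmbfit{key}_{\mathcal{A}}$ and the schema passage $\mathrmbfit{S}=\mathrmbfit{T}^{\mathrm{op}}{\circ\,}\mathrmbfit{dom}_{\mathcal{A}}$ (so $\langle\mathrmbf{R},\mathrmbfit{S}\rangle\in\mathring{\mathrmbf{Dom}}(\mathcal{A})=\mathrmbf{List}(X)^{\scriptscriptstyle{\Uparrow}}$), and I whisker the structure bridge by $\mathrmbfit{T}$ to obtain the tuple bridge $\tau=\mathrmbfit{T}{\,\circ\,}\tau_{\mathcal{A}}:\mathrmbfit{K}\Rightarrow\mathrmbfit{S}^{\mathrm{op}}{\circ\,}\mathrmbfit{tup}_{\mathcal{A}}$; one checks this whiskered family is genuinely natural, hence a legitimate bridge. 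Conversely, given $\langle\mathrmbf{R},\mathrmbfit{S},\mathrmbfit{K},\tau\rangle$, the pointwise universal property of the comma context assembles a unique passage $\mathrmbf{R}^{\mathrm{op}}\to\mathrmbf{Tbl}(\mathcal{A})$ restricting to $\langle\mathrmbf{R},\mathrmbfit{T}\rangle$; the two assignments are mutually inverse because they are inverse objectwise on $\mathrmbf{R}$. The reformulation as a single $\mathrmbf{SET}$-morphism $\langle\mathrmbf{R},\mathrmbfit{K}\rangle\to\langle\mathrmbf{R},\mathrmbfit{S}^{\mathrm{op}}{\circ\,}\mathrmbfit{tup}_{\mathcal{A}}\rangle$ is then immediate, since $\mathrmbf{SET}=\mathrmbf{Set}^{\scriptscriptstyle{\Downarrow}}$ and that datum is precisely a pair of set-diagrams on $\mathrmbf{R}$ together with a bridge between them over the identity shape passage.

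For morphisms: a database morphism $\langle\mathrmbf{R}_2,\mathrmbfit{T}_2\rangle\xleftarrow{\langle\mathrmbfit{R},\psi\rangle}\langle\mathrmbf{R}_1,\mathrmbfit{T}_1\rangle$ is a shape passage together with a bridge $\psi:\mathrmbfit{T}_2\Leftarrow\mathrmbfit{R}^{\mathrm{op}}{\circ\,}\mathrmbfit{T}_1$ whose components are $\mathcal{A}$-table morphisms. Whiskering $\psi$ by $\mathrmbfit{key}_{\mathcal{A}}$ and $\mathrmbfit{dom}_{\mathcal{A}}$ yields the key bridge $\kappa=\psi{\,\circ\,}\mathrmbfit{key}_{\mathcal{A}}$ and the schema bridge $\varphi=\psi^{\mathrm{op}}{\!\circ\,}\mathrmbfit{dom}_{\mathcal{A}}$, and the condition that each component of $\psi$ be a table morphism — the pointwise square $k{\,\cdot\,}t_2 = t_1{\,\cdot\,}\mathrmbfit{tup}_{\mathcal{A}}(h)$ — bundles, by the interchange law for whiskering against $\tau_{\mathcal{A}}$, into exactly the displayed identity $\kappa{\,\bullet\,}\tau_2 = (\mathrmbfit{R}^{\mathrm{op}}{\circ\,}\tau_1){\,\bullet\,}(\varphi^{\mathrm{op}}{\!\circ\,}\mathrmbfit{tup}_{\mathcal{A}})$, equivalently the stated commuting square ${\langle{\hat{\mathrmbf{R}},\kappa}\rangle}\circ{\langle{\mathrmbfit{R}_2,\tau_2}\rangle} = {\langle{\mathrmbfit{R}_1,\tau_1}\rangle}\circ{\langle{\mathrmbfit{R},\varphi^{\mathrm{op}}{\!\circ\,}\mathrmbfit{tup}_{\mathcal{A}}}\rangle}$ in $\mathrmbf{SET}$. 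Reversing, such data reassembles $\psi$ componentwise, and I would finish by checking that these bijections respect identities and composition, which reduces objectwise to the functoriality of the comma-context projections and of bridge whiskering; the case $\mathrmbf{R}=\mathrmbf{1}$ recovers $\mathrmbf{Tbl}(\mathcal{A})$, as the footnote notes.

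The main obstacle I expect is bookkeeping of variance. The diagram context $\mathrmbf{Tbl}(\mathcal{A})^{\scriptscriptstyle{\Downarrow}}$ carries its diagram on $\mathrmbf{R}^{\mathrm{op}}$; the projection $\mathrmbfit{dom}_{\mathcal{A}}$ is contravariant while $\mathring{\mathrmbf{Dom}}(\mathcal{A})$ is built from the covariant $\mathrmbf{Dom}(\mathcal{A})^{\scriptscriptstyle{\Uparrow}}$; and $\mathrmbfit{tup}_{\mathcal{A}}$ is itself contravariant, so keeping straight which way each bridge points (and in particular that the whiskered $\tau$ is natural in $\mathrmbf{R}$, not merely a family of arrows) is the delicate part. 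Once the variances are pinned down, everything else is a routine objectwise transport of the comma-context description of $\mathrmbf{Tbl}(\mathcal{A})$.
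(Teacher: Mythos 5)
Your proposal is correct and matches the paper's intent: the paper offers no explicit proof of this proposition, treating it as a direct unwinding of the definitions in the preceding ``Projections'' paragraph, where $\mathrmbfit{K}=\mathrmbfit{T}{\,\circ\,}\mathrmbfit{key}_{\mathcal{A}}$, $\mathrmbfit{S}=\mathrmbfit{T}^{\mathrm{op}}{\!\circ\,}\mathrmbfit{dom}_{\mathcal{A}}$, $\kappa=\psi{\,\circ\,}\mathrmbfit{key}_{\mathcal{A}}$ and $\varphi=\psi^{\mathrm{op}}{\!\circ\,}\mathrmbfit{dom}_{\mathcal{A}}$ are obtained exactly by the composition and whiskering you describe, with the commuting square of Fig.\,\ref{fig:rel:db:mor:var:A} encoding the componentwise table-morphism condition. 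Your added attention to the inverse reassembly and to the variance bookkeeping supplies precisely the routine verification the paper leaves implicit.
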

\begin{figure}
\begin{center}
{{\begin{tabular}{c@{\hspace{90pt}}c}
{{\begin{tabular}{c}
\setlength{\unitlength}{0.6pt}
\begin{picture}(120,120)(0,-28)
\put(0,80){\makebox(0,0){\scriptsize{$\mathrmbfit{K}_{2}$}}}
\put(130,80){\makebox(0,0){\scriptsize{$
\mathrmbfit{R}^{\mathrm{op}}{\circ\;}\mathrmbfit{K}_{1}$}}}
\put(0,0){\makebox(0,0){\scriptsize{$
\mathrmbfit{S}_{2}^{\mathrm{op}}{\circ\;}\mathrmbfit{tup}_{\mathcal{A}}$}}}
\put(135,0){\makebox(0,0){\scriptsize{$
\mathrmbfit{R}^{\mathrm{op}}{\circ\,}\mathrmbfit{S}_{1}^{\mathrm{op}}{\circ\;}\mathrmbfit{tup}_{\mathcal{A}}$}}}
\put(-7,43){\makebox(0,0)[r]{\scriptsize{$\tau_{2}$}}}
\put(140,43){\makebox(0,0)[l]{\scriptsize{$\mathrmbfit{R}^{\mathrm{op}}{\circ\;}\tau_{1}$}}}
\put(63,93){\makebox(0,0){\scriptsize{$\kappa$}}}
\put(63,-14){\makebox(0,0){\scriptsize{$\varphi^{\mathrm{op}}{\!\circ\;}\mathrmbfit{tup}_{\mathcal{A}}$}}}
\put(60,75){\makebox(0,0){\large{$\xLeftarrow{\;\;\;\;\;\;\;\;\;\;\;\;\;\;\;\;\;}$}}}
\put(60,-5){\makebox(0,0){\large{$\xLeftarrow{\;\;\;\;\;\;\;\;\;}$}}}
\put(0,40){\makebox(0,0){\large{$\bigg\Downarrow$}}}
\put(130,40){\makebox(0,0){\large{$\bigg\Downarrow$}}}
\end{picture}
\end{tabular}}}
&
{{\begin{tabular}{c}
\setlength{\unitlength}{0.63pt}
\begin{picture}(140,80)(0,-10)
\put(0,80){\makebox(0,0){\scriptsize{${\langle{\mathrmbf{R}_{2},\mathrmbfit{K}_{2}}\rangle}$}}}
\put(140,80){\makebox(0,0){\scriptsize{${\langle{\mathrmbf{R}_{1},\mathrmbfit{K}_{1}}\rangle}$}}}
\put(0,-10){\makebox(0,0){\scriptsize{$
\underset{\mathring{\mathrmbfit{tup}}(\mathrmbf{R}_{2},\mathrmbfit{S}_{2})}
{\underbrace{\langle{\mathrmbf{R}_{2},\mathrmbfit{S}_{2}^{\mathrm{op}}{\circ}\mathrmbfit{tup}_{\mathcal{A}}}\rangle}}$}}}
\put(140,-10){\makebox(0,0){\scriptsize{$
\underset{\mathring{\mathrmbfit{tup}}(\mathrmbf{R}_{1},\mathrmbfit{S}_{1})}
{\underbrace{\langle{\mathrmbf{R}_{1},\mathrmbfit{S}_{1}^{\mathrm{op}}{\circ}\mathrmbfit{tup}_{\mathcal{A}}}\rangle}}$}}}
\put(-7,40){\makebox(0,0)[r]{\scriptsize{${\langle{\mathrmbfit{1},\tau_{2}}\rangle}$}}}
\put(147,40){\makebox(0,0)[l]{\scriptsize{${\langle{\mathrmbfit{1},\tau_{1}}\rangle}$}}}
\put(70,90){\makebox(0,0){\scriptsize{${\langle{\mathrmbfit{R},\kappa}\rangle}$}}}
\put(70,-30){\makebox(0,0){\scriptsize{$
\underset{\mathring{\mathrmbfit{tup}}_{\mathcal{A}}(\mathrmbfit{R},\varphi)}
{\underbrace{\langle{\mathrmbfit{R},\varphi^{\mathrm{op}}{\!\circ\,}\mathrmbfit{tup}_{\mathcal{A}}}\rangle}}
$}}}
\put(105,80){\vector(-1,0){70}}
\put(90,0){\vector(-1,0){40}}
\put(0,65){\vector(0,-1){50}}
\put(140,65){\vector(0,-1){50}}
\put(70,40){\makebox(0,0){\footnotesize{$\mathrmbf{SET}=\mathrmbf{Set}^{\!\scriptscriptstyle{\Downarrow}}$}}}
\end{picture}
\end{tabular}}}
\end{tabular}}}
\end{center}
\caption{Database Morphism (proj): $\mathrmbf{Db}(\mathcal{A})$}
\label{fig:rel:db:mor:var:A}
\end{figure}
%

\comment{ 
\begin{table}
\begin{center}
{\fbox{\footnotesize{\begin{tabular}{l@{\hspace{20pt}}l}
$\mathrmbf{Db}(\mathcal{A})
= \mathrmbf{Tbl}(\mathcal{A})^{\scriptscriptstyle{\Downarrow}}$
&
\textit{oplax comma context (original)}
\\
$\mathrmbf{Db}(\mathcal{A})
\subseteq
\bigl(\mathrmbf{SET}{\;\Downarrow\,}\mathring{\mathrmbfit{tup}}_{\mathcal{A}}\bigr)$
&
\textit{comma context}
\end{tabular}}}}
\end{center}
\caption{Alternate Definitions of $\mathrmbf{Db}(\mathcal{A})$}
\label{tbl:alt:defs:db:A}
\end{table}
}
%


\newpage
\subsubsection{Upper Aspect: $\mathrmbf{Db}$}\label{sub:sub:sec:rel:db:typ:dom:upper}

%
%

The 
subcontext of 
{\ttfamily FOLE} databases 
(with constant type domains and constant type domain morphisms) 
is denoted by $\mathrmbf{Db}\subseteq\mathrmbf{DB}$.

\begin{itemize}
\item 
A relational database 
${\langle{\mathrmbf{R},\mathrmbfit{T},\mathcal{A}}\rangle}$ 
in $\mathrmbf{Db}$
consists of a type domain $\mathcal{A}$
and a relational database 
${\langle{\mathrmbf{R},\mathrmbfit{T}}\rangle}$ in $\mathrmbf{Db}(\mathcal{A})$
with a shape context $\mathrmbf{R}$ 
and  a passage 
$\mathrmbf{R}^{\mathrm{op}}\!\xrightarrow{\,\mathrmbfit{T}\;}\mathrmbf{Tbl}(\mathcal{A})$.
\item 
A relational database morphism 
${\langle{\mathrmbf{R}_{2},\mathrmbfit{T}_{2},\mathcal{A}_{2}}\rangle} 
\xleftarrow{\;{\langle{\mathrmbfit{R},\hat{\psi},f,g}\rangle}\;}
{\langle{\mathrmbf{R}_{1},\mathrmbfit{T}_{1},\mathcal{A}_{1}}\rangle}$
in $\mathrmbf{Db}$
(Fig.~\ref{fig:db:mor:Db})
with constant type domain morphism
${\langle{f,g}\rangle} : \mathcal{A}_{2} \rightleftarrows \mathcal{A}_{1}$, 
is a 
\texttt{FOLE} database morphism
${\langle{\mathrmbfit{R},\xi}\rangle} : 
{\langle{\mathrmbf{R}_{2},\mathrmbfit{T}_{2}}\rangle} \leftarrow 
{\langle{\mathrmbf{R}_{1},\mathrmbfit{T}_{1}}\rangle}$,
whose tabular interpretation bridge
{\footnotesize{${\mathrmbfit{T}_{2}
{\,\xLeftarrow{\;\,\xi\,}\,}
\mathrmbfit{R}{\,\circ\,}\mathrmbfit{T}_{1}}$}}
factors adjointly
\begin{equation}\label{eqn:db:mor:def}
(\mathrmbfit{R}^{\mathrm{op}} \circ \mathrmbfit{T}_{1})
\circ \acute{\chi}_{{\langle{f,g}\rangle}}
\bullet
(\acute{\psi} \circ \mathrmbfit{inc}_{\mathcal{A}_{2}})
=
\xi 
=
(\grave{\psi} \circ \mathrmbfit{inc}_{\mathcal{A}_{1}})
\bullet
(\mathrmbfit{T}_{2} \circ \grave{\chi}_{{\langle{f,g}\rangle}})
\normalsize
\end{equation}
through the fiber adjunction 
$\mathrmbf{Tbl}(\mathcal{A}_{2})
\xleftarrow{\acute{\mathrmbfit{tbl}}_{{\langle{f,g}\rangle}}\;\dashv\;\grave{\mathrmbfit{tbl}}_{{\langle{f,g}\rangle}}}
\mathrmbf{Tbl}(\mathcal{A}_{1})$
%
\comment{
Defined in 
Type Domain Indexing of
the paper
``The {\ttfamily FOLE} Table''.
\cite{kent:fole:era:tbl}.
}
%
%
using
the table inclusion bridge adjointness 
$\hat{\chi}_{{\langle{f,g}\rangle}} 
= (\acute{\chi}_{{\langle{f,g}\rangle}},\grave{\chi}_{{\langle{f,g}\rangle}})$
%
%
\footnote{
The table bridge pair projects to the signed domain bridge pair
\begin{center}
\begin{tabular}{c@{\hspace{65pt}}c}
{\scriptsize{
$\acute{\chi}_{{\langle{f,g}\rangle}}{\;\circ\;}\mathrmbfit{dom}^{\mathrm{op}}
=\mathrmbfit{dom}_{\mathcal{A}_{1}}^{\mathrm{op}}{\;\circ\;}
{\grave{\iota}_{{\langle{f,g}\rangle}}}^{\mathrm{op}}$}}
&
{\scriptsize{
$\grave{\chi}_{{\langle{f,g}\rangle}}{\;\circ\;}\mathrmbfit{dom}^{\mathrm{op}}
=\mathrmbfit{dom}_{\mathcal{A}_{2}}^{\mathrm{op}}{\;\circ\;}
{\acute{\iota}_{{\langle{f,g}\rangle}}}^{\mathrm{op}}$}}
\\
\textbf{levo}
&
\textbf{dextro}
\end{tabular}
\end{center}
}

%
\begin{center}
\begin{tabular}{c}
{\footnotesize\setlength{\extrarowheight}{4pt}$\begin{array}{|@{\hspace{5pt}}l@{\hspace{15pt}}l@{\hspace{5pt}}|}
\multicolumn{1}{l}{\text{\bfseries levo}}
& 
\multicolumn{1}{l}{\text{\bfseries dextro}} 
\\ \hline
\acute{\chi}_{{\langle{f,g}\rangle}} : 
\acute{\mathrmbfit{tbl}}_{{\langle{f,g}\rangle}}\circ\mathrmbfit{inc}_{\mathcal{A}_{2}}\Leftarrow\mathrmbfit{inc}_{\mathcal{A}_{1}}
&
\grave{\chi}_{{\langle{f,g}\rangle}} : 
\mathrmbfit{inc}_{\mathcal{A}_{2}}\Leftarrow\grave{\mathrmbfit{tbl}}_{{\langle{f,g}\rangle}}\circ\mathrmbfit{inc}_{\mathcal{A}_{1}}
\\
\acute{\chi}_{{\langle{f,g}\rangle}} =
(\eta_{{\langle{f,g}\rangle}}\circ\mathrmbfit{inc}_{\mathcal{A}_{1}})
\bullet
(\acute{\mathrmbfit{tbl}}_{{\langle{f,g}\rangle}}\circ\grave{\chi}_{{\langle{f,g}\rangle}})
&
\grave{\chi}_{{\langle{f,g}\rangle}} =
(\grave{\mathrmbfit{tbl}}_{{\langle{f,g}\rangle}}\circ\acute{\chi}_{{\langle{f,g}\rangle}})
\bullet
(\varepsilon_{{\langle{f,g}\rangle}}\circ\mathrmbfit{inc}_{\mathcal{A}_{2}})
\rule[-7pt]{0pt}{10pt}
\\\hline
\end{array}$}
\end{tabular}
\end{center}
%
defined in 
(Kent \cite{kent:fole:era:tbl}). 
This consists of equivalent morphisms  
\newline\mbox{}\hfill
$\underset{\in\;\;\mathrmbf{Db}(\mathcal{A}_{2})}
{\underbrace{{\langle{\mathrmbf{R}_{2},\mathrmbfit{T}_{2}}\rangle} 
\xleftarrow{\;{\langle{\mathrmbfit{R},\acute{\psi}}\rangle}\;}
\overset{
{\langle{
\mathrmbf{R}_{1},\mathrmbfit{T}_{1}{\circ\,}\acute{\mathrmbfit{tbl}}_{{\langle{f,g}\rangle}}
}\rangle}
}
{\overbrace{
\acute{\mathrmbfit{db}}_{{\langle{f,g}\rangle}}(\mathrmbf{R}_{1},\mathrmbfit{T}_{1})
}}
}}
{\;\;\;\overset{\cong}{\leftrightarrow}\;\;\;}
\underset{\in\;\;\mathrmbf{Db}(\mathcal{A}_{1})}
{\underbrace{
\overset{
{\langle{
\mathrmbf{R}_{2},\mathrmbfit{T}_{2}{\circ\,}\grave{\mathrmbfit{tbl}}_{{\langle{f,g}\rangle}}
}\rangle}
}
{\overbrace{
\grave{\mathrmbfit{db}}_{{\langle{f,g}\rangle}}(\mathrmbf{R}_{2},\mathrmbfit{T}_{2}) 
}}
\xleftarrow{\;{\langle{\mathrmbfit{R},\grave{\psi}}\rangle}\;}
{\langle{\mathrmbf{R}_{1},\mathrmbfit{T}_{1}}\rangle}
}}$
\hfill\mbox{}\newline
%
%
We normally just use the bridge restriction 
$\acute{\psi}$ 
or
$\grave{\psi}$ 
for the database morphism.
We use 
$\hat{\psi}$ to denote either of these equivalent bridges.
The original definition can be computed with the factorization in 
Disp.\,\ref{eqn:db:mor:def}.
\end{itemize}
%

%

\begin{figure}
\begin{center}
{{
\begin{tabular}{@{\hspace{5pt}}c@{\hspace{45pt}}c@{\hspace{45pt}}c@{\hspace{5pt}}}
{{\begin{tabular}[b]{c}
\setlength{\unitlength}{0.54pt}
\begin{picture}(120,180)(0,0)
\put(5,160){\makebox(0,0){\footnotesize{$\mathrmbf{R}_{2}^{\mathrm{op}}$}}}
\put(125,160){\makebox(0,0){\footnotesize{$\mathrmbf{R}_{1}^{\mathrm{op}}$}}}
\put(0,80){\makebox(0,0){\footnotesize{$\mathrmbf{Tbl}(\mathcal{A}_{2})$}}}
\put(120,80){\makebox(0,0){\footnotesize{$\mathrmbf{Tbl}(\mathcal{A}_{1})$}}}
\put(60,5){\makebox(0,0){\normalsize{$\mathrmbf{Tbl}$}}}
\put(65,172){\makebox(0,0){\scriptsize{$\mathrmbfit{R}^{\mathrm{op}}$}}}
\put(-8,125){\makebox(0,0)[r]{\scriptsize{$\mathrmbfit{T}_{2}$}}}
\put(128,125){\makebox(0,0)[l]{\scriptsize{$\mathrmbfit{T}_{1}$}}}
\put(60,92){\makebox(0,0){\scriptsize{$\acute{\mathrmbfit{tbl}}_{{\langle{f,g}\rangle}}$}}}
\put(24,38){\makebox(0,0)[r]{\scriptsize{$\mathrmbfit{inc}_{\mathcal{A}_{2}}$}}}
\put(97,38){\makebox(0,0)[l]{\scriptsize{$\mathrmbfit{inc}_{\mathcal{A}_{1}}$}}}
\put(60,130){\makebox(0,0){\shortstack{\scriptsize{$\acute{\psi}$}\\\large{$\Longleftarrow$}
\\\tiny{\textbf{levo}}
}}}
\put(60,54){\makebox(0,0){\shortstack{\footnotesize{$\xLeftarrow{\acute{\chi}_{{\langle{f,g}\rangle}}}$}}}}
\put(20,160){\vector(1,0){80}}
\put(85,80){\vector(-1,0){50}}
\put(0,145){\vector(0,-1){50}}
\put(120,145){\vector(0,-1){50}}
\put(10,68){\vector(3,-4){38}}
\put(111,68){\vector(-3,-4){38}}
\end{picture}
\end{tabular}}}
&
{{\begin{tabular}[b]{c}
\setlength{\unitlength}{0.54pt}
\begin{picture}(80,180)(5,0)
\put(5,160){\makebox(0,0){\footnotesize{$\mathrmbf{R}_{2}^{\mathrm{op}}$}}}
\put(85,160){\makebox(0,0){\footnotesize{$\mathrmbf{R}_{1}^{\mathrm{op}}$}}}
\put(40,5){\makebox(0,0){\normalsize{$\mathrmbf{Tbl}$}}}
\put(45,172){\makebox(0,0){\scriptsize{$\mathrmbfit{R}^{\mathrm{op}}$}}}
\put(2,100){\makebox(0,0)[r]{\footnotesize{$\mathrmbfit{T}_{2}$}}}
\put(81,100){\makebox(0,0)[l]{\footnotesize{$\mathrmbfit{T}_{1}$}}}
\put(40,105){\makebox(0,0){\shortstack{\normalsize{$\xLeftarrow{\;\;\xi\;\;}$}}}}
\put(15,160){\vector(1,0){50}}
\qbezier(0,150)(0,80)(33,18)\put(33,18){\vector(2,-3){0}}
\qbezier(80,150)(80,80)(47,18)\put(47,18){\vector(-2,-3){0}}
\put(-30,80){\makebox(0,0){\shortstack{\normalsize{$=$}}}}
\put(120,80){\makebox(0,0){\shortstack{\normalsize{$=$}}}}
\end{picture}
\end{tabular}}}
&
{{\begin{tabular}[b]{c}
\setlength{\unitlength}{0.54pt}
\begin{picture}(120,180)(0,0)
\put(5,160){\makebox(0,0){\footnotesize{$\mathrmbf{R}_{2}^{\mathrm{op}}$}}}
\put(125,160){\makebox(0,0){\footnotesize{$\mathrmbf{R}_{1}^{\mathrm{op}}$}}}
\put(0,80){\makebox(0,0){\footnotesize{$\mathrmbf{Tbl}(\mathcal{A}_{2})$}}}
\put(120,80){\makebox(0,0){\footnotesize{$\mathrmbf{Tbl}(\mathcal{A}_{1})$}}}
\put(60,5){\makebox(0,0){\normalsize{$\mathrmbf{Tbl}$}}}
\put(65,172){\makebox(0,0){\scriptsize{$\mathrmbfit{R}^{\mathrm{op}}$}}}
\put(-8,125){\makebox(0,0)[r]{\scriptsize{$\mathrmbfit{T}_{2}$}}}
\put(128,125){\makebox(0,0)[l]{\scriptsize{$\mathrmbfit{T}_{1}$}}}
\put(60,92){\makebox(0,0){\scriptsize{$\grave{\mathrmbfit{tbl}}_{{\langle{f,g}\rangle}}$}}}
\put(24,38){\makebox(0,0)[r]{\scriptsize{$\mathrmbfit{inc}_{\mathcal{A}_{2}}$}}}
\put(97,38){\makebox(0,0)[l]{\scriptsize{$\mathrmbfit{inc}_{\mathcal{A}_{1}}$}}}
\put(60,130){\makebox(0,0){\shortstack{
\scriptsize{$\grave{\psi}$}\\\large{$\Longleftarrow$}
\\\tiny{\textbf{dextro}}
}}}
\put(60,54){\makebox(0,0){\shortstack{\footnotesize{$
\xLeftarrow{\grave{\chi}_{{\langle{f,g}\rangle}}}$}}}}
\put(20,160){\vector(1,0){80}}
\put(35,80){\vector(1,0){50}}
\put(0,145){\vector(0,-1){50}}
\put(120,145){\vector(0,-1){50}}
\put(9,68){\vector(3,-4){38}}
\put(111,68){\vector(-3,-4){38}}
\end{picture}
\end{tabular}}}
\end{tabular}}}
\end{center}
\caption{Database Morphism: $\mathrmbf{Db}$ }
\label{fig:db:mor:Db}
\end{figure}
%


%
\begin{proposition}\label{prop:fib:cxt:db:var:typ:dom:sh}
Hence,
the fibered context $\mathrmbf{Db}$ (with constant type domain) 
is the Grothendieck construction of
%
the indexed adjunction
\newline\mbox{}\hfill
$\mathrmbf{Db}
= \int_\mathrmit{data}:
\mathrmbf{Cls}^{\mathrm{op}}\!\xrightarrow{\,\hat{\mathrmbfit{db}}\;}\mathrmbf{Adj}$.
\hfill\mbox{}
%
\end{proposition}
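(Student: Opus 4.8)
The plan is to mirror the proof of Prop.~\ref{prop:fib:cxt:sch:dom:var:typ:dom:sh}, lifting the table-level indexed adjunction of \cite{kent:fole:era:tbl} to diagram contexts. First I would recall that for each type domain $\mathcal{A}$ the fiber is the diagram context $\mathrmbf{Db}(\mathcal{A}) = \mathrmbf{Tbl}(\mathcal{A})^{\scriptscriptstyle{\Downarrow}}$ of Def.~\ref{def:db:A:oplax:cxt}, and that the claimed indexing passage $\hat{\mathrmbfit{db}} : \mathrmbf{Cls}^{\mathrm{op}} \to \mathrmbf{Adj}$ sends a type domain $\mathcal{A}$ to $\mathrmbf{Db}(\mathcal{A})$ and sends a type domain morphism $\mathcal{A}_{2}\xrightleftharpoons{{\langle{f,g}\rangle}}\mathcal{A}_{1}$ to the adjoint pair $\acute{\mathrmbfit{db}}_{{\langle{f,g}\rangle}} \dashv \grave{\mathrmbfit{db}}_{{\langle{f,g}\rangle}}$ between $\mathrmbf{Db}(\mathcal{A}_{1})$ and $\mathrmbf{Db}(\mathcal{A}_{2})$ obtained by post-composing diagrams $\mathrmbf{R}^{\mathrm{op}}\!\to\mathrmbf{Tbl}(\mathcal{A}_{i})$ with the table fiber adjunction $\acute{\mathrmbfit{tbl}}_{{\langle{f,g}\rangle}} \dashv \grave{\mathrmbfit{tbl}}_{{\langle{f,g}\rangle}}$ and whiskering morphisms with the table inclusion bridge adjointness $\hat{\chi}_{{\langle{f,g}\rangle}} = (\acute{\chi}_{{\langle{f,g}\rangle}},\grave{\chi}_{{\langle{f,g}\rangle}})$.

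Second, I would check that $\hat{\mathrmbfit{db}}$ is genuinely functorial into $\mathrmbf{Adj}$: the identity type domain morphism induces the identity adjunction on $\mathrmbf{Db}(\mathcal{A})$, and a composite $\mathcal{A}_{3}\rightleftarrows\mathcal{A}_{2}\rightleftarrows\mathcal{A}_{1}$ induces the composite of adjoint pairs, the unit/counit of the composite being the appropriate pasting of the component units/counits. This reduces to the corresponding fact for the table-level indexed adjunction $\hat{\mathrmbfit{tbl}}$ from the {\ttfamily FOLE} Table paper, since post-composition with passages and horizontal pasting with bridges both preserve composition; the inclusion-bridge identities recorded in the \textbf{levo}/\textbf{dextro} table preceding Fig.~\ref{fig:db:mor:Db} are exactly what is needed to see that the two presentations $\acute{\psi}$ and $\grave{\psi}$ transport coherently.

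Third, I would form $\int_{\mathrmit{data}}\hat{\mathrmbfit{db}}$ and compare its objects and morphisms with the description of $\mathrmbf{Db}\subseteq\mathrmbf{DB}$ in \S\ref{sub:sub:sec:rel:db:typ:dom:upper}. An object of the Grothendieck construction is a pair $\langle\mathcal{A},{\langle{\mathrmbf{R},\mathrmbfit{T}}\rangle}\rangle$ with ${\langle{\mathrmbf{R},\mathrmbfit{T}}\rangle}\in\mathrmbf{Db}(\mathcal{A})$, which is precisely a relational database ${\langle{\mathrmbf{R},\mathrmbfit{T},\mathcal{A}}\rangle}$ in $\mathrmbf{Db}$; a morphism over $\mathcal{A}_{2}\xrightleftharpoons{{\langle{f,g}\rangle}}\mathcal{A}_{1}$ is equivalently a $\mathrmbf{Db}(\mathcal{A}_{2})$-morphism ${\langle{\mathrmbf{R}_{2},\mathrmbfit{T}_{2}}\rangle}\leftarrow\acute{\mathrmbfit{db}}_{{\langle{f,g}\rangle}}(\mathrmbf{R}_{1},\mathrmbfit{T}_{1})$ (giving $\acute{\psi}$) or a $\mathrmbf{Db}(\mathcal{A}_{1})$-morphism $\grave{\mathrmbfit{db}}_{{\langle{f,g}\rangle}}(\mathrmbf{R}_{2},\mathrmbfit{T}_{2})\leftarrow{\langle{\mathrmbf{R}_{1},\mathrmbfit{T}_{1}}\rangle}$ (giving $\grave{\psi}$), and translating back through the adjointness reproduces the adjoint factorization of the tabular interpretation bridge $\xi$ in Disp.~\ref{eqn:db:mor:def}. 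Composition in $\int_{\mathrmit{data}}\hat{\mathrmbfit{db}}$ — composing shape passages and pasting bridges after re-indexing — then matches the componentwise composition of database morphisms in $\mathrmbf{Db}$.

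The main obstacle I anticipate is the bookkeeping in this last step: showing that the Grothendieck-construction morphism data, presented through the \textbf{levo}/\textbf{dextro} pair $(\acute{\psi},\grave{\psi})$ together with $\hat{\chi}_{{\langle{f,g}\rangle}}$, coincides exactly with the factorization $(\mathrmbfit{R}^{\mathrm{op}}\circ\mathrmbfit{T}_{1})\circ\acute{\chi}_{{\langle{f,g}\rangle}}\bullet(\acute{\psi}\circ\mathrmbfit{inc}_{\mathcal{A}_{2}}) = \xi = (\grave{\psi}\circ\mathrmbfit{inc}_{\mathcal{A}_{1}})\bullet(\mathrmbfit{T}_{2}\circ\grave{\chi}_{{\langle{f,g}\rangle}})$, and that the two forms are genuinely interchangeable via the triangle identities of the table fiber adjunction. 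Once this identification is made the statement follows, and — though not asserted here — completeness and cocompleteness of $\mathrmbf{Db}$ would drop out as in Prop.~\ref{prop:dom:typ:dom:lim:colim} via Fact~\ref{fact:groth:adj:lim:colim}.
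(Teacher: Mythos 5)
Your proposal is correct and takes essentially the same route as the paper: the paper's proof simply defines, for each type domain morphism ${\langle{f,g}\rangle}:\mathcal{A}_{2}\rightleftarrows\mathcal{A}_{1}$, the database fiber adjunction $\acute{\mathrmbfit{db}}_{{\langle{f,g}\rangle}}\dashv\grave{\mathrmbfit{db}}_{{\langle{f,g}\rangle}}$ by composing on the bottom with the table fiber adjunction, and declares that this indexed adjunction yields $\mathrmbf{Db}$ via the Grothendieck construction. Your steps two and three (functoriality of $\hat{\mathrmbfit{db}}$ and the object/morphism comparison with the description of $\mathrmbf{Db}$ in \S\ref{sub:sub:sec:rel:db:typ:dom:upper}) are left implicit in the paper but are exactly the right verifications.
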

\begin{proof}
By composing on the bottom (type domain),
define the following 
database fiber adjunction
for any type domain morphisms
$\mathcal{A}_{2}\xrightleftharpoons{{\langle{f,g}\rangle}}\mathcal{A}_{1}$
\begin{center}
{{\begin{tabular}[b]{c}
\setlength{\unitlength}{0.9pt}
\begin{picture}(160,30)(-16,-70)
\put(12,-52){\makebox(0,0){\normalsize{$
\textstyle{\mathrmbf{Db}(\mathcal{A}_{2})}$}}}
\put(120,-52){\makebox(0,0){\normalsize{$
\textstyle{\mathrmbf{Db}(\mathcal{A}_{1})}$.}}}
\put(250,-52){\makebox(0,0){\normalsize{\mbox{}\hfill\rule{5pt}{5pt}}}}
\put(70,-70){\makebox(0,0){\scriptsize{$\acute{\mathrmbfit{db}}_{{\langle{f,g}\rangle}}$}}}
\put(70,-35){\makebox(0,0){\scriptsize{$\grave{\mathrmbfit{db}}_{{\langle{f,g}\rangle}}$}}}
\put(50,-52){\makebox(0,0){\tiny{$\eta_{{\langle{f,g}\rangle}}$}}}
\put(68,-50){\makebox(0,0){\tiny{${\dashv}$}}}
\put(86,-52){\makebox(0,0){\tiny{$\varepsilon_{{\langle{f,g}\rangle}}$}}}
\put(90,-60){\vector(-1,0){50}}
\put(40,-44){\vector(1,0){50}}
\end{picture}
\end{tabular}}}
\end{center}
%
%
\end{proof}
%

%
\comment{
\begin{proof}
\mbox{}
A database ${\langle{\mathrmbf{R},\mathrmbfit{T},\mathcal{A}}\rangle}$
in $\mathrmbf{Db}$
(with constant type domain)
consists of
a shape context $\mathrmbf{R}$,
a type domain $\mathcal{A}$
and a diagram of $\mathrmbf{R}$-shaped $\mathcal{A}$-tables 
$\mathrmbf{R}^{\mathrm{op}}\!\xrightarrow{\,\mathrmbfit{T}\;}\mathrmbf{Tbl}(\mathcal{A})$.
Hence, a database ${\langle{\mathrmbf{R},\mathrmbfit{T},\mathcal{A}}\rangle}$
consists of 
%
a type domain $\mathcal{A}$
and an object ${\langle{\mathrmbf{R},\mathrmbfit{T}}\rangle}$ 
in the fiber context $\mathrmbf{Db}(\mathcal{A})$.
%
A database morphism 
${\langle{\mathrmbf{R}_{2},\mathrmbfit{T}_{2},\mathcal{A}_{2}}\rangle} 
\xleftarrow{\;{\langle{\mathrmbfit{R},\hat{\psi},f,g}\rangle}\;}
{\langle{\mathrmbf{R}_{1},\mathrmbfit{T}_{1},\mathcal{A}_{1}}\rangle}$
in $\mathrmbf{Db}$
%
consists 
%
a type domain morphism $\mathcal{A}_{2}\xrightleftharpoons{{\langle{f,g}\rangle}}\mathcal{A}_{1}$
and a bridge pair
$\hat{\psi}=(\acute{\psi},\grave{\psi})$
consisting of equivalent morphisms  
(pictured in Fig.~\ref{fig:db:mor:Db}).
\hfill\rule{5pt}{5pt}
%
\end{proof}
%
}


%
\begin{proposition}\label{prop:db:typ:dom:lim:colim}
The fibered context (Grothendieck construction) 
$\mathrmbf{Db} = \int_\mathrmit{data}:
\mathrmbf{Cls}^{\mathrm{op}}\!\xrightarrow{\,\hat{\mathrmbfit{db}}\;}\mathrmbf{Adj}$
is complete and cocomplete
and the projection 
$\mathrmbf{Db}\rightarrow\mathrmbf{Cls}$
is continuous and cocontinuous.
\end{proposition}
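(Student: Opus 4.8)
The plan is to obtain this statement as an immediate corollary of the abstract Grothendieck-construction result Fact~\ref{fact:groth:adj:lim:colim} of \S~\ref{append:grothen:construct}, exactly paralleling the proof of Prop.~\ref{prop:dom:typ:dom:lim:colim} for $\mathring{\mathrmbf{Dom}}$. First I would recall from Prop.~\ref{prop:fib:cxt:db:var:typ:dom:sh} that $\mathrmbf{Db}$ is, by construction, the Grothendieck construction $\int_{\mathrmit{data}}$ of the indexed adjunction $\mathrmbf{Cls}^{\mathrm{op}}\xrightarrow{\,\hat{\mathrmbfit{db}}\,}\mathrmbf{Adj}$, so that the hypotheses of Fact~\ref{fact:groth:adj:lim:colim} can be verified one indexing context and one fiber at a time. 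In particular the projection $\mathrmbf{Db}\rightarrow\mathrmbf{Cls}$ in the statement is the canonical projection of that Grothendieck construction.

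Second, I would check the two hypotheses of the Fact. The indexing context is $\mathrmbf{Cls}^{\mathrm{op}}$; since $\mathrmbf{Cls}$ is complete and cocomplete (\cite{barwise:seligman:97}), so is its opposite $\mathrmbf{Cls}^{\mathrm{op}}$. Each fiber context is $\mathrmbf{Db}(\mathcal{A}) = \mathrmbf{Tbl}(\mathcal{A})^{\scriptscriptstyle{\Downarrow}}$, which is complete and cocomplete for every type domain $\mathcal{A}$ by Prop.~\ref{prop:db:A:lim:colim}. Because $\hat{\mathrmbfit{db}}$ takes values in $\mathrmbf{Adj}$, each reindexing comes equipped with both a left adjoint $\acute{\mathrmbfit{db}}_{{\langle{f,g}\rangle}}$ and a right adjoint $\grave{\mathrmbfit{db}}_{{\langle{f,g}\rangle}}$, so the fibered context behaves both as a fibration and as an opfibration over $\mathrmbf{Cls}$; this is precisely the ingredient Fact~\ref{fact:groth:adj:lim:colim} needs in order to assemble fiberwise (co)limits into (co)limits of the total context and to conclude that the projection preserves both. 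Applying Fact~\ref{fact:groth:adj:lim:colim} then yields simultaneously that $\mathrmbf{Db}$ is complete and cocomplete and that $\mathrmbf{Db}\rightarrow\mathrmbf{Cls}$ is continuous and cocontinuous.

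The only point requiring care is the bookkeeping of variance: the database context carries a contravariant table dependence ($\mathrmbf{R}^{\mathrm{op}}\xrightarrow{\mathrmbfit{T}}\mathrmbf{Tbl}(\mathcal{A})$, matching the $\scriptscriptstyle{\Downarrow}$ orientation of $\mathrmbf{DB}=\mathrmbf{Tbl}^{\scriptscriptstyle{\Downarrow}}$), so the indexing here is over $\mathrmbf{Cls}^{\mathrm{op}}$ rather than over $\mathrmbf{Cls}$ as in the $\mathring{\mathrmbf{Dom}}$ case. I would stress that this causes no genuine difficulty: $\mathrmbf{Cls}^{\mathrm{op}}$ is still bicomplete, and since the indexed structure is an adjunction (not merely an opfibration), the presence of both adjoints makes the argument symmetric under replacing $\mathrmbf{Cls}$ by $\mathrmbf{Cls}^{\mathrm{op}}$. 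Everything else is a routine instantiation of the general machinery, with the substantive work already discharged in Prop.~\ref{prop:db:A:lim:colim} and Prop.~\ref{prop:fib:cxt:db:var:typ:dom:sh}.
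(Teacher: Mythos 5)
Your proposal follows the same route as the paper: identify $\mathrmbf{Db}$ as the Grothendieck construction of the indexed adjunction (Prop.~\ref{prop:fib:cxt:db:var:typ:dom:sh}), then apply Fact~\ref{fact:groth:adj:lim:colim} using bicompleteness of the indexing context $\mathrmbf{Cls}$ and of each fiber $\mathrmbf{Db}(\mathcal{A})$ from Prop.~\ref{prop:db:A:lim:colim}. The paper's proof is exactly this two-line citation, so your argument (including the harmless variance remark about $\mathrmbf{Cls}^{\mathrm{op}}$) is correct and essentially identical.
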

\begin{proof}
By Fact\,\ref{fact:groth:adj:lim:colim}
of \S\,\ref{append:grothen:construct},
since
the indexing context $\mathrmbf{Cls}$ is complete and cocomplete
(\cite{barwise:seligman:97}),
and
the fiber context $\mathrmbf{Db}(\mathcal{A})$ is complete and cocomplete
for each type domain $\mathcal{A}$
by Prop.\,\ref{prop:db:A:lim:colim} above. 
\hfill\rule{5pt}{5pt}
\end{proof}
%

%


%
\begin{proposition}\label{prop:incl:db:var:typ:dom}
$\mathrmbf{Db}$ is a subcontext of the context of databases
$\mathrmbf{Db}{\;\subsetneq\;}\mathrmbf{DB}$.
\end{proposition}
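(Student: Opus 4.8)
The plan is to upgrade the inclusion $\mathrmbf{Db}\subseteq\mathrmbf{DB}$ announced in \S\ref{sub:sub:sec:rel:db:typ:dom:upper} to an honest subcontext embedding, and then exhibit a database in $\mathrmbf{DB}$ that lies outside its image. On objects, send ${\langle{\mathrmbf{R},\mathrmbfit{T},\mathcal{A}}\rangle}\in\mathrmbf{Db}$ to ${\langle{\mathrmbf{R},\mathrmbfit{T}{\,\circ\,}\mathrmbfit{inc}_{\mathcal{A}}}\rangle}\in\mathrmbf{DB}$, post-composing the $\mathcal{A}$-table diagram $\mathrmbf{R}^{\mathrm{op}}\!\xrightarrow{\mathrmbfit{T}}\mathrmbf{Tbl}(\mathcal{A})$ with the fiber inclusion passage $\mathrmbf{Tbl}(\mathcal{A})\xhookrightarrow{\mathrmbfit{inc}_{\mathcal{A}}}\mathrmbf{Tbl}$. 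On morphisms, send ${\langle{\mathrmbfit{R},\hat{\psi},f,g}\rangle}$ to ${\langle{\mathrmbfit{R},\xi}\rangle}$, carrying the shape passage $\mathrmbf{R}_{2}\xrightarrow{\mathrmbfit{R}}\mathrmbf{R}_{1}$ across unchanged and letting $\xi$ be the tabular interpretation bridge recovered from $\hat{\psi}$ and ${\langle{f,g}\rangle}$ by the adjoint factorization of Disp.\,\ref{eqn:db:mor:def}. By the very way $\mathrmbf{Db}$ was set up in \S\ref{sub:sub:sec:rel:db:typ:dom:upper}, this assignment is well-defined and does land in $\mathrmbf{DB}$: a $\mathrmbf{Db}$-morphism was \emph{specified} to be a $\mathrmbf{DB}$-morphism whose bridge factors in exactly this fashion.

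The verification then has two formal parts. First, functoriality: identities go to identities, since the fiber adjunction of $1_{\mathcal{A}}$ is the identity adjunction and the identity bridge restriction yields the identity tabular bridge. For composition, the Grothendieck-construction recipe for composing two $\mathrmbf{Db}$-morphisms --- compose the shape passages, compose the constant type domain morphisms, and splice the bridge restrictions $\acute{\psi}$/$\grave{\psi}$ through the units $\eta_{{\langle{f,g}\rangle}}$ and counits $\varepsilon_{{\langle{f,g}\rangle}}$ of the fiber adjunctions $\acute{\mathrmbfit{tbl}}_{{\langle{f,g}\rangle}}\dashv\grave{\mathrmbfit{tbl}}_{{\langle{f,g}\rangle}}$ --- must be checked to deliver exactly the composite interpretation bridge in $\mathrmbf{DB}$; this is bookkeeping driven by the \textbf{levo}/\textbf{dextro} adjointness identities displayed in \S\ref{sub:sub:sec:rel:db:typ:dom:upper} together with the triangle identities, and it is the diagram-level echo of the compatibility already established for tables in \cite{kent:fole:era:tbl}. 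Second, the functor is injective on objects --- the type domain $\mathcal{A}$ is recovered from ${\langle{\mathrmbf{R},\mathrmbfit{T}{\,\circ\,}\mathrmbfit{inc}_{\mathcal{A}}}\rangle}$ as the common type domain of its tables (via $\mathrmbfit{T}^{\mathrm{op}}{\circ\,}\mathrmbfit{dom}{\,\circ\,}\mathrmbfit{data}$ on nonempty $\mathrmbf{R}$), after which injectivity of $\mathrmbfit{inc}_{\mathcal{A}}$ recovers $\mathrmbfit{T}$ --- and faithful, since the constant morphism ${\langle{f,g}\rangle}$ is read off the signed-domain component common to all table morphisms that $\xi$ evaluates to, and then Disp.\,\ref{eqn:db:mor:def} pins down $\hat{\psi}$. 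Hence the assignment is a subcontext inclusion $\mathrmbf{Db}\subseteq\mathrmbf{DB}$.

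For strictness, take $\mathrmbf{R}$ to be the shape with one nonidentity arrow and $\mathrmbf{R}^{\mathrm{op}}\!\xrightarrow{\mathrmbfit{T}}\mathrmbf{Tbl}$ a diagram whose two tables have genuinely different type domains $\mathcal{A}_{0}\neq\mathcal{A}_{1}$ linked by a table morphism; since $\mathrmbfit{T}$ then factors through no single fiber $\mathrmbf{Tbl}(\mathcal{A})$, the object ${\langle{\mathrmbf{R},\mathrmbfit{T}}\rangle}$ belongs to $\mathrmbf{DB}$ but not to (the image of) $\mathrmbf{Db}$, giving $\mathrmbf{Db}\subsetneq\mathrmbf{DB}$. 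The one step with real content is the composition check; everything else is routine, and the only obstacle there is matching the Grothendieck transport of bridge restrictions along $\hat{\mathrmbfit{db}}$ against the naive vertical composite of tabular interpretation bridges --- a matching the \textbf{levo}/\textbf{dextro} table of \S\ref{sub:sub:sec:rel:db:typ:dom:upper} is tailored to make mechanical.
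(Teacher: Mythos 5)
Your construction is essentially the paper's own proof: the paper likewise defines the inclusion ${\mathring{\mathrmbfit{inc}}:\mathrmbf{Db}\hookrightarrow\mathrmbf{DB}}$ by post-composing the table diagram with the fiber inclusion $\mathrmbfit{inc}_{\mathcal{A}}$ on objects and by forming the composite bridge $\xi=\hat{\psi}\circ\hat{\chi}_{{\langle{f,g}\rangle}}$ (equivalently, the adjoint factorization of Disp.\,\ref{eqn:db:mor:def}) on morphisms. You go further than the paper by sketching functoriality and faithfulness and---usefully, since the statement asserts $\subsetneq$---by exhibiting a witness database whose tables lie in distinct type domains, whereas the paper's proof stops at defining the passage.
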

\begin{proof}
There is a (vertical) composition passage 
{\footnotesize$
\mathrmbf{Db}\xhookrightarrow{\mathring{\mathrmbfit{inc}}}\mathrmbf{DB}
$,\normalsize}
%
which maps a database
${\langle{\mathrmbf{R},\mathrmbfit{T},\mathcal{A}}\rangle}$
to the composite passage
$\mathring{\mathrmbfit{inc}}(\mathrmbf{R},\mathrmbfit{T},\mathcal{A})
= {\langle{\mathrmbf{R},\mathrmbfit{T}{\,\circ\,}\mathrmbfit{inc}_{\mathcal{A}}}\rangle}$
and maps a database morphism
${\langle{\mathrmbf{R}_{2},\mathrmbfit{T}_{2},\mathcal{A}_{2}}\rangle}
\xleftarrow{{\langle{\mathrmbfit{R},\hat{\psi},f,g}\rangle}}
{\langle{\mathrmbf{R}_{1},\mathrmbfit{T}_{1},\mathcal{A}_{1}}\rangle}$
to the composite bridge 
$\mathring{\mathrmbfit{inc}}(\mathrmbf{R}_{2},\mathrmbfit{T}_{2},\mathcal{A}_{2})
\xLeftarrow[{\langle{\mathrmbfit{R},\;\xi\;=\;\hat{\psi}{\;\circ\;}\hat{\chi}_{{\langle{f,g}\rangle}}}\rangle}]
{\;\,\mathring{\mathrmbfit{inc}}(\mathrmbfit{R},\hat{\psi},f,g)\;}
\mathring{\mathrmbfit{inc}}(\mathrmbf{R}_{1},\mathrmbfit{T}_{1},\mathcal{A}_{1})$.
%
\hfill\rule{5pt}{5pt}
\end{proof}
%

%
%


\paragraph{Projections.} 

Composition with table projection passages define database projection passages.
These projections offer an alternate representation,
defining the three primary components of 
databases and their morphisms in $\mathrmbf{Db}$: 
diagram shapes, schemed domains and key diagrams.
%
%
%
The database projections are described in 
Fig.\,\ref{fig:fole:db:cxt:typ:dom:var}
and are defined as follows.
\begin{itemize}
\item 
The schemed domain  projection 
$\mathring{\mathrmbfit{dom}} 
= {(\mbox{-})}^{\mathrm{op}} \circ \mathrmbfit{dom} 
: \mathrmbf{Db}^{\mathrm{op}} \rightarrow \mathring{\mathrmbf{Dom}}$
maps
a relational database 
$\mathcal{R} = {\langle{\mathrmbf{R},\mathrmbfit{T},\mathcal{A}}\rangle}$ 
to the schemed domain
$\mathring{\mathrmbfit{dom}}(\mathcal{R})
={\langle{\mathrmbf{R},\mathrmbfit{S},\mathcal{A}}\rangle}$
with schema
$\mathrmbf{R}
\xrightarrow[\mathrmbfit{T}^{\mathrm{op}}{\!\circ}\mathrmbfit{dom}_{\mathcal{A}}]
{\;\mathrmbfit{S}\;}\mathrmbf{Dom}(\mathcal{A})\cong\mathrmbf{List}(X)$,
and maps 
a relational database morphism 
${\langle{\mathrmbfit{R},\hat{\psi},f,g}\rangle}
: \mathcal{R}_{2}
\leftarrow
\mathcal{R}_{1}$
to the schema morphism 
{\footnotesize{$
\mathring{\mathrmbfit{dom}}(\mathrmbfit{R},\hat{\psi},f,g)
 = {\langle{\mathrmbfit{R},\hat{\varphi},f,g}\rangle}
:
\mathring{\mathrmbfit{dom}}(\mathcal{R}_{2})
\rightarrow
\mathring{\mathrmbfit{dom}}(\mathcal{R}_{1})$}}
with equivalent bridge pair
$\hat{\varphi} = {\langle{\acute{\varphi},\grave{\varphi}}\rangle}
={\langle{\acute{\psi}^{\mathrm{op}}\!{\circ\,}\mathrmbfit{sign}_{\mathcal{A}_{2}},
\grave{\psi}^{\mathrm{op}}\!{\circ\,}\mathrmbfit{sign}_{\mathcal{A}_{1}}}\rangle}$.
\newline
\item 
The key projection 
$\mathring{\mathrmbfit{key}} 
= {(\mbox{-})} \circ \mathrmbfit{key} 
: \mathrmbf{Db} \rightarrow \mathrmbf{SET}$ 
maps
a relational database $\mathcal{R}={\langle{\mathrmbf{R},\mathrmbfit{T},\mathcal{A}}\rangle}$ 
to the key set diagram
$\mathring{\mathrmbfit{key}}(\mathcal{R})
={\langle{\mathrmbf{R},\mathrmbfit{K}}\rangle}$ 
in
$\mathrmbf{SET}$
with the key passage
$\mathrmbf{R}^{\mathrm{op}}\!\xrightarrow[\mathrmbfit{T} \circ \mathrmbfit{key}_{\mathcal{A}}]{\mathrmbfit{K}\;}\mathrmbf{Set}$,
and maps 
a relational database morphism 
${\langle{\mathrmbfit{R},\hat{\psi},f,g}\rangle}
: \mathcal{R}_{2}
\leftarrow
\mathcal{R}_{1}$
to the $\mathrmbf{SET}$-morphism 
{\footnotesize{$
\mathring{\mathrmbfit{key}}(\mathrmbfit{R},\hat{\psi},f,g)
=
{\langle{\mathrmbfit{R},\kappa}\rangle} :
{\langle{\mathrmbf{R}_{2},\mathrmbfit{K}_{2}}\rangle}
\leftarrow
{\langle{\mathrmbf{R}_{1},\mathrmbfit{K}_{1}}\rangle}$}}
with key bridge
$\mathrmbfit{K}_{2} 
\xLeftarrow{\,\kappa\;} 
\mathrmbfit{R}^{\mathrm{op}}{\circ\;}\mathrmbfit{K}_{1}$.
%
\end{itemize}
\begin{figure}
\begin{center}
{{\begin{tabular}{c@{\hspace{30pt}}c}
{{\begin{tabular}{c}
\setlength{\unitlength}{0.5pt}
\begin{picture}(100,120)(-20,0)
\put(0,118){\makebox(0,0){\footnotesize{$\mathrmbf{Db}$}}}
\put(70,60){\makebox(0,0){\footnotesize{$\mathring{\mathrmbf{Dom}}^{\mathrm{op}}$}}}
\put(0,0){\makebox(0,0){\footnotesize{$\mathrmbf{SET}$}}}
\put(-40,60){\makebox(0,0)[r]{\scriptsize{$\mathring{\mathrmbfit{key}}$}}}
\put(36,97){\makebox(0,0)[l]{\scriptsize{$\mathring{\mathrmbfit{dom}}^{\mathrm{op}}$}}}
\put(36,24){\makebox(0,0)[l]{\scriptsize{$\mathring{\mathrmbfit{tup}}$}}}
\put(4,57){\makebox(0,0){{$\xRightarrow{\;\mathring{\tau}\,}$}}}
\put(15,105){\vector(1,-1){30}}
\put(45,45){\vector(-1,-1){30}}
\qbezier(-12,105)(-60,60)(-12,15)\put(-12,15){\vector(1,-1){0}}
\end{picture}
\end{tabular}}}
&
{{\begin{tabular}{c}
\setlength{\unitlength}{0.55pt}
\begin{picture}(320,160)(0,-35)
\put(20,80){\makebox(0,0){\footnotesize{$\mathrmbf{SET}$}}}
\put(20,0){\makebox(0,0){\footnotesize{$\mathrmbf{1}$}}}
\put(140,80){\makebox(0,0){\footnotesize{$\mathrmbf{Db}$}}}
\put(290,80){\makebox(0,0){\footnotesize{$\mathrmbf{Cls}^{\mathrm{op}}$}}}
\put(145,0){\makebox(0,0){\footnotesize{$\mathring{\mathrmbf{List}}^{\mathrm{op}}$}}}
\put(290,0){\makebox(0,0){\footnotesize{$\mathrmbf{Set}^{\mathrm{op}}$}}}
\put(218,40){\makebox(0,0){\footnotesize{$\mathring{\mathrmbf{Dom}}^{\mathrm{op}}$}}}
\put(80,92){\makebox(0,0){\scriptsize{$\mathring{\mathrmbfit{key}}$}}}
\put(182,66){\makebox(0,0)[l]{\scriptsize{$\mathring{\mathrmbfit{dom}}^{\mathrm{op}}$}}}
\put(215,92){\makebox(0,0){\scriptsize{$\mathring{\mathrmbfit{data}}^{\mathrm{op}}$}}}
\put(134,43){\makebox(0,0)[r]{\scriptsize{$\mathring{\mathrmbfit{sch}}^{\mathrm{op}}$}}}
\put(220,-12){\makebox(0,0){\scriptsize{$\mathrmbfit{sort}^{\mathrm{op}}$}}}
\put(288,40){\makebox(0,0)[l]{\scriptsize{$\mathrmbfit{sort}^{\mathrm{op}}$}}}
\put(105,80){\vector(-1,0){60}}
\put(95,0){\vector(-1,0){60}}
\put(175,80){\vector(1,0){70}}
\put(180,0){\vector(1,0){75}}
\put(20,65){\vector(0,-1){50}}
\put(140,65){\vector(0,-1){50}}
\put(280,65){\vector(0,-1){50}}
\put(155,70){\vector(2,-1){36}}
\put(190,30){\vector(-2,-1){36}}
\put(230,50){\vector(2,1){36}}
\qbezier(250,20)(255,20)(260,20)
\qbezier(250,20)(250,15)(250,10)
\end{picture}
\end{tabular}}}
\\
\multicolumn{2}{c}{
{\footnotesize{$\begin{array}{r@{\hspace{16pt}}r@{\hspace{5pt}=\hspace{5pt}}l@{\hspace{5pt}:\hspace{5pt}}l}
\text{schemed domain}
&
\mathring{\mathrmbfit{dom}} 
& {(\mbox{-})}^{\mathrm{op}} \circ \mathrmbfit{dom} 
& \mathrmbf{Db}^{\mathrm{op}} \rightarrow \mathring{\mathrmbf{Dom}}
\\
\text{schema}
&
\mathring{\mathrmbfit{sch}} 
& {(\mbox{-})}^{\mathrm{op}} \circ \mathrmbfit{sign} 
& \mathrmbf{Db}^{\mathrm{op}} \rightarrow \mathring{\mathrmbf{List}}
\\
\text{key}
&
\mathring{\mathrmbfit{key}} 
& {(\mbox{-})} \circ \mathrmbfit{key} 
& \mathrmbf{Db} \rightarrow \mathrmbf{SET}
\end{array}$}}}
\end{tabular}}}
\end{center}
\caption{Database Mathematical Context: $\mathrmbf{Db}$}
\label{fig:fole:db:cxt:typ:dom:var}
\end{figure}

\newpage


%
\begin{proposition}\label{prop:db:fixed:proj}
Using projections, 
$\mathrmbf{Db}$ can be described as follows.
\begin{itemize}
\item 
A database 
(with constant type domain) 
$\mathcal{R} =
{\langle{\mathrmbf{R},\mathrmbfit{S},\mathcal{A},\mathrmbfit{K},\tau}\rangle}$
consists of
an object
$\mathring{\mathrmbfit{key}}(\mathcal{R}) = {\langle{\mathrmbf{R},\mathrmbfit{K}}\rangle}$
in $\mathrmbf{SET}$
\footnote{A shape context $\mathrmbf{R}$
and a key diagram
$\mathrmbf{R}^{\mathrm{op}}\!\xrightarrow[\mathrmbfit{T}{\circ}\mathrmbfit{key}_{\mathcal{A}}]{\mathrmbfit{K}\;}\mathrmbf{Set}$.}, 
an object 
$\mathring{\mathrmbfit{dom}}(\mathcal{R}) = {\langle{\mathrmbf{R},\mathrmbfit{S},\mathcal{A}}\rangle}$
in $\mathring{\mathrmbf{Dom}}$
%
\footnote{A shape context $\mathrmbf{R}$
and a schema
$\mathrmbf{R}
\xrightarrow[\mathrmbfit{T}^{\mathrm{op}}{\!\circ}\mathrmbfit{dom}_{\mathcal{A}}]
{\;\mathrmbfit{S}\;}\mathrmbf{Dom}(\mathcal{A})\cong\mathrmbf{List}(X)$.}
%
and a tuple bridge
$\mathrmbfit{K}\xRightarrow[\mathrmbfit{T}{\circ}\tau_{\mathcal{A}}]{\;\tau\;}\mathrmbfit{S}^{\mathrm{op}}\!{\,\circ\,}\mathrmbfit{tup}_{\mathcal{A}}
=\mathring{\mathrmbfit{tup}}_{\mathcal{A}}(\mathrmbfit{S})$.
Hence, a database in $\mathrmbf{Db}$
is
a $\mathrmbf{SET}$-morphism 
\newline\mbox{}\hfill
{\footnotesize{$\mathring{\mathrmbfit{key}}(\mathcal{R})=
{\langle{\mathrmbf{R},\mathrmbfit{K}}\rangle}
\xrightarrow
{\langle{\mathrmbfit{1},\tau}\rangle}
{\langle{\mathrmbf{R},\mathring{\mathrmbfit{tup}}_{\mathcal{A}}(\mathrmbfit{S})}\rangle}
=\mathring{\mathrmbfit{tup}}(\mathrmbf{R},\mathrmbfit{S},\mathcal{A})
=\mathring{\mathrmbfit{tup}}(\mathring{\mathrmbfit{dom}}(\mathcal{R}))$.}}
\hfill\mbox{}\newline
\newline
\item 
A database morphism (with constant type domain morphism) 
\newline\mbox{}\hfill
{\footnotesize{$\mathcal{R}_{2}=
{\langle{\mathrmbf{R}_{2},\mathrmbfit{S}_{2},\mathcal{A}_{2},\mathrmbfit{K}_{2},\tau_{2}}\rangle}
\xleftarrow{{\langle{\mathrmbfit{R},\hat{\varphi},f,g,\kappa}\rangle}}
{\langle{\mathrmbf{R}_{1},\mathrmbfit{S}_{1},\mathcal{A}_{1},\mathrmbfit{K}_{1},\tau_{1}}\rangle}
=\mathcal{R}_{1}$}}
\hfill\mbox{}\newline
consists of 
a morphism
{\footnotesize{$\mathring{\mathrmbfit{key}}(\mathcal{R}_{2})=
{\langle{\mathrmbf{R}_{2},\mathrmbfit{K}_{2}}\rangle}
\xleftarrow{{\langle{\mathrmbfit{R},\kappa}\rangle}}
{\langle{\mathrmbf{R}_{1},\mathrmbfit{K}_{1}}\rangle}
=\mathring{\mathrmbfit{key}}(\mathcal{R}_{1})$}}
in $\mathrmbf{SET}$
with bridge 
$\mathrmbfit{K}_{2} 
\xLeftarrow{\,\kappa\;} 
\mathrmbfit{R}^{\mathrm{op}}{\circ\;}\mathrmbfit{K}_{1}
$,
and
a morphism 
{\footnotesize{$\mathring{\mathrmbfit{dom}}(\mathcal{R}_{2})=
{\langle{\mathrmbf{R}_{2},\mathrmbfit{S}_{2},\mathcal{A}_{2}}\rangle}
\xrightarrow{\;{\langle{\mathrmbfit{R},\hat{\varphi},f,g}\rangle}\;}
{\langle{\mathrmbf{R}_{1},\mathrmbfit{S}_{1},\mathcal{A}_{1}}\rangle}
=\mathring{\mathrmbfit{dom}}(\mathcal{R}_{1})$}}
in 
$\mathring{\mathrmbf{Dom}}$
with equivalent bridge pair
$\hat{\varphi} = {\langle{\acute{\varphi},\grave{\varphi}}\rangle}
={\langle{\acute{\psi}^{\mathrm{op}}\!{\circ\,}\mathrmbfit{sign}_{\mathcal{A}_{2}},
\grave{\psi}^{\mathrm{op}}\!{\circ\,}\mathrmbfit{sign}_{\mathcal{A}_{1}}}\rangle}$,
%
which satisfy the condition
%
\comment{It is strict or trim when the underlying schema morphism is strict or trim ($\acute{\varphi} = 1$).}
%
\begin{equation}\label{expo:db:mor:typ:dom:var:shape}
{{\begin{picture}(120,10)(0,-4)
\put(50,0){\makebox(0,0){\footnotesize{$
{\langle{\mathrmbfit{R},\kappa}\rangle}{\;\circ\;}{\langle{\mathrmbfit{1},\tau_{2}}\rangle}=
{\langle{\mathrmbfit{1},\tau_{1}}\rangle}{\;\circ\;}\mathring{\mathrmbfit{tup}}(\mathrmbfit{R},\hat{\varphi},f,g)
$}}}
\end{picture}}}
\end{equation}
%
%
The condition (Expo.~\ref{expo:db:mor:typ:dom:var:shape}) resolves into the adjoint conditions
\begin{center}
\begin{tabular}{@{\hspace{-25pt}}c@{\hspace{-10pt}}c@{\hspace{20pt}}c}
{{\scriptsize{\setlength{\extrarowheight}{2pt}
$\begin{array}[t]{r@{\hspace{5pt}=\hspace{5pt}}l}
\kappa
{\;\bullet\;}
\tau_{2}
&
(\mathrmbfit{R}^{\mathrm{op}}{\circ\;}\tau_{1})
{\;\bullet\;}
(\mathrmbfit{R}^{\mathrm{op}}{\circ\;}\mathrmbfit{S}_{1}^{\mathrm{op}}{\!\circ\;}\acute{\tau}_{{\langle{f,g}\rangle}})
{\;\bullet\;}
(\acute{\varphi}^{\mathrm{op}}{\!\circ\;}\mathrmbfit{tup}_{\mathcal{A}_{2}})
\\
& 
(\mathrmbfit{R}^{\mathrm{op}}{\circ\;}\tau_{1})
{\;\bullet\;}
\underset{\mathring{\mathrmbfit{tup}}_{{\langle{f,g}\rangle}}(\hat{\varphi})}
{\underbrace{(\hat{\varphi}^{\mathrm{op}}{\;\circ\;}\hat{\tau}_{{\langle{f,g}\rangle}})}}
\end{array}$}}}
&&
{{\scriptsize{\setlength{\extrarowheight}{2pt}
$\begin{array}[t]{r@{\hspace{5pt}=\hspace{5pt}}l}
\kappa
{\;\bullet\;}
\tau_{2}
& 
(\mathrmbfit{R}^{\mathrm{op}}{\circ\;}\tau_{1})
{\;\bullet\;}
(\grave{\varphi}^{\mathrm{op}}{\!\circ\;}
\mathrmbfit{tup}_{\mathcal{A}_{1}})
{\;\bullet\;}
(\mathrmbfit{S}_{2}^{\mathrm{op}}{\!\circ\;}\grave{\tau}_{{\langle{f,g}\rangle}})
\\
& 
(\mathrmbfit{R}^{\mathrm{op}}{\circ\;}\tau_{1})
{\;\bullet\;}
\underset{\mathring{\mathrmbfit{tup}}_{{\langle{f,g}\rangle}}(\hat{\varphi})}
{\underbrace{(\hat{\varphi}^{\mathrm{op}}{\!\circ\;}\hat{\tau}_{{\langle{f,g}\rangle}})}}
\end{array}$}}}
\\&&\\
\textbf{levo}
&&
\textbf{dextro}
\\
\end{tabular}
\end{center}
%
\end{itemize}
\end{proposition}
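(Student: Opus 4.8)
The plan is to assemble three facts already in place: the Grothendieck-construction description $\mathrmbf{Db}=\int\hat{\mathrmbfit{db}}$ from Prop.~\ref{prop:fib:cxt:db:var:typ:dom:sh}, the projective description of each fiber $\mathrmbf{Db}(\mathcal{A})$ from Prop.~\ref{prop:db:proj:A}, and the compatibility of the table inclusion bridge pair $\hat{\chi}_{{\langle{f,g}\rangle}}$ with the signed-domain inclusion bridge pair $\hat{\iota}_{{\langle{f,g}\rangle}}$ recorded in the footnote to Disp.~\ref{eqn:db:mor:def}. For the object part: an object of $\mathrmbf{Db}$ is a type domain $\mathcal{A}$ together with an object ${\langle{\mathrmbf{R},\mathrmbfit{T}}\rangle}$ of the fiber $\mathrmbf{Db}(\mathcal{A})$, and Prop.~\ref{prop:db:proj:A} rewrites the latter as the data ${\langle{\mathrmbf{R},\mathrmbfit{S},\mathrmbfit{K},\tau}\rangle}$ carrying the tuple bridge $\mathrmbfit{K}\xRightarrow{\,\tau\,}\mathrmbfit{S}^{\mathrm{op}}{\circ\,}\mathrmbfit{tup}_{\mathcal{A}}$, equivalently the $\mathrmbf{SET}$-morphism ${\langle{\mathrmbf{R},\mathrmbfit{K}}\rangle}\to{\langle{\mathrmbf{R},\mathrmbfit{S}^{\mathrm{op}}{\circ\,}\mathrmbfit{tup}_{\mathcal{A}}}\rangle}$. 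Adjoining $\mathcal{A}$ yields $\mathcal{R}={\langle{\mathrmbf{R},\mathrmbfit{S},\mathcal{A},\mathrmbfit{K},\tau}\rangle}$, and the identifications $\mathring{\mathrmbfit{dom}}(\mathcal{R})={\langle{\mathrmbf{R},\mathrmbfit{S},\mathcal{A}}\rangle}\in\mathring{\mathrmbf{Dom}}$ and $\mathring{\mathrmbfit{key}}(\mathcal{R})={\langle{\mathrmbf{R},\mathrmbfit{K}}\rangle}\in\mathrmbf{SET}$ are read off directly from the definitions of the projection passages stated just above the proposition.

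For the morphism part I would start with a $\mathrmbf{Db}$-morphism ${\langle{\mathrmbfit{R},\hat{\psi},f,g}\rangle}$, which by Prop.~\ref{prop:fib:cxt:db:var:typ:dom:sh} consists of a type domain morphism ${\langle{f,g}\rangle}:\mathcal{A}_{2}\rightleftarrows\mathcal{A}_{1}$ and an equivalent bridge pair $\hat{\psi}=(\acute{\psi},\grave{\psi})$ whose underlying tabular interpretation bridge $\xi$ factors adjointly as in Disp.~\ref{eqn:db:mor:def}. Postcomposing with the table projection passages $\mathrmbfit{key}_{\mathcal{A}}$ and $\mathrmbfit{dom}_{\mathcal{A}}$ and invoking the footnote identities $\acute{\chi}_{{\langle{f,g}\rangle}}{\circ\,}\mathrmbfit{dom}^{\mathrm{op}}=\mathrmbfit{dom}_{\mathcal{A}_{1}}^{\mathrm{op}}{\circ\,}{\grave{\iota}_{{\langle{f,g}\rangle}}}^{\mathrm{op}}$ and $\grave{\chi}_{{\langle{f,g}\rangle}}{\circ\,}\mathrmbfit{dom}^{\mathrm{op}}=\mathrmbfit{dom}_{\mathcal{A}_{2}}^{\mathrm{op}}{\circ\,}{\acute{\iota}_{{\langle{f,g}\rangle}}}^{\mathrm{op}}$, the factorization of $\xi$ projects onto the factorization of the schemed-domain bridge $\varsigma=\xi^{\mathrm{op}}{\circ\,}\mathrmbfit{dom}$ displayed in Disp.~\ref{eqn:sch:dom:mor:def}; this identifies the induced $\mathring{\mathrmbf{Dom}}$-morphism $\mathring{\mathrmbfit{dom}}(\mathcal{R}_{2})\to\mathring{\mathrmbfit{dom}}(\mathcal{R}_{1})$ together with its equivalent bridge pair $\hat{\varphi}={\langle{\acute{\psi}^{\mathrm{op}}{\circ\,}\mathrmbfit{sign}_{\mathcal{A}_{2}},\grave{\psi}^{\mathrm{op}}{\circ\,}\mathrmbfit{sign}_{\mathcal{A}_{1}}}\rangle}$, while $\kappa=\xi{\circ\,}\mathrmbfit{key}$ furnishes the $\mathrmbf{SET}$-morphism on key diagrams.

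Next I would derive the shape condition Expo.~\ref{expo:db:mor:typ:dom:var:shape}. The defining commutative square of an $\mathcal{A}$-table morphism (footnote to Prop.~\ref{prop:db:proj:A}), assembled pointwise over $\mathrmbf{R}$ and combined with the compatibility of $\hat{\psi}$ with $\xi$, shows that the identity $\kappa{\,\bullet\,}\tau_{2}=(\mathrmbfit{R}^{\mathrm{op}}{\circ\,}\tau_{1}){\,\bullet\,}(\hat{\varphi}^{\mathrm{op}}{\circ\,}\hat{\tau}_{{\langle{f,g}\rangle}})$ is exactly the statement that ${\langle{\mathrmbfit{R},\kappa}\rangle}{\circ\,}{\langle{\mathrmbfit{1},\tau_{2}}\rangle}={\langle{\mathrmbfit{1},\tau_{1}}\rangle}{\circ\,}\mathring{\mathrmbfit{tup}}(\mathrmbfit{R},\hat{\varphi},f,g)$ in $\mathrmbf{SET}$. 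To split this single equation into the \textbf{levo} and \textbf{dextro} forms I would substitute the levo/dextro relation between $\acute{\varphi}$ and $\grave{\varphi}$ (Fig.~\ref{fig:sch:dom:mor:typ:dom}), insert the unit and counit of the schemed-domain fiber adjunction $\grave{\mathrmbfit{dom}}_{{\langle{f,g}\rangle}}\dashv\acute{\mathrmbfit{dom}}_{{\langle{f,g}\rangle}}$, and expand the tuple bridge adjointness $\hat{\tau}_{{\langle{f,g}\rangle}}=(\acute{\tau}_{{\langle{f,g}\rangle}},\grave{\tau}_{{\langle{f,g}\rangle}})$ accordingly; the triangle identities then show that the two resulting composites $\mathring{\mathrmbfit{tup}}_{{\langle{f,g}\rangle}}(\hat{\varphi})$ agree, so the levo and dextro conditions are equivalent.

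The step I expect to be the main obstacle is the coherent bookkeeping of the opposite-category decorations: tracking how the contravariant tuple passage turns the tabular interpretation bridge $\xi$ into the covariant schemed-domain bridge $\varsigma$, how the levo/dextro inclusion bridge adjointness on tables projects onto that on signed domains, and verifying that the two resolutions of Expo.~\ref{expo:db:mor:typ:dom:var:shape} are genuinely interchangeable. No new categorical input is needed beyond the fiber adjunctions of Prop.~\ref{prop:fib:cxt:db:var:typ:dom:sh} and the table-level facts imported from \cite{kent:fole:era:tbl}, but the chain of bridge identities is long and each triangle-identity application must be placed correctly.
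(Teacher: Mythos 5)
Your plan is correct and follows the same route the paper intends: the paper in fact states Prop.~\ref{prop:db:fixed:proj} without any proof, treating it as a direct unpacking of the Grothendieck construction of Prop.~\ref{prop:fib:cxt:db:var:typ:dom:sh}, the fiberwise description of Prop.~\ref{prop:db:proj:A}, and the projection passages and bridge identities recorded just above it and in Tbl.~\ref{adjoints:composites}. Your identification of $\hat{\varphi}={\langle{\acute{\psi}^{\mathrm{op}}{\circ\,}\mathrmbfit{sign}_{\mathcal{A}_{2}},\grave{\psi}^{\mathrm{op}}{\circ\,}\mathrmbfit{sign}_{\mathcal{A}_{1}}}\rangle}$, of $\kappa=\xi{\,\circ\,}\mathrmbfit{key}$, and of the levo/dextro resolution of Expo.~\ref{expo:db:mor:typ:dom:var:shape} via $\hat{\tau}_{{\langle{f,g}\rangle}}$ matches the paper's own bookkeeping exactly.
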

\begin{proposition}
The schemed domain  projection 
$\mathring{\mathrmbfit{dom}} 
: \mathrmbf{DB}^{\mathrm{op}} \rightarrow \mathrmbf{DOM}$
is continuous and cocontinuous.
\end{proposition}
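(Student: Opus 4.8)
The plan is to recognize $\mathring{\mathrmbfit{dom}}$ as (the opposite of) a comma-context projection whose other leg is an \emph{identity} passage, and then invoke the standard creation-of-(co)limits property of such projections; only one of the two halves will require genuine input. By the projective description of Prop.~\ref{prop:db:proj}, a database is exactly a tuple bridge $\mathrmbfit{K}\xRightarrow{\tau}\mathrmbfit{Q}^{\mathrm{op}}{\circ\,}\mathrmbfit{tup}$ equipped with its shape, which presents $\mathrmbf{DB}$ as the comma context $\bigl(\mathrmbf{SET}{\,\downarrow\,}\mathring{\mathrmbfit{tup}}\bigr)$ for the tuple passage $\mathrmbf{DOM}^{\mathrm{op}}\xrightarrow{\mathring{\mathrmbfit{tup}}}\mathrmbf{SET}$, with $\mathring{\mathrmbfit{key}}$ the left leg and $\mathring{\mathrmbfit{dom}}^{\mathrm{op}}\colon\mathrmbf{DB}\to\mathrmbf{DOM}^{\mathrm{op}}$ the right leg; the compatibility condition in Disp.~\ref{expo:db:mor:var:shape} is precisely the comma-square condition. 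Since the left leg is carried by $\mathrmbf{SET}\xrightarrow{\mathrmbfit{1}}\mathrmbf{SET}$, which preserves all colimits, the projection $\mathring{\mathrmbfit{dom}}^{\mathrm{op}}\colon\mathrmbf{DB}\to\mathrmbf{DOM}^{\mathrm{op}}$ creates \emph{all} colimits unconditionally, and creates \emph{all} limits that $\mathring{\mathrmbfit{tup}}$ preserves. Dualizing $\mathrmbf{DB}\leftrightarrow\mathrmbf{DB}^{\mathrm{op}}$ and $\mathrmbf{DOM}^{\mathrm{op}}\leftrightarrow\mathrmbf{DOM}$: the free half (preservation of colimits by $\mathring{\mathrmbfit{dom}}^{\mathrm{op}}$) yields continuity of $\mathring{\mathrmbfit{dom}}\colon\mathrmbf{DB}^{\mathrm{op}}\to\mathrmbf{DOM}$, while cocontinuity of $\mathring{\mathrmbfit{dom}}$ will follow from preservation of limits by $\mathring{\mathrmbfit{dom}}^{\mathrm{op}}$, i.e.\ once $\mathring{\mathrmbfit{tup}}$ is shown continuous.

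So the one substantive point is: the tuple passage $\mathrmbf{DOM}^{\mathrm{op}}\xrightarrow{\mathring{\mathrmbfit{tup}}}\mathrmbf{SET}$ is continuous, i.e.\ it carries colimits of schemed domains to limits of key-set diagrams. Here I would use that $\mathring{\mathrmbfit{tup}}$ is postcomposition with the base tuple passage $\mathrmbf{Dom}^{\mathrm{op}}\xrightarrow{\mathrmbfit{tup}}\mathrmbf{Set}$ and acts as the identity on shape contexts. By the (co)limit descriptions for the diagram contexts $\mathrmbf{DOM}=\mathrmbf{Dom}^{\scriptscriptstyle{\Uparrow}}$ and $\mathrmbf{SET}=\mathrmbf{Set}^{\scriptscriptstyle{\Downarrow}}$ (Prop.~\ref{prop:lim:colim:A} of \S\ref{append:kan:ext}, as already invoked in Props.~\ref{prop:lim:colim:dom} and \ref{prop:lim:colim:db:tbl}), a (co)limit computes its shape component in $\mathrmbf{Cxt}$ and its passage component by a Kan extension over that shape, hence pointwise in the base; postcomposition commutes with the shape component automatically, and with the pointwise base component exactly when $\mathrmbfit{tup}\colon\mathrmbf{Dom}^{\mathrm{op}}\to\mathrmbf{Set}$ is continuous — which is part of the table machinery of \cite{kent:fole:era:tbl} (it is what underlies completeness of $\mathrmbf{Tbl}$). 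Feeding this back gives continuity of $\mathring{\mathrmbfit{dom}}^{\mathrm{op}}$, hence cocontinuity of $\mathring{\mathrmbfit{dom}}$, completing the argument.

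I expect the main obstacle to be the bookkeeping of the several dualizations in play — the oplax comma direction in $\mathrmbf{DB}=\mathrmbf{Tbl}^{\scriptscriptstyle{\Downarrow}}$ against the lax one in $\mathrmbf{DOM}=\mathrmbf{Dom}^{\scriptscriptstyle{\Uparrow}}$, the explicit $(\mbox{-})^{\mathrm{op}}$ in $\mathring{\mathrmbfit{dom}}=(\mbox{-})^{\mathrm{op}}{\circ\,}\mathrmbfit{dom}$, and the contravariance of $\mathrmbfit{tup}$ — arranged so that ``continuous for $\mathrmbf{DB}^{\mathrm{op}}\to\mathrmbf{DOM}$'' is the half supplied for free and ``cocontinuous'' is the half that consumes continuity of the tuple passage. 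A clean alternative that sidesteps the comma identification is to argue directly that $\mathring{\mathrmbfit{dom}}$ is postcomposition with the table projection $\mathrmbfit{dom}\colon\mathrmbf{Tbl}\to\mathrmbf{Dom}^{\mathrm{op}}$, which is continuous and cocontinuous by \cite{kent:fole:era:tbl}; since it is the identity on shape contexts and the shape projections of $\mathrmbf{DB}$ and $\mathrmbf{DOM}$ are continuous and cocontinuous (Props.~\ref{prop:lim:colim:db:tbl}, \ref{prop:lim:colim:dom}), the Kan-extension (co)limit formulas of Prop.~\ref{prop:lim:colim:A} show $\mathring{\mathrmbfit{dom}}$ preserves them.
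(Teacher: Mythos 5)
Your ``clean alternative'' at the end is in fact the paper's own route: the paper attaches no proof to this proposition where it is stated, but its intended justification lives in the appendix --- $\mathring{\mathrmbfit{dom}}$ is the change-of-base passage $\mathrmbfit{dom}^{\!\scriptscriptstyle{\Uparrow}}$ induced by postcomposition with the base table projection $\mathrmbf{Tbl}^{\mathrm{op}}\!\xrightarrow{\mathrmbfit{dom}}\mathrmbf{Dom}$ (Tbl.~\ref{defs:proj:pass}), that base passage is (co)continuous by \cite{kent:fole:era:tbl}, and Prop.~\ref{prop:passage:lim:colim} transfers this to the diagram contexts, whose (co)limits are computed shape-wise in $\mathrmbf{Cxt}$ and fiber-wise by Kan extension (Prop.~\ref{prop:lim:colim:A}). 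On that route you and the paper agree, and both consume the same external input. One bookkeeping point worth making explicit either way: the appendix only asserts \emph{co}continuity of $\mathrmbf{Tbl}^{\mathrm{op}}\!\xrightarrow{\mathrmbfit{dom}}\mathrmbf{Dom}$; the other half comes from its being a bifibration projection with adjoint fiber passages (Fact~\ref{fact:groth:adj:lim:colim}), and the proposition needs both halves.

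Your primary route, by contrast, stumbles at its first step. Identifying $\mathrmbf{DB}$ with the comma context $\bigl(\mathrmbf{SET}\downarrow\mathring{\mathrmbfit{tup}}\bigr)$ is not quite right: an object of that comma context is a $\mathrmbf{SET}$-morphism $\langle\mathrmbf{R},\mathrmbfit{K}\rangle\rightarrow\mathring{\mathrmbfit{tup}}(\mathrmbf{R}',\mathrmbfit{Q})$ whose two shape contexts may differ and whose underlying shape passage is arbitrary, whereas a database in the sense of Prop.~\ref{prop:db:proj} is the special case $\langle\mathrmbfit{1}_{\mathrmbf{R}},\tau\rangle$ with one shape and identity shape passage. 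So $\mathrmbf{DB}$ is only a (non-full) subcontext of the comma context --- the paper itself writes $\mathrmbf{Db}(\mathcal{A})\subseteq\bigl(\mathrmbf{SET}\Downarrow\mathring{\mathrmbfit{tup}}_{\mathcal{A}}\bigr)$ in the fixed-type-domain case --- and the standard facts about comma projections creating (co)limits do not automatically restrict to such a subcontext: you would still have to check that the comma-context (co)limit of a diagram of databases again has identity shape component in its tuple leg. That check is doable, but it is exactly the content you were hoping to get for free, so I would promote the change-of-base argument to the main proof and drop the comma identification.
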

%

%
\begin{figure}
\begin{center}
\begin{tabular}{@{\hspace{-10pt}}c@{\hspace{45pt}}c}
\begin{tabular}{c}
\setlength{\unitlength}{0.58pt}
\begin{picture}(240,240)(90,-20)
\put(125,190){\makebox(0,0){\scriptsize{$\mathrmbfit{R}^{\mathrm{op}}$}}}
\put(120,120){\makebox(0,0){\scriptsize{$\acute{\mathrmbfit{tbl}}_{{\langle{f,g}\rangle}}$}}}
\put(120,60){\makebox(0,0){\scriptsize{${(f^{\ast})^{\mathrm{op}}}$}}}
\put(120,0){\makebox(0,0){\scriptsize{$\mathrmbfit{id}$}}}
\put(30,180){\vector(1,0){180}}
\put(140,60){\vector(-1,0){40}}
\put(25,0){\vector(1,0){190}}\put(25,0){\vector(-1,0){0}}
\put(120,146){\makebox(0,0){\shortstack{\scriptsize{$\;\acute{\varphi}^{\mathrm{op}}$}\\\large{$\Longleftarrow$}}}}
\put(120,92){\makebox(0,0){\large{$\overset{\rule[-2pt]{0pt}{5pt}\kappa}{\Longleftarrow}$}}}
\put(120,35){\makebox(0,0){\large{$\overset{\acute{\tau}_{{\langle{f,g}\rangle}}}{\Longleftarrow}$}}}
\qbezier[60](42,140)(110,140)(180,140)
\qbezier[150](-75,85)(120,85)(300,85)
\put(-8,0){\begin{picture}(0,0)(0,0)
\put(8,180){\makebox(0,0){\footnotesize{$\mathrmbf{R}_{2}^{\mathrm{op}}$}}}
\put(0,118){\makebox(0,0){\footnotesize{$\mathrmbf{Tbl}(\mathcal{A}_{2})$}}}
\put(72,60){\makebox(0,0){\footnotesize{$\mathrmbf{List}(X_{2})^{\mathrm{op}}$}}}
\put(0,0){\makebox(0,0){\footnotesize{$\mathrmbf{Set}$}}}
\put(-6,148){\makebox(0,0)[r]{\scriptsize{$\mathrmbfit{T}_{2}$}}}
\put(-75,95){\makebox(0,0)[r]{\scriptsize{$\mathrmbfit{K}_{2}$}}}
\put(55,145){\makebox(0,0)[l]{\scriptsize{$\mathrmbfit{S}_{2}^{\mathrm{op}}$}}}
\put(-37,72){\makebox(0,0)[r]{\scriptsize{$\mathrmbfit{key}_{\mathcal{A}_{2}}$}}}
\put(38,93){\makebox(0,0)[l]{\scriptsize{$\mathrmbfit{sign}_{\mathcal{A}_{2}}^{\mathrm{op}}$}}}
\put(36,26){\makebox(0,0)[l]{\scriptsize{$\mathrmbfit{tup}_{\mathcal{A}_{2}}$}}}
\put(0,60){\makebox(0,0){\shortstack{\scriptsize{$\;\tau_{\mathcal{A}_{2}}$}\\\large{$\Longrightarrow$}}}}
\put(0,165){\vector(0,-1){34}}
\put(15,105){\vector(1,-1){30}}
\put(45,45){\vector(-1,-1){30}}
\qbezier(-18,167)(-120,90)(-20,13)\put(-20,13){\vector(1,-1){0}}
\qbezier(-12,105)(-60,60)(-12,15)\put(-12,15){\vector(1,-1){0}}
\qbezier(18,167)(70,140)(66,76)\put(66,76){\vector(0,-1){0}}
\end{picture}}
\put(233,0){\begin{picture}(0,0)(0,0)
\put(8,180){\makebox(0,0){\footnotesize{$\mathrmbf{R}_{1}^{\mathrm{op}}$}}}
\put(0,118){\makebox(0,0){\footnotesize{$\mathrmbf{Tbl}(\mathcal{A}_{1})$}}}
\put(-48,60){\makebox(0,0){\footnotesize{$\mathrmbf{List}(X_{1})^{\mathrm{op}}$}}}
\put(0,0){\makebox(0,0){\footnotesize{$\mathrmbf{Set}$}}}
\put(6,148){\makebox(0,0)[l]{\scriptsize{$\mathrmbfit{T}_{1}$}}}
\put(75,95){\makebox(0,0)[l]{\scriptsize{$\mathrmbfit{K}_{1}$}}}
\put(-50,145){\makebox(0,0)[r]{\scriptsize{$\mathrmbfit{S}_{1}^{\mathrm{op}}$}}}
\put(37,72){\makebox(0,0)[l]{\scriptsize{$\mathrmbfit{key}_{\mathcal{A}_{1}}$}}}
\put(-32,93){\makebox(0,0)[r]{\scriptsize{$\mathrmbfit{sign}_{\mathcal{A}_{1}}^{\mathrm{op}}$}}}
\put(-36,26){\makebox(0,0)[r]{\scriptsize{$\mathrmbfit{tup}_{\mathcal{A}_{1}}$}}}
\put(0,60){\makebox(0,0){\shortstack{\scriptsize{$\;\tau_{\mathcal{A}_{1}}$}\\\large{$\Longleftarrow$}}}}
\put(0,165){\vector(0,-1){34}}
\put(-15,105){\vector(-1,-1){30}}
\put(-45,45){\vector(1,-1){30}}
\qbezier(18,167)(120,90)(20,13)\put(20,13){\vector(-1,-1){0}}
\qbezier(12,105)(60,60)(12,15)\put(12,15){\vector(-1,-1){0}}
\qbezier(-18,167)(-70,140)(-66,76)\put(-66,76){\vector(0,-1){0}}
\end{picture}}
\end{picture}
\end{tabular}
&
{{\begin{tabular}{c}
\setlength{\unitlength}{0.78pt}
\begin{picture}(120,80)(0,0)
\put(0,80){\makebox(0,0){\scriptsize{$\mathrmbfit{K}_{2}$}}}
\put(130,80){\makebox(0,0){\scriptsize{$
\mathrmbfit{R}^{\mathrm{op}}{\circ\;}\mathrmbfit{K}_{1}$}}}
\put(-5,0){\makebox(0,0){\scriptsize{$\overset{\textstyle\underbrace{
\mathring{\mathrmbfit{tup}}(\mathrmbf{R}_{2},\mathrmbfit{S}_{2},\mathcal{A}_{2})}}{
\mathrmbfit{S}_{2}^{\mathrm{op}}\circ\mathrmbfit{tup}_{\mathcal{A}_{2}}}$}}}
\put(140,0){\makebox(0,0){\scriptsize{$\overset{\textstyle\mathrmbfit{R}^{\mathrm{op}}\circ\mathring{\mathrmbfit{tup}}(\mathrmbf{R}_{1},\mathrmbfit{S}_{1},\mathcal{A}_{1})}{\overbrace{\mathrmbfit{R}^{\mathrm{op}}\circ\mathrmbfit{S}_{1}^{\mathrm{op}}\circ\mathrmbfit{tup}_{\mathcal{A}_{1}}}}$}}}
\put(-7,43){\makebox(0,0)[r]{\scriptsize{$\tau_{2}$}}}
\put(140,43){\makebox(0,0)[l]{\scriptsize{$\mathrmbfit{R}^{\mathrm{op}}{\circ\;}\tau_{1}$}}}
\put(60,90){\makebox(0,0){\scriptsize{$\kappa$}}}
\put(65,-20){\makebox(0,0){\scriptsize{$\overset{\textstyle
\mathring{\mathrmbfit{tup}}_{{\langle{f,g}\rangle}}(\hat{\varphi})
}{\overbrace{(\mathrmbfit{R}^{\mathrm{op}}\circ\mathrmbfit{S}_{1}^{\mathrm{op}}\circ\acute{\tau}_{{\langle{f,g}\rangle}})\bullet(\acute{\varphi}^{\mathrm{op}}\circ\mathrmbfit{tup}_{\mathcal{A}_{2}})}}$}}}
\put(60,75){\makebox(0,0){\large{$\xLeftarrow{\;\;\;\;\;\;\;\;\;\;\;\;\;\;\;\;\;\;\;}$}}}
\put(60,0){\makebox(0,0){\large{$\xLeftarrow{\;\;\;\;\;\;\;\;\;\;}$}}}
\put(0,42){\makebox(0,0){\large{$\bigg\Downarrow$}}}
\put(130,42){\makebox(0,0){\large{$\bigg\Downarrow$}}}
\end{picture}
\end{tabular}}}
\end{tabular}
\end{center}
%
%
\caption{Database Morphism (proj): $\mathrmbf{Db}$}
\label{fig:rel:db:mor:var:typ:dom}
\end{figure}

\comment{
\begin{proposition}\label{prop:incl:db:typ:dom}
The context of relational databases 
(with fixed type domain)
is a subcontext 
$\mathrmbf{Db}\subseteq\bigl(\mathrmbf{SET}{\;\Downarrow\,}\mathring{\mathrmbfit{tup}}\bigr)$
of the comma context for the tuple passage mentioned above.
\end{proposition}
\begin{table}
\begin{center}
{\fbox{\footnotesize{\begin{tabular}{l@{\hspace{20pt}}l}
$\mathrmbf{Db}=\int\hat{\mathrmbfit{db}}
\rightarrow\mathrmbf{Cxt}$
& 
\textit{Grothendieck construction (original)}
\\
$\mathrmbf{Db}\subseteq\bigl(\mathrmbf{SET}{\;\Downarrow\,}\mathring{\mathrmbfit{tup}}\bigr)$
&
\textit{sub comma context}
\end{tabular}}}}
\end{center}
\caption{Alternate Definitions of $\mathrmbf{Db}$}
\label{tbl:alt:defs:db:typ:dom}
\end{table}
\begin{proposition}\label{prop:fib:cxt:db:var:typ:dom}
%
The fibered projection 
$\mathrmbf{Db}^{\mathrm{op}}\xrightarrow{\mathring{\mathrmbfit{dom}}}\mathrmbf{Dom}$
is the Grothendieck construction of 
the indexed reflection 
$\bigl\{ 
\mathrmbf{Db}(\mathrmbf{R})^{\mathrm{op}}
\xrightarrow{\mathring{\mathrmbfit{dom}}_{\mathrmbfit{R}}}
\mathring{\mathrmbf{Dom}}(\mathrmbf{R})
\mid \mathrmbf{R} \in \mathrmbf{Cxt} 
\bigr\}$.
%
\begin{table}
\begin{center}
\begin{tabular}{c}
{{\begin{tabular}[b]{c@{\hspace{30pt}}c}
{{\begin{tabular}{c}
\setlength{\unitlength}{0.5pt}
\begin{picture}(0,100)(0,0)
\put(5,80){\makebox(0,0){\footnotesize{$\mathrmbf{Db}^{\mathrm{op}}$}}}
\put(0,0){\makebox(0,0){\footnotesize{$\mathring{\mathrmbf{Dom}}$}}}
\put(-5,43){\makebox(0,0)[r]{\scriptsize{$\mathring{\mathrmbfit{dom}}$}}}
\put(0,65){\vector(0,-1){50}}
\end{picture}
\end{tabular}}}
&
{{\begin{tabular}{c}
\setlength{\unitlength}{0.5pt}
\begin{picture}(160,100)(0,0)
\put(5,80){\makebox(0,0){\footnotesize{$\mathrmbf{Db}(\mathrmbf{R}_{2})^{\mathrm{op}}$}}}
\put(165,80){\makebox(0,0){\footnotesize{$\mathrmbf{Db}(\mathrmbf{R}_{1})^{\mathrm{op}}$}}}
\put(0,0){\makebox(0,0){\footnotesize{$\mathring{\mathrmbf{Dom}}(\mathrmbf{R}_{2})$}}}
\put(160,0){\makebox(0,0){\footnotesize{$\mathring{\mathrmbf{Dom}}(\mathrmbf{R}_{1})$}}}
\put(86,92){\makebox(0,0){\scriptsize{$\mathrmbfit{db}_{\mathrmbfit{R}}^{\mathrm{op}}$}}}
\put(85,-12){\makebox(0,0){\scriptsize{$\mathring{\mathrmbfit{dom}}_{\mathrmbfit{R}}$}}}
\put(-5,43){\makebox(0,0)[r]{\scriptsize{$\mathring{\mathrmbfit{dom}}_{\mathrmbfit{R}_{2}}$}}}
\put(170,43){\makebox(0,0)[l]{\scriptsize{$\mathring{\mathrmbfit{dom}}_{\mathrmbfit{R}_{1}}$}}}
\put(115,80){\vector(-1,0){70}}
\put(110,0){\vector(-1,0){60}}
\put(0,65){\vector(0,-1){50}}
\put(160,65){\vector(0,-1){50}}
\end{picture}
\end{tabular}}}
\\&\\
\textit{fibered context projection} & \textit{indexed context projection} 
\end{tabular}}}
\\\\
{\scriptsize\setlength{\extrarowheight}{2.5pt}$\begin{array}[b]{l}
\mathrmbfit{db}_{\mathrmbfit{R}}^{\mathrm{op}}{\,\circ\;}\mathring{\mathrmbfit{dom}}_{\mathrmbfit{R}_{2}}
=\mathring{\mathrmbfit{dom}}_{\mathrmbfit{R}_{1}}{\,\circ\;}\mathring{\mathrmbfit{dom}}_{\mathrmbfit{R}}
\end{array}$}
\end{tabular}
\end{center}
\caption{Database-Schemed Domain Reflection}
\label{tbl:db:sch:dom:refl:typ:dom}
\end{table}
\end{proposition}
}

%

\appendix

\section{Appendix}\label{sec:append}

\subsection{General Theory}\label{sub:sec:append:gen:th}


\comment{There is an equivalence between the two contexts 
$\widehat{\mathrmbf{DB}}$ and $\mathrmbf{DB}$: 
when the inverse passages
$\widehat{\mathrmbf{DB}}\;\xrightarrow{\mathrmbfit{db}}\;\mathrmbf{DB}$
and
$\widehat{\mathrmbf{DB}}\;\xleftarrow{\mathrmbfit{inc}}\;\mathrmbf{DB}$
are natural isomorphisms:
$\mathrmbfit{db} \circ \mathrmbfit{inc} \cong \mathrmbfit{1}_{\widehat{\mathrmbf{DB}}}$ 
and
$\mathrmbfit{inc} \circ \mathrmbfit{db} = \mathrmbfit{1}_{\mathrmbf{DB}}$.
%
This is an adjoint equivalence when 
$\mathrmbfit{db}$ and $\mathrmbfit{inc}$ as a pair of adjoint passages.}

\begin{fact}
There is an adjunction 
$\mathrmbf{A}
\xrightarrow{{\langle{\mathrmbfit{F}{\;\dashv\;}\mathrmbfit{G}}\rangle}}
\mathrmbf{B}$
with unit 
$1_{\mathrmbf{A}}\xRightarrow{\,\eta\;\,}
\mathrmbfit{F}{\,\circ\,}\mathrmbfit{G}$
and counit 
$1_{\mathrmbf{B}}\xLeftarrow{\;\,\varepsilon\,}\mathrmbfit{G}{\,\circ\,}\mathrmbfit{F}$
when the following two \underline{triangle identities} hold
%
\begin{center}
$\mathrmbfit{F}
\xRightarrow{(\eta \circ \mathrmbfit{F})}
\mathrmbfit{F}\circ\mathrmbfit{G}\circ\mathrmbfit{F}
\xRightarrow{(\mathrmbfit{F} \circ \epsilon)} 
\mathrmbfit{F}
= 
\mathrmbfit{F}
\xRightarrow{1_{\mathrmbfit{F}}} 
\mathrmbfit{F}$

$\mathrmbfit{G}
\xRightarrow{(\mathrmbfit{G} \circ \eta)} 
\mathrmbfit{G}\circ\mathrmbfit{F}\circ\mathrmbfit{G}
\xRightarrow{(\epsilon \circ \mathrmbfit{G})}
\mathrmbfit{G}
= 
\mathrmbfit{G}\xRightarrow{1_{\mathrmbfit{G}}}\mathrmbfit{G}$
\end{center}
\end{fact}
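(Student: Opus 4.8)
The statement is the standard unit--counit characterization of adjointness, so the plan is simply to produce, from the two triangle identities, the natural family of bijections that the paper treats as the defining property of an adjoint pair of passages, namely $\mathrmbf{B}(\mathrmbfit{F}(a),b)\cong\mathrmbf{A}(a,\mathrmbfit{G}(b))$ natural in $a\in\mathrmbf{A}$ and in $b\in\mathrmbf{B}$. The reverse direction (recovering $\eta$ and $\varepsilon$ as the transposes of identity morphisms and checking that the triangles then hold) is purely formal and I would only indicate it.

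First I would define the transposition maps in the diagrammatic composition order used throughout the paper: a morphism $g\colon\mathrmbfit{F}(a)\to b$ of $\mathrmbf{B}$ is sent to $\eta_a\cdot\mathrmbfit{G}(g)\colon a\to\mathrmbfit{G}\mathrmbfit{F}(a)\to\mathrmbfit{G}(b)$, and a morphism $f\colon a\to\mathrmbfit{G}(b)$ of $\mathrmbf{A}$ is sent to $\mathrmbfit{F}(f)\cdot\varepsilon_b\colon\mathrmbfit{F}(a)\to\mathrmbfit{F}\mathrmbfit{G}(b)\to b$. The second step is to check that these two assignments are mutually inverse. Starting from $g$ and transposing twice gives $\mathrmbfit{F}(\eta_a)\cdot\mathrmbfit{F}\mathrmbfit{G}(g)\cdot\varepsilon_b$; naturality of $\varepsilon$ at $g$ rewrites the last two factors as $\varepsilon_{\mathrmbfit{F}(a)}\cdot g$, and the pointwise form of the first triangle identity, $\mathrmbfit{F}(\eta_a)\cdot\varepsilon_{\mathrmbfit{F}(a)}=1_{\mathrmbfit{F}(a)}$, collapses the composite to $g$. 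The other round trip, starting from $f$, gives $\eta_a\cdot\mathrmbfit{G}\mathrmbfit{F}(f)\cdot\mathrmbfit{G}(\varepsilon_b)$; naturality of $\eta$ at $f$ rewrites the first two factors as $f\cdot\eta_{\mathrmbfit{G}(b)}$, and the pointwise form of the second triangle identity, $\eta_{\mathrmbfit{G}(b)}\cdot\mathrmbfit{G}(\varepsilon_b)=1_{\mathrmbfit{G}(b)}$, collapses it to $f$.

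The third step is naturality of the bijection in the two variables, which is a one-line diagram chase in each case: for $a'\to a$ in $\mathrmbf{A}$, functoriality of $\mathrmbfit{F}$ and the definition of the transpose show that precomposition corresponds to precomposition by $\mathrmbfit{F}(a'\to a)$; for $b\to b'$ in $\mathrmbf{B}$, functoriality of $\mathrmbfit{G}$ together with naturality of $\varepsilon$ (or, on the other side, of $\eta$) shows that postcomposition corresponds to postcomposition by $\mathrmbfit{G}(b\to b')$. There is no genuine obstacle in the argument; the only point requiring care --- and the one I would be most deliberate about --- is keeping the diagrammatic (left-to-right) convention and the four whiskering operations $\eta\circ\mathrmbfit{F}$, $\mathrmbfit{F}\circ\varepsilon$, $\mathrmbfit{G}\circ\eta$, $\varepsilon\circ\mathrmbfit{G}$ straight, so that the pointwise triangle identities come out with their factors composed in the correct order. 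Once that bookkeeping is fixed the whole proof is mechanical, so I would state it tersely, or, if space is tight, defer to the corresponding result in Mac Lane.
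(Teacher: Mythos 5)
Your argument is correct: it is the standard Mac~Lane derivation of the natural hom-set bijection $\mathrmbf{B}(\mathrmbfit{F}(a),b)\cong\mathrmbf{A}(a,\mathrmbfit{G}(b))$ from the unit, counit, and the two triangle identities, and the two round-trip computations and the naturality checks are all carried out with the whiskerings in the right order. The paper itself records this statement as a bare \emph{Fact} with no proof attached (it is background recalled from standard category theory), so there is no in-paper argument to compare against; your proof supplies exactly the standard justification one would cite from Mac~Lane, and deferring to that reference, as you suggest, is what the paper implicitly does.
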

\begin{definition}
A left adjoint 
$\mathrmbf{A}\xrightarrow{\mathrmbfit{F}}\mathrmbf{B}$
to an inclusion passage 
$\mathrmbf{A}\xhookleftarrow{\mathrmbfit{G}}\mathrmbf{B}$
(of a full context $\mathrmbf{B}$) is called a \underline{reflection}.
A full subcontext $\mathrmbf{B}$ of a context $\mathrmbf{A}$ 
is said to be \underline{reflective} in $\mathrmbf{A}$ 
when the inclusion passage 
$\mathrmbf{A}\xhookleftarrow{\mathrmbfit{G}}\mathrmbf{B}$
has a left adjoint.
A passage 
$\mathrmbf{A}\xrightarrow{\mathrmbfit{F}}\mathrmbf{B}$
is \underline{full} when
to every pair $a$ and $a'$ of objects of $\mathrmbf{A}$
and to every morphism 
$\mathrmbfit{F}(a)\xrightarrow{g}\mathrmbfit{F}(a')$ of $\mathrmbf{B}$
there is a morphism $a\xrightarrow{f}a'$ of $\mathrmbf{A}$
such that $\mathrmbfit{F}(f) = g$.
%
\end{definition}
\begin{fact}
For a reflection
$1_{\mathrmbf{B}}\xLeftarrow
[=]{\;\varepsilon\;}
\mathrmbfit{G}{\,\circ\,}\mathrmbfit{F}$
the triangle identities become
\begin{center}
$\mathrmbfit{F}
\xRightarrow{(\eta \circ \mathrmbfit{F})}
\mathrmbfit{F}
= 
\mathrmbfit{F}
\xRightarrow{1_{\mathrmbfit{F}}} 
\mathrmbfit{F}$
\text{and}
$\mathrmbfit{G}
\xRightarrow{(\mathrmbfit{G} \circ \eta)} 
\mathrmbfit{G}
= 
\mathrmbfit{G}\xRightarrow{1_{\mathrmbfit{G}}}\mathrmbfit{G}$
\end{center}
\end{fact}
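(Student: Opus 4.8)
The plan is to specialize the two triangle identities recorded in the Fact on adjunctions above to the situation of a reflection, and simply read off what survives. Write the reflection as a left adjoint $\mathrmbf{A}\xrightarrow{\mathrmbfit{F}}\mathrmbf{B}$ to the full inclusion passage $\mathrmbf{A}\xhookleftarrow{\mathrmbfit{G}}\mathrmbf{B}$, with unit $1_{\mathrmbf{A}}\xRightarrow{\eta}\mathrmbfit{F}{\,\circ\,}\mathrmbfit{G}$ and counit $\varepsilon$. The one interpretive point to settle first is the force of the decoration ``$=$'' on the counit in the statement: it records that $\mathrmbfit{G}{\,\circ\,}\mathrmbfit{F}=1_{\mathrmbf{B}}$ as passages and that $\varepsilon$ is the identity bridge on this passage (a strict reflection). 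Granting this, the argument is pure $2$-categorical bookkeeping with whiskering.

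First I would substitute into the first triangle identity $(\eta{\,\circ\,}\mathrmbfit{F}){\,\bullet\,}(\mathrmbfit{F}{\,\circ\,}\varepsilon)=1_{\mathrmbfit{F}}$. Since $\mathrmbfit{G}{\,\circ\,}\mathrmbfit{F}=1_{\mathrmbf{B}}$, the domain $\mathrmbfit{F}{\,\circ\,}\mathrmbfit{G}{\,\circ\,}\mathrmbfit{F}$ of the second factor is literally $\mathrmbfit{F}{\,\circ\,}1_{\mathrmbf{B}}=\mathrmbfit{F}$, and whiskering the identity bridge $\varepsilon$ on the left by $\mathrmbfit{F}$ produces the identity bridge $1_{\mathrmbfit{F}}$; hence the triangle identity degenerates to $(\eta{\,\circ\,}\mathrmbfit{F}){\,\bullet\,}1_{\mathrmbfit{F}}=1_{\mathrmbfit{F}}$, i.e. $\eta{\,\circ\,}\mathrmbfit{F}=1_{\mathrmbfit{F}}$, which is the left-hand displayed equation of the statement.

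Next I would run the mirror-image computation on the second triangle identity $(\mathrmbfit{G}{\,\circ\,}\eta){\,\bullet\,}(\varepsilon{\,\circ\,}\mathrmbfit{G})=1_{\mathrmbfit{G}}$. The intermediate passage $\mathrmbfit{G}{\,\circ\,}\mathrmbfit{F}{\,\circ\,}\mathrmbfit{G}$ equals $1_{\mathrmbf{B}}{\,\circ\,}\mathrmbfit{G}=\mathrmbfit{G}$, whiskering $\varepsilon$ on the right by $\mathrmbfit{G}$ again yields an identity bridge, and the identity collapses to $\mathrmbfit{G}{\,\circ\,}\eta=1_{\mathrmbfit{G}}$, the right-hand displayed equation.

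I expect no genuine obstacle here; the only thing worth stating carefully is the standard convention that whiskering an identity bridge by a passage on either side is again an identity bridge and that vertical composition of any bridge with an identity bridge returns that bridge unchanged. Once these are invoked, the two composite triangle identities of the general Fact collapse exactly to the two whiskered-unit equations asserted, completing the proof.
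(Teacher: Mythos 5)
Your proposal is correct: you specialize the two triangle identities from the preceding Fact to the case where the counit $\varepsilon$ is the identity (as signalled by the ``$=$'' decoration), so that the whiskerings $\mathrmbfit{F}\circ\varepsilon$ and $\varepsilon\circ\mathrmbfit{G}$ become identity bridges and the identities collapse to $\eta\circ\mathrmbfit{F}=1_{\mathrmbfit{F}}$ and $\mathrmbfit{G}\circ\eta=1_{\mathrmbfit{G}}$. The paper states this Fact without any proof, and your substitution argument is exactly the routine verification it implicitly relies on.
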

%


\comment{
Morphism between 
{\footnotesize{${\langle{\mathrmbf{R},
\mathrmbfit{R}^{\mathrm{op}}{\circ\,}\mathrmbfit{K},\mathrmbf{R},\mathrmbfit{Q},
\mathrmbfit{1}_{\mathrmbf{R}},
\tau}\rangle}$}}
and
{\footnotesize{$
{\langle{\hat{\mathrmbf{R}},\mathrmbfit{K},
\mathrmbf{R},\mathrmbfit{Q},
\mathrmbfit{R},\tau}\rangle}$}}
}

\comment{\footnotesize{$
\bigl(
\mathrmbfit{1}_{\mathrmbf{R}}^{\mathrm{op}}{\;\circ\;}
(\mathrmbfit{R}^{\mathrm{op}}{\;\circ\;}\mathrmbfit{K})\xRightarrow{\;\tau}
\mathrmbfit{Q}^{\mathrm{op}}\!{\circ\,}\mathrmbfit{tup}
\bigr)
$}}

\comment{\footnotesize{$
\bigl(
\mathrmbfit{R}^{\mathrm{op}}{\;\circ\;}\mathrmbfit{K}\xRightarrow{\;\tau}
\mathrmbfit{Q}^{\mathrm{op}}\!{\circ\,}\mathrmbfit{tup}
\bigr)
$}}

\comment{
\begin{center}
{{\begin{tabular}{c@{\hspace{90pt}}c}
{\begin{tabular}{c}
\setlength{\unitlength}{0.6pt}
\begin{picture}(120,120)(0,-28)
\put(-30,80){\makebox(0,0){\scriptsize{$
\mathrmbfit{1}_{\mathrmbf{R}}^{\mathrm{op}}{\;\circ\;}
(\mathrmbfit{R}^{\mathrm{op}}{\;\circ\;}\mathrmbfit{K})
$}}}
\put(130,80){\makebox(0,0){\scriptsize{$\mathrmbfit{R}^{\mathrm{op}}{\circ\;}\mathrmbfit{K}$}}}
\put(0,0){\makebox(0,0){\scriptsize{$\mathrmbfit{Q}^{\mathrm{op}}{\circ\;}\mathrmbfit{tup}$}}}
\put(135,0){\makebox(0,0){\scriptsize{$
\mathrmbfit{Q}^{\mathrm{op}}{\circ\;}\mathrmbfit{tup}$}}}
\put(-7,43){\makebox(0,0)[r]{\scriptsize{$\tau$}}}
\put(140,43){\makebox(0,0)[l]{\scriptsize{$
\tau$}}}
\put(63,93){\makebox(0,0){\scriptsize{$1$}}}
\put(63,-14){\makebox(0,0){\scriptsize{$1^{\mathrm{op}}{\!\circ\;}\mathrmbfit{tup}$}}}
\put(60,75){\makebox(0,0){\large{$\xLeftarrow{\;\;\;\;\;\;\;\;\;\;\;\;\;\;\;\;\;}$}}}
\put(60,-5){\makebox(0,0){\large{$\xLeftarrow{\;\;\;\;\;\;\;\;\;}$}}}
\put(0,40){\makebox(0,0){\large{$\bigg\Downarrow$}}}
\put(130,40){\makebox(0,0){\large{$\bigg\Downarrow$}}}
\end{picture}
\end{tabular}}
&
{{\begin{tabular}{c}
\setlength{\unitlength}{0.6pt}
\begin{picture}(140,120)(0,-30)
\put(-10,80){\makebox(0,0){\scriptsize{$
{\langle{\mathrmbf{R},\mathrmbfit{R}^{\mathrm{op}}{\circ\,}\mathrmbfit{K}}\rangle}$}}}
\put(140,80){\makebox(0,0){\scriptsize{${\langle{\hat{\mathrmbf{R}},\mathrmbfit{K}}\rangle}$}}}
\put(0,0){\makebox(0,0){\scriptsize{$
{{\langle{\mathrmbf{R},\mathrmbfit{Q}^{\mathrm{op}}{\circ}\mathrmbfit{tup}}\rangle}}$}}}
\put(140,0){\makebox(0,0){\scriptsize{$
{{\langle{\mathrmbf{R},\mathrmbfit{Q}^{\mathrm{op}}{\circ}\mathrmbfit{tup}}\rangle}}
$}}}
\put(-7,40){\makebox(0,0)[r]{\scriptsize{${\langle{\mathrmbfit{1}_{\mathrmbf{R}},\tau}\rangle}$}}}
\put(147,40){\makebox(0,0)[l]{\scriptsize{${\langle{\mathrmbfit{R},\tau}\rangle}$}}}
\put(70,90){\makebox(0,0){\scriptsize{${\langle{\mathrmbfit{R},1}\rangle}$}}}

\put(70,10){\makebox(0,0){\scriptsize{$1
$}}}
\put(70,-15){\makebox(0,0){\scriptsize{$
{{\langle{\mathrmbfit{1}_{\mathrmbf{R}},1^{\mathrm{op}}{\!\circ\,}\mathrmbfit{tup}}\rangle}}
$}}}
\put(105,80){\vector(-1,0){70}}
\put(95,0){\vector(-1,0){50}}
\put(0,65){\vector(0,-1){50}}
\put(140,65){\vector(0,-1){50}}
\end{picture}
\end{tabular}}}
\end{tabular}}}
\end{center}
}

\comment{
with 
an identity shape passage
$\mathrmbf{R}\xrightarrow{\;\mathrmbfit{1}\;}\mathrmbf{R}$
and
the bridge
$\mathrmbfit{K}
\xRightarrow{\tau}
\mathrmbfit{Q}^{\mathrm{op}}{\circ\;}\mathrmbfit{tup}$.
\begin{itemize}
\item 
The source is the object 
${\langle{\mathrmbf{R},\mathrmbfit{K}}\rangle}$
in $\mathrmbf{SET}$
with the passage 
(key set diagram)
$\mathrmbf{R}^{\mathrm{op}} \xrightarrow{\mathrmbfit{K}} \mathrmbf{Set}$.
\item 
The target is the object
${\langle{\mathrmbf{R},\mathrmbfit{Q}^{\mathrm{op}}{\circ\;}\mathrmbfit{tup}}\rangle}$
in $\mathrmbf{SET}$
with the composite passage
$\mathrmbf{R}^{\mathrm{op}}
\xrightarrow{\;\mathrmbfit{Q}^{\mathrm{op}}\;}
\mathrmbf{Dom}^{\mathrm{op}}
\xrightarrow{\;\mathrmbfit{tup}\;}
\mathrmbf{Set}$.
\item 
Here
$\mathrmbf{R}\xrightarrow{\;\mathrmbfit{Q}\;}\mathrmbf{Dom}$
is the passage 
a schemed domain
${\langle{\mathrmbf{R},\mathrmbfit{Q}}\rangle}$
in 
$\mathrmbf{DOM}$
and 
$\mathrmbf{Dom}^{\mathrm{op}}
\xrightarrow{\;\mathrmbfit{tup}\;}
\mathrmbf{Set}$
is the basic tuple passage (Def.\,\ref{def:tup:pass:basic}).
\end{itemize}
}


%
\comment{
\begin{figure}
\begin{center}
{{\begin{tabular}{c@{\hspace{90pt}}c}
\comment{{\begin{tabular}{c}
\setlength{\unitlength}{0.6pt}
\begin{picture}(120,120)(0,-28)
\put(0,80){\makebox(0,0){\scriptsize{$\mathrmbfit{K}_{2}$}}}
\put(130,80){\makebox(0,0){\scriptsize{$\mathrmbfit{R}^{\mathrm{op}}{\circ\;}\mathrmbfit{K}_{1}$}}}
\put(0,0){\makebox(0,0){\scriptsize{$\mathrmbfit{Q}_{2}^{\mathrm{op}}{\circ\;}\mathrmbfit{tup}$}}}
\put(135,0){\makebox(0,0){\scriptsize{$
\mathrmbfit{R}^{\mathrm{op}}{\circ\,}\mathrmbfit{Q}_{1}^{\mathrm{op}}{\circ\;}\mathrmbfit{tup}$}}}
\put(-7,43){\makebox(0,0)[r]{\scriptsize{$\tau_{2}$}}}
\put(140,43){\makebox(0,0)[l]{\scriptsize{$\mathrmbfit{R}^{\mathrm{op}}{\circ\;}\tau_{1}$}}}
\put(63,93){\makebox(0,0){\scriptsize{$\kappa$}}}
\put(63,-14){\makebox(0,0){\scriptsize{$\varsigma^{\mathrm{op}}{\!\circ\;}\mathrmbfit{tup}$}}}
\put(60,75){\makebox(0,0){\large{$\xLeftarrow{\;\;\;\;\;\;\;\;\;\;\;\;\;\;\;\;\;}$}}}
\put(60,-5){\makebox(0,0){\large{$\xLeftarrow{\;\;\;\;\;\;\;\;\;}$}}}
\put(0,40){\makebox(0,0){\large{$\bigg\Downarrow$}}}
\put(130,40){\makebox(0,0){\large{$\bigg\Downarrow$}}}
\end{picture}
\end{tabular}}}
&
{{\begin{tabular}{c}
\setlength{\unitlength}{0.6pt}
\begin{picture}(140,100)(0,-10)

\put(-30,115){\makebox(0,0){\scriptsize{$
{\langle{\mathrmbf{R}_{2},
\mathrmbfit{R}_{2}^{\mathrm{op}}{\!\cdot}\mathrmbfit{K}_{2}}\rangle}
$}}}

\qbezier(-45,105)(-50,30)(-12,15)\put(-12,15){\vector(2,-1){0}}

\put(0,80){\makebox(0,0){\scriptsize{$
{\langle{\hat{\mathrmbf{R}}_{2},\mathrmbfit{K}_{2}}\rangle}
$}}}
\put(140,80){\makebox(0,0){\scriptsize{${\langle{\hat{\mathrmbf{R}}_{1},\mathrmbfit{K}_{1}}\rangle}$}}}
\put(0,-10){\makebox(0,0){\scriptsize{$
\underset{\mathring{\mathrmbfit{tup}}(\mathrmbf{R}_{2},\mathrmbfit{Q}_{2})}
{\underbrace{\langle{\mathrmbf{R}_{2},\mathrmbfit{Q}_{2}^{\mathrm{op}}{\circ}\mathrmbfit{tup}}\rangle}}$}}}
\put(140,-10){\makebox(0,0){\scriptsize{$
\underset{\mathring{\mathrmbfit{tup}}(\mathrmbf{R}_{1},\mathrmbfit{Q}_{1})}
{\underbrace{\langle{\mathrmbf{R}_{1},\mathrmbfit{Q}_{1}^{\mathrm{op}}{\circ}\mathrmbfit{tup}}\rangle}}$}}}

\put(-50,70){\makebox(0,0)[r]{\scriptsize{${\langle{\mathrmbfit{1}_{\mathrmbf{R}_{2}}
,\tau_{2}}\rangle}$}}}

\put(0,100){\makebox(0,0)[l]{\scriptsize{${\langle{\mathrmbfit{R}_{2},\mathrmbf{1}}\rangle}$}}}

\put(-7,40){\makebox(0,0)[r]{\scriptsize{${\langle{\mathrmbfit{R}_{2},\tau_{2}}\rangle}$}}}
\put(147,40){\makebox(0,0)[l]{\scriptsize{${\langle{\mathrmbfit{R}_{1},\tau_{1}}\rangle}$}}}
\put(70,90){\makebox(0,0){\scriptsize{${\langle{\mathrmbfit{R},\kappa}\rangle}$}}}
\put(70,-30){\makebox(0,0){\scriptsize{$
\underset{\mathring{\mathrmbfit{tup}}(\mathrmbfit{R},\varsigma)}
{\underbrace{\langle{\mathrmbfit{R},\varsigma^{\mathrm{op}}{\!\circ\,}\mathrmbfit{tup}}\rangle}}
$}}}

\put(0,90){\vector(-1,1){18}}
\put(105,80){\vector(-1,0){70}}
\put(95,0){\vector(-1,0){50}}
\put(0,65){\vector(0,-1){50}}
\put(140,65){\vector(0,-1){50}}
\put(70,40){\makebox(0,0){\footnotesize{$\mathrmbf{SET}=\mathrmbf{Set}^{\!\scriptscriptstyle{\Downarrow}}$}}}
\end{picture}
\end{tabular}}}
\end{tabular}}}
\end{center}
\caption{$\bigl(\mathrmbf{SET}{\;\downarrow\,}\mathring{\mathrmbfit{tup}}\bigr)$ morphism}
\label{fig:comma:cxt:morph}
\end{figure}
}
%

\comment{

\newpage
\subsubsection{Comma Contexts}\label{sub:sub:sec:comma:cxt}

%
\begin{definition}
For any two passages 
$\mathrmbf{C} \xrightarrow{\mathrmbfit{F}} 
\mathrmbf{E} 
\xleftarrow{\mathrmbfit{G}} \mathrmbf{D}$, 
their comma context $(\mathrmbfit{F} \downarrow \mathrmbfit{G})$
is defined as follows:
\begin{itemize}
\item 
an object 
of $(\mathrmbfit{F} \downarrow \mathrmbfit{G})$
is a triple $(c,d,e)$, 
where $c \in \mathrmbf{C}$, $d \in \mathrmbf{D}$, 
and $e : \mathrmbfit{F}(c) \to \mathrmbfit{G}(d)$ is a morphism in $\mathrmbf{E}$; and
\item 
a morphism 
of $(\mathrmbfit{F} \downarrow \mathrmbfit{G})$
is a pair 
$(c_1,d_1, e_1) \stackrel{(f,g)}{\to} (c_2,d_2, e_2)$,
where $f : c_1 \to c_2 \in \mathrmbf{C}$ 
and $g : d_1 \to d_2 \in \mathrmbf{D}$, 
such that 
$e_1 \cdot \mathrmbfit{G}(g)
 = 
\mathrmbfit{F}(f) \cdot e_2$.
%
\begin{equation}
\label{def:comma:cxt:mor}
{{\begin{tabular}{c}
\setlength{\unitlength}{0.5pt}
\begin{picture}(140,90)(0,-5)
\put(0,80){\makebox(0,0){\scriptsize{$\mathrmbfit{F}(c_1)$}}}
\put(140,80){\makebox(0,0){\scriptsize{$\mathrmbfit{F}(c_2)$}}}
\put(0,0){\makebox(0,0){\scriptsize{$\mathrmbfit{G}(d_1)$}}}
\put(140,0){\makebox(0,0){\scriptsize{$\mathrmbfit{G}(d_2)$}}}
\put(-7,40){\makebox(0,0)[r]{\scriptsize{$e_1$}}}
\put(147,40){\makebox(0,0)[l]{\scriptsize{$e_2$}}}
\put(70,90){\makebox(0,0){\scriptsize{$\mathrmbfit{F}(f)$}}}
\put(70,-12){\makebox(0,0){\scriptsize{$\mathrmbfit{G}(g)$}}}
\put(35,80){\vector(1,0){70}}
\put(35,0){\vector(1,0){70}}
\put(0,65){\vector(0,-1){50}}
\put(140,65){\vector(0,-1){50}}
\end{picture}
\end{tabular}}}
\end{equation}
Composition is defined component-wise.
%
\footnote{A special case is
the comma context $(\mathrmbf{C} \downarrow \mathrmbfit{G})$
for the opspan
$\mathrmbf{C} \xrightarrow{\mathrmbfit{1}} 
\mathrmbf{C} 
\xleftarrow{\mathrmbfit{G}} \mathrmbf{D}$.}
%
\end{itemize}
\end{definition}
%
%
There are two canonical forgetful passages 
$\mathrmbf{C} \xleftarrow{\mathrmbfit{pr}_{\mathrmbf{C}}}
(\mathrmbfit{F} \downarrow \mathrmbfit{G}) 
\xrightarrow{\mathrmbfit{pr}_{\mathrmbf{D}}} \mathrmbf{D}$,
and a natural transformation 
$\theta : \mathrmbfit{pr}_{\mathrmbf{C}} \circ \mathrmbfit{F} 
\Rightarrow 
\mathrmbfit{pr}_{\mathrmbf{D}} \circ \mathrmbfit{G}$ 
defined by $\theta_{(c,d,e)} = e$.
Disp.\ref{def:comma:cxt:mor} shows naturality.
\begin{figure}
\begin{center}
{{\begin{tabular}{c}
\setlength{\unitlength}{0.56pt}
\begin{picture}(120,120)(0,30)
\put(60,142){\makebox(0,0)
{\footnotesize{$\bigl(\mathrmbfit{F}{\,\downarrow\,}\mathrmbfit{G}\bigr)$}}
}
\put(0,80){\makebox(0,0){{$\mathrmbf{C}$}}}
\put(120,80){\makebox(0,0){{$\mathrmbf{D}$}}}
\put(60,20){\makebox(0,0){{$\mathrmbf{E}$}}}
%
\put(20,120){\makebox(0,0)[r]{\scriptsize{$\mathrmbfit{pr}_{\mathrmbf{C}}$}}}
\put(102,120){\makebox(0,0)[l]{\scriptsize{$\mathrmbfit{pr}_{\mathrmbf{D}}$}}}
\put(20,40){\makebox(0,0)[r]{\footnotesize{$\mathrmbfit{F}$}}}
\put(100,40){\makebox(0,0)[l]{\footnotesize{$\mathrmbfit{G}$}}}
\put(60,88){\makebox(0,0){\shortstack{\footnotesize{$\theta$}\\\large{$\Longrightarrow$}}}}
%
\put(45,130){\vector(-1,-1){38}}
\put(75,130){\vector(1,-1){38}}
\put(10,67){\vector(1,-1){38}}
\put(110,68){\vector(-1,-1){38}}
\end{picture}
\end{tabular}}}
\end{center}
\caption{Comma Context: 
{\footnotesize{$\bigl(\mathrmbfit{F}{\,\downarrow\,}\mathrmbfit{G}\bigr)$}}
}
\label{fig:comma:cxt}
\end{figure}
\begin{fact}
Let $\mathrmbf{C}\xrightarrow{\mathrmbfit{F}}\mathrmbf{E}\xleftarrow{\mathrmbfit{G}}\mathrmbf{D}$ be a passage opspan.
\begin{itemize}
\item 
If $\mathrmbf{C}$ and $\mathrmbf{D}$ are cocomplete contexts, 
$\mathrmbf{C}\xrightarrow{\mathrmbfit{F}}\mathrmbf{E}$ is a cocontinuous passage and 
$\mathrmbf{D}\xrightarrow{\mathrmbfit{G}}\mathrmbf{E}$ is an arbitrary passage 
(not necessarily cocontinuous),
then the comma context 
$\bigl(\mathrmbfit{F}{\,\downarrow\,}\mathrmbfit{G}\bigr)$ is cocomplete
and 
the projection passages 
$\mathrmbf{C} \xleftarrow{\mathrmbfit{pr}_{\mathrmbf{C}}}
(\mathrmbfit{F} \downarrow \mathrmbfit{G}) 
\xrightarrow{\mathrmbfit{pr}_{\mathrmbf{D}}} \mathrmbf{D}$
are cocontinuous. 
\newline
\item 
If $\mathrmbf{C}$ and $\mathrmbf{D}$ are complete contexts
with both $\mathrmbf{C}\xrightarrow{\mathrmbfit{F}}\mathrmbf{E}$ and $\mathrmbf{D}\xrightarrow{\mathrmbfit{G}}\mathrmbf{E}$ continuous passages,
then the comma context $\bigl(\mathrmbfit{F}{\,\downarrow\,}\mathrmbfit{G}\bigr)$ is complete and 
the projection passages 
$\mathrmbf{C}\leftarrow\bigl(\mathrmbfit{F}{\,\downarrow\,}\mathrmbfit{G}\bigr)\rightarrow\mathrmbf{D}$ 
are continuous.
\end{itemize}
\end{fact}
%

}

\newpage
\subsubsection{Grothendieck Construction}\label{append:grothen:construct}

\comment{
%
\begin{figure}
\begin{center}
{{\begin{tabular}{c@{\hspace{40pt}}c}
{{\begin{tabular}{c}
\setlength{\unitlength}{0.44pt}
\begin{picture}(260,300)(33,0)
\put(140,200){\begin{picture}(0,0)(0,0)
\put(0,30){\oval(180,60)}
\put(0,75){\makebox(0,0){\footnotesize{$\int{\!\acute{\mathrmbfit{C}}}$}}}
\put(0,30){\makebox(0,0){\scriptsize{${\langle{i,A}\rangle}\xrightarrow{{\langle{a,\grave{f}}\rangle}}{\langle{i',A'}\rangle}$}}}
\end{picture}}
\put(140,130){\begin{picture}(0,0)(0,0)
\put(0,35){\makebox(0,0){\footnotesize{$\mathrmbf{I}$}}}
\put(0,3){\makebox(0,0){\scriptsize{$i\xrightarrow{\;\;a\;\;}i'$}}}
\put(0,0){\oval(100,40)}
\end{picture}}
\put(60,30){\begin{picture}(0,0)(0,0)
\put(160,0){\oval(134,48)}
\put(160,40){\makebox(0,0){\footnotesize{$\mathrmbfit{C}_{i'}$}}}
\put(160,0){\makebox(0,0){\scriptsize{$\acute{\mathrmbfit{C}}_{a}(A)\xrightarrow{\grave{f}}A'$}}}
\put(80,48){\makebox(0,0){\footnotesize{$\xrightarrow{\;\;\;\acute{\mathrmbfit{C}}_{a}\;\;\;\;}$}}}
\put(0,0){\oval(48,48)}
\put(0,40){\makebox(0,0){\footnotesize{$\mathrmbfit{C}_{i}$}}}
\put(0,0){\makebox(0,0){\scriptsize{$A$}}}
\end{picture}}
\end{picture}
\end{tabular}}}
&
{{\begin{tabular}{c}
\setlength{\unitlength}{0.44pt}
\begin{picture}(260,300)(-5,0)
\put(140,200){\begin{picture}(0,0)(0,0)
\put(0,30){\oval(180,60)}
\put(0,75){\makebox(0,0){\footnotesize{$\int{\!\grave{\mathrmbfit{C}}}$}}}
\put(0,30){\makebox(0,0){\scriptsize{${\langle{i,A}\rangle}\xrightarrow{{\langle{a,\acute{f}}\rangle}}{\langle{i',A'}\rangle}$}}}
\end{picture}}
\put(140,130){\begin{picture}(0,0)(0,0)
\put(0,35){\makebox(0,0){\footnotesize{$\mathrmbf{I}$}}}
\put(0,3){\makebox(0,0){\scriptsize{$i\xrightarrow{\;\;a\;\;}i'$}}}
\put(0,0){\oval(100,40)}
\end{picture}}
\put(60,30){\begin{picture}(0,0)(0,0)
\put(0,0){\oval(134,48)}
\put(0,40){\makebox(0,0){\footnotesize{$\mathrmbfit{C}_{i}$}}}
\put(0,0){\makebox(0,0){\scriptsize{$A\xrightarrow{\acute{f}}\grave{\mathrmbfit{C}}_{a}(A')$}}}
\put(80,48){\makebox(0,0){\footnotesize{$\xleftarrow{\;\;\;\grave{\mathrmbfit{C}}_{a}\;\;\;\;}$}}}
\put(160,0){\oval(48,48)}
\put(160,40){\makebox(0,0){\footnotesize{$\mathrmbfit{C}_{i'}$}}}
\put(160,0){\makebox(0,0){\scriptsize{$A'$}}}
\end{picture}}
\end{picture}
\end{tabular}}}
\\&\\
{\small\bfseries{opfibration}} & {\small\bfseries{fibration}}
\\&\\
\multicolumn{2}{c}{\small{$\overset{\underbrace{\rule{220pt}{0pt}}}{\text{\small\bfseries{bifibration}}}$}}
\\&\\
\multicolumn{2}{c}{\scriptsize{\begin{tabular}{p{300pt}}
The adjunction
$\mathrmbfit{C}_{i}\xrightarrow{{\langle{\acute{\mathrmbfit{C}}_{a}{\;\dashv\;}\grave{\mathrmbfit{C}}_{a}}\rangle}}\mathrmbfit{C}_{i'}$
has unit $\mathrmbfit{1}_{\mathrmbfit{C}_{i}}\xRightarrow{\eta_{a}}\acute{\mathrmbfit{C}}_{a}{\;\circ\;}\grave{\mathrmbfit{C}}_{a}$
with the $\mathrmbfit{C}_{i}$-morphism $A\xrightarrow{\eta_{a}(A)}\grave{\mathrmbfit{C}}_{a}(\acute{\mathrmbfit{C}}_{a}(A))$
as its $A^{\mathrm{th}}$ component, and
has counit $\grave{\mathrmbfit{C}}_{a}{\;\circ\;}\acute{\mathrmbfit{C}}_{a}\xRightarrow{\varepsilon_{a}}\mathrmbfit{1}_{\mathrmbfit{C}_{i'}}$
with the $\mathrmbfit{C}_{i'}$-morphism $\acute{\mathrmbfit{C}}_{a}(\grave{\mathrmbfit{C}}_{a}(A'))\xrightarrow{\varepsilon_{a}(A')}A'$
as its $A'^{\mathrm{th}}$ component.
\begin{center}
{\scriptsize\setlength{\extrarowheight}{3pt}$\begin{array}{r@{\hspace{5pt}=\hspace{5pt}}l}
A\xrightarrow{\acute{f}}\grave{\mathrmbfit{C}}_{a}(A')
&
A\xrightarrow{\eta_{a}(A)}\grave{\mathrmbfit{C}}_{a}(\acute{\mathrmbfit{C}}_{a}(A))
\xrightarrow{\grave{\mathrmbfit{C}}_{a}(\grave{f})}\grave{\mathrmbfit{C}}_{a}(A')
\\
\acute{\mathrmbfit{C}}_{a}(A)\xrightarrow{\grave{f}}A'
&
\acute{\mathrmbfit{C}}_{a}(A)\xrightarrow{\acute{\mathrmbfit{C}}_{a}(\acute{f})}
\acute{\mathrmbfit{C}}_{a}(\grave{\mathrmbfit{C}}_{a}(A'))\xrightarrow{\varepsilon_{a}(A')}A'
\end{array}$}
\end{center}
\end{tabular}}}
\end{tabular}}}
\end{center}
\caption{Bifibration}
\label{fig:bi:fbr}
\end{figure}
%
}

%
\begin{description}
\item[fibration:] 
A fibration (fibered context) $\int{\!\grave{\mathrmbfit{C}}}$
is the Grothendieck construction of a contravariant pseudo-passage (indexed context)
$\mathrmbf{I}^{\mathrm{op}}\xrightarrow{\grave{\mathrmbfit{C}}}\mathrmbf{Cxt}$,
where the action on any indexing object $i$ in $\mathrmbf{I}$ is the fiber context
$\grave{\mathrmbfit{C}}_{i}$
and
the action on any indexing morphism $i\xrightarrow{a}i'$ is the fiber passage
$\mathrmbfit{C}_{i}\xleftarrow{\grave{\mathrmbfit{C}}_{a}}\mathrmbfit{C}_{i'}$. 
An object in $\int{\!\grave{\mathrmbfit{C}}}$ is a pair ${\langle{i,A}\rangle}$,
where $i$ is an indexing object in $\mathrmbf{I}$ and $A$ is an object in the fiber context $\grave{\mathrmbfit{C}}_{i}$.
A morphism in $\int{\!\grave{\mathrmbfit{C}}}$ is a pair 
${\langle{i,A}\rangle}\xrightarrow{{\langle{a,\acute{f}}\rangle}}{\langle{i',A'}\rangle}$,
where $i\xrightarrow{a}i'$ is an indexing morphism in $\mathrmbf{I}$ 
and $A\xrightarrow{\acute{f}}\grave{\mathrmbfit{C}}_{a}(A')$ is a fiber morphism in $\grave{\mathrmbfit{C}}_{i}$. 
There is a projection passage
$\int{\!\grave{\mathrmbfit{C}}}\rightarrow\mathrmbf{I}$.
\newline
\item[opfibration:] 
An opfibration $\int{\!\acute{\mathrmbfit{C}}}$
is the Grothendieck construction of a covariant pseudo-passage (indexed context)
$\mathrmbf{I}\xrightarrow{\acute{\mathrmbfit{C}}}\mathrmbf{Cxt}$,
where the action on any indexing object $i$ in $\mathrmbf{I}$ is the fiber context
$\acute{\mathrmbfit{C}}_{i}$
and
the action on any indexing morphism $i\xrightarrow{a}i'$ is the fiber passage
$\mathrmbfit{C}_{i}\xrightarrow{\acute{\mathrmbfit{C}}_{a}}\mathrmbfit{C}_{i'}$. 
An object in $\int{\!\acute{\mathrmbfit{C}}}$ is a pair ${\langle{i,A}\rangle}$,
where $i$ is an indexing object in $\mathrmbf{I}$ and $A$ is an object in the fiber context $\acute{\mathrmbfit{C}}_{i}$.
A morphism in $\int{\!\acute{\mathrmbfit{C}}}$ is a pair 
${\langle{i,A}\rangle}\xrightarrow{{\langle{a,\grave{f}}\rangle}}{\langle{i',A'}\rangle}$,
where $i\xrightarrow{a}i'$ is an indexing morphism in $\mathrmbf{I}$ 
and $\acute{\mathrmbfit{C}}_{a}(A)\xrightarrow{\grave{f}}A'$ is a fiber morphism in $\acute{\mathrmbfit{C}}_{i'}$. 
There is a projection passage
$\int{\!\acute{\mathrmbfit{C}}}\rightarrow\mathrmbf{I}$.
\newline
\item[bifibration:] 
A bifibration $\int{\!{\mathrmbfit{C}}}$
is the Grothendieck construction of an indexed adjunction
$\mathrmbf{I}\xrightarrow{{\mathrmbfit{C}}}\mathrmbf{Adj}$
consisting of 
a left adjoint covariant pseudo-passage
$\mathrmbf{I}\xrightarrow{\acute{\mathrmbfit{C}}}\mathrmbf{Cxt}$
and a right adjoint contravariant pseudo-passage
$\mathrmbf{I}^{\mathrm{op}}\xrightarrow{\grave{\mathrmbfit{C}}}\mathrmbf{Cxt}$. 
The action on any indexing object $i$ in $\mathrmbf{I}$ is the fiber context
$\mathrmbfit{C}_{i} = \grave{\mathrmbfit{C}}_{i} = \acute{\mathrmbfit{C}}_{i}$
and
the action on any indexing morphism $i\xrightarrow{a}i'$ is the fiber adjunction
$\bigl(\mathrmbfit{C}_{i}\xrightarrow{\mathrmbfit{C}_{a}}\mathrmbfit{C}_{i'}\bigr) = 
\bigl(\mathrmbfit{C}_{i}\xrightarrow{{\langle{\acute{\mathrmbfit{C}}_{a}{\;\dashv\;}\grave{\mathrmbfit{C}}_{a}}\rangle}}\mathrmbfit{C}_{i'}\bigr)$. 
The Grothendieck constructions of component fibration and component opfibration are isomorphic
$\int{\!\grave{\mathrmbfit{C}}}{\;\cong\;}\int{\!\acute{\mathrmbfit{C}}}$
\[\mbox{\footnotesize{$
\bigl({\langle{i,A}\rangle}\xrightarrow{{\langle{a,\acute{f}}\rangle}}{\langle{i',A'}\rangle}\bigr)
{\;\;\;\overset{\cong}{\rightleftarrows}\;\;\;}
\bigl({\langle{i,A}\rangle}\xrightarrow{{\langle{a,\grave{f}}\rangle}}{\langle{i',A'}\rangle}\bigr)
$}\normalsize}\]
via 
the adjoint pair
$A\xrightarrow{\acute{f}}\grave{\mathrmbfit{C}}_{a}(A'){\;\cong\;}
\acute{\mathrmbfit{C}}_{a}(A)\xrightarrow{\grave{f}}A'$.
Define the Grothendieck construction of the bifibration to be the Grothendieck construction of component fibration
$\int{\!\mathrmbfit{C}}{\;\doteq}\int{\!\grave{\mathrmbfit{C}}}$
with projection 
$\int{\!\mathrmbfit{C}}\rightarrow\mathrmbf{I}$.
\end{description}
%


\comment{
A \emph{Grothendieck fibration} (also called a fibered context or just a fibration) 
is a passage $E \xrightarrow{\;p\;} I$ such that the fibers $E_i = p^{-1}(i)$ depend (contravariantly) pseudofunctorially on $i \in I$. 
One also says that $E$ is a fibered context over $I$. 
Dually, in a \emph{(Grothendieck) opfibration} the dependence is covariant.
There is an equivalence of 2-contexts
$∫ Fib(I) \stackrel{\simeq}{\leftrightarrow} [I^{op}, Cat] : \int$
between the 2-context of fibrations over $I$ and the 2-context $[I^{op},Cxt]$ of contravariant pseudo-passages from $I$ to $Cxt$, 
also called $I$-indexed contexts.
The construction 
$\int : [I^{op}, Cat] \to Fib(I) : F \mapsto \int F$ 
of a fibration from a pseudo-passage is sometimes called the \emph{Grothendieck construction}. 
A less ambiguous term for $\int F$ is the oplax colimit of $F$.
%
}


\comment{
A bifibration $\mathrmbf{E}=\int{\!{\mathrmbfit{C}}}\xrightarrow{\;\mathrmbfit{P}\;}\mathrmbf{I}$
(see the paper ''The \texttt{FOLE} Table'' \cite{kent:fole:era:tbl})
is the Grothendieck construction of an indexed adjunction
$\mathrmbf{I}\xrightarrow{{\mathrmbfit{C}}}\mathrmbf{Adj}$,
consisting of 
a left adjoint covariant pseudo-passage
$\mathrmbf{I}\xrightarrow{\acute{\mathrmbfit{C}}}\mathrmbf{Cxt}$
and a right adjoint contravariant pseudo-passage
$\mathrmbf{I}^{\mathrm{op}}\xrightarrow{\grave{\mathrmbfit{C}}}\mathrmbf{Cxt}$. 
}

Given an $\mathrmbf{I}$-morphism $i_{2}\xleftarrow{\,a\,}i_{1}$,
the fiber passage
$\mathrmbf{C}_{i_{2}}\xrightarrow{\;\grave{\mathrmbfit{C}}_{a}\;}\mathrmbf{C}_{i_{1}}$
and injection bridge
$\grave{\mathrmbfit{C}}_{a}{\,\circ\,}\mathrmbfit{inc}_{i_{1}}
\xRightarrow{\,\grave{\iota}_{a}\;\,}
\mathrmbfit{inc}_{i_{2}}$
(Fig.~\ref{fig:incl:bridge:fbr:cxt} {\footnotesize{\textbf{dextro}}})
have adjoints,
the fiber passage
$\mathrmbfit{C}_{i_{2}}\xleftarrow{\;\acute{\mathrmbfit{C}}_{a}\;}\mathrmbfit{C}_{i_{1}}$
and injection bridge
$\mathrmbfit{inc}_{i_{1}}
\xRightarrow{\,\acute{\iota}_{a}\;\,}
\acute{\mathrmbfit{C}}_{a}{\,\circ\,}\mathrmbfit{inc}_{i_{2}}$
(Fig.~\ref{fig:incl:bridge:fbr:cxt} {\footnotesize{\textbf{levo}}}).
%
\footnote{
The fiber adjunction
$\mathrmbfit{C}_{i_{2}}
\xleftarrow{{\langle{\acute{\mathrmbfit{C}}_{a}{\;\dashv\;}\grave{\mathrmbfit{C}}_{a}}\rangle}}
\mathrmbfit{C}_{i_{1}}$
has unit $\mathrmbfit{1}_{\mathrmbfit{C}_{i_{1}}}\xRightarrow{\eta_{a}}\acute{\mathrmbfit{C}}_{a}{\;\circ\;}\grave{\mathrmbfit{C}}_{a}$
with the $\mathrmbfit{C}_{i}$-morphism $A\xrightarrow{\eta_{a}(A)}\grave{\mathrmbfit{C}}_{a}(\acute{\mathrmbfit{C}}_{a}(A))$
as its $A^{\mathrm{th}}$ component, and
has counit $\grave{\mathrmbfit{C}}_{a}{\;\circ\;}\acute{\mathrmbfit{C}}_{a}\xRightarrow{\varepsilon_{a}}\mathrmbfit{1}_{\mathrmbfit{C}_{i_{2}}}$
with the $\mathrmbfit{C}_{i_{2}}$-morphism $\acute{\mathrmbfit{C}}_{a}(\grave{\mathrmbfit{C}}_{a}(A'))\xrightarrow{\varepsilon_{a}(A')}A'$
as its $A'^{\mathrm{th}}$ component.
\begin{center}
{\scriptsize\setlength{\extrarowheight}{3pt}$\begin{array}{r@{\hspace{5pt}=\hspace{5pt}}l}
A_{1}\xrightarrow{\grave{f}}\grave{\mathrmbfit{C}}_{a}(A_{2})
&
A_{1}\xrightarrow{\eta_{a}(A_{1})}\grave{\mathrmbfit{C}}_{a}(\acute{\mathrmbfit{C}}_{a}(A_{1}))
\xrightarrow{\grave{\mathrmbfit{C}}_{a}(\acute{f})}\grave{\mathrmbfit{C}}_{a}(A_{2})
\\
\acute{\mathrmbfit{C}}_{a}(A_{1})\xrightarrow{\acute{f}}A_{2}
&
\acute{\mathrmbfit{C}}_{a}(A_{1})\xrightarrow{\acute{\mathrmbfit{C}}_{a}(\grave{f})}
\acute{\mathrmbfit{C}}_{a}(\grave{\mathrmbfit{C}}_{a}(A_{2}))\xrightarrow{\varepsilon_{a}(A_{2})}A_{2}
\end{array}$}
\end{center}
}
%
%
\begin{figure}
\begin{center}
\begin{tabular}{c}
{{\begin{tabular}{c@{\hspace{30pt}}c}
\textbf{levo} & \textbf{dextro}
\\&\\
{{\begin{tabular}[b]{c}
\setlength{\unitlength}{0.5pt}
\begin{picture}(230,100)(-50,0)
\put(5,80){\makebox(0,0){\footnotesize{$\mathrmbf{C}_{i_{1}}$}}}
\put(125,80){\makebox(0,0){\footnotesize{$\mathrmbf{C}_{i_{2}}$}}}
\put(60,5){\makebox(0,0){\footnotesize{$\int\!\mathrmbf{C}$}}}
\put(60,92){\makebox(0,0){\scriptsize{$\acute{\mathrmbfit{C}}_{a}$}}}
\put(24,38){\makebox(0,0)[r]{\scriptsize{$\mathrmbfit{inc}_{i_{1}}$}}}
\put(97,38){\makebox(0,0)[l]{\scriptsize{$\mathrmbfit{inc}_{i_{2}}$}}}
\put(60,54){\makebox(0,0){\shortstack{\scriptsize{$\acute{\iota}_{a}$}\\\large{$\Longrightarrow$}}}}
\put(25,80){\vector(1,0){70}}
\put(10,67){\vector(3,-4){38}}
\put(111,68){\vector(-3,-4){38}}
\end{picture}
\end{tabular}}}
&
{{\begin{tabular}[b]{c}
\setlength{\unitlength}{0.5pt}
\begin{picture}(230,100)(-50,0)
\put(5,80){\makebox(0,0){\footnotesize{$\mathrmbf{C}_{i_{1}}$}}}
\put(125,80){\makebox(0,0){\footnotesize{$\mathrmbf{C}_{i_{2}}$}}}
\put(60,5){\makebox(0,0){\footnotesize{$\int\!\mathrmbf{C}$}}}
\put(60,92){\makebox(0,0){\scriptsize{$\grave{\mathrmbfit{C}}_{a}$}}}
\put(24,38){\makebox(0,0)[r]{\scriptsize{$\mathrmbfit{inc}_{i_{1}}$}}}
\put(97,38){\makebox(0,0)[l]{\scriptsize{$\mathrmbfit{inc}_{i_{2}}$}}}
\put(60,54){\makebox(0,0){\shortstack{\scriptsize{$\grave{\iota}_{a}$}\\\large{$\Longrightarrow$}}}}
\put(95,80){\vector(-1,0){70}}
\put(9,68){\vector(3,-4){38}}
\put(111,68){\vector(-3,-4){38}}
\end{picture}
\end{tabular}}}
\\
\multicolumn{2}{c}{{\scriptsize{$i_{2}\xleftarrow{\;a\;}i_{1}$}}}
\end{tabular}}}
\\
{\scriptsize\setlength{\extrarowheight}{4pt}$\begin{array}{|@{\hspace{5pt}}l@{\hspace{15pt}}l@{\hspace{5pt}}|}
\multicolumn{1}{l}{\text{\bfseries levo}} & \multicolumn{1}{l}{\text{\bfseries dextro}}
\\ \hline
\acute{\iota}_{a} : 
\mathrmbfit{inc}_{i_{1}}
\Rightarrow
\acute{\mathrmbfit{C}}_{a}{\,\circ\,}\mathrmbfit{inc}_{i_{2}}
&
\grave{\iota}_{a} : 
\grave{\mathrmbfit{C}}_{a}{\,\circ\,}\mathrmbfit{inc}_{i_{1}}
\Rightarrow
\mathrmbfit{inc}_{i_{2}}
\\
\acute{\iota}_{a} = (\eta_{a}{\,\circ\,}\mathrmbfit{inc}_{i_{1}}){\,\bullet\,}(\acute{\mathrmbfit{C}}_{a}{\,\circ\,}\grave{\iota}_{a})
&
\grave{\iota}_{a} = (\grave{\mathrmbfit{C}}_{a}{\,\circ\,}{\iota}_{a}){\,\bullet\,}(\varepsilon_{a}{\,\circ\,}\mathrmbfit{inc}_{i_{2}})
\\\hline
\end{array}$}
%
\end{tabular}
\end{center}
\caption{Inclusion Bridge: Fibered Context}
\label{fig:incl:bridge:fbr:cxt}
\end{figure}
%

Any morphism
${\langle{i_{1},A_{1}}\rangle}
\xrightarrow{{\langle{a,f}\rangle}}
{\langle{1_{2},A_{2}}\rangle}$ 
in the fibered context $\mathrmbf{E}=\int{\!{\mathrmbfit{C}}}$
consists of
a morphism $i_{1}\xrightarrow{a}i_{2}$ in the indexing context $\mathrmbf{I}$
\underline{and}
the adjoint fiber morphisms, 
$\acute{\mathrmbfit{C}}_{a}(A_{1})\xrightarrow{\acute{f}}A_{2}$ in $\mathrmbf{C}_{i_{2}}$ 
or 
$A_{1}\xrightarrow{\grave{f}}\grave{\mathrmbfit{C}}_{a}(A_{2})$ in $\mathrmbf{C}_{i_{1}}$,
with the 
factorization
%
\begin{center}
{{\begin{tabular}{c}
\setlength{\unitlength}{0.35pt}
\begin{picture}(320,180)(-40,15)
\put(0,172){\makebox(0,0){\footnotesize{$
\overset{\langle{i_{1},A_{1}}\rangle}
{\mathrmbfit{inc}_{i_{1}}(A_{1})}
$}}}
\put(240,175){\makebox(0,0){\footnotesize{$
\overset{\langle{i_{2},\acute{\mathrmbfit{C}}_{a}(A_{1})}\rangle}
{\mathrmbfit{inc}_{i_{2}}(\acute{\mathrmbfit{C}}_{a}(A_{1}))}
$}}}
\put(0,-13){\makebox(0,0){\footnotesize{$
\underset{\langle{i_{1},\grave{\mathrmbfit{C}}_{a}(A_{2})}\rangle}
{\mathrmbfit{inc}_{i_{1}}(\grave{\mathrmbfit{C}}_{a}(A_{2}))}
$}}}
\put(240,-13){\makebox(0,0){\footnotesize{$
\underset{\langle{i_{2},A_{2}}\rangle}
{\mathrmbfit{inc}_{i_{2}}(A_{2})}
$}}}
\put(110,180){\makebox(0,0){\scriptsize{${\acute{\iota}_{a}(A_{1})}$}}}
\put(130,20){\makebox(0,0){\scriptsize{${\grave{\iota}_{a}(A_{2})}$}}}
\put(-8,75){\makebox(0,0)[r]{\scriptsize{$
\underset{\langle{\mathrm{1}_{i_{1}},\grave{f}}\rangle}
{\mathrmbfit{inc}_{i_{1}}(\grave{f})}
$}}}
\put(250,75){\makebox(0,0)[l]{\scriptsize{$
\underset{\langle{\mathrm{1}_{i_{2}},\acute{f}}\rangle}
{\mathrmbfit{inc}_{i_{2}}(\acute{f})}
$}}}
\put(140,95){\makebox(0,0){\scriptsize{${\langle{a,f}\rangle}$}}}
%
\put(70,160){\vector(1,0){80}}
\put(90,0){\vector(1,0){80}}
\put(0,130){\vector(0,-1){100}}
\put(240,130){\vector(0,-1){100}}
\put(40,133){\vector(3,-2){160}}
\end{picture}
\end{tabular}}}
\end{center}
%

\newpage

\begin{fact}\label{fact:groth:lim}
If\, $\mathrmbf{I}^{\mathrm{op}}\!\xrightarrow{\mathrmbfit{C}}\mathrmbf{Cxt}$ 
is a contravariant pseudo-passage (indexed context)
s.t.
\begin{enumerate}
\item 
the indexing context $\mathrmbf{I}$ is complete,
\item 
the fiber context $\mathrmbf{C}_{i}$ is complete for each $i\in\mathrmbf{I}$, and
\item 
the fiber passage $\mathrmbf{C}_{i}\xleftarrow{\mathrmbfit{C}_{a}}\mathrmbf{C}_{j}$ is continuous for each $i\xrightarrow{a}j$ in $\mathrmbf{I}$,
\end{enumerate}
then the fibered context (Grothendieck construction) $\int\mathrmbf{C}$ is complete 
and the projection $\int\mathrmbf{C}\xrightarrow{\mathrmbfit{P}}\mathrmbf{I}$ is continuous.
\end{fact}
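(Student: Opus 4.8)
The plan is to build the limit of an arbitrary diagram in $\int\mathrmbf{C}$ component-wise, first in the base $\mathrmbf{I}$ and then in a single fiber, exploiting continuity of the transition passages to transport everything into one place. Let $\mathrmbf{J}\xrightarrow{\mathrmbfit{D}}\int\mathrmbf{C}$ be a diagram, and write its composite with the projection as $\mathrmbf{J}\xrightarrow{\mathrmbfit{D}_{\mathrmbf{I}}}\mathrmbf{I}$, so each object $j$ of $\mathrmbf{J}$ has $\mathrmbfit{D}(j)={\langle{i_{j},A_{j}}\rangle}$ with $A_{j}\in\mathrmbf{C}_{i_{j}}$, and each $\mathrmbf{J}$-morphism $j\xrightarrow{u}j'$ gives an $\mathrmbf{I}$-morphism $i_{j}\xrightarrow{a_{u}}i_{j'}$ together with a fiber morphism $A_{j}\xrightarrow{\grave{f}_{u}}\grave{\mathrmbfit{C}}_{a_{u}}(A_{j'})$ in $\mathrmbf{C}_{i_{j}}$. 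First I would form the limit in the base: by hypothesis~(1), $\mathrmbf{I}$ is complete, so the limit cone $i^{\ast}\xrightarrow{p_{j}}i_{j}$ over $\mathrmbfit{D}_{\mathrmbf{I}}$ exists. Then I would transport each $A_{j}$ back along $p_{j}$ to the fiber over $i^{\ast}$, obtaining $\grave{\mathrmbfit{C}}_{p_{j}}(A_{j})\in\mathrmbf{C}_{i^{\ast}}$, and observe that the family $j\mapsto\grave{\mathrmbfit{C}}_{p_{j}}(A_{j})$ assembles into a diagram $\mathrmbf{J}\xrightarrow{}\mathrmbf{C}_{i^{\ast}}$: for $j\xrightarrow{u}j'$ one uses the pseudo-functoriality coherence $\grave{\mathrmbfit{C}}_{p_{j}}\cong\grave{\mathrmbfit{C}}_{a_{u}\cdot p_{j'}}\cong\grave{\mathrmbfit{C}}_{p_{j'}}\circ\grave{\mathrmbfit{C}}_{a_{u}}$ (valid since $p_{j}=a_{u}\cdot p_{j'}$ in the cone) composed with $\grave{\mathrmbfit{C}}_{p_{j'}}(\grave{f}_{u})$.

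Next I would take the limit of this transported diagram inside the single fiber: by hypothesis~(2), $\mathrmbf{C}_{i^{\ast}}$ is complete, so there is a limit object $A^{\ast}\in\mathrmbf{C}_{i^{\ast}}$ with projections $A^{\ast}\xrightarrow{q_{j}}\grave{\mathrmbfit{C}}_{p_{j}}(A_{j})$. The claim is that ${\langle{i^{\ast},A^{\ast}}\rangle}$, equipped with the morphisms ${\langle{p_{j},q_{j}}\rangle}:{\langle{i^{\ast},A^{\ast}}\rangle}\to{\langle{i_{j},A_{j}}\rangle}$ (here $q_{j}$ is the $\grave{(\;)}$-form of the fiber data of the $\int\mathrmbf{C}$-morphism, with $i^{\ast}\xrightarrow{p_{j}}i_{j}$), is a limit cone for $\mathrmbfit{D}$. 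Verifying it is a cone is the coherence check just indicated. For the universal property, suppose ${\langle{m,B}\rangle}$ is another cone with legs ${\langle{r_{j},s_{j}}\rangle}$; then the $r_{j}$ form a cone over $\mathrmbfit{D}_{\mathrmbf{I}}$ in $\mathrmbf{I}$, inducing a unique $m\xrightarrow{t}i^{\ast}$ with $p_{j}\cdot t=r_{j}$, and applying $\grave{\mathrmbfit{C}}_{t}$ to the legs $s_{j}$ (together with the coherence iso) produces a cone over the transported diagram in $\mathrmbf{C}_{m}$; here is where hypothesis~(3) is essential — continuity of $\grave{\mathrmbfit{C}}_{t}$ lets this cone be compared against the $q_{j}$, inducing a unique fiber mediator $B\xrightarrow{}\grave{\mathrmbfit{C}}_{t}(A^{\ast})$, which packages with $t$ into the unique $\int\mathrmbf{C}$-morphism ${\langle{m,B}\rangle}\to{\langle{i^{\ast},A^{\ast}}\rangle}$.

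Finally, continuity of the projection $\int\mathrmbf{C}\xrightarrow{\mathrmbfit{P}}\mathrmbf{I}$ is immediate from the construction, since by design $\mathrmbfit{P}$ sends the constructed limit cone $({\langle{i^{\ast},A^{\ast}}\rangle},{\langle{p_{j},q_{j}}\rangle})$ to the chosen limit cone $(i^{\ast},p_{j})$ in $\mathrmbf{I}$, and limits are preserved on the nose. The main obstacle I anticipate is purely bookkeeping around the pseudo-functor coherence isomorphisms: since $\grave{\mathrmbfit{C}}$ is only a pseudo-passage, the identifications $\grave{\mathrmbfit{C}}_{p_{j}}\cong\grave{\mathrmbfit{C}}_{p_{j'}}\circ\grave{\mathrmbfit{C}}_{a_{u}}$ are non-identity coherence 2-cells, and one must check they paste together compatibly so that the transported diagram is genuinely a functor $\mathrmbf{J}\to\mathrmbf{C}_{i^{\ast}}$ and that the induced mediating morphism is truly unique; this is the step where a careless argument would silently assume strictness. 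Everything else — existence of the base limit, existence of the fiber limit, and the transport of cones — is a direct appeal to hypotheses (1)–(3).
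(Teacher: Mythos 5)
Your proposal is correct and is essentially the argument this paper relies on: the stated proof is just a citation to Tarlecki--Burstall--Goguen, whose construction is exactly yours --- take the limit $i^{\ast}$ of the projected diagram in $\mathrmbf{I}$, reindex each fiber object along the limit projections via $\grave{\mathrmbfit{C}}_{p_{j}}$, take the limit of the resulting diagram in the single fiber $\mathrmbf{C}_{i^{\ast}}$, and invoke continuity of the transition passages (hypothesis~(3)) precisely where you do, namely to see that $\grave{\mathrmbfit{C}}_{t}$ carries the fiber limit cone to a limit cone in $\mathrmbf{C}_{m}$ so that the mediating fiber morphism exists and is unique. The only blemish is a transposition of indices in your coherence step: since $p_{j}$ followed by $a_{u}$ equals $p_{j'}$ in the cone and $\mathrmbfit{C}$ is contravariant, the comparison morphism should be $\grave{\mathrmbfit{C}}_{p_{j}}(\grave{f}_{u})$ followed by the coherence isomorphism identifying $\grave{\mathrmbfit{C}}_{p_{j}}(\grave{\mathrmbfit{C}}_{a_{u}}(A_{j'}))$ with $\grave{\mathrmbfit{C}}_{p_{j'}}(A_{j'})$ --- exactly the bookkeeping you yourself flag as the delicate point.
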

\begin{proof}
Tarlecki, Burstall and Goguen~\cite{tarlecki:burstall:goguen:91}.
\end{proof}
%

\comment{
\comment{
We give two proofs: n-cats and ind-cats.
\begin{description}
\item[n-cats:] 
Since $\int\mathrmbf{C}\xrightarrow{\mathrmbfit{P}}\mathrmbf{I}$ is a fibration, 
limits in $\int\mathrmbf{C}$ can be constructed out of limits in $\mathrmbf{I}$ 
and in the fiber categories $\{\mathrmbf{C}_{i} \mid i \in \mathrmbf{I}\}$. 

Let $\mathrmbf{A}\xrightarrow{\mathrmbfit{F}}\int\mathrmbf{C}$ be a diagram 
with 
$\{ \mathrmbfit{F}(a) \in \mathrmbf{C}_{i} \mid i = \mathrmbfit{P}(\mathrmbfit{F}(a)), a \in \mathrmbf{A} \}$.
Let $\hat{i}$ be the limit of $\mathrmbfit{F}{\,\circ\,}\mathrmbfit{P} : \mathrmbf{A} \to \mathrmbf{I}$, 
with projections $\{ \hat{i} \xrightarrow{\pi_{a}} i \mid i = \mathrmbfit{P}(\mathrmbfit{F}(a)), a \in \mathrmbf{A} \}$. 
These have fiber passages
$\{ \mathrmbf{C}_{\hat{i}} \xleftarrow{\mathrmbf{C}_{\pi_{a}}} \mathrmbf{C}_{i} \mid i = \mathrmbfit{P}(\mathrmbfit{F}(a)), a \in \mathrmbf{A} \}$. 
The collection
$\{ \hat{\mathrmbfit{F}}(a) = \mathrmbf{C}_{\pi_{a}}(\mathrmbfit{F}(a)) \in \mathrmbf{C}_{\hat{i}} \mid a \in \mathrmbf{A} \}$ 
forms a diagram $\mathrmbf{A}\xrightarrow{\hat{\mathrmbfit{F}}}\mathrmbf{C}_{\hat{i}}$ 
whose limit is the limit of $\mathrmbfit{F}$. 
\item[ind-cats:] 
}
It suffices to prove that $\int\mathrmbf{C}$ has all products and equalisers.
\begin{description}
%
\item[products:] 
A discrete diagram $N\xrightarrow{\mathrmbfit{D}}\int\mathrmbf{C}$
consists of a collection of $\int\mathrmbf{C}$-objects
\begin{equation}\label{eqn:prod:dgm}
\mathrmbfit{D} = \{ (i_{n},a_{n}) \mid i_{n}{\in\,}\mathrmbf{I}, a_{n}{\in\,}\mathrmbf{C}_{i_{n}}, n{\,\in\,}N \}.
\end{equation}
This has the projection diagram of $\mathrmbf{I}$-objects
$\mathrmbfit{D}{\,\circ\,}\mathrmbfit{P} = \{ i_{n} \in \mathrmbf{I} \mid n{\,\in\,}N \}$.
Let $\{ i \xrightarrow{\pi_{n}} i_{n} \mid n{\,\in\,}N \}$ 
be a product cone in $\mathrmbf{I}$ over this projection diagram.
Let $\{ a \xrightarrow{p_{n}} \mathrmbf{C}_{\pi_{n}}(a_{n}) \mid n{\,\in\,}N \}$ be a product cone in $\mathrmbf{C}_{i}$
over the ``flow'' diagram of $\mathrmbf{C}_{i}$-objects
$\{ \mathrmbf{C}_{\pi_{n}}(a_{n}) \mid n{\,\in\,}N \}$.
\newline
\emph{Claim:} 
$\{ (i,a) \xrightarrow{(\pi_{n},p_{n})} (i_{n},a_{n}) \mid n{\,\in\,}N \}$ is the product cone in $\int\mathrmbf{C}$ over diagram $\mathrmbfit{D}$.
\begin{itemize}
\item 
Any cone $\{ (j,b)\xrightarrow{(\sigma_{n},q_{n})}(i_{n},a_{n}) \mid n{\,\in\,}N \}$ in $\int\mathrmbf{C}$
over the diagram $\mathrmbfit{D}$
has the projection cone $\{ j\xrightarrow{\sigma_{n}}i_{n} \mid n{\,\in\,}N \}$ in $\mathrmbf{I}$
and the ``flow'' cone $\{ b\xrightarrow{q_{n}}\mathrmbf{C}_{\sigma_{n}}(a_{n}) \mid n{\,\in\,}N \}$ in $\mathrmbf{C}_{j}$. 
For the projection cone
there exists a unique mediating $\mathrmbf{I}$-morphism $j\xrightarrow{\sigma}i$ 
such that $\sigma{\,\cdot\,}\pi_{n} = \sigma_{n}$ for all $n{\,\in\,}N$.
\item 
Continuity of $\mathrmbf{C}_{j} \xleftarrow{\mathrmbf{C}_{\sigma}} \mathrmbf{C}_{i}$
guarantees that 
$\{ \mathrmbf{C}_{\sigma}(a)\xrightarrow{\mathrmbf{C}_{\sigma}(p_{n})}\mathrmbf{C}_{\sigma_{n}}(a_{n}) \mid n{\,\in\,}N \}$ 
is a product cone in $\mathrmbf{C}_{j}$ of the discrete diagram 
$\{ \mathrmbf{C}_{\sigma}(\mathrmbf{C}_{\pi_{n}}(a_{n})) \cong \mathrmbf{C}_{\sigma_{n}}(a_{n}) \mid n{\,\in\,}N \}$.
Hence, 
there exists a unique mediating morphism $b\xrightarrow{q}\mathrmbf{C}_{\sigma}(a)$ 
in $\mathrmbf{C}_{j}$
such that $q{\,\cdot\,}\mathrmbf{C}_{\sigma}(p_{n}) = q_{n}$ for each $n{\,\in\,}N$.
\item 
Then $(j,b)\xrightarrow{(\sigma,q)}(i, a)$ is a unique morphism in $\int\mathrmbf{C}$ 
such that $(\sigma,q){\,\cdot\,}(\pi_{n},p_{n}) = (\sigma_{n},q_{n})$ for each $n{\,\in\,}N$.
\end{itemize}
\begin{center}
{{\begin{tabular}{c}
\begin{picture}(280,160)(0,0)
\put(20,100){
\setlength{\unitlength}{0.55pt}
\begin{picture}(120,80)(0,0)
\put(0,80){\makebox(0,0){\footnotesize{$j$}}}
\put(0,0){\makebox(0,0){\footnotesize{$i$}}}
\put(120,40){\makebox(0,0){\footnotesize{$i_{n}$}}}
\put(-6,40){\makebox(0,0)[r]{\scriptsize{$\sigma$}}}
\put(65,80){\makebox(0,0){\scriptsize{$\sigma_{n}$}}}
\put(65,0){\makebox(0,0){\scriptsize{$\pi_{n}$}}}
\put(20,80){\vector(3,-1){85}}
\put(20,0){\vector(3,1){85}}
\put(0,65){\vector(0,-1){50}}
\put(45,40){\makebox(0,0){\footnotesize{\textit{in} $\mathrmbf{I}$}}}
\end{picture}}
\put(180,100){
\setlength{\unitlength}{0.55pt}
\begin{picture}(120,80)(0,0)
\put(0,80){\makebox(0,0){\footnotesize{$\mathrmbf{C}_{j}$}}}
\put(0,0){\makebox(0,0){\footnotesize{$\mathrmbf{C}_{i}$}}}
\put(120,40){\makebox(0,0){\footnotesize{$\mathrmbf{C}_{i_{n}}$}}}
\put(-6,40){\makebox(0,0)[r]{\scriptsize{$\mathrmbf{C}_{\sigma}$}}}
\put(65,80){\makebox(0,0){\scriptsize{$\mathrmbf{C}_{\sigma_{n}}$}}}
\put(65,0){\makebox(0,0){\scriptsize{$\mathrmbf{C}_{\pi_{n}}$}}}
\put(105,52){\vector(-3,1){85}}
\put(105,28){\vector(-3,-1){85}}
\put(0,15){\vector(0,1){50}}
\put(40,48){\makebox(0,0){\footnotesize{$\cong$}}}
\put(45,32){\makebox(0,0){\footnotesize{\textit{in} $\mathrmbf{Cxt}$}}}
\end{picture}}
\put(20,20){
\setlength{\unitlength}{0.55pt}
\begin{picture}(120,80)(0,0)
\put(0,80){\makebox(0,0){\footnotesize{$b$}}}
\put(0,0){\makebox(0,0){\footnotesize{$\mathrmbf{C}_{\sigma}(a)$}}}
\put(135,40){\makebox(0,0){\footnotesize{$
\underset{\textstyle{\mathrmbf{C}_{\sigma}(\mathrmbf{C}_{\pi_{n}}(a_{n}))}}
{\mathrmbf{C}_{\sigma_{n}}(a_{n})\;\cong}$}}}
\put(-6,40){\makebox(0,0)[r]{\scriptsize{$q$}}}
\put(65,80){\makebox(0,0){\scriptsize{$q_{n}$}}}
\put(65,0){\makebox(0,0){\scriptsize{$\mathrmbf{C}_{\sigma}(p_{n})$}}}
\put(19,77){\vector(3,-1){63}}
\put(25,5){\vector(3,1){58}}
\put(0,65){\vector(0,-1){50}}
\put(45,40){\makebox(0,0){\footnotesize{\textit{in} $\mathrmbf{C}_{j}$}}}
\end{picture}}
\put(180,20){
\setlength{\unitlength}{0.55pt}
\begin{picture}(120,80)(0,0)
\put(0,80){\makebox(0,0){\footnotesize{$(j,b)$}}}
\put(0,0){\makebox(0,0){\footnotesize{$(i,a)$}}}
\put(120,40){\makebox(0,0){\footnotesize{$(i_{n},a_{n})$}}}
\put(-6,40){\makebox(0,0)[r]{\scriptsize{$(\sigma,q)$}}}
\put(65,80){\makebox(0,0){\scriptsize{$(\sigma_{n},q_{n})$}}}
\put(65,0){\makebox(0,0){\scriptsize{$(\pi_{n},p_{n})$}}}
\put(20,80){\vector(3,-1){85}}
\put(20,0){\vector(3,1){85}}
\put(0,65){\vector(0,-1){50}}
\put(45,40){\makebox(0,0){\footnotesize{\textit{in} $\int\mathrmbf{C}$}}}
\end{picture}}
\end{picture}
\end{tabular}}}
\end{center}
%
\item[equalizers:] 
%
A parallel pair of morphisms in $\int\mathrmbf{C}$
\begin{equation}\label{eqn:par:pr}
{\langle{(\sigma_{1},f_{2}),(\sigma_{2},f_{2})}\rangle} : (i,a) \rightarrow (j,b)
\end{equation}
consists of
a parallel pair of indexing morphisms
${\langle{\sigma_{1},\sigma_{2}}\rangle} : i \rightarrow j$ 
in $\mathrmbf{I}$
and
a span of fiber morphisms
$\mathrmbf{C}_{\sigma_{1}}(b) \xleftarrow{f_{1}} a \xrightarrow{f_{2}} \mathrmbf{C}_{\sigma_{2}}(b)$
in $\mathrmbf{C}_{i}$. 
\begin{itemize}
\item 
Let $\sigma : k \rightarrow i$ be an equaliser of the parallel pair 
${\langle{\sigma_{1},\sigma_{2}}\rangle} : i \rightarrow j$ 
in $\mathrmbf{I}$.
\item 
Let $f : c \rightarrow \mathrmbf{C}_{\sigma}(a)$ be an equaliser in $\mathrmbf{C}_{k}$ of the parallel pair
%
\footnote{Notice that 
$\mathrmbf{C}_{\sigma}(\mathrmbf{C}_{\sigma_{1}}(b)) 
= \mathrmbf{C}_{\sigma{\,\cdot\,}\sigma_{1}}(b)
= \mathrmbf{C}_{\sigma{\,\cdot\,}\sigma_{2}}(b)
= \mathrmbf{C}_{\sigma}(\mathrmbf{C}_{\sigma_{2}}(b))$.}
%
\newline\mbox{}\hfill
${\langle{\mathrmbf{C}_{\sigma}(f_{1}),\mathrmbf{C}_{\sigma}(f_{2})}\rangle}
: \mathrmbf{C}_{\sigma}(a) \rightarrow \mathrmbf{C}_{\sigma}(\mathrmbf{C}_{\sigma_{1}}(b))$.
\hfill\mbox{}
\end{itemize}
\emph{Claim:} $(\sigma,f) : (k,c) \rightarrow (i,a)$ 
is an equaliser of the parallel pair in Eqn.~\ref{eqn:par:pr}.
\begin{itemize}
\item 
By construction, we have
\newline
$(\sigma,f){\,\cdot\,}(\sigma_{1},f_{1})
=
(\sigma{\,\cdot\,}\sigma_{1},f{\,\cdot\,}\mathrmbf{C}_{\sigma}(f_{1}))
=
(\sigma{\,\cdot\,}\sigma_{2},f{\,\cdot\,}\mathrmbf{C}_{\sigma}(f_{2}))
=
(\sigma,f){\,\cdot\,}(\sigma_{2},f_{2})$.
\item 
Assume $(\rho,g) : (m,d) \rightarrow (i,a)$ satisfies
\newline\mbox{}\hfill
$(\rho,g){\,\cdot\,}(\sigma_{1},f_{1}) 
= (\rho,g){\,\cdot\,}(\sigma_{2},f_{2})$
\hfill\mbox{}\newline
in $\int\mathrmbf{C}$;
that is,
$\rho{\,\cdot\,}\sigma_{1} = \rho{\,\cdot\,}\sigma_{2}$ in $\mathrmbf{I}$ and 
$g{\,\cdot\,}\mathrmbf{C}_{\rho}(f_{1}) = g{\,\cdot\,}\mathrmbf{C}_{\rho}(f_{2})$ in $\mathrmbf{C}_{m}$. 
\item 
By universality,
there exists a unique index morphism $\theta : m \rightarrow k$ such that $\theta{\,\cdot\,}\sigma = \rho$ in $\mathrmbf{I}$.
Moreover,
since $\mathrmbf{C}_{\theta}$ is continuous, 
\newline
$\mathrmbf{C}_{\theta}(f) : \mathrmbf{C}_{\theta}(c) \rightarrow \mathrmbf{C}_{\theta}(\mathrmbf{C}_{\sigma}(a))$ 
is an equaliser in $\mathrmbf{C}_{m}$ of the parallel pair 
\newline\mbox{}\hfill
$\underset{{\langle{\mathrmbf{C}_{\rho}(f_{1}),\mathrmbf{C}_{\rho}(f_{2})}\rangle}}
{\underbrace{{\langle{\mathrmbf{C}_{\theta}(\mathrmbf{C}_{\sigma}(f_{1})),\mathrmbf{C}_{\theta}(\mathrmbf{C}_{\sigma}(f_{2}))}\rangle}}} 
:
\underset{\mathrmbf{C}_{\rho}(a)}
{\underbrace{\mathrmbf{C}_{\theta{\,\cdot\,}\sigma}(a)=\mathrmbf{C}_{\theta}(\mathrmbf{C}_{\sigma}(a))}} 
\rightarrow 
\underset{\mathrmbf{C}_{\theta{\,\cdot\,}\sigma{\,\cdot\,}\sigma_{1}}(b)}
{\underbrace{\mathrmbf{C}_{\theta}(\mathrmbf{C}_{\sigma}(\mathrmbf{C}_{\sigma_{1}}(b)))}}$.
\hfill\mbox{}
\item 
By universality,
there is a unique morphism 
$h : d \rightarrow \mathrmbf{C}_{\theta}(f)$ such that $h{\,\cdot\,}\mathrmbf{C}_{\theta}(f) = g$ in $\mathrmbf{C}_{m}$. 
\item 
Therefore,
$(\theta,h) : (m,d) \rightarrow (k,c)$ is the unique morphism in $\int\mathrmbf{C}$ such that
$(\theta,h){\,\cdot\,}(\sigma,f) = (\rho,g)$.
\mbox{}\hfill\rule{5pt}{5pt}
\end{itemize}
\end{description}
}

%
\begin{fact}\label{fact:groth:colim}
If\, $\mathrmbf{I}\!\xrightarrow{\mathrmbfit{C}}\mathrmbf{Cxt}$ 
is a covariant pseudo-passage (indexed context)
s.t.
\begin{enumerate}
\item 
the indexing context $\mathrmbf{I}$ is cocomplete,
\item 
the fiber context $\mathrmbf{C}_{i}$ is cocomplete for each $i\in\mathrmbf{I}$, and
\item 
the fiber passage $\mathrmbf{C}_{i}\xrightarrow{\mathrmbfit{C}_{a}}\mathrmbf{C}_{j}$ is cocontinuous for each $i\xrightarrow{a}j$ in $\mathrmbf{I}$,
\end{enumerate}
then the fibered context (Grothendieck construction) $\int\mathrmbf{C}$ is cocomplete 
and the projection $\int\mathrmbf{C}\xrightarrow{\mathrmbfit{P}}\mathrmbf{I}$ is cocontinuous.
\end{fact}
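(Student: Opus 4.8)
The plan is to obtain this as the formal dual of the limit version, Fact~\ref{fact:groth:lim}. Given the covariant pseudo-passage $\mathrmbf{I}\xrightarrow{\mathrmbfit{C}}\mathrmbf{Cxt}$, I would first form its \emph{fiberwise opposite} $\mathrmbf{I}\xrightarrow{\mathrmbfit{C}^{\mathrm{op}}}\mathrmbf{Cxt}$, sending each indexing object $i$ to the opposite fiber $(\mathrmbf{C}_{i})^{\mathrm{op}}$ and each indexing morphism $i\xrightarrow{a}j$ to the opposite fiber passage $(\mathrmbf{C}_{a})^{\mathrm{op}}:(\mathrmbf{C}_{i})^{\mathrm{op}}\to(\mathrmbf{C}_{j})^{\mathrm{op}}$, with coherence isomorphisms obtained by transposing those of $\mathrmbfit{C}$. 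Regarded as indexed by $\mathrmbf{I}^{\mathrm{op}}$, this is exactly a contravariant pseudo-passage $(\mathrmbf{I}^{\mathrm{op}})^{\mathrm{op}}=\mathrmbf{I}\to\mathrmbf{Cxt}$, so Fact~\ref{fact:groth:lim} is available for it once its three hypotheses are verified.

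The key step is the natural isomorphism of contexts $\int_{\mathrmbf{I}^{\mathrm{op}}}\mathrmbfit{C}^{\mathrm{op}}\;\cong\;\bigl(\int_{\mathrmbf{I}}\mathrmbfit{C}\bigr)^{\mathrm{op}}$, compatible with the projections to $\mathrmbf{I}^{\mathrm{op}}$ in the sense that the left-hand projection corresponds to $\mathrmbfit{P}^{\mathrm{op}}$. Using the opfibration description of $\int_{\mathrmbf{I}}\mathrmbfit{C}$ from \S\ref{append:grothen:construct} --- objects ${\langle{i,A}\rangle}$ and morphisms ${\langle{i,A}\rangle}\xrightarrow{{\langle{a,\grave{f}}\rangle}}{\langle{i',A'}\rangle}$ with a fiber morphism $\acute{\mathrmbfit{C}}_{a}(A)\xrightarrow{\grave{f}}A'$ --- against the fibration description of $\int_{\mathrmbf{I}^{\mathrm{op}}}\mathrmbfit{C}^{\mathrm{op}}$, one matches the objects on the nose and sees that a morphism of the latter is precisely a morphism of $\int_{\mathrmbf{I}}\mathrmbfit{C}$ read backwards: composition and the pseudo-functorial structure cells correspond under transposition. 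I expect this bookkeeping --- in particular checking that the coherence isomorphisms of $\mathrmbfit{C}^{\mathrm{op}}$ genuinely dualize those of $\mathrmbfit{C}$, so that the comparison is an isomorphism of contexts and not merely a bijection on objects and arrows --- to be the only real work, and it is a routine unwinding of the definitions already set out in the appendix.

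Finally I would translate hypotheses and conclusion across this isomorphism. Hypothesis~1 ($\mathrmbf{I}$ cocomplete) is equivalent to $\mathrmbf{I}^{\mathrm{op}}$ being complete; hypothesis~2 ($\mathrmbf{C}_{i}$ cocomplete for each $i$) to $(\mathrmbf{C}_{i})^{\mathrm{op}}$ being complete; hypothesis~3 ($\mathrmbf{C}_{a}$ cocontinuous) to $(\mathrmbf{C}_{a})^{\mathrm{op}}$ being continuous. Thus Fact~\ref{fact:groth:lim} yields that $\int_{\mathrmbf{I}^{\mathrm{op}}}\mathrmbfit{C}^{\mathrm{op}}$ is complete with continuous projection to $\mathrmbf{I}^{\mathrm{op}}$; passing back through the isomorphism of the previous paragraph, $\bigl(\int_{\mathrmbf{I}}\mathrmbfit{C}\bigr)^{\mathrm{op}}$ is complete with $\mathrmbfit{P}^{\mathrm{op}}$ continuous, i.e. $\int_{\mathrmbf{I}}\mathrmbfit{C}$ is cocomplete and $\mathrmbfit{P}$ cocontinuous, as asserted. (Equivalently one could cite the opfibration half of Tarlecki, Burstall and Goguen~\cite{tarlecki:burstall:goguen:91} directly, or dualize the explicit coproduct/coequalizer construction sketched for Fact~\ref{fact:groth:lim} --- pushing fibers forward along colimit coprojections rather than pulling them back along limit projections; the duality route above avoids repeating that computation.)
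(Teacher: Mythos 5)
Your argument is correct and is exactly what the paper intends: its entire proof of Fact~\ref{fact:groth:colim} is the single line ``Dual to the above,'' i.e.\ dualize Fact~\ref{fact:groth:lim}, and your fiberwise-opposite construction together with the isomorphism $\int_{\mathrmbf{I}^{\mathrm{op}}}\mathrmbfit{C}^{\mathrm{op}}\cong\bigl(\int_{\mathrmbf{I}}\mathrmbfit{C}\bigr)^{\mathrm{op}}$ is just that duality spelled out. No gap; you have simply made explicit the bookkeeping the paper leaves implicit.
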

\begin{proof}
Dual to the above.
\mbox{}\hfill\rule{5pt}{5pt}
\end{proof}
%


%
\begin{fact}\label{fact:groth:adj:lim:colim}
If\, $\mathrmbf{I}\!\xrightarrow{\mathrmbfit{C}}\mathrmbf{Adj}$ 
is an indexed adjunction consisting of 
a contravariant pseudo-passage 
$\mathrmbf{I}^{\mathrm{op}}\xrightarrow{\grave{\mathrmbfit{C}}}\mathrmbf{Cxt}$
and
a covariant pseudo-passage
$\mathrmbf{I}\!\xrightarrow{\acute{\mathrmbfit{C}}}\mathrmbf{Cxt}$
that are locally adjunctive
$\bigl(
\mathrmbfit{C}_{i}\xrightarrow{{\langle{\acute{\mathrmbfit{C}}_{a}{\;\dashv\;}\grave{\mathrmbfit{C}}_{a}}\rangle}}\mathrmbfit{C}_{i'}
\bigr)$
for each $i\xrightarrow{a}j$ in $\mathrmbf{I}$,
s.t.
\begin{enumerate}
\item 
the indexing context $\mathrmbf{I}$ is complete and cocomplete,
\item 
the fiber context $\mathrmbf{C}_{i}$ is complete and cocomplete for each $i\in\mathrmbf{I}$,
\end{enumerate}
then the fibered context (Grothendieck construction) $\int\mathrmbf{C}\rightarrow\mathrmbf{I}$ is complete and cocomplete
and the projection
$\int\mathrmbf{C}\rightarrow\mathrmbf{I}$
is continuous and cocontinuous.
\end{fact}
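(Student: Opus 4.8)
The plan is to reduce the statement to the two one-sided results already in hand, Fact~\ref{fact:groth:lim} for limits and Fact~\ref{fact:groth:colim} for colimits, by exploiting the two equivalent presentations of the Grothendieck construction of a bifibration. Recall from the discussion of bifibrations above that $\int\mathrmbf{C}$ is \emph{defined} as $\int\grave{\mathrmbfit{C}}$, the Grothendieck construction of the component fibration $\mathrmbf{I}^{\mathrm{op}}\xrightarrow{\grave{\mathrmbfit{C}}}\mathrmbf{Cxt}$, and that there is an isomorphism $\int\grave{\mathrmbfit{C}}\cong\int\acute{\mathrmbfit{C}}$ with the Grothendieck construction of the component opfibration $\mathrmbf{I}\xrightarrow{\acute{\mathrmbfit{C}}}\mathrmbf{Cxt}$. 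This isomorphism is the identity ${\langle{i,A}\rangle}\mapsto{\langle{i,A}\rangle}$ on indexing data and acts by the adjoint transpose $A\xrightarrow{\acute{f}}\grave{\mathrmbfit{C}}_{a}(A')\leftrightarrow\acute{\mathrmbfit{C}}_{a}(A)\xrightarrow{\grave{f}}A'$ on fiber morphisms, so in particular it strictly commutes with the two projections onto $\mathrmbf{I}$.

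First I would establish completeness of $\int\mathrmbf{C}$ and continuity of the projection by verifying the hypotheses of Fact~\ref{fact:groth:lim} for $\grave{\mathrmbfit{C}}$. The indexing context $\mathrmbf{I}$ is complete by hypothesis~(1); each fiber context $\mathrmbf{C}_{i}$ is complete by hypothesis~(2); and for each $i\xrightarrow{a}j$ in $\mathrmbf{I}$ the fiber passage $\mathrmbf{C}_{i}\xleftarrow{\grave{\mathrmbfit{C}}_{a}}\mathrmbf{C}_{j}$ is continuous because it is the right adjoint in the locally given fiber adjunction $\mathrmbfit{C}_{i}\xrightarrow{{\langle{\acute{\mathrmbfit{C}}_{a}{\;\dashv\;}\grave{\mathrmbfit{C}}_{a}}\rangle}}\mathrmbfit{C}_{j}$, and right adjoints preserve limits. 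Hence Fact~\ref{fact:groth:lim} yields that $\int\mathrmbf{C}=\int\grave{\mathrmbfit{C}}$ is complete and that $\int\mathrmbf{C}\rightarrow\mathrmbf{I}$ is continuous.

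Dually, I would establish cocompleteness and cocontinuity by applying Fact~\ref{fact:groth:colim} to the covariant pseudo-passage $\acute{\mathrmbfit{C}}$: $\mathrmbf{I}$ is cocomplete by~(1), each $\mathrmbf{C}_{i}$ is cocomplete by~(2), and each fiber passage $\mathrmbf{C}_{i}\xrightarrow{\acute{\mathrmbfit{C}}_{a}}\mathrmbf{C}_{j}$ is cocontinuous, being a left adjoint. Thus $\int\acute{\mathrmbfit{C}}$ is cocomplete with cocontinuous projection to $\mathrmbf{I}$; transporting along the isomorphism $\int\mathrmbf{C}=\int\grave{\mathrmbfit{C}}\cong\int\acute{\mathrmbfit{C}}$, which is compatible with the two projections, gives that $\int\mathrmbf{C}$ is cocomplete and that $\int\mathrmbf{C}\rightarrow\mathrmbf{I}$ is cocontinuous.

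The only point requiring any care — and the place I expect to spend the most words — is the bookkeeping in this last step: one must confirm that the canonical isomorphism $\int\grave{\mathrmbfit{C}}\cong\int\acute{\mathrmbfit{C}}$ genuinely commutes with the two projection passages (it does, since it is the identity on indexing morphisms), so that the property ``cocomplete with cocontinuous projection'' is transported from the opfibration presentation to the fibration presentation, rather than merely ``cocomplete''. Everything else is a direct invocation of Facts~\ref{fact:groth:lim} and~\ref{fact:groth:colim} together with the standard fact that each half of an adjunction preserves (co)limits on the appropriate side.
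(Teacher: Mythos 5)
Your proposal is correct and follows essentially the same route as the paper: the paper's proof likewise invokes Fact~\ref{fact:groth:lim} and Fact~\ref{fact:groth:colim}, observing that each fiber passage $\grave{\mathrmbfit{C}}_{a}$ is continuous as a right adjoint and each $\acute{\mathrmbfit{C}}_{a}$ is cocontinuous as a left adjoint. Your additional remark that the canonical isomorphism $\int\grave{\mathrmbfit{C}}\cong\int\acute{\mathrmbfit{C}}$ commutes with the projections to $\mathrmbf{I}$ is a sensible piece of bookkeeping that the paper leaves implicit, but it does not change the argument.
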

\begin{proof}
Use Facts.~\ref{fact:groth:lim}~\&~\ref{fact:groth:colim},
since the fiber passage $\mathrmbf{C}_{i}\xleftarrow{\grave{\mathrmbfit{C}}_{a}}\mathrmbf{C}_{i'}$ is continuous (being right adjoint)
and the fiber passage $\mathrmbf{C}_{i}\xrightarrow{\acute{\mathrmbfit{C}}_{a}}\mathrmbf{C}_{i'}$ is cocontinuous (being left adjoint)
for each $i\xrightarrow{a}j$ in $\mathrmbf{I}$.
\mbox{}\hfill\rule{5pt}{5pt}
\end{proof}
%

\newpage
\subsubsection{Diagram Contexts}\label{sub:sub:sec:lax:comma:cxt}

\comment{ 
We generalize 
the comma context $(\mathrmbf{C} \downarrow \mathrmbfit{G})$
for the opspan
$\mathrmbf{C} \xrightarrow{\mathrmbfit{1}} 
\mathrmbf{C} 
\xleftarrow{\mathrmbfit{G}} \mathrmbf{D}$
of contexts and passages
\underline{to} 
the lax comma context 
$(\mathcal{C} \Downarrow \mathcal{G})$
for
the opspan
$\mathcal{C}
\xRightarrow{{1}} 
\mathcal{C} 
\xLeftarrow{\mathcal{G}}
\mathcal{D}$
of 2-contexts and 2-passages.
%
%
\footnote{In particular,
the lax comma context 
$\bigl(\mathrmbf{SET}{\;\Downarrow\,}\mathring{\mathrmbfit{tup}}\bigr)$
for the tuple passage 
$\mathrmbf{DOM}^{\mathrm{op}}
\!\xrightarrow{\mathring{\mathrmbfit{tup}}}
\mathrmbf{SET}
$.
(see \S\,\ref{sub:sec:append:tuple}).}
%
The lax comma context 
$(\mathcal{C} \Downarrow \mathcal{G})$
is defined as follows: 
\begin{itemize}
\item 
an object 
of $(\mathcal{C} \Downarrow \mathcal{G})$
is a triple $(c,d,e)$, 
where $c \in \mathcal{C}$, $d \in \mathcal{D}$, 
and $c \xrightarrow{e} \mathcal{G}(d)$ is a morphism in $\mathcal{C}$; and
\item 
a morphism 
of $(\mathcal{C} \Downarrow \mathcal{G})$
is a pair 
$(c_1,d_1, e_1) \stackrel{(f,g)}{\to} (c_2,d_2, e_2)$,
where $f : c_1 \to c_2 \in \mathrmbf{C}$ 
and $g : d_1 \to d_2 \in \mathrmbf{D}$, 
such that 
$e_1 \cdot \mathrmbfit{G}(g)
 = 
f \cdot e_2$.
%
\begin{equation}
\label{def:comma:cxt:mor}
{{\begin{tabular}{c}
\setlength{\unitlength}{0.47pt}
\begin{picture}(140,90)(0,-5)
\put(0,80){\makebox(0,0){\scriptsize{$c_2$}}}
\put(140,80){\makebox(0,0){\scriptsize{$c_1$}}}
\put(0,0){\makebox(0,0){\scriptsize{$\mathcal{G}(d_2)$}}}
\put(140,0){\makebox(0,0){\scriptsize{$\mathcal{G}(d_1)$}}}
\put(-7,40){\makebox(0,0)[r]{\scriptsize{$e_2$}}}
\put(147,40){\makebox(0,0)[l]{\scriptsize{$e_1$}}}
\put(70,90){\makebox(0,0){\scriptsize{$f$}}}
\put(70,-12){\makebox(0,0){\scriptsize{$\mathcal{G}(g)$}}}
\put(105,80){\vector(-1,0){70}}
\put(105,0){\vector(-1,0){70}}
\put(0,65){\vector(0,-1){50}}
\put(140,65){\vector(0,-1){50}}
\put(70,40){\makebox(0,0){\footnotesize{$\mathcal{C}$}}}
\end{picture}
\end{tabular}}}
\end{equation}
Composition is defined component-wise.
\end{itemize}
A special case is
the lax comma context $(\mathcal{C} \Downarrow \mathcal{G})$
for the opspan
$\mathcal{C} \xrightarrow{\mathrmbfit{1}} 
\mathcal{C} 
\xleftarrow{\mathcal{G}} \mathcal{D}$.
\mbox{}\newline
\rule{140pt}{1pt}{\fbox{\textbf{clarification}}}\rule{140pt}{1pt}
\newline
$\mathrmbf{DB} = \mathrmbf{Tbl}^{\scriptscriptstyle{\Downarrow}}$
is the oplax comma context over $\mathrmbf{Tbl}$.
\newline
{\footnotesize{$\mathrmbf{DB}
\subseteq
\bigl(\mathrmbf{SET}{\;\Downarrow\,}\mathring{\mathrmbfit{tup}}\bigr)
$}}
is the sub comma context over $\mathrmbf{SET}$.
\mbox{}\newline
\rule{344pt}{1pt}
\newline
%
} 

For any base mathematical context $\mathrmbf{A}$,
there are two contexts of lax diagrams 
with 
indexing (shape) contexts.
\begin{definition}\label{def:lax:oplax}\mbox{}
\begin{itemize}
\item[$\mathrmbf{DOM}:$] 
The lax comma context over $\mathrmbf{A}$ is defined to be
the co-variant lax diagram context
$\mathrmbf{A}^{\!\scriptscriptstyle{\Uparrow}} 
= \bigl({(\mbox{-})}{\,\Uparrow\,}\mathrmbf{A}\bigr)
= \bigl(\mathrmbf{Cxt}{\,\Uparrow\,}\mathrmbf{A}\bigr)$.
An object of $\mathrmbf{A}^{\!\scriptscriptstyle{\Uparrow}}$ 
is a pair ${\langle{\mathrmbf{B},\mathrmbfit{F}}\rangle}$ consisting of a
context $\mathrmbf{B}$ and a passage 
(diagram)
$\mathrmbf{B}\xrightarrow{\;\mathrmbfit{F}\;}\mathrmbf{A}$.
A morphism of $\mathrmbf{A}^{\!\scriptscriptstyle{\Uparrow}}$ 
is a pair 
${\langle{\mathrmbf{B}_{2},\mathrmbfit{F}_{2}}\rangle}
\xrightarrow{\;{\langle{\mathrmbfit{G},\alpha}\rangle}\;}
{\langle{\mathrmbf{B}_{1},\mathrmbfit{F}_{1}}\rangle}$
consisting of shape varying passage
$\mathrmbf{B}_{2}\xrightarrow{\;\mathrmbfit{G}\;}\mathrmbf{B}_{1}$
and a bridge
$\mathrmbfit{F}_{2}\xRightarrow{\;\alpha\;}\mathrmbfit{G}{\;\circ\;}\mathrmbfit{F}_{1}$.
The fiber at shape context
$\mathrmbf{B}$ is the context of diagrams
$\mathrmbf{A}^{\mathrmbf{B}}
= \bigl[\mathrmbf{B},\mathrmbf{A}\bigr]$.
\newline
\begin{center}
{{\begin{tabular}{c}
\setlength{\unitlength}{0.5pt}
\begin{picture}(120,80)(0,-10)
\put(0,80){\makebox(0,0){\footnotesize{$\mathrmbf{B}_{2}$}}}
\put(120,80){\makebox(0,0){\footnotesize{$\mathrmbf{B}_{1}$}}}
\put(62,5){\makebox(0,0){\footnotesize{$\mathrmbf{A}$}}}
\put(60,92){\makebox(0,0){\scriptsize{$\mathrmbfit{G}$}}}
\put(20,42){\makebox(0,0)[r]{\scriptsize{$\mathrmbfit{F}_{2}$}}}
\put(100,42){\makebox(0,0)[l]{\scriptsize{$\mathrmbfit{F}_{1}$}}}
\put(60,57){\makebox(0,0){\shortstack{\scriptsize{$\alpha$}\\\large{$\Longrightarrow$}}}}
\put(20,80){\vector(1,0){80}}
\put(10,67){\vector(3,-4){38}}
\put(110,68){\vector(-3,-4){38}}
\end{picture}
\end{tabular}}}
\end{center}
\item[$\mathrmbf{DB}:$]  
The oplax comma context over $\mathrmbf{A}$ is defined to be
the contra-variant lax diagram context
$\mathrmbf{A}^{\!\scriptscriptstyle{\Downarrow}} 
= \bigl({(\mbox{-})}^{\mathrm{op}}{\,\Downarrow\,}\mathrmbf{A}\bigr)
\cong \bigl(\mathrmbf{Cxt}{\,\Downarrow\,}\mathrmbf{A}\bigr)$.
An object of $\mathrmbf{A}^{\!\scriptscriptstyle{\Downarrow}}$ 
is a pair ${\langle{\mathrmbf{B},\mathrmbfit{F}}\rangle}$ consisting of a
context $\mathrmbf{B}$ and a passage (diagram) 
$\mathrmbf{B}^{\mathrm{op}}\xrightarrow{\;\mathrmbfit{F}\;}\mathrmbf{A}$.
A morphism of $\mathrmbf{A}^{\!\scriptscriptstyle{\Downarrow}}$ 
is a pair ${\langle{\mathrmbf{B}_{2},\mathrmbfit{F}_{2}}\rangle}
\xleftarrow{\;{\langle{\mathrmbfit{G},\alpha}\rangle}\;}
{\langle{\mathrmbf{B}_{1},\mathrmbfit{F}_{1}}\rangle}$
consisting of a shape passage
$\mathrmbf{B}_{2}\xrightarrow{\;\mathrmbfit{G}\;}\mathrmbf{B}_{1}$
and a bridge
$\mathrmbfit{F}_{2}
\xLeftarrow{\;\alpha\;}\mathrmbfit{G}^{\mathrm{op}}{\;\circ\;}\mathrmbfit{F}_{1}$.
%
\footnote{Notice the direction of the morphisms.
This is opposite the definition of the ``super-comma'' context in
Mac~Lane~\cite{maclane:71}.}
%
The fiber at shape context
$\mathrmbf{B}$ is the context of diagrams
$\mathrmbf{A}^{\mathrmbf{B}^{\mathrm{op}}}
= \bigl[\mathrmbf{B}^{\mathrm{op}},\mathrmbf{A}\bigr]$.
\newline
\begin{center}
{{\begin{tabular}{c}
\setlength{\unitlength}{0.5pt}
\begin{picture}(120,80)(0,0)
\put(5,80){\makebox(0,0){\footnotesize{$\mathrmbf{B}_{2}^{\mathrm{op}}$}}}
\put(125,80){\makebox(0,0){\footnotesize{$\mathrmbf{B}_{1}^{\mathrm{op}}$}}}
\put(62,5){\makebox(0,0){\footnotesize{$\mathrmbf{A}$}}}
\put(65,92){\makebox(0,0){\scriptsize{$\mathrmbfit{G}^{\mathrm{op}}$}}}
\put(20,42){\makebox(0,0)[r]{\scriptsize{$\mathrmbfit{F}_{2}$}}}
\put(100,42){\makebox(0,0)[l]{\scriptsize{$\mathrmbfit{F}_{1}$}}}
\put(62,57){\makebox(0,0){\shortstack{\scriptsize{$\alpha$}\\\large{$\Longleftarrow$}}}}
\put(20,80){\vector(1,0){80}}
\put(10,67){\vector(3,-4){38}}
\put(110,68){\vector(-3,-4){38}}
\end{picture}
\end{tabular}}}
\end{center}
\end{itemize}
\end{definition}
Composition is component-wise.
%
\footnote{The operator $(\text{-})^{\propto}$
applies the opposite to all elements of a lax diagram
(contexts, passages and bridges).
Hence,
$\bigl(\mathrmbf{A}^{\!\scriptscriptstyle{\Downarrow}}\bigr)^{\propto} 
= \bigl({(\mbox{-})}^{\mathrm{op}}{\,\Downarrow\,}\mathrmbf{A}\bigr)^{\propto}
{\triangleq\;}
\bigl({(\mbox{-})}{\,\Uparrow\,}\mathrmbf{A}^{\mathrm{op}}\bigr)
= \bigl(\mathrmbf{A}^{\mathrm{op}}\bigr)^{\!\scriptscriptstyle{\Uparrow}}$
and
$\bigl(\mathrmbf{A}^{\!\scriptscriptstyle{\Uparrow}}\bigr)^{\propto}
= \bigl({(\mbox{-})}{\,\Uparrow\,}\mathrmbf{A}\bigr)^{\propto}
{\triangleq\;}
{(\mbox{-})}^{\mathrm{op}}{\,\Downarrow\,}(\mathrmbf{A}^{\mathrm{op}})
= (\mathrmbf{A}^{\mathrm{op}})^{\!\scriptscriptstyle{\Downarrow}}$.
\begin{center}
{{\begin{tabular}{c@{\hspace{20pt}}c@{\hspace{15pt}}c}
{{\begin{tabular}{c}
\setlength{\unitlength}{0.5pt}
\begin{picture}(100,80)(-45,0)
\put(-38,70){\makebox(0,0){\footnotesize{$\mathrmbf{B}_{2}^{\mathrm{op}}$}}}
\put(42,70){\makebox(0,0){\footnotesize{$\mathrmbf{B}_{1}^{\mathrm{op}}$}}}
\put(0,0){\makebox(0,0){\footnotesize{$\mathrmbf{A}$}}}
\put(-30,40){\makebox(0,0)[r]{\scriptsize{$\mathrmbfit{H}_{2}$}}}
\put(30,40){\makebox(0,0)[l]{\scriptsize{$\mathrmbfit{H}_{1}$}}}
\put(5,80){\makebox(0,0){\scriptsize{$\mathrmbfit{G}^{\mathrm{op}}$}}}
\put(-33,56){\vector(1,-2){22}}
\put(33,56){\vector(-1,-2){22}}
\put(-23,70){\vector(1,0){46}}
\put(0,48){\makebox(0,0){\shortstack{\normalsize{$\xLeftarrow{\;\;\alpha\;}$}}}}
\put(8,35){\makebox(0,0){\large{$\Biggl(\;\;\;\;\;\;\;\;\;\;\;\;\;\;\;\;\;\;\Biggr)^{\!\propto}$}}}
\end{picture}
\end{tabular}}}
&
{{\begin{tabular}{c}
\setlength{\unitlength}{0.55pt}
\begin{picture}(0,80)(0,0)
\put(0,40){\makebox(0,0){\normalsize{$\triangleq$}}}
\end{picture}
\end{tabular}}}
&
{{\begin{tabular}{c}
\setlength{\unitlength}{0.52pt}
\begin{picture}(100,80)(-45,0)
\put(-38,70){\makebox(0,0){\footnotesize{$\mathrmbf{B}_{2}$}}}
\put(42,70){\makebox(0,0){\footnotesize{$\mathrmbf{B}_{1}$}}}
\put(10,0){\makebox(0,0){\footnotesize{$\mathrmbf{A}^{\mathrm{op}}$.}}}
\put(-30,40){\makebox(0,0)[r]{\scriptsize{$\mathrmbfit{H}_{2}^{\mathrm{op}}$}}}
\put(30,40){\makebox(0,0)[l]{\scriptsize{$\mathrmbfit{H}_{1}^{\mathrm{op}}$}}}
\put(0,80){\makebox(0,0){\scriptsize{$\mathrmbfit{G}$}}}
\put(-33,56){\vector(1,-2){22}}
\put(33,56){\vector(-1,-2){22}}
\put(-23,70){\vector(1,0){46}}
\put(0,48){\makebox(0,0){\shortstack{\normalsize{$\xRightarrow{\;\alpha^{\mathrm{op}}\!}$}}}}
\end{picture}
\end{tabular}}}
\\
{\scriptsize{$\Bigl({\langle{\mathrmbf{B}_{2},\mathrmbfit{H}_{2}}\rangle}
\xleftarrow{{\langle{\mathrmbfit{G},\alpha}\rangle}}
{\langle{\mathrmbf{B}_{1},\mathrmbfit{H}_{1}}\rangle}\Bigr)^{\propto}$}}
& $\triangleq$ & 
{\scriptsize{${\langle{\mathrmbf{B}_{2},\mathrmbfit{H}^{\mathrm{op}}_{2}}\rangle}
\xrightarrow{{\langle{\mathrmbfit{G},\alpha^{\mathrm{op}}}\rangle}}
{\langle{\mathrmbf{B}_{1},\mathrmbfit{H}^{\mathrm{op}}_{1}}\rangle}$}}
\end{tabular}}}
\end{center}}

\paragraph{Example.}
%
The context 
$\mathrmbf{SET}
= \mathrmbf{Set}^{\scriptscriptstyle{\Downarrow}}
= \bigl({(\mbox{-})}^{\mathrm{op}}{\,\Downarrow\,}\mathrmbf{Set}\bigr)
\cong \bigl(\mathrmbf{Cxt}{\,\Downarrow\,}\mathrmbf{Set}\bigr)
$
is the oplax comma context 
over
$\mathrmbf{Set}$.
Hence,
an object
${\langle{\mathrmbf{B},\mathrmbfit{F}}\rangle}$
in $\mathrmbf{SET}$
is a diagram
$\mathrmbf{B}^{\mathrm{op}} \xrightarrow{\,\mathrmbfit{F}\;} \mathrmbf{A}$,
and
a morphism
${\langle{\mathrmbf{B}_{2},\mathrmbfit{F}_{2}}\rangle}
\xleftarrow{\;{\langle{\mathrmbfit{G},\alpha}\rangle}\;}
{\langle{\mathrmbf{B}_{1},\mathrmbfit{F}_{1}}\rangle}$in $\mathrmbf{SET}$
is
a diagram morphism 
with shape passage
$\mathrmbf{B}_{2}\xrightarrow{\;\mathrmbfit{G}\;}\mathrmbf{B}_{1}$
and bridge
$\mathrmbfit{F}_{2}
\xLeftarrow{\;\alpha\;}\mathrmbfit{G}^{\mathrm{op}}{\;\circ\;}\mathrmbfit{F}_{1}$.
%
\comment{
\begin{figure}
\begin{center}
{{\begin{tabular}{c}
\setlength{\unitlength}{0.56pt}
\begin{picture}(120,80)(8,0)
\put(5,80){\makebox(0,0){\footnotesize{$\mathrmbf{B}_{2}^{\mathrm{op}}$}}}
\put(125,80){\makebox(0,0){\footnotesize{$\mathrmbf{B}_{1}^{\mathrm{op}}$}}}
\put(65,0){\makebox(0,0){\footnotesize{$\mathrmbf{Set}$}}}
\put(60,92){\makebox(0,0){\scriptsize{$\mathrmbfit{G}^{\mathrm{op}}$}}}
\put(20,42){\makebox(0,0)[r]{\scriptsize{$\mathrmbfit{F}_{2}$}}}
\put(100,42){\makebox(0,0)[l]{\scriptsize{$\mathrmbfit{F}_{1}$}}}
\put(60,57){\makebox(0,0){\shortstack{\scriptsize{$\alpha$}\\\large{$\Longleftarrow$}}}}
\put(20,80){\vector(1,0){80}}
\put(10,67){\vector(3,-4){38}}
\put(110,68){\vector(-3,-4){38}}
\end{picture}
\end{tabular}}}
\end{center}
\caption{Morphism: $\mathrmbf{SET}$}
\label{fig:SET:mor}
\end{figure}
}
%

%
\paragraph{Change of Base.}
%
Any passage $\mathrmbf{A}\xrightarrow{\mathrmbfit{H}}\mathrmbf{B}$
defines lax diagram morphisms.
\begin{center}
{{\footnotesize{\begin{tabular}{|@{\hspace{10pt}}c
@{\hspace{10pt}}|}
\hline
$\mathrmbf{A}^{\!\scriptscriptstyle{\Downarrow}} 
= \bigl({(\mbox{-})}^{\mathrm{op}}{\,\Downarrow\,}\mathrmbf{A}\bigr)
\xrightarrow[{(\text{-})}{\,\circ\,}\mathrmbfit{H}]{\;\mathrmbfit{H}^{\!\scriptscriptstyle{\Downarrow}}\;}
\bigl({(\mbox{-})}^{\mathrm{op}}{\,\Downarrow\,}\mathrmbf{B}\bigr) = 
\mathrmbf{B}^{\!\scriptscriptstyle{\Downarrow}}$
\\
$\bigl(\mathrmbf{A}^{\!\scriptscriptstyle{\Downarrow}}\bigr)^{\propto}
\cong \bigl(\mathrmbf{A}^{\mathrm{op}}\bigr)^{\!\scriptscriptstyle{\Uparrow}}
\xrightarrow
[{(\text{-})^{\mathrm{op}}}{\,\circ\,}\mathrmbfit{H}^{\mathrm{op}}]
{\bigl(\mathrmbfit{H}^{\!\scriptscriptstyle{\Downarrow}}\bigr)^{\propto}
=\;\bigl(\mathrmbfit{H}^{\mathrm{op}}\bigr)^{\!\scriptscriptstyle{\Uparrow}}}
\bigl(\mathrmbf{B}^{\mathrm{op}}\bigr)^{\!\scriptscriptstyle{\Uparrow}} \cong 
\bigl(\mathrmbf{B}^{\!\scriptscriptstyle{\Downarrow}}\bigr)^{\propto}$
\\\hline
$\mathrmbf{A}^{\!\scriptscriptstyle{\Uparrow}} 
= \bigl({(\mbox{-})}{\,\Uparrow\,}\mathrmbf{A}\bigr)
\xrightarrow[{(\text{-})}{\,\circ\,}\mathrmbfit{H}]{\;\mathrmbfit{H}^{\!\scriptscriptstyle{\Uparrow}}\;}
\bigl({(\mbox{-})}{\,\Uparrow\,}\mathrmbf{B}\bigr) = 
\mathrmbf{B}^{\!\scriptscriptstyle{\Uparrow}}$
\\
$
(\mathrmbf{A}^{\!\scriptscriptstyle{\Uparrow}})^{\propto} 
{\,\cong\,}
(\mathrmbf{A}^{\!\mathrm{op}})^{\scriptscriptstyle{\Downarrow}} 
\xrightarrow
[{(\text{-})}{\,\circ\,}\mathrmbfit{H}^{\mathrm{op}}]
{\;(\mathrmbfit{H}^{\!\scriptscriptstyle{\Uparrow}})^{\propto}
{\,\cong\,}(\mathrmbfit{H}^{\mathrm{op}})^{\!\scriptscriptstyle{\Downarrow}}\;}
(\mathrmbf{B}^{\!\mathrm{op}})^{\scriptscriptstyle{\Downarrow}}
{\,\cong\,}
(\mathrmbf{B}^{\!\scriptscriptstyle{\Uparrow}})^{\propto}
$
\\\hline
\end{tabular}}}}
\end{center}
The passage 
$\mathrmbf{A}^{\!\scriptscriptstyle{\Downarrow}} 
\xrightarrow[{(\text{-})}{\,\circ\,}\mathrmbfit{H}]{\;\mathrmbfit{H}^{\!\scriptscriptstyle{\Downarrow}}\;}
\mathrmbf{B}^{\!\scriptscriptstyle{\Downarrow}}$
is defined as follows.
$\mathrmbf{A}^{\!\scriptscriptstyle{\Uparrow}} 
\xrightarrow[{(\text{-})}{\,\circ\,}\mathrmbfit{H}]{\;\mathrmbfit{H}^{\!\scriptscriptstyle{\Uparrow}}\;}
\mathrmbf{B}^{\!\scriptscriptstyle{\Uparrow}}$
is defined dually.
%
\begin{itemize}
\item 
$\mathrmbfit{H}^{\!\scriptscriptstyle{\Downarrow}}$
maps an object 
$\mathrmbf{I}^{\mathrm{op}}\xrightarrow{\;\mathrmbfit{F}\;}\mathrmbf{A} \in
\mathrmbf{A}^{\!\scriptscriptstyle{\Downarrow}}$
\underline{to}
the object
$\mathrmbf{I}^{\mathrm{op}}\xrightarrow{\;\mathrmbfit{F}{\,\circ\,}\mathrmbfit{H}
\;}\mathrmbf{B} \in
\mathrmbf{B}^{\!\scriptscriptstyle{\Downarrow}}$.
\item 
$\mathrmbfit{H}^{\!\scriptscriptstyle{\Downarrow}}$
maps  
a morphism 
${\langle{\mathrmbf{C}_{2},\mathrmbfit{F}_{2}}\rangle}
\xleftarrow{\;{\langle{\mathrmbfit{G},\alpha}\rangle}\;}
{\langle{\mathrmbf{C}_{1},\mathrmbfit{F}_{1}}\rangle}$
of $\mathrmbf{A}^{\!\scriptscriptstyle{\Downarrow}}$, 
a pair consisting of a shape passage
$\mathrmbf{C}_{2}\xrightarrow{\;\mathrmbfit{G}\;}\mathrmbf{C}_{1}$
and a bridge
$\mathrmbfit{F}_{2}\xLeftarrow{\;\alpha\;}\mathrmbfit{G}^{\mathrm{op}}{\;\circ\;}\mathrmbfit{F}_{1}$,
\underline{to} 
the morphism 
${\langle{\mathrmbf{C}_{2},\mathrmbfit{F}_{2}{\circ}\mathrmbfit{H}}\rangle}
\xleftarrow{\;{\langle{\mathrmbfit{G},\alpha{\circ}\mathrmbfit{H}}\rangle}\;}
{\langle{\mathrmbf{C}_{1},\mathrmbfit{F}_{1}{\circ}\mathrmbfit{H}}\rangle}$
of $\mathrmbf{B}^{\!\scriptscriptstyle{\Downarrow}}$, 
a pair
consisting of a shape passage
$\mathrmbf{C}_{2}\xrightarrow{\;\mathrmbfit{G}\;}\mathrmbf{C}_{1}$
and a bridge
$\mathrmbfit{F}_{2}{\circ}\mathrmbfit{H}
\xLeftarrow{\;\alpha{\circ}\mathrmbfit{H}\;}
\mathrmbfit{G}^{\mathrm{op}}{\;\circ\;}\mathrmbfit{F}_{1}{\circ}\mathrmbfit{H}$.
\end{itemize}

.

\comment{
$\mathrmbfit{H}^{\!\scriptscriptstyle{\Downarrow}}$
maps a cone
${\langle{\mathrmbf{I},\mathrmbfit{F}}\rangle}
\xrightarrow{{\langle{\Delta,\pi}\rangle}}
{\langle{\mathrmbf{1},a}\rangle}$
in
$\mathrmbf{A}^{\!\scriptscriptstyle{\Downarrow}}$ 
to the cone
${\langle{\mathrmbf{I},\mathrmbfit{F}{\circ}\mathrmbfit{H}}\rangle}
\xrightarrow{{\langle{\Delta,\pi{\circ}\mathrmbfit{H}}\rangle}}
{\langle{\mathrmbf{1},\mathrmbfit{H}(a)}\rangle}$
in
$\mathrmbf{B}^{\!\scriptscriptstyle{\Downarrow}}$;
that is,
maps a bridge
$\mathrmbfit{F}\xLeftarrow{\;\pi}\Delta(a)$
to the bridge
$\mathrmbfit{F}{\circ}\mathrmbfit{H}\xLeftarrow{\;\pi{\circ}\mathrmbfit{H}}\Delta(\mathrmbfit{H}(a))$. 
}
%

%

\comment{
\begin{table}
\begin{center}
{\fbox{\footnotesize{\begin{tabular}{r@{\hspace{20pt}}l}
$\mathrmbf{DOM}
= \mathrmbf{Dom}^{\!\scriptscriptstyle{\Uparrow}}
= \bigl(\mathrmbf{Cxt}{\,\Uparrow\,}\mathrmbf{Dom}\bigr)$
&
\textit{schemed domains}
\\
$\mathrmbf{DB}
= \mathrmbf{Tbl}^{\scriptscriptstyle{\Downarrow}}
= \bigl({(\mbox{-})}^{\mathrm{op}}{\Downarrow\,}\mathrmbf{Tbl}\bigr)$
&
\textit{databases}
\end{tabular}}}}
\end{center}
\caption{Diagram Contexts}
\label{defs:cxt}
\end{table}
}


\newpage
\paragraph{Complete/Cocomplete Diagram Contexts.}


A cone over a diagram $\mathrmbf{I}\xrightarrow{\;\mathrmbfit{F}\;}\mathrmbf{A}$
is a bridge $\mathrmbfit{F}\xLeftarrow{\;\pi}\Delta(a)$,
whose vertex is an element $a{\,\in\,}\mathrmbf{A}$.
Hence, 
a cone is a morphism
${\langle{\mathrmbf{I},\mathrmbfit{F}}\rangle}
\xrightarrow{{\langle{\Delta,\pi}\rangle}}
{\langle{\mathrmbf{1},\mathrmbfit{a}}\rangle}$
in 
$\mathrmbf{A}^{\!\scriptscriptstyle{\Downarrow}} 
= \bigl({(\mbox{-})}^{\mathrm{op}}{\,\Downarrow\,}\mathrmbf{A}\bigr)$
consisting of a shape passage
$\mathrmbf{I}\xrightarrow{\;\Delta\;}\mathrmbf{1}$
and a bridge
$\mathrmbfit{I}\xLeftarrow{\;\pi\;}\Delta^{\mathrm{op}}{\;\circ\;}\mathrmbfit{a}$.
\begin{center}
{{\begin{tabular}{c}
\setlength{\unitlength}{0.5pt}
\begin{picture}(120,80)(0,0)
\put(5,80){\makebox(0,0){\footnotesize{$\mathrmbf{I}^{\mathrm{op}}$}}}
\put(125,80){\makebox(0,0){\footnotesize{$\mathrmbf{1}^{\mathrm{op}}$}}}
\put(62,5){\makebox(0,0){\footnotesize{$\mathrmbf{A}$}}}
\put(65,92){\makebox(0,0){\scriptsize{$\Delta^{\mathrm{op}}$}}}
\put(20,42){\makebox(0,0)[r]{\scriptsize{$\mathrmbfit{F}$}}}
\put(100,42){\makebox(0,0)[l]{\scriptsize{$\mathrmbfit{a}$}}}
\put(62,57){\makebox(0,0){\shortstack{\scriptsize{$\pi$}\\\large{$\Longleftarrow$}}}}
\put(20,80){\vector(1,0){80}}
\put(10,67){\vector(3,-4){38}}
\put(110,68){\vector(-3,-4){38}}
\end{picture}
\end{tabular}}}
\end{center}
A limiting cone for diagram
$\mathrmbf{I}\xrightarrow{\;\mathrmbfit{F}\;}\mathrmbf{A}$
is a universal cone:
for any other cone
$\mathrmbfit{F}\xLeftarrow{\;\,\pi'}\Delta(a')$,
there is a unique $\mathrmbf{A}$-morphism
$a\xleftarrow{f}a'$
such that $\Delta(f){\,\cdot\,}\pi = \pi'$.
%
A context $\mathrmbf{A}$ is complete
when any diagram
$\mathrmbf{I}\xrightarrow{\;\mathrmbfit{F}\;}\mathrmbf{A}$
in $\mathrmbf{A}$
has a limiting cone
$\mathrmbfit{F}\xLeftarrow{\;\pi}\Delta(a)$.
The dual notions are colimiting cocones and cocompleteness.
%
\footnote{In any complete context,
the limits of arbitrary diagrams
can be constructed by using only 
the terminal object and (binary) pullbacks.
Dually,
in any cocomplete context,
the colimits of arbitrary diagrams
can be constructed by using only the initial object and (binary) pushouts.}
%
\begin{proposition}\label{prop:context:lim:colim}
For any cocomplete context $\mathrmbf{A}$
there is a passage
$\mathrmbf{A}^{\!\scriptscriptstyle{\Uparrow}}
\xrightarrow{\mathrmbfit{colim}}
\mathrmbf{A}$.
Dually,
For any complete context $\mathrmbf{A}$
there is a passage
$\mathrmbf{A}^{\!\scriptscriptstyle{\Downarrow}}
\xrightarrow{\mathrmbfit{lim}}
\mathrmbf{A}$.
\end{proposition}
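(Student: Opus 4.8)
The plan is to realize $\mathrmbfit{colim}$ as a left adjoint, and $\mathrmbfit{lim}$ as a right adjoint, of an evident full embedding; since an adjoint is in particular a passage, this proves the statement. For the first half, consider the embedding $\mathrmbf{A}\hookrightarrow\mathrmbf{A}^{\!\scriptscriptstyle{\Uparrow}}$ sending an object $a$ to $\langle\mathrmbf{1},a\rangle$, where $\mathrmbf{1}$ is the terminal context and $a$ is identified with the passage $\mathrmbf{1}\to\mathrmbf{A}$ it names. This embedding is fully faithful: a morphism $\langle\mathrmbf{1},a\rangle\to\langle\mathrmbf{1},b\rangle$ in $\mathrmbf{A}^{\!\scriptscriptstyle{\Uparrow}}$ has the unique passage $\mathrmbf{1}\to\mathrmbf{1}$ as shape component and, by Def.~\ref{def:lax:oplax}, a bridge between the two passages naming $a$ and $b$, i.e. exactly an $\mathrmbf{A}$-morphism $a\to b$. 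The key observation, specializing the ``cocone as a morphism'' remark made just before the statement, is that a morphism $\langle\mathrmbf{B},\mathrmbfit{F}\rangle\to\langle\mathrmbf{1},a\rangle$ in $\mathrmbf{A}^{\!\scriptscriptstyle{\Uparrow}}$ consists of the unique passage $\mathrmbf{B}\to\mathrmbf{1}$ together with a bridge $\mathrmbfit{F}\Rightarrow\Delta_{\mathrmbf{B}}(a)$ into the constant diagram at $a$ --- that is, precisely a cocone over the diagram $\mathrmbf{B}\xrightarrow{\mathrmbfit{F}}\mathrmbf{A}$ with apex $a$.

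Next I would invoke cocompleteness of $\mathrmbf{A}$: for each object $\langle\mathrmbf{B},\mathrmbfit{F}\rangle$ of $\mathrmbf{A}^{\!\scriptscriptstyle{\Uparrow}}$ (the shape $\mathrmbf{B}$ understood small, so that the colimit exists) choose a colimiting cocone $\mathrmbfit{F}\xRightarrow{\;\lambda_{\mathrmbfit{F}}\;}\Delta_{\mathrmbf{B}}(\mathrm{colim}\,\mathrmbfit{F})$. Under the identification above, $\lambda_{\mathrmbfit{F}}$ is a morphism $\langle\mathrmbf{B},\mathrmbfit{F}\rangle\to\langle\mathrmbf{1},\mathrm{colim}\,\mathrmbfit{F}\rangle$, and the universal property of the colimit is exactly the assertion that every morphism $\langle\mathrmbf{B},\mathrmbfit{F}\rangle\to\langle\mathrmbf{1},a\rangle$ factors uniquely as $\lambda_{\mathrmbfit{F}}$ followed by the image of a unique $\mathrmbf{A}$-morphism $\mathrm{colim}\,\mathrmbfit{F}\to a$ (full faithfulness letting us identify that image with the morphism itself). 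Thus $\lambda_{\mathrmbfit{F}}$ is a universal arrow from $\langle\mathrmbf{B},\mathrmbfit{F}\rangle$ to the embedding; since every object admits one, the embedding has a left adjoint $\mathrmbf{A}^{\!\scriptscriptstyle{\Uparrow}}\xrightarrow{\mathrmbfit{colim}}\mathrmbf{A}$ with $\mathrmbfit{colim}\langle\mathrmbf{B},\mathrmbfit{F}\rangle=\mathrm{colim}\,\mathrmbfit{F}$ and unit $\lambda$. The dual statement is then purely formal: by the identity $(\mathrmbf{A}^{\!\scriptscriptstyle{\Downarrow}})^{\propto}=(\mathrmbf{A}^{\mathrm{op}})^{\!\scriptscriptstyle{\Uparrow}}$ recorded after Def.~\ref{def:lax:oplax}, and since colimits in $\mathrmbf{A}^{\mathrm{op}}$ are limits in $\mathrmbf{A}$, the first half applied to the cocomplete context $\mathrmbf{A}^{\mathrm{op}}$ transports back to a passage $\mathrmbf{A}^{\!\scriptscriptstyle{\Downarrow}}\xrightarrow{\mathrmbfit{lim}}\mathrmbf{A}$, a right adjoint of the embedding $\mathrmbf{A}\hookrightarrow\mathrmbf{A}^{\!\scriptscriptstyle{\Downarrow}}$, $a\mapsto\langle\mathrmbf{1},a\rangle$, sending $\langle\mathrmbf{B},\mathrmbfit{F}\rangle$ to $\mathrm{lim}\,\mathrmbfit{F}$; equivalently one repeats the argument dually with limiting cones.

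If a hands-on description is preferred to the adjoint-functor packaging, one unwinds: $\mathrmbfit{colim}$ sends a morphism $\langle\mathrmbfit{G},\alpha\rangle:\langle\mathrmbf{B}_2,\mathrmbfit{F}_2\rangle\to\langle\mathrmbf{B}_1,\mathrmbfit{F}_1\rangle$ to the unique $\mathrmbf{A}$-morphism $\mathrm{colim}\,\mathrmbfit{F}_2\to\mathrm{colim}\,\mathrmbfit{F}_1$ mediating between $\lambda_{\mathrmbfit{F}_2}$ and the cocone $\alpha\bullet(\mathrmbfit{G}\circ\lambda_{\mathrmbfit{F}_1})$ over $\mathrmbfit{F}_2$; preservation of identities is immediate from uniqueness. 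The one point that requires genuine care --- essentially the sole obstacle --- is preservation of composition: one must check that the composite of two such mediating morphisms itself mediates for the component-wise composite $\langle\mathrmbfit{G}\circ\mathrmbfit{G}',\,\alpha\bullet(\mathrmbfit{G}\circ\alpha')\rangle$ of Def.~\ref{def:lax:oplax}, which is a short $2$-categorical diagram chase with whiskerings and the interchange law. This is exactly the coherence fact that the universal-arrow route assembles for free, which is why I would present the proof in the adjoint form above.
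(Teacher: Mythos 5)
Your proposal is correct, and its closing ``hands-on'' paragraph is in substance exactly the paper's own proof: the paper defines $\mathrmbfit{lim}$ on objects as the limiting-cone vertex and on a morphism ${\langle{\mathrmbfit{G},\alpha}\rangle}$ as the unique mediating morphism from the cone $(\mathrmbfit{G}^{\mathrm{op}}\circ\pi_{1})\bullet\alpha$ to the limiting cone $\pi_{2}$, then dualizes. Where you genuinely differ is in the packaging: you first observe that the constant-diagram embedding $a\mapsto{\langle{\mathrmbf{1},a}\rangle}$ is fully faithful and that morphisms ${\langle{\mathrmbf{B},\mathrmbfit{F}}\rangle}\to{\langle{\mathrmbf{1},a}\rangle}$ in $\mathrmbf{A}^{\!\scriptscriptstyle{\Uparrow}}$ are precisely cocones (an identification the paper itself states just before the proposition, but only for $\mathrmbf{A}^{\!\scriptscriptstyle{\Downarrow}}$), and then realize $\mathrmbfit{colim}$ as the left adjoint assembled from the universal arrows $\lambda_{\mathrmbfit{F}}$. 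This buys you something the paper's proof silently omits: the paper never verifies that its object/morphism assignment preserves composition, whereas the universal-arrow formulation delivers functoriality (and naturality of the (co)limiting cones) for free by the standard ``pointwise universal arrows assemble into an adjoint'' theorem. You correctly flag that this coherence check is the only nontrivial point of the direct construction. The one caveat worth keeping in mind is the size hypothesis you parenthetically impose (shapes $\mathrmbf{B}$ small); the paper's definition of $\mathrmbf{A}^{\!\scriptscriptstyle{\Uparrow}}$ ranges over arbitrary contexts in $\mathrmbf{Cxt}$, so strictly speaking both your proof and the paper's need (co)completeness of $\mathrmbf{A}$ relative to whatever class of shape contexts is admitted.
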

\begin{proof}
We give an argument for limits in a complete context.
A dual argument holds for colimits in a cocomplete context.
The limit passage
$\mathrmbf{A}^{\!\scriptscriptstyle{\Downarrow}}\xrightarrow{\mathrmbfit{lim}}\mathrmbf{A}$
maps a diagram
${\langle{\mathrmbf{I},\mathrmbfit{F}}\rangle}$
to the limiting cone vertex $a$,
and maps a diagram morphism 
${\langle{\mathrmbf{B}_{2},\mathrmbfit{F}_{2}}\rangle}
\xleftarrow{\;{\langle{\mathrmbfit{G},\alpha}\rangle}\;}
{\langle{\mathrmbf{B}_{1},\mathrmbfit{F}_{1}}\rangle}$
to the unique morphism
$a_{2}\xleftarrow{g}a_{1}$
from the vertex of the cone
$\mathrmbfit{F}_{2}
\xLeftarrow{\;(\mathrmbfit{G}^{\mathrm{op}}\circ\pi_{1})\bullet\alpha\;}
\Delta(a_{1})$
to vertex of the limiting cone
$\mathrmbfit{F}_{2}\xLeftarrow{\;\pi_{2}}\Delta(a_{2})$ 
of diagram $\mathrmbf{I}_{2}\xrightarrow{\;\mathrmbfit{F}_{2}\;}\mathrmbf{A}$
such that $\Delta(g){\,\cdot\,}\pi_{2} = (\mathrmbfit{G}^{\mathrm{op}}\circ\pi_{1})\bullet\alpha$.
\hfill\rule{5pt}{5pt}
\end{proof}
%


\newpage
\paragraph{Continuity/Cocontinuity.}
%

%
\begin{definition}\label{def:contin:cocontin}
A passage 
$\mathrmbf{A}\xrightarrow{\;\mathrmbfit{H}\;}\mathrmbf{B}$ is \underline{continuous} when
it maps any limiting cone to a limiting cone.
Dually, 
it is cocontinuous when it maps any colimiting cocone to a colimiting cocone.
\end{definition}
\begin{proposition}\label{prop:passage:lim:colim}
For any continuous passage 
$\mathrmbf{A}\xrightarrow{\;\mathrmbfit{H}\;}\mathrmbf{B}$
between complete contexts $\mathrmbf{A}$ and $\mathrmbf{B}$,
the passage
$\mathrmbf{A}^{\!\scriptscriptstyle{\Downarrow}} 
\xrightarrow{\;\mathrmbfit{H}^{\!\scriptscriptstyle{\Downarrow}}\;}
\mathrmbf{B}^{\!\scriptscriptstyle{\Downarrow}}$
satisfies the condition
$\mathrmbfit{lim}{\;\circ\;}\mathrmbfit{H}
\cong
\mathrmbfit{H}^{\!\scriptscriptstyle{\Downarrow}}{\;\circ\;}\mathrmbfit{lim}$
expressing continuity.
Dually,
for any cocontinuous passage 
$\mathrmbf{A}\xrightarrow{\;\mathrmbfit{H}\;}\mathrmbf{B}$
between cocomplete contexts $\mathrmbf{A}$ and $\mathrmbf{B}$,
the passage
$\mathrmbf{A}^{\!\scriptscriptstyle{\Uparrow}}
\xrightarrow{\;\mathrmbfit{H}^{\!\scriptscriptstyle{\Uparrow}}\;}
\mathrmbf{B}^{\!\scriptscriptstyle{\Uparrow}}$
satisfies the condition
$\mathrmbfit{colim}{\;\circ\;}\mathrmbfit{H}
\cong
\mathrmbfit{H}^{\!\scriptscriptstyle{\Uparrow}}{\;\circ\;}\mathrmbfit{colim}$
expressing cocontinuity.
\begin{center}
{{\begin{tabular}{@{\hspace{20pt}}c@{\hspace{20pt}}c@{\hspace{20pt}}}
{{\begin{tabular}{c}
\setlength{\unitlength}{0.45pt}
\begin{picture}(300,130)(-100,-10)
\put(-25,100){\makebox(0,0){\footnotesize{$
\mathrmbf{A}^{\!\scriptscriptstyle{\Downarrow}} 
\cong \bigl({(\mbox{-})}{\,\Downarrow\,}\mathrmbf{A}\bigr)$}}}
\put(-28,0){\makebox(0,0){\footnotesize{$
\mathrmbf{B}^{\!\scriptscriptstyle{\Downarrow}}
= \bigl({(\mbox{-})}{\,\Downarrow\,}\mathrmbf{B}\bigr)$}}}
\put(160,100){\makebox(0,0){\footnotesize{$\mathrmbf{A}$}}}
\put(165,0){\makebox(0,0){\footnotesize{$\mathrmbf{B}$}}}
\put(85,110){\makebox(0,0){\scriptsize{$\mathrmbfit{lim}$}}}
\put(85,-10){\makebox(0,0){\scriptsize{$\mathrmbfit{lim}$}}}
\put(-15,50){\makebox(0,0)[r]{\scriptsize{$\mathrmbfit{H}^{\!\scriptscriptstyle{\Downarrow}}
\cong \bigl({(\mbox{-})}{\,\Downarrow\,}\mathrmbfit{H}\bigr)$}}}
\put(170,50){\makebox(0,0)[l]{\scriptsize{$\mathrmbfit{H}$}}}
\put(45,100){\vector(1,0){90}}
\put(45,0){\vector(1,0){90}}
\put(0,85){\vector(0,-1){70}}
\put(160,85){\vector(0,-1){70}}
\put(80,50){\makebox(0,0){\footnotesize{$\cong$}}}
\end{picture}
\end{tabular}}}
&
{{\begin{tabular}{c}
\setlength{\unitlength}{0.45pt}
\begin{picture}(300,130)(-100,-10)
\put(-25,100){\makebox(0,0){\footnotesize{$
\mathrmbf{A}^{\!\scriptscriptstyle{\Uparrow}} 
\cong \bigl({(\mbox{-})}{\,\Uparrow\,}\mathrmbf{A}\bigr)$}}}
\put(-28,0){\makebox(0,0){\footnotesize{$
\mathrmbf{B}^{\!\scriptscriptstyle{\Uparrow}} 
= \bigl({(\mbox{-})}{\,\Uparrow\,}\mathrmbf{B}\bigr)$}}}
\put(160,100){\makebox(0,0){\footnotesize{$\mathrmbf{A}$}}}
\put(165,0){\makebox(0,0){\footnotesize{$\mathrmbf{B}$}}}
\put(85,110){\makebox(0,0){\scriptsize{$\mathrmbfit{colim}$}}}
\put(85,-10){\makebox(0,0){\scriptsize{$\mathrmbfit{colim}$}}}
\put(-15,50){\makebox(0,0)[r]{\scriptsize{$\mathrmbfit{H}^{\!\scriptscriptstyle{\Uparrow}}
\cong \bigl({(\mbox{-})}{\,\Uparrow\,}\mathrmbfit{H}\bigr)$}}}
\put(170,50){\makebox(0,0)[l]{\scriptsize{$\mathrmbfit{H}$}}}
\put(45,100){\vector(1,0){90}}
\put(45,0){\vector(1,0){90}}
\put(0,85){\vector(0,-1){70}}
\put(160,85){\vector(0,-1){70}}
\put(80,50){\makebox(0,0){\footnotesize{$\cong$}}}
\end{picture}
\end{tabular}}}
\end{tabular}}}
\end{center}
\end{proposition}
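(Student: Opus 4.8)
The plan is to recognize $\mathring{\mathrmbfit{dom}}$ as a \emph{change-of-base} passage induced by the table domain projection $\mathrmbf{Tbl}\xrightarrow{\mathrmbfit{dom}}\mathrmbf{Dom}^{\mathrm{op}}$, and to reduce the continuity and cocontinuity of $\mathring{\mathrmbfit{dom}}$ to that of $\mathrmbfit{dom}$. First I would record that $\mathrmbfit{dom}$ is itself both continuous and cocontinuous: this holds because $\mathrmbf{Tbl}$ is the comma context $\bigl(\mathrmbf{Set}{\,\downarrow\,}\mathrmbfit{tup}\bigr)$ over the tuple passage $\mathrmbf{Dom}^{\mathrm{op}}\xrightarrow{\mathrmbfit{tup}}\mathrmbf{Set}$, and $\mathrmbfit{dom}$ is the corresponding comma projection onto $\mathrmbf{Dom}^{\mathrm{op}}$ (see the paper ``The {\ttfamily FOLE} Table'' \cite{kent:fole:era:tbl}). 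Both the source $\mathrmbf{DB}^{\mathrm{op}}$ and the target $\mathrmbf{DOM}$ are complete and cocomplete by Prop.~\ref{prop:lim:colim:db:tbl} and Prop.~\ref{prop:lim:colim:dom}, so the assertion to be proved is that $\mathring{\mathrmbfit{dom}}$ preserves all limits and colimits.

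Next I would make the change-of-base identification precise. Up to the variance-correcting opposite ${(\mbox{-})}^{\mathrm{op}}$ appearing in its definition, the passage $\mathring{\mathrmbfit{dom}}={(\mbox{-})}^{\mathrm{op}}\circ\mathrmbfit{dom}$ is the passage $\mathrmbfit{dom}^{\scriptscriptstyle{\Downarrow}}$ of Def.~\ref{def:lax:oplax}, which fixes the shape context $\mathrmbf{R}$ and post-composes a table diagram $\mathrmbf{R}^{\mathrm{op}}\xrightarrow{\mathrmbfit{T}}\mathrmbf{Tbl}$ with $\mathrmbfit{dom}$ to produce the signed domain diagram $\mathrmbf{R}\xrightarrow[\mathrmbfit{T}^{\mathrm{op}}{\circ\,}\mathrmbfit{dom}]{\mathrmbfit{Q}}\mathrmbf{Dom}$. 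In particular $\mathring{\mathrmbfit{dom}}$ is the identity on the shape component, so it commutes with the two shape projections onto $\mathrmbf{Cxt}$ that are continuous and cocontinuous by Prop.~\ref{prop:lim:colim:db:tbl} and Prop.~\ref{prop:lim:colim:dom}.

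Then I would compute the (co)limits fiberwise. Writing $\mathrmbf{DB}=\int\hat{\mathrmbf{Tbl}}$ and $\mathrmbf{DOM}=\int\hat{\mathrmbf{Dom}}$ as fibered contexts over $\mathrmbf{Cxt}$, a limit (resp.\ colimit) is assembled from a limit (resp.\ colimit) of shapes in $\mathrmbf{Cxt}$ together with a pointwise limit (resp.\ colimit) of the diagrams in the fiber functor contexts, by Fact~\ref{fact:groth:lim} and Fact~\ref{fact:groth:colim}. Since $\mathring{\mathrmbfit{dom}}$ is the identity on shapes it preserves the shape part, and since it acts on diagrams by post-composition with $\mathrmbfit{dom}$, while pointwise (co)limits in functor contexts are computed objectwise, the continuity and cocontinuity of $\mathrmbfit{dom}$ guarantee that the fiber part is preserved as well. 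This is precisely the commutation $\mathrmbfit{lim}{\,\circ\,}\mathrmbfit{dom}\cong\mathring{\mathrmbfit{dom}}{\,\circ\,}\mathrmbfit{lim}$ and its dual supplied by Prop.~\ref{prop:passage:lim:colim}. Combining the shape component and the fiber component then yields preservation of the full (co)limit, establishing that $\mathring{\mathrmbfit{dom}}$ is continuous and cocontinuous.

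The hard part will be the variance and opposite bookkeeping. Here $\mathrmbf{DB}=\mathrmbf{Tbl}^{\scriptscriptstyle{\Downarrow}}$ is \emph{contravariant} (oplax) over $\mathrmbf{Tbl}$ whereas $\mathrmbf{DOM}=\mathrmbf{Dom}^{\scriptscriptstyle{\Uparrow}}$ is \emph{covariant} (lax) over $\mathrmbf{Dom}$, the base passage $\mathrmbfit{dom}$ lands in $\mathrmbf{Dom}^{\mathrm{op}}$, and the domain of $\mathring{\mathrmbfit{dom}}$ carries a further $\mathrm{op}$. I would reconcile these using the $(\text{-})^{\propto}$ identities $\bigl(\mathrmbf{A}^{\scriptscriptstyle{\Downarrow}}\bigr)^{\propto}\cong\bigl(\mathrmbf{A}^{\mathrm{op}}\bigr)^{\scriptscriptstyle{\Uparrow}}$ of Def.~\ref{def:lax:oplax}, so that the composite of the opposites realizes $\mathring{\mathrmbfit{dom}}$ as a genuine change-of-base passage between diagram contexts of matching variance. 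Once the opposites are aligned, every step above becomes formal, and the continuity and cocontinuity of $\mathrmbfit{dom}$ transfer directly to $\mathring{\mathrmbfit{dom}}$.
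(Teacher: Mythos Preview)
Your proposal addresses the wrong proposition. The statement asked for is the \emph{general} Prop.~\ref{prop:passage:lim:colim}: for any continuous passage $\mathrmbfit{H}:\mathrmbf{A}\to\mathrmbf{B}$ between complete contexts, the induced change-of-base passage $\mathrmbfit{H}^{\scriptscriptstyle{\Downarrow}}:\mathrmbf{A}^{\scriptscriptstyle{\Downarrow}}\to\mathrmbf{B}^{\scriptscriptstyle{\Downarrow}}$ commutes with the limit passages, and dually for cocontinuous $\mathrmbfit{H}$ and colimits. You instead argue that the \emph{specific} passage $\mathring{\mathrmbfit{dom}}:\mathrmbf{DB}^{\mathrm{op}}\to\mathrmbf{DOM}$ is continuous and cocontinuous --- which is a different (and later, unnumbered) proposition in the paper --- and in doing so you even \emph{invoke} Prop.~\ref{prop:passage:lim:colim} as a lemma (``This is precisely the commutation \ldots supplied by Prop.~\ref{prop:passage:lim:colim}''). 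So the proposal is circular with respect to the target and never proves the stated claim.

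The paper's proof of the actual statement is short and direct: by definition of continuity (Def.~\ref{def:contin:cocontin}), $\mathrmbfit{H}$ sends any limiting cone $\mathrmbfit{F}\xLeftarrow{\pi}\Delta(a)$ in $\mathrmbf{A}$ to a limiting cone $\mathrmbfit{F}\circ\mathrmbfit{H}\xLeftarrow{\pi\circ\mathrmbfit{H}}\Delta(\mathrmbfit{H}(a))$ in $\mathrmbf{B}$. Since $\mathrmbfit{H}^{\scriptscriptstyle{\Downarrow}}$ acts by post-composition with $\mathrmbfit{H}$, this says exactly that $\mathrmbfit{H}(\mathrmbfit{lim}\,\mathrmbfit{F})\cong\mathrmbfit{lim}(\mathrmbfit{H}^{\scriptscriptstyle{\Downarrow}}\mathrmbfit{F})$, i.e.\ $\mathrmbfit{lim}\circ\mathrmbfit{H}\cong\mathrmbfit{H}^{\scriptscriptstyle{\Downarrow}}\circ\mathrmbfit{lim}$; the colimit case is dual. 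No Grothendieck-fibration bookkeeping or variance juggling is required --- the statement is essentially a rephrasing of the definition of a continuous passage at the level of diagram contexts. If you want to salvage your write-up, it would serve as a reasonable argument for the later claim about $\mathring{\mathrmbfit{dom}}$, but for the present proposition you should replace it with the one-line unpacking above.
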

\begin{proof}
%
We give an argument for limits in a complete context:
%
%
the passage 
$\mathrmbf{A}^{\!\scriptscriptstyle{\Downarrow}} 
\xrightarrow[{(\text{-})}{\,\circ\,}\mathrmbfit{H}]{\;\mathrmbfit{H}^{\!\scriptscriptstyle{\Downarrow}}\;}
\mathrmbf{B}^{\!\scriptscriptstyle{\Downarrow}}$
maps a limiting cone
${\langle{\mathrmbf{I},\mathrmbfit{F}}\rangle}
\xrightarrow{{\langle{\Delta,\pi}\rangle}}
{\langle{\mathrmbf{1},a}\rangle}$
in
$\mathrmbf{A}^{\!\scriptscriptstyle{\Downarrow}}$ 
with vertex (limit object)
$a = \mathrmbfit{lim}(\mathrmbfit{F}) \in \mathrmbf{A}$
to the limiting cone
${\langle{\mathrmbf{I},\mathrmbfit{F}{\circ}\mathrmbfit{H}}\rangle}
\xrightarrow{{\langle{\Delta,\pi{\circ}\mathrmbfit{H}}\rangle}}
{\langle{\mathrmbf{1},\mathrmbfit{H}(a)}\rangle}$
in
$\mathrmbf{B}^{\!\scriptscriptstyle{\Downarrow}}$
with vertex (limit object)
$\mathrmbfit{H}(a) = \mathrmbfit{lim}(\mathrmbfit{F}{\,\circ\,}\mathrmbfit{H}) \in
\mathrmbf{B}$.
%
A dual argument holds for colimits in a cocomplete context.
\hfill\rule{5pt}{5pt}
\end{proof}

\comment{
that is,
maps a bridge
$\mathrmbfit{F}\xLeftarrow{\;\pi}\Delta(a)$
to the bridge
$\mathrmbfit{F}{\circ}\mathrmbfit{H}\xLeftarrow{\;\pi{\circ}\mathrmbfit{H}}\Delta(\mathrmbfit{H}(a))$. 
\item  
When $\mathrmbfit{H}$ is continuous,
it maps the limit
$a_{0} = \mathrmbfit{lim}(\mathrmbfit{F}) \in
\mathrmbf{A}$
to the limit 
$\mathrmbfit{H}(a_{0}) = \mathrmbfit{lim}(\mathrmbfit{F}{\,\circ\,}\mathrmbfit{H}) \in
\mathrmbf{B}$.
\item 
A cone is a morphism
${\langle{\mathrmbf{I},\mathrmbfit{F}}\rangle}
\xrightarrow{{\langle{\Delta,\pi}\rangle}}
{\langle{\mathrmbf{1},\mathrmbfit{a}}\rangle}$
in 
$\mathrmbf{A}^{\!\scriptscriptstyle{\Downarrow}} 
= \bigl({(\mbox{-})}^{\mathrm{op}}{\,\Downarrow\,}\mathrmbf{A}\bigr)$
consisting of a shape passage
$\mathrmbf{I}\xrightarrow{\;\Delta\;}\mathrmbf{1}$
and a bridge
$\mathrmbfit{I}\xLeftarrow{\;\pi\;}\Delta^{\mathrm{op}}{\;\circ\;}\mathrmbfit{a}$.
\item 
A limiting cone for diagram
$\mathrmbf{I}\xrightarrow{\;\mathrmbfit{F}\;}\mathrmbf{A}$
is a universal cone:
for any other cone
$\mathrmbfit{F}\xLeftarrow{\;\,\pi'}\Delta(a')$,
there is a unique $\mathrmbf{A}$-morphism
$a\xleftarrow{f}a'$
such that $\Delta(f){\,\cdot\,}\pi = \pi'$.
}

\newpage
\subsubsection{Kan Extensions}\label{append:kan:ext}
\footnote{Kan extensions are specific cases of the Grothendieck construction:
they are used to describe the general contexts of 
schemed domains $\mathrmbf{DOM}$ and databases $\mathrmbf{DB}$.
The more specialized contexts
$\mathring{\mathrmbf{Dom}}$ and $\mathrmbf{Db}$
use the more general notion of Grothendieck construction.}

%
\begin{definition}\label{kan-extensions}
Given a passage $\mathrmbfit{K} : \mathrmbf{C}_{2} \rightarrow \mathrmbf{C}_{1}$ 
and a context $\mathrmbf{A}$,
if $\mathrmbf{A}^\mathrmbf{C} = [\mathrmbf{C},\mathrmbf{A}]$ is the hom context, 
then
the passage
$\mathrmbf{A}^\mathrmbfit{K} : \mathrmbf{A}^{\mathrmbf{C}_{2}} \leftarrow \mathrmbf{A}^{\mathrmbf{C}_{1}}$
is defined by composition
$(\mathrmbfit{S}_{1}\xRightarrow{\,\alpha\;}\mathrmbfit{S}'_{1}) 
\mapsto 
(\mathrmbfit{K}{\;\circ\;}\mathrmbfit{S}_{1}
\xRightarrow{\;\mathrmbfit{K}{\;\circ\;}\alpha\;}
\mathrmbfit{K}{\;\circ\;}\mathrmbfit{S}_{1}')$.
\begin{itemize}
%
\item
Given a passage $\mathrmbfit{K} : \mathrmbf{C}_{2} \rightarrow \mathrmbf{C}_{1}$ and 
an element 
$\mathrmbfit{S}_{2} \in \mathrmbf{A}^{\mathrmbf{C}_{2}}$,
a \emph{left Kan extension} of $\mathrmbfit{S}_{2}$ along $\mathrmbfit{K}$ 
is a pair $(\mathrmbfit{L},\eta)$
consisting of 
an element $\mathrmbfit{L} \in \mathrmbf{A}^{\mathrmbf{C}_{1}}$ and 
a bridge 
$\mathrmbfit{S}_{2}
\xRightarrow{\;\eta\;\,}
\mathrmbfit{K}{\;\circ\;}\mathrmbfit{L}$, 
%
%
such that
for each 
element $\mathrmbfit{S}_{1} \in \mathrmbf{A}^{\mathrmbf{C}_{1}}$
and bridge
$\mathrmbfit{S}_{2}
\xRightarrow{\;\alpha\;\,} 
\mathrmbf{A}^\mathrmbfit{K}(\mathrmbfit{S}_{1}) = \mathrmbfit{K} \circ \mathrmbfit{S}_{1}$,
there is a unique bridge
$\mathrmbfit{L}
\xRightarrow{\;\beta\;\,}
\mathrmbfit{S}_{1}$ 
with
\newline\mbox{}\hfill
{\footnotesize{$
\bigl(
\mathrmbfit{S}_{2}
\xRightarrow{\;\alpha\;\,} 
\mathrmbf{A}^\mathrmbfit{K}(\mathrmbfit{S}_{1})
\bigr)
{\;=\;}
\bigl(
\mathrmbfit{S}_{2}
\xRightarrow{\;\eta\;\,}
\mathrmbf{A}^\mathrmbfit{K}(\mathrmbfit{L})
\xRightarrow{\;\,\mathrmbf{A}^\mathrmbfit{K}(\beta)\;}
\mathrmbf{A}^\mathrmbfit{K}(\mathrmbfit{S}_{1})
\bigr)
$}}
\hfill\mbox{}
\begin{center}
%
{{\begin{tabular}[b]{c}
\setlength{\unitlength}{0.56pt}
\begin{picture}(120,90)(-40,0)
\put(0,80){\makebox(0,0){\footnotesize{$\mathrmbf{C}_{2}$}}}
\put(120,80){\makebox(0,0){\footnotesize{$\mathrmbf{C}_{1}$}}}
\put(60,0){\makebox(0,0){\footnotesize{$\mathrmbf{A}$}}}
\put(60,88){\makebox(0,0){\scriptsize{$\mathrmbfit{K}$}}}
\put(24,42){\makebox(0,0)[r]{\scriptsize{$\mathrmbfit{S}_{2}$}}}
\put(78,57){\makebox(0,0)[r]{\scriptsize{$\mathrmbfit{L}$}}}
\put(106,39){\makebox(0,0)[l]{\scriptsize{$\mathrmbfit{S}_{1}$}}}
\put(50,63){\makebox(0,0){\normalsize{$\xRightarrow{\;\eta\;\;}$}}}
\put(65,40){\makebox(0,0){\normalsize{$\xRightarrow{\;\;\alpha\;\;\,}$}}}
\put(92,51){\makebox(0,0){\normalsize{$\xRightarrow{\beta\;}$}}}
\put(20,80){\vector(1,0){80}}
\put(10,67){\vector(3,-4){38}}
\qbezier(106,71)(72,56)(66,21)\put(66,21){\vector(-1,-3){0}}
\qbezier(114,68)(106,32)(74,15)\put(74,15){\vector(-2,-1){0}}
\put(-100,40){\makebox(0,0){\footnotesize{$\alpha = \eta{\;\bullet\;}(\mathrmbfit{K}{\;\circ\;}\beta)$}}}
\end{picture}
\end{tabular}}}
\end{center}
We use the notation $\mathrmbfit{L}=\mathrmbfit{lan}_{\mathrmbfit{K}}(\mathrmbfit{S}_{2})$.
\newline
\item  
The (dual) \emph{right Kan extension} 
$\mathrmbfit{ran}_{\mathrmbfit{K}}(\mathrmbfit{S}_{2})$
is defined by reversing the bridges.
Given 
a passage $\mathrmbfit{K} : \mathrmbf{C}_{2} \rightarrow \mathrmbf{C}_{1}$ and 
an element 
$\mathrmbfit{S}_{2} \in \mathrmbf{A}^{\mathrmbf{C}_{2}}$,
a \emph{right Kan extension} of $\mathrmbfit{S}_{2}$ along $\mathrmbfit{K}$ 
is a pair $(\mathrmbfit{R},\varepsilon)$
consisting of 
an element $\mathrmbfit{R} \in \mathrmbf{A}^{\mathrmbf{C}_{1}}$ and 
a bridge 
$\mathrmbfit{S}_{2}
\xLeftarrow{\;\,\varepsilon\;}
\mathrmbfit{K}{\;\circ\;}\mathrmbfit{R}$, 
%
%
such that
for each 
element $\mathrmbfit{S}_{1} \in \mathrmbf{A}^{\mathrmbf{C}_{1}}$
and bridge
$\mathrmbfit{S}_{2}
\xLeftarrow{\;\,\alpha\;} 
\mathrmbf{A}^\mathrmbfit{K}(\mathrmbfit{S}_{1}) = \mathrmbfit{K} \circ \mathrmbfit{S}_{1}$,
there is a unique bridge
$\mathrmbfit{R}
\xLeftarrow{\;\,\beta\;}
\mathrmbfit{S}_{1}$ 
with
\newline\mbox{}\hfill
{\footnotesize{$
\bigl(
\mathrmbfit{S}_{2}
\xLeftarrow{\;\,\alpha\;} 
\mathrmbf{A}^\mathrmbfit{K}(\mathrmbfit{S}_{1})
\bigr)
{\;=\;}
\bigl(
\mathrmbfit{S}_{2}
\xLeftarrow{\;\,\varepsilon\;}
\mathrmbf{A}^\mathrmbfit{K}(
\mathrmbfit{R}
)
\xLeftarrow{\;\,\mathrmbf{A}^\mathrmbfit{K}(\beta)\;}
\mathrmbf{A}^\mathrmbfit{K}(\mathrmbfit{S}_{1})
\bigr)
$}}
\hfill\mbox{}
\begin{center}
%
{{\begin{tabular}[b]{c}
\setlength{\unitlength}{0.56pt}
\begin{picture}(120,90)(-40,0)
\put(0,80){\makebox(0,0){\footnotesize{$\mathrmbf{C}_{2}$}}}
\put(120,80){\makebox(0,0){\footnotesize{$\mathrmbf{C}_{1}$}}}
\put(60,0){\makebox(0,0){\footnotesize{$\mathrmbf{A}$}}}
\put(60,88){\makebox(0,0){\scriptsize{$\mathrmbfit{K}$}}}
\put(24,42){\makebox(0,0)[r]{\scriptsize{$\mathrmbfit{S}_{2}$}}}
\put(78,57){\makebox(0,0)[r]{\scriptsize{$\mathrmbfit{R}$}}}
\put(106,39){\makebox(0,0)[l]{\scriptsize{$\mathrmbfit{S}_{1}$}}}
\put(50,63){\makebox(0,0){\normalsize{$\xLeftarrow{\;\;\varepsilon\;}$}}}
\put(65,40){\makebox(0,0){\normalsize{$\xLeftarrow{\;\;\,\alpha\;\;}$}}}
\put(92,52){\makebox(0,0){\normalsize{$\xLeftarrow{\;\beta}$}}}
\put(20,80){\vector(1,0){80}}
\put(10,67){\vector(3,-4){38}}
\qbezier(106,71)(72,56)(66,21)\put(66,21){\vector(-1,-3){0}}
\qbezier(114,68)(106,32)(74,15)\put(74,15){\vector(-2,-1){0}}
\put(-100,40){\makebox(0,0){\footnotesize{$\alpha = (\mathrmbfit{K}{\;\circ\;}\beta){\;\bullet\;}\epsilon$}}}
\end{picture}
\end{tabular}}}
\end{center}
We use the notation $\mathrmbfit{R}=\mathrmbfit{ran}_{\mathrmbfit{K}}(\mathrmbfit{S}_{2})$.
\end{itemize}
\end{definition}
%

%
\comment{

\begin{proposition}\label{prop:kan:ext:pres:adj}
Assume that $\mathrmbf{A}$ is a base context 
and
$\mathrmbf{C}_{2}\xrightarrow{\mathrmbfit{K}}\mathrmbf{C}_{1}$ is a shape passage.
\begin{itemize}
\item 
If
$\mathrmbf{A}\xrightarrow{{\langle{\mathrmbfit{F}{\,\dashv\,}\mathrmbfit{G}}\rangle}}\mathrmbf{B}$
is a base adjunction,
then for any a diagram
$\mathrmbf{C}_{2}\xrightarrow{\mathrmbfit{S}_{2}}\mathrmbf{A}$ 
we have the  equivalence:
``left Kan extensions preserve left adjoints''.
\begin{equation}\label{expo:lft:kan:ext:pres:lft:adj}
{{\begin{picture}(120,10)(0,-4)
\put(60,0){\makebox(0,0){\footnotesize{$
\mathrmbf{C}_{2}
\xrightarrow{\mathrmbfit{lan}_{\mathrmbfit{K}}(\mathrmbfit{S}_{2}{\,\circ\,}\mathrmbfit{F})}\mathrmbf{B}
{\;\;\;\;\cong\;\;\;\;}
\mathrmbf{C}_{2}\xrightarrow{\mathrmbfit{lan}_{\mathrmbfit{K}}(\mathrmbfit{S}_{2})}
\mathrmbf{A}\xrightarrow{\mathrmbfit{F}}\mathrmbf{B} 
$}}}
\end{picture}}}
\end{equation}
\item 
If
$\mathrmbf{A}\xleftarrow{{\langle{\mathrmbfit{F}{\,\dashv\,}\mathrmbfit{G}}\rangle}}\mathrmbf{B}$
is a base adjunction,
then for any a diagram
$\mathrmbf{C}_{2}\xrightarrow{\mathrmbfit{S}_{2}}\mathrmbf{A}$ 
we have the  equivalence:
``right Kan extensions preserve right adjoints''.
\begin{equation}\label{expo:rt:kan:ext:pres:rt:adj}
{{\begin{picture}(120,10)(0,-4)
\put(60,0){\makebox(0,0){\footnotesize{$
\mathrmbf{C}_{2}
\xrightarrow{\mathrmbfit{ran}_{\mathrmbfit{K}}(\mathrmbfit{S}_{2}{\,\circ\,}\mathrmbfit{G})}\mathrmbf{B}
{\;\;\;\;\cong\;\;\;\;}
\mathrmbf{C}_{2}\xrightarrow{\mathrmbfit{ran}_{\mathrmbfit{K}}(\mathrmbfit{S}_{2})}
\mathrmbf{A}\xrightarrow{\mathrmbfit{G}}\mathrmbf{B} 
$}}}
\end{picture}}}
\end{equation}
%
%
\end{itemize}
\end{proposition}
\begin{proof}
yada yada yada.
\hfill\rule{5pt}{5pt}
\end{proof}

}
%

\newpage

\begin{fact}\label{fact:rt:kan:adj}
(Mac~Lane\cite{maclane:71})
\begin{itemize}
\item 
For any cocomplete context $\mathrmbf{A}$
and any passage $\mathrmbf{C}_{2}\xrightarrow{\;\mathrmbfit{K}\;\,}\mathrmbf{C}_{1}$, 
the left Kan extension passage
$\mathrmbf{A}^{\mathrmbf{C}_{2}}
\xrightarrow{\;\,\mathrmbfit{lan}_{\mathrmbfit{K}}\;}
\mathrmbf{A}^{\mathrmbf{C}_{1}}$ 
is left adjoint
to the composition passage
$\mathrmbf{A}^{\mathrmbf{C}_{2}}
\xleftarrow{\;\mathrmbf{A}^{\mathrmbfit{K}}\;\,}
\mathrmbf{A}^{\mathrmbf{C}_{1}}$.
Since the left Kan operation is preserved under composition
$\mathrmbfit{lan}_{(\mathrmbfit{K}_{2}{\;\circ\;}\mathrmbfit{K}_{1})}{\;\cong\;}
\mathrmbfit{lan}_{\mathrmbfit{K}_{2}}{\;\circ\;}\mathrmbfit{lan}_{\mathrmbfit{K}_{1}}$,
this forms a fiber (an indexed) adjunction
$\mathrmbf{Cxt}\xrightarrow{\tilde{\mathrmbf{A}}}\mathrmbf{Adj}$
with local adjoint pair
$\mathrmbf{A}^{\mathrmbf{C}_{2}}
\xrightarrow[{\langle{\mathrmbfit{lan}_{\mathrmbfit{K}}{\;\dashv\;}\mathrmbf{A}^{\mathrmbfit{K}}}\rangle}]
{{\langle{\acute{\mathrmbf{A}}{\;\dashv\;}\grave{\mathrmbf{A}}}\rangle}}
\mathrmbf{A}^{\mathrmbf{C}_{1}}$.
\newline
\item
For any complete context $\mathrmbf{A}$
and any passage $\mathrmbf{C}_{2}\xrightarrow{\;\mathrmbfit{K}\;\,}\mathrmbf{C}_{1}$, 
the right Kan extension passage
$\mathrmbf{A}^{\mathrmbf{C}_{2}}
\xrightarrow{\;\,\mathrmbfit{ran}_{\mathrmbfit{K}}\;}
\mathrmbf{A}^{\mathrmbf{C}_{1}}$ 
is right adjoint
to the composition passage
$\mathrmbf{A}^{\mathrmbf{C}_{2}}
\xleftarrow{\;\mathrmbf{A}^{\mathrmbfit{K}}\;\,}
\mathrmbf{A}^{\mathrmbf{C}_{1}}$.
Since the right Kan operation is preserved under composition
$\mathrmbfit{ran}_{(\mathrmbfit{K}_{2}{\;\circ\;}\mathrmbfit{K}_{1})}{\;\cong\;}
\mathrmbfit{ran}_{\mathrmbfit{K}_{2}}{\;\circ\;}\mathrmbfit{ran}_{\mathrmbfit{K}_{1}}$,
this forms a fiber (an indexed) adjunction
$\mathrmbf{Cxt}^{\mathrm{op}}\!\xrightarrow{\hat{\mathrmbf{A}}}\mathrmbf{Adj}$
with local adjoint pair
$\mathrmbf{A}^{\mathrmbf{C}_{2}}
\xleftarrow[{\langle{\mathrmbf{A}^{\mathrmbfit{K}}{\;\dashv\;}\mathrmbfit{ran}_{\mathrmbfit{K}}}\rangle}]
{{\langle{\acute{\mathrmbf{A}}{\;\dashv\;}\grave{\mathrmbf{A}}}\rangle}}
\mathrmbf{A}^{\mathrmbf{C}_{1}}$.
\end{itemize}
\end{fact}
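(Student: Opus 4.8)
The plan is to prove the single-passage adjunctions first, by the classical pointwise construction of Kan extensions, and then to upgrade to the indexed statements by formal arguments about composition of adjoints. Fix a cocomplete base context $\mathrmbf{A}$ and a shape passage $\mathrmbf{C}_{2}\xrightarrow{\,\mathrmbfit{K}\,}\mathrmbf{C}_{1}$. For a diagram $\mathrmbf{C}_{2}\xrightarrow{\,\mathrmbfit{S}_{2}\,}\mathrmbf{A}$ I would construct $\mathrmbfit{lan}_{\mathrmbfit{K}}(\mathrmbfit{S}_{2})$ pointwise: at an object $c_{1}\in\mathrmbf{C}_{1}$ set $\mathrmbfit{lan}_{\mathrmbfit{K}}(\mathrmbfit{S}_{2})(c_{1})$ to be the colimit of the composite passage from the comma context $\bigl(\mathrmbfit{K}{\,\downarrow\,}c_{1}\bigr)$ into $\mathrmbf{C}_{2}$ and thence by $\mathrmbfit{S}_{2}$ into $\mathrmbf{A}$; this colimit exists since $\mathrmbf{A}$ is cocomplete (assuming, as holds in all applications of this Fact, the relevant comma contexts are small, or else citing Mac~Lane for the construction). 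Functoriality in $c_{1}$ is forced by the induced maps of colimit cocones, and the colimit injections assemble into a bridge $\mathrmbfit{S}_{2}\xRightarrow{\,\eta\,}\mathrmbfit{K}\circ\mathrmbfit{lan}_{\mathrmbfit{K}}(\mathrmbfit{S}_{2})$, the candidate unit.

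Next I would verify the universal property of Definition~\ref{kan-extensions}: given any $\mathrmbfit{S}_{1}\in\mathrmbf{A}^{\mathrmbf{C}_{1}}$ and any bridge $\mathrmbfit{S}_{2}\xRightarrow{\,\alpha\,}\mathrmbf{A}^{\mathrmbfit{K}}(\mathrmbfit{S}_{1})=\mathrmbfit{K}\circ\mathrmbfit{S}_{1}$, the universal property of each pointwise colimit produces a unique bridge $\mathrmbfit{lan}_{\mathrmbfit{K}}(\mathrmbfit{S}_{2})\xRightarrow{\,\beta\,}\mathrmbfit{S}_{1}$ with $\alpha=\eta\bullet(\mathrmbfit{K}\circ\beta)$; naturality of this correspondence in $\mathrmbfit{S}_{1}$ is immediate from naturality of the colimit universal property. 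This is exactly the statement that $\mathrmbfit{lan}_{\mathrmbfit{K}}$ is left adjoint to the composition (restriction) passage $\mathrmbf{A}^{\mathrmbfit{K}}$, with unit $\eta$. The right-Kan half is the formal dual: taking $\mathrmbf{A}$ complete, set $\mathrmbfit{ran}_{\mathrmbfit{K}}(\mathrmbfit{S}_{2})(c_{1})$ to be the limit over $\bigl(c_{1}{\,\downarrow\,}\mathrmbfit{K}\bigr)$, and the same reasoning with bridges reversed yields $\mathrmbf{A}^{\mathrmbfit{K}}\dashv\mathrmbfit{ran}_{\mathrmbfit{K}}$.

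To obtain the indexed adjunctions $\mathrmbf{Cxt}\xrightarrow{\tilde{\mathrmbf{A}}}\mathrmbf{Adj}$ and $\mathrmbf{Cxt}^{\mathrm{op}}\xrightarrow{\hat{\mathrmbf{A}}}\mathrmbf{Adj}$, I would observe that the assignment $\mathrmbf{C}\mapsto\mathrmbf{A}^{\mathrmbf{C}}$, $\mathrmbfit{K}\mapsto\mathrmbf{A}^{\mathrmbfit{K}}$ is strictly (contravariantly) functorial, since restriction passages compose strictly. Because a right adjoint of a passage is determined up to a canonical bridge isomorphism, and a composite of right adjoints is canonically a right adjoint of the composite, the assignment $\mathrmbfit{K}\mapsto\mathrmbfit{ran}_{\mathrmbfit{K}}$ acquires canonical comparison isomorphisms $\mathrmbfit{ran}_{\mathrmbfit{K}_{2}\circ\mathrmbfit{K}_{1}}\cong\mathrmbfit{ran}_{\mathrmbfit{K}_{2}}\circ\mathrmbfit{ran}_{\mathrmbfit{K}_{1}}$ and $\mathrmbfit{ran}_{\mathrmbfit{1}}\cong\mathrmbfit{1}$; pairing these with $\mathrmbf{A}^{(\mbox{-})}$ gives the contravariant pseudo-passage $\hat{\mathrmbf{A}}:\mathrmbf{Cxt}^{\mathrm{op}}\to\mathrmbf{Adj}$. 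The left-Kan case is dual and covariant, using uniqueness of left adjoints.

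I expect the main obstacle to be the coherence bookkeeping in the last step rather than the construction itself: one must check that the canonical comparison isomorphisms for $\mathrmbfit{ran}$ (respectively $\mathrmbfit{lan}$) satisfy the associativity cocycle and unit identities required of a pseudo-passage, and one must track which $2$-cells of $\mathrmbf{Adj}$ (the mates of the adjunction isomorphisms) these comparisons induce. A smaller technical point is the smallness proviso on the comma contexts needed for the pointwise (co)limits to land in $\mathrmbf{A}$; in the applications (Prop.~\ref{prop:lim:colim:schema}, Prop.~\ref{prop:lim:colim:dom}) the shape contexts involved are small, so this can be carried as a standing hypothesis or deferred to Mac~Lane~\cite{maclane:71}.
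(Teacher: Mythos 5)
Your proposal is correct and follows essentially the same route as the paper, which offers no proof of its own but defers entirely to Mac~Lane~\cite{maclane:71}: the pointwise (co)limit construction over comma contexts, the verification of the universal property yielding $\mathrmbfit{lan}_{\mathrmbfit{K}}\dashv\mathrmbf{A}^{\mathrmbfit{K}}$ and $\mathrmbf{A}^{\mathrmbfit{K}}\dashv\mathrmbfit{ran}_{\mathrmbfit{K}}$, and the upgrade to an indexed adjunction via uniqueness of adjoints are exactly the cited classical argument. Your flagged caveats (smallness of the comma contexts, coherence of the comparison isomorphisms for the pseudo-passage) are the right ones and are handled in the reference.
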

%

\comment{
\fbox{This defines a bifibration
$\int{\!{\mathrmbfit{A}}}\xrightarrow{\;\;}\mathrmbf{Cxt}$.}
%
\footnote{
A bifibration $\int{\!{\mathrmbfit{C}}}\xrightarrow{\;\;}\mathrmbf{I}$
is the Grothendieck construction of an indexed adjunction
$\mathrmbf{I}\xrightarrow{{\mathrmbfit{C}}}\mathrmbf{Adj}$,
consisting of 
a left adjoint covariant pseudo-passage
$\mathrmbf{I}\xrightarrow{\acute{\mathrmbfit{C}}}\mathrmbf{Cxt}$
and a right adjoint contravariant pseudo-passage
$\mathrmbf{I}^{\mathrm{op}}\xrightarrow{\grave{\mathrmbfit{C}}}\mathrmbf{Cxt}$. }
\newline
}

\newpage

\begin{proposition}\label{prop:lax:fibered:A}\mbox{}
\begin{itemize}
\item
By Fact\,\ref{fact:rt:kan:adj},
the diagram (lax comma)
context over a cocomplete context $\mathrmbf{A}$
is the bifibration 
(Grothendieck construction) 
$\mathrmbf{A}^{\scriptscriptstyle{\Uparrow}} = \int\hat{\mathrmbf{A}}$ 
of the indexed adjunction
$\mathrmbf{Cxt}\xrightarrow{\hat{\mathrmbf{A}}}\mathrmbf{Adj}$.
A morphism
${\langle{\mathrmbf{C}_{2},\mathrmbfit{S}_{2}}\rangle} 
\xrightarrow{{\langle{\mathrmbfit{K},\,\hat{\alpha}}\rangle}}
{\langle{\mathrmbf{C}_{1},\mathrmbfit{S}_{1}}\rangle}$
in $\mathrmbf{A}^{\scriptscriptstyle{\Uparrow}}$
consists of
a passage $\mathrmbf{C}_{2}\xrightarrow{\;\mathrmbfit{K}\;\,}\mathrmbf{C}_{1}$ and 
a pair $\hat{\alpha} = {\langle{\acute{\alpha},\grave{\alpha}}\rangle}$ of equivalent bridges.
\begin{equation}
{{\begin{picture}(120,10)(0,-4)
\put(60,0){\makebox(0,0){\footnotesize{$
\underset{\textstyle{\text{in}\;\mathrmbf{A}^{\mathrmbf{C}_{2}}}}
{\mathrmbfit{S}_{2}\xRightarrow[\;\mathbf{levo}\;]{\;\acute{\alpha}\;\,}
\mathrmbf{A}^\mathrmbfit{K}(\mathrmbfit{S}_{1})=
\grave{\mathrmbf{A}}(\mathrmbfit{S}_{1})}
{\;\;\;\;\;\;\;\;\rightleftarrows\;\;\;\;\;\;\;\;}
\underset{\textstyle{\text{in}\;\mathrmbf{A}^{\mathrmbf{C}_{1}}}}
{\acute{\mathrmbf{A}}(\mathrmbfit{S}_{2})
=\mathrmbfit{lan}_{\mathrmbfit{K}}(\mathrmbfit{S}_{2})
\xRightarrow[\mathbf{dextro}]{\;\grave{\alpha}\;\,}
\mathrmbfit{S}_{1}}
$}}}
\end{picture}}}
\end{equation}
%
\begin{center}
{{\begin{tabular}{c}
\\
\begin{tabular}{c@{\hspace{20pt}}c}
{{\begin{tabular}[b]{c}
\setlength{\unitlength}{0.56pt}
\begin{picture}(120,80)(8,0)
\put(0,80){\makebox(0,0){\footnotesize{$\mathrmbf{C}_{2}$}}}
\put(120,80){\makebox(0,0){\footnotesize{$\mathrmbf{C}_{1}$}}}
\put(60,0){\makebox(0,0){\footnotesize{$\mathrmbf{A}$}}}
\put(60,92){\makebox(0,0){\scriptsize{$\mathrmbfit{K}$}}}
\put(20,42){\makebox(0,0)[r]{\scriptsize{$\mathrmbfit{S}_{2}$}}}
\put(100,42){\makebox(0,0)[l]{\scriptsize{$\mathrmbfit{S}_{1}$}}}
\put(60,57){\makebox(0,0){\shortstack{\scriptsize{$\acute{\alpha}$}\\\large{$\Longrightarrow$}}}}
\put(20,80){\vector(1,0){80}}
\put(10,67){\vector(3,-4){38}}
\put(110,68){\vector(-3,-4){38}}
\end{picture}
\end{tabular}}}
&
{{\begin{tabular}[b]{c}
\setlength{\unitlength}{0.56pt}
\begin{picture}(120,80)(8,0)
\put(0,80){\makebox(0,0){\footnotesize{$\mathrmbf{C}_{2}$}}}
\put(120,80){\makebox(0,0){\footnotesize{$\mathrmbf{C}_{1}$}}}
\put(60,0){\makebox(0,0){\footnotesize{$\mathrmbf{A}$}}}
\put(60,92){\makebox(0,0){\scriptsize{$\mathrmbfit{lan}_{\mathrmbfit{K}}(\mathrmbfit{S}_{2})$}}}
\put(20,42){\makebox(0,0)[r]{\scriptsize{$\mathrmbfit{S}_{2}$}}}
\put(100,42){\makebox(0,0)[l]{\scriptsize{$\mathrmbfit{S}_{1}$}}}
\put(60,57){\makebox(0,0){\shortstack{\scriptsize{$\grave{\alpha}$}\\\large{$\Longrightarrow$}}}}
%
\qbezier(105,80)(-10,90)(55,20)\put(55,20){\vector(3,-4){0}}
\put(10,67){\vector(3,-4){38}}
\put(110,68){\vector(-3,-4){38}}
\end{picture}
\end{tabular}}}
\\&\\
%
\end{tabular}
\\
{\scriptsize\setlength{\extrarowheight}{4pt}$\begin{array}{|@{\hspace{5pt}}l@{\hspace{25pt}}l@{\hspace{5pt}}|}
\hline
\acute{\alpha}:
\mathrmbfit{S}_{2}\Rightarrow
\mathrmbf{A}^\mathrmbfit{K}(\mathrmbfit{S}_{1})
&
\grave{\alpha}:
\mathrmbfit{lan}_{\mathrmbfit{K}}(\mathrmbfit{S}_{2})
\Rightarrow\mathrmbfit{S}_{1}
\\
\acute{\alpha} =
\eta_{\mathrmbfit{S}_{2}}
\bullet
\mathrmbf{A}^\mathrmbfit{K}(\grave{\alpha})
&
\grave{\alpha} =
\mathrmbfit{lan}_{\mathrmbfit{K}}(\acute{\alpha})
\bullet
\varepsilon_{\mathrmbfit{S}_{1}}
\\\hline
\end{array}$}
\end{tabular}}}
\end{center}
%
\item
%
By Fact\,\ref{fact:rt:kan:adj},
the diagram (oplax comma) context over a complete context $\mathrmbf{A}$
is the bifibration 
(Grothendieck construction) 
$\mathrmbf{A}^{\scriptscriptstyle{\Downarrow}} = \int\hat{\mathrmbf{A}}$ 
of the 
indexed adjunction
$\mathrmbf{Cxt}^{\mathrm{op}}\!\xrightarrow{\hat{\mathrmbf{A}}}\mathrmbf{Adj}$.
A morphism
${\langle{\mathrmbf{C}_{2},\mathrmbfit{S}_{2}}\rangle} 
\xleftarrow{{\langle{\mathrmbfit{K},\,\hat{\alpha}}\rangle}}
{\langle{\mathrmbf{C}_{1},\mathrmbfit{S}_{1}}\rangle}$
in $\mathrmbf{A}^{\scriptscriptstyle{\Downarrow}}$
consists of
a passage $\mathrmbf{C}_{2}\xrightarrow{\;\mathrmbfit{K}\;\,}\mathrmbf{C}_{1}$ and 
a pair $\hat{\alpha} = {\langle{\acute{\alpha},\grave{\alpha}}\rangle}$ of equivalent bridges.
\begin{equation}
{{\begin{picture}(120,10)(0,-4)
\put(60,0){\makebox(0,0){\footnotesize{$
\underset{\textstyle{\text{in}\;\mathrmbf{A}^{\mathrmbf{C}_{2}}}}
{\mathrmbfit{S}_{2}
\xLeftarrow[\;\mathbf{levo}\;]{\;\,\acute{\alpha}\;}
\mathrmbf{A}^\mathrmbfit{K}(\mathrmbfit{S}_{1})
=\acute{\mathrmbf{A}}(\mathrmbfit{S}_{1})}
{\;\;\;\;\;\;\;\;\rightleftarrows\;\;\;\;\;\;\;\;}
\underset{\textstyle{\text{in}\;\mathrmbf{A}^{\mathrmbf{C}_{1}}}}
{\grave{\mathrmbf{A}}(\mathrmbfit{S}_{2})
=\mathrmbfit{ran}_{\mathrmbfit{K}}(\mathrmbfit{S}_{2})
\xLeftarrow[\mathbf{dextro}]{\;\,\grave{\alpha}\;}
\mathrmbfit{S}_{1}}
$}}}
\end{picture}}}
\end{equation}
%
\begin{center}
{{\begin{tabular}{c}
\\
\begin{tabular}{c@{\hspace{20pt}}c}
{{\begin{tabular}[b]{c}
\setlength{\unitlength}{0.56pt}
\begin{picture}(120,80)(8,0)
\put(5,80){\makebox(0,0){\footnotesize{$\mathrmbf{C}_{2}^{\mathrm{op}}$}}}
\put(125,80){\makebox(0,0){\footnotesize{$\mathrmbf{C}_{1}^{\mathrm{op}}$}}}
\put(65,0){\makebox(0,0){\footnotesize{$\mathrmbf{A}$}}}
\put(65,92){\makebox(0,0){\scriptsize{$\mathrmbfit{K}^{\mathrm{op}}$}}}
\put(20,42){\makebox(0,0)[r]{\scriptsize{$\mathrmbfit{S}_{2}$}}}
\put(100,42){\makebox(0,0)[l]{\scriptsize{$\mathrmbfit{S}_{1}$}}}
\put(60,57){\makebox(0,0){\shortstack{\scriptsize{$\acute{\alpha}$}\\\large{$\Longleftarrow$}}}}
\put(20,80){\vector(1,0){80}}
\put(10,67){\vector(3,-4){38}}
\put(110,68){\vector(-3,-4){38}}
\end{picture}
\end{tabular}}}
&
{{\begin{tabular}[b]{c}
\setlength{\unitlength}{0.56pt}
\begin{picture}(120,80)(8,0)
\put(5,80){\makebox(0,0){\footnotesize{$\mathrmbf{C}_{2}^{\mathrm{op}}$}}}
\put(125,80){\makebox(0,0){\footnotesize{$\mathrmbf{C}_{1}^{\mathrm{op}}$}}}
\put(65,0){\makebox(0,0){\footnotesize{$\mathrmbf{A}$}}}
\put(65,92){\makebox(0,0){\scriptsize{$\mathrmbfit{ran}_{\mathrmbfit{K}}(\mathrmbfit{S}_{2})$}}}
\put(20,42){\makebox(0,0)[r]{\scriptsize{$\mathrmbfit{S}_{2}$}}}
\put(100,42){\makebox(0,0)[l]{\scriptsize{$\mathrmbfit{S}_{1}$}}}
\put(60,57){\makebox(0,0){\shortstack{\scriptsize{$\grave{\alpha}$}\\\large{$\Longleftarrow$}}}}
%
\qbezier(105,80)(-10,90)(55,20)\put(55,20){\vector(3,-4){0}}
\put(10,67){\vector(3,-4){38}}
\put(110,68){\vector(-3,-4){38}}
\end{picture}
\end{tabular}}}
\\&\\
%
\end{tabular}
\\
{\scriptsize\setlength{\extrarowheight}{4pt}$\begin{array}{|@{\hspace{5pt}}l@{\hspace{25pt}}l@{\hspace{5pt}}|}
\hline
\acute{\alpha} : 
\mathrmbfit{S}_{2}\Leftarrow
\mathrmbf{A}^\mathrmbfit{K}(\mathrmbfit{S}_{1})
&
\grave{\alpha} : 
\mathrmbfit{ran}_{\mathrmbfit{K}}(\mathrmbfit{S}_{2})
\Leftarrow
\mathrmbfit{S}_{1}
\\
\acute{\alpha} =
\mathrmbf{A}^\mathrmbfit{K}(\grave{\alpha})
\bullet
\varepsilon_{\mathrmbfit{S}_{2}}
&
\grave{\alpha} =
\eta_{\mathrmbfit{S}_{1}}
\bullet
\mathrmbfit{ran}_{\mathrmbfit{K}}(\acute{\alpha})
\\\hline
\end{array}$}
\end{tabular}}}
\end{center}
%
\end{itemize}
\end{proposition}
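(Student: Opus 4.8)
The proposition is a ``repackaging'' statement: it identifies the diagram contexts of Definition~\ref{def:lax:oplax} with the Grothendieck constructions of the two Kan-extension indexed adjunctions already produced in Fact~\ref{fact:rt:kan:adj}. The plan is therefore to take Fact~\ref{fact:rt:kan:adj} as input, feed the resulting indexed adjunction into the bifibration recipe of \S\,\ref{append:grothen:construct}, and check term by term that the objects, morphisms, and composition of $\int\hat{\mathrmbf{A}}$ agree with those of $\mathrmbf{A}^{\scriptscriptstyle{\Uparrow}}$ (respectively $\mathrmbf{A}^{\scriptscriptstyle{\Downarrow}}$). No genuinely new categorical content is required; the substantive facts -- that $\mathrmbfit{lan}_{\mathrmbfit{K}}$ is left adjoint to precomposition $\mathrmbf{A}^{\mathrmbfit{K}}$, dually that $\mathrmbfit{ran}_{\mathrmbfit{K}}$ is right adjoint to it, and that each is preserved under composition of passages -- are exactly Fact~\ref{fact:rt:kan:adj}.

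For the cocomplete (lax) case I would first invoke Fact~\ref{fact:rt:kan:adj} to obtain the indexed adjunction $\mathrmbf{Cxt}\xrightarrow{\hat{\mathrmbf{A}}}\mathrmbf{Adj}$ whose fiber at a shape context $\mathrmbf{C}$ is $\mathrmbf{A}^{\mathrmbf{C}}=[\mathrmbf{C},\mathrmbf{A}]$ and whose fiber adjunction at $\mathrmbf{C}_2\xrightarrow{\mathrmbfit{K}}\mathrmbf{C}_1$ is the local pair $\acute{\mathrmbf{A}}_{\mathrmbfit{K}}\dashv\grave{\mathrmbf{A}}_{\mathrmbfit{K}}$, namely $\mathrmbfit{lan}_{\mathrmbfit{K}}\dashv\mathrmbf{A}^{\mathrmbfit{K}}$; the coherence isomorphism $\mathrmbfit{lan}_{\mathrmbfit{K}_2\circ\mathrmbfit{K}_1}\cong\mathrmbfit{lan}_{\mathrmbfit{K}_2}\circ\mathrmbfit{lan}_{\mathrmbfit{K}_1}$ from the Fact is what makes $\hat{\mathrmbf{A}}$ a pseudo-passage into $\mathrmbf{Adj}$. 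Then I would unfold the construction of \S\,\ref{append:grothen:construct} for $\hat{\mathrmbf{A}}$: an object of $\int\hat{\mathrmbf{A}}$ is a pair $\langle\mathrmbf{C},\mathrmbfit{S}\rangle$ with $\mathrmbfit{S}$ in the fiber, i.e.\ a passage (diagram) $\mathrmbf{C}\xrightarrow{\mathrmbfit{S}}\mathrmbf{A}$, which is exactly an object of $\mathrmbf{A}^{\scriptscriptstyle{\Uparrow}}$; and a morphism in the component-fibration presentation is a pair $\langle\mathrmbfit{K},\acute{\alpha}\rangle$ with $\mathrmbfit{K}$ an indexing passage and $\acute{\alpha}\colon\mathrmbfit{S}_2\Rightarrow\grave{\mathrmbf{A}}_{\mathrmbfit{K}}(\mathrmbfit{S}_1)=\mathrmbf{A}^{\mathrmbfit{K}}(\mathrmbfit{S}_1)=\mathrmbfit{K}\circ\mathrmbfit{S}_1$ a fiber bridge -- this is literally a morphism of $\mathrmbf{A}^{\scriptscriptstyle{\Uparrow}}$ and is the \textbf{levo} bridge. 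Passing to the isomorphic component-opfibration presentation $\int\acute{\mathrmbf{A}}\cong\int\grave{\mathrmbf{A}}$ transports $\acute{\alpha}$ to its adjoint transpose $\grave{\alpha}\colon\mathrmbfit{lan}_{\mathrmbfit{K}}(\mathrmbfit{S}_2)\Rightarrow\mathrmbfit{S}_1$, the \textbf{dextro} bridge; the transpose formulae displayed in the statement, $\acute{\alpha}=\eta_{\mathrmbfit{S}_2}\bullet\mathrmbf{A}^{\mathrmbfit{K}}(\grave{\alpha})$ and $\grave{\alpha}=\mathrmbfit{lan}_{\mathrmbfit{K}}(\acute{\alpha})\bullet\varepsilon_{\mathrmbfit{S}_1}$, are the generic unit/counit bijections of a bifibration specialised to $\mathrmbfit{lan}_{\mathrmbfit{K}}\dashv\mathrmbf{A}^{\mathrmbfit{K}}$. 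The complete (oplax) case runs identically after replacing the indexing context $\mathrmbf{Cxt}$ by $\mathrmbf{Cxt}^{\mathrm{op}}$, the fiber by $\mathrmbf{A}^{\mathrmbf{B}^{\mathrm{op}}}$, and $\mathrmbfit{lan}$ by $\mathrmbfit{ran}$, so that a morphism of $\int\hat{\mathrmbf{A}}$ becomes the pair $\langle\mathrmbfit{K},\hat{\alpha}\rangle$ with \textbf{levo} component $\acute{\alpha}\colon\mathrmbf{A}^{\mathrmbfit{K}}(\mathrmbfit{S}_1)\Rightarrow\mathrmbfit{S}_2$ and \textbf{dextro} component $\grave{\alpha}\colon\mathrmbfit{S}_1\Rightarrow\mathrmbfit{ran}_{\mathrmbfit{K}}(\mathrmbfit{S}_2)$, which is a morphism of $\mathrmbf{A}^{\scriptscriptstyle{\Downarrow}}$.

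Finally I would check that composition agrees. In $\mathrmbf{A}^{\scriptscriptstyle{\Uparrow}}$ composition is component-wise (Definition~\ref{def:lax:oplax}): passages compose and the composite fiber bridge of a composable pair with passages $\mathrmbfit{K}_1,\mathrmbfit{K}_2$ is the pasting $\acute{\alpha}_1\bullet(\mathrmbfit{K}_1\circ\acute{\alpha}_2)$; in $\int\hat{\mathrmbf{A}}$ the composite transports the second fiber bridge along the first fiber passage and pastes. Since that fiber passage is precisely precomposition $\mathrmbf{A}^{\mathrmbfit{K}_1}$, transporting a bridge along it is exactly the whiskering $\mathrmbfit{K}_1\circ(-)$, so the two composites coincide, modulo the associator of passage composition, which is matched by the Kan-extension coherence isomorphism; identities match on the nose. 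I expect the only real friction in writing this up to be bookkeeping rather than mathematics: keeping the variances straight in the oplax/complete half, where the opposite of the shape context enters and the indexing context is $\mathrmbf{Cxt}^{\mathrm{op}}$, and confirming that the ``fiber-transport'' step of Grothendieck composition is literally the whiskering that appears in the component-wise composition of Definition~\ref{def:lax:oplax}. Beyond that, the proposition is just the observation that the Grothendieck construction of the Kan-extension indexed adjunction of Fact~\ref{fact:rt:kan:adj} carries the familiar comma-style description.
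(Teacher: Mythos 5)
Your proposal is correct and follows essentially the same route the paper takes: the paper gives no separate proof of Proposition~\ref{prop:lax:fibered:A}, instead embedding the justification in the statement itself via the citation of Fact~\ref{fact:rt:kan:adj} and the displayed \textbf{levo}/\textbf{dextro} transpose formulas, which is exactly the ``feed the Kan-extension indexed adjunction into the Grothendieck construction and unfold'' argument you describe. Your additional check that component-wise composition in $\mathrmbf{A}^{\scriptscriptstyle{\Uparrow}}$ matches fiber-transport-then-paste composition in $\int\hat{\mathrmbf{A}}$ is a detail the paper leaves tacit, but it is not a different approach.
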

%

\newpage

\begin{proposition}\label{prop:lim:colim:A}\mbox{}
\begin{itemize}
\item
For any complete and cocomplete context $\mathrmbf{A}$,
the fibered context (Grothendieck construction) 
$\mathrmbf{A}^{\scriptscriptstyle{\Uparrow}} = \int\hat{\mathrmbf{A}}$ is complete and cocomplete 
and the projection 
$\mathrmbf{A}^{\scriptscriptstyle{\Uparrow}}\rightarrow\mathrmbf{Cxt}
:{\langle{\mathrmbf{C},\mathrmbfit{S}}\rangle}\mapsto\mathrmbf{C}$ 
is continuous and cocontinuous.
\newline
\item
For any complete and cocomplete context $\mathrmbf{A}$,
the fibered context (Grothendieck construction) 
$\mathrmbf{A}^{\scriptscriptstyle{\Downarrow}} = \int\hat{\mathrmbf{A}}$ is complete and cocomplete 
and the projection 
$\mathrmbf{A}^{\scriptscriptstyle{\Downarrow}}\rightarrow\mathrmbf{Cxt}
:{\langle{\mathrmbf{C},\mathrmbfit{S}}\rangle}\mapsto\mathrmbf{C}$ 
is continuous and cocontinuous.
\end{itemize}
\end{proposition}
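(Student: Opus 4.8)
The plan is to read this statement off the Grothendieck-construction machinery already assembled in \S\,\ref{append:grothen:construct} and \S\,\ref{append:kan:ext}, so that essentially nothing new has to be proved. First I would invoke Prop.~\ref{prop:lax:fibered:A}: for a cocomplete context $\mathrmbf{A}$ the diagram (lax comma) context $\mathrmbf{A}^{\scriptscriptstyle{\Uparrow}}$ is precisely the bifibration $\int\hat{\mathrmbf{A}}$ of the indexed adjunction $\mathrmbf{Cxt}\xrightarrow{\hat{\mathrmbf{A}}}\mathrmbf{Adj}$ built in Fact~\ref{fact:rt:kan:adj}, whose fiber over a shape context $\mathrmbf{C}$ is the functor context $\mathrmbf{A}^{\mathrmbf{C}} = [\mathrmbf{C},\mathrmbf{A}]$ and whose fiber adjunction over a shape passage $\mathrmbf{C}_{2}\xrightarrow{\mathrmbfit{K}}\mathrmbf{C}_{1}$ is the adjoint pair ${\langle{\mathrmbfit{lan}_{\mathrmbfit{K}}\dashv\mathrmbf{A}^{\mathrmbfit{K}}}\rangle}$. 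Under this identification the projection $\mathrmbf{A}^{\scriptscriptstyle{\Uparrow}}\rightarrow\mathrmbf{Cxt}:{\langle{\mathrmbf{C},\mathrmbfit{S}}\rangle}\mapsto\mathrmbf{C}$ is exactly the Grothendieck projection $\int\hat{\mathrmbf{A}}\rightarrow\mathrmbf{Cxt}$, so the whole proposition becomes the conclusion of Fact~\ref{fact:groth:adj:lim:colim} applied to the indexed adjunction $\hat{\mathrmbf{A}}$, once its hypotheses are checked.

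The key steps are then the three routine verifications feeding Fact~\ref{fact:groth:adj:lim:colim}. (i) The indexing context $\mathrmbf{Cxt}$ is complete and cocomplete --- the standard fact that mathematical contexts admit all small limits and colimits. (ii) Each fiber context $\mathrmbf{A}^{\mathrmbf{C}} = [\mathrmbf{C},\mathrmbf{A}]$ is complete and cocomplete, since $\mathrmbf{A}$ is and limits and colimits in a functor context are computed pointwise. (iii) The indexed adjunction is locally adjunctive with the correct variance: by Fact~\ref{fact:rt:kan:adj}, $\mathrmbfit{lan}_{\mathrmbfit{K}}\dashv\mathrmbf{A}^{\mathrmbfit{K}}$ is a genuine adjunction, so the opfibration leg $\acute{\mathrmbf{A}}=\mathrmbfit{lan}_{\mathrmbfit{K}}$ is cocontinuous (being a left adjoint) and the fibration leg $\grave{\mathrmbf{A}}=\mathrmbf{A}^{\mathrmbfit{K}}$ is continuous (being a right adjoint), which is precisely what Fact~\ref{fact:groth:adj:lim:colim} demands of the two component pseudo-passages. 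Feeding (i)--(iii) into Fact~\ref{fact:groth:adj:lim:colim} yields that $\int\hat{\mathrmbf{A}}=\mathrmbf{A}^{\scriptscriptstyle{\Uparrow}}$ is complete and cocomplete and that its projection to $\mathrmbf{Cxt}$ is continuous and cocontinuous.

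Finally I would dispose of the oplax statement by running the same argument against the dual data supplied by the ``right Kan'' halves of the earlier results: for a complete context $\mathrmbf{A}$, Fact~\ref{fact:rt:kan:adj} gives the indexed adjunction $\mathrmbf{Cxt}^{\mathrm{op}}\xrightarrow{\hat{\mathrmbf{A}}}\mathrmbf{Adj}$ with fiber adjunction ${\langle{\mathrmbf{A}^{\mathrmbfit{K}}\dashv\mathrmbfit{ran}_{\mathrmbfit{K}}}\rangle}$, and Prop.~\ref{prop:lax:fibered:A} identifies $\mathrmbf{A}^{\scriptscriptstyle{\Downarrow}}$ with $\int\hat{\mathrmbf{A}}$; since $\mathrmbf{Cxt}^{\mathrm{op}}$ is complete and cocomplete exactly when $\mathrmbf{Cxt}$ is, and the fibers $[\mathrmbf{C},\mathrmbf{A}]$ remain complete and cocomplete, Fact~\ref{fact:groth:adj:lim:colim} applies verbatim. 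I do not expect a genuine obstacle here; the only mildly delicate point is bookkeeping rather than mathematics --- keeping the variance of the two component pseudo-passages matched to the hypotheses of Fact~\ref{fact:groth:adj:lim:colim}, together with the usual size conventions on $\mathrmbf{Cxt}$ and on the functor contexts $[\mathrmbf{C},\mathrmbf{A}]$ (one wants the shape contexts $\mathrmbf{C}$ small, or a fixed universe), which is implicit throughout the paper and does not affect the limit and colimit constructions.
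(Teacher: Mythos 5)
Your proposal is correct and follows essentially the same route as the paper: the paper's proof likewise combines Prop.~\ref{prop:lax:fibered:A} with Fact~\ref{fact:groth:adj:lim:colim}, citing the completeness and cocompleteness of the indexing context $\mathrmbf{Cxt}$ and of the fiber contexts $\mathrmbf{A}^{\mathrmbf{C}}$. You merely make explicit the hypothesis checks (pointwise limits in $[\mathrmbf{C},\mathrmbf{A}]$, local adjunctivity via Fact~\ref{fact:rt:kan:adj}, and the variance bookkeeping for the oplax case) that the paper leaves implicit.
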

\begin{proof}
Use Prop.~\ref{prop:lax:fibered:A}
and
Fact.~\ref{fact:groth:adj:lim:colim}
of \S\,\ref{append:grothen:construct},
since 
the indexing context $\mathrmbf{Cxt}$ is complete and cocomplete, and
the fiber context $\mathrmbf{A}^{\mathrmbf{C}}$ is complete and cocomplete 
for each $\mathrmbf{C}\in\mathrmbf{Cxt}$.
\end{proof}
%

\newpage
\subsection{Database Applications}\label{sub:sec:append:db:app}

\subsubsection{Contexts}\label{sub:sub:sec:math:context}

%
Figure \ref{fig:diag:comma:cxts}
shows six important contexts in this paper:
two comma contexts
{\footnotesize{$\mathrmbf{Dom}$}} and 
{\footnotesize{$\mathrmbf{Tbl}$}};
two diagram 
contexts
{\footnotesize{$\mathrmbf{DOM}$}} and 
{\footnotesize{$\mathrmbf{DB}$}};
%
%
%
\footnote{For any context $\mathrmbf{C}$,
the ``super-comma'' context 
$\bigl(\mathrmbf{Cxt}{\,\Downarrow\,}\mathrmbf{C}\bigr)
=\mathrmbf{C}^{\!\scriptscriptstyle{\Downarrow}}$
is defined \cite{maclane:71} as follows:
an object is a $\mathrmbf{C}$-diagram ${\langle{\mathrmbf{I},\mathrmbfit{D}}\rangle}$
with indexing context $\mathrmbf{I}$ and passage $\mathrmbf{I}\xrightarrow{\mathrmbfit{D}}\mathrmbf{C}$;
a morphism is a $\mathrmbf{C}$-diagram morphism 
${\langle{\mathrmbf{I}_{2},\mathrmbfit{D}_{2}}\rangle}
\xrightarrow{\langle{\mathrmbfit{F},\alpha}\rangle}
{\langle{\mathrmbf{I}_{1},\mathrmbfit{D}_{1}}\rangle}$
with indexing passage $\mathrmbf{I}_{2}\xrightarrow{\mathrmbfit{F}}\mathrmbf{I}_{1}$
and bridge $\mathrmbfit{D}_{2}\xLeftarrow{\;\alpha\;}\mathrmbfit{F}{\;\circ\;}\mathrmbfit{D}_{1}$.
To dualize,
the context 
$\bigl(\mathrmbf{Cxt}{\,\Uparrow\,}\mathrmbf{C}\bigr)
=\mathrmbf{C}^{\!\scriptscriptstyle{\Uparrow}}$
is defined as follows:
an object is as above;
a morphism is a $\mathrmbf{C}$-diagram morphism 
${\langle{\mathrmbf{I}_{2},\mathrmbfit{D}_{2}}\rangle}
\xrightarrow{\langle{\mathrmbfit{F},\alpha}\rangle}
{\langle{\mathrmbf{I}_{1},\mathrmbfit{D}_{1}}\rangle}$
with indexing passage $\mathrmbf{I}_{2}\xrightarrow{\mathrmbfit{F}}\mathrmbf{I}_{1}$
and bridge $\mathrmbfit{D}_{2}\xRightarrow{\;\alpha\;}\mathrmbfit{F}{\;\circ\;}\mathrmbfit{D}_{1}$.}
%
and 
two fibered contexts defined by the Grothendieck construction:
$\mathring{\mathrmbf{Dom}}$ and $\mathrmbf{Db}$.
Since 
$\mathrmbf{Dom}$ and $\mathrmbf{Tbl}$
are (co)complete mathematical contexts,
by Prop.\,\ref{prop:context:lim:colim}
(also see the paper 
''The \texttt{FOLE} Table'' 
\cite{kent:fole:era:tbl})
we have the passages
indicated in 
Tbl.\,\ref{tbl:colim:lim:pass}.
%
Tbl.~\ref{tbl:sch:dom:morphs} and
Tbl.~\ref{tbl:rel:db:morphs}
display the definitions for schemed domain and relational database morphisms.

\comment{
The diagram (lax comma) contexts in this paper are defined in 
Tbl.\,\ref{tbl:lax:comma:cxt}.
\begin{table}
\begin{center}
{\fbox{\footnotesize{\begin{tabular}{r@{\hspace{20pt}}l}
$\mathrmbf{SET}
= \mathrmbf{Set}^{\scriptscriptstyle{\Downarrow}}
= \bigl({(\mbox{-})}^{\mathrm{op}}{\Downarrow\,}\mathrmbf{Set}\bigr)$
&
\textit{set diagrams}
\\
$\mathrmbf{DOM}
= \mathrmbf{Dom}^{\!\scriptscriptstyle{\Uparrow}}
= \bigl(\mathrmbf{Cxt}{\,\Uparrow\,}\mathrmbf{Dom}\bigr)$
&
\textit{schemed domains}
\\
$\mathrmbf{DB}
= \mathrmbf{Tbl}^{\scriptscriptstyle{\Downarrow}}
= \bigl({(\mbox{-})}^{\mathrm{op}}{\Downarrow\,}\mathrmbf{Tbl}\bigr)$
&
\textit{databases}
\end{tabular}}}}
\end{center}
\caption{Diagram (Lax Comma) Contexts}
\label{tbl:lax:comma:cxt}
\end{table}
}
%





%
\begin{figure}
\begin{center}
{{\begin{tabular}{
c@{\hspace{20pt}}@{\hspace{55pt}}c}
\comment{{\begin{tabular}{c}
\setlength{\unitlength}{0.52pt}
\begin{picture}(100,180)(-40,-70)
\put(0,118){\makebox(0,0){\footnotesize{$\mathrmbf{List}$}}}
\put(50,60){\makebox(0,0){\footnotesize{$\mathrmbf{Set}$}}}
\put(-50,60){\makebox(0,0){\footnotesize{$\mathrmbf{Set}$}}}
\put(0,0){\makebox(0,0){\footnotesize{$\mathrmbf{Set}$}}}
\put(-35,95){\makebox(0,0)[r]{\scriptsize{$\mathrmbfit{arity}$}}}
\put(-35,24){\makebox(0,0)[r]{\scriptsize{$\mathrmbfit{1}_{\mathrmbf{Set}}$}}}
\put(37,95){\makebox(0,0)[l]{\scriptsize{$\mathrmbfit{sort}$}}}
\put(36,24){\makebox(0,0)[l]{\scriptsize{$\mathrmbfit{1}_{\mathrmbf{Set}}$}}}
\put(0,58){\makebox(0,0){{$\xRightarrow{\;\,\alpha\,}$}}}
\put(-15,105){\vector(-1,-1){30}}
\put(15,105){\vector(1,-1){30}}
\put(-45,45){\vector(1,-1){30}}
\put(45,45){\vector(-1,-1){30}}
\put(0,-40){\makebox(0,0){\footnotesize{$\mathrmbf{List}
= \bigl(\mathrmbfit{1}_{\mathrmbf{Set}}{\;\downarrow\,}\mathrmbfit{1}_{\mathrmbf{Set}}\bigr)$}}}
\put(0,-70){\makebox(0,0){\footnotesize{$\mathrmbf{Signatures}$}}}
\end{picture}
\end{tabular}}}
%
%
{{\begin{tabular}{c}
\setlength{\unitlength}{0.52pt}
\begin{picture}(100,180)(-40,-70)
\put(0,118){\makebox(0,0){\footnotesize{$\mathrmbf{Dom}$}}}
\put(50,60){\makebox(0,0){\footnotesize{$\mathrmbf{Cls}$}}}
\put(-50,60){\makebox(0,0){\footnotesize{$\mathrmbf{Set}$}}}
\put(0,0){\makebox(0,0){\footnotesize{$\mathrmbf{Set}$}}}
\put(-35,95){\makebox(0,0)[r]{\scriptsize{$\mathrmbfit{arity}$}}}
\put(-35,24){\makebox(0,0)[r]{\scriptsize{$\mathrmbfit{1}_{\mathrmbf{Set}}$}}}
\put(37,95){\makebox(0,0)[l]{\scriptsize{$\mathrmbfit{data}$}}}
\put(36,24){\makebox(0,0)[l]{\scriptsize{$\mathrmbfit{sort}$}}}
\put(0,58){\makebox(0,0){{$\xRightarrow{\;\,\sigma\,}$}}}
\put(-15,105){\vector(-1,-1){30}}
\put(15,105){\vector(1,-1){30}}
\put(-45,45){\vector(1,-1){30}}
\put(45,45){\vector(-1,-1){30}}
\put(0,-40){\makebox(0,0){\footnotesize{$\mathrmbf{Dom}
= \bigl(\mathrmbfit{1}_{\mathrmbf{Set}}{\;\downarrow\,}\mathrmbfit{sort}\bigr)$}}}
\put(0,-70){\makebox(0,0){\footnotesize{$\mathrmbf{Signed}\,\;\mathrmbf{Domains}$}}}
\end{picture}
\end{tabular}}}
%
&
%
{{\begin{tabular}{c}
\setlength{\unitlength}{0.52pt}
\begin{picture}(100,180)(-40,-70)
\put(0,118){\makebox(0,0){\footnotesize{$\mathrmbf{Tbl}$}}}
\put(60,60){\makebox(0,0){\footnotesize{$\mathrmbf{Dom}^{\mathrm{op}}$}}}
\put(-50,60){\makebox(0,0){\footnotesize{$\mathrmbf{Set}$}}}
\put(0,0){\makebox(0,0){\footnotesize{$\mathrmbf{Set}$}}}
\put(-35,95){\makebox(0,0)[r]{\scriptsize{$\mathrmbfit{key}$}}}
\put(-35,24){\makebox(0,0)[r]{\scriptsize{$\mathrmbfit{1}_{\mathrmbf{Set}}$}}}
\put(37,95){\makebox(0,0)[l]{\scriptsize{$\mathrmbfit{dom}^{\mathrm{op}}$}}}
\put(36,24){\makebox(0,0)[l]{\scriptsize{$\mathrmbfit{tup}$}}}
\put(0,58){\makebox(0,0){{$\xRightarrow{\;\,\tau\,}$}}}
\put(-15,105){\vector(-1,-1){30}}
\put(15,105){\vector(1,-1){30}}
\put(-45,45){\vector(1,-1){30}}
\put(45,45){\vector(-1,-1){30}}
\put(0,-40){\makebox(0,0){\footnotesize{$\mathrmbf{Tbl}
= \bigl(\mathrmbfit{1}_{\mathrmbf{Set}}{\;\downarrow\,}\mathrmbfit{tup}\bigr)$}}}
\put(0,-70){\makebox(0,0){\footnotesize{$\mathrmbf{Tables}$}}}
\end{picture}
\end{tabular}}}
%
\\\\
\\\\
\comment{{\begin{tabular}{c}
\setlength{\unitlength}{0.52pt}
\begin{picture}(100,180)(-40,-70)
\put(-23,118){\makebox(0,0)[l]{\footnotesize{$\mathrmbf{LIST}=\mathrmbf{List}^{\scriptscriptstyle{\Uparrow}}$}}} 
\put(50,60){\makebox(0,0){\footnotesize{$\mathrmbf{Set}^{\!\scriptscriptstyle{\Uparrow}}$}}}
\put(-50,60){\makebox(0,0){\footnotesize{$\mathrmbf{Set}^{\!\scriptscriptstyle{\Uparrow}}$}}}
\put(0,0){\makebox(0,0){\footnotesize{$\mathrmbf{Set}^{\!\scriptscriptstyle{\Uparrow}}$}}}
\put(-35,95){\makebox(0,0)[r]{\scriptsize{$\mathring{\mathrmbfit{arity}}$}}}
\put(-35,24){\makebox(0,0)[r]{\scriptsize{$\mathrmbfit{1}_{\mathrmbf{Set}^{\!\scriptscriptstyle{\Uparrow}}}$}}}
\put(37,95){\makebox(0,0)[l]{\scriptsize{$\mathring{\mathrmbfit{sort}}$}}}
\put(36,24){\makebox(0,0)[l]{\scriptsize{$\mathrmbfit{1}_{\mathrmbf{Set}^{\!\scriptscriptstyle{\Uparrow}}}$}}}
%
\put(-15,105){\vector(-1,-1){30}}
\put(15,105){\vector(1,-1){30}}
\put(-45,45){\vector(1,-1){30}}
\put(45,45){\vector(-1,-1){30}}
%
%
\put(0,58){\makebox(0,0){{$\xRightarrow{\;\,\mathring{\alpha}\,}$}}}
%
%
\put(0,-40){\makebox(0,0){\footnotesize{$
\mathrmbf{LIST}
{{
= \bigl(\mathrmbf{Cxt}{\,\Uparrow\,}\mathrmbf{List}\bigr)
}}
$}}}
\put(0,-75){\makebox(0,0){\footnotesize{$\mathrmbf{Schemas}$}}}
\end{picture}
\end{tabular}}}
%
%
{{\begin{tabular}{c}
\setlength{\unitlength}{0.52pt}
\begin{picture}(100,180)(-40,-70)
\put(-24.5,118){\makebox(0,0)[l]{\footnotesize{$\mathrmbf{DOM}=\mathrmbf{Dom}^{\scriptscriptstyle{\Uparrow}}$}}}
\put(50,60){\makebox(0,0){\footnotesize{$\mathrmbf{CLS}$}}}
\put(-50,60){\makebox(0,0){\footnotesize{$\mathrmbf{LIST}$}}}
\put(0,0){\makebox(0,0){\footnotesize{$\mathrmbf{Set}^{\!\scriptscriptstyle{\Uparrow}}$}}}
\put(-35,95){\makebox(0,0)[r]{\scriptsize{$\mathring{\mathrmbfit{sign}}$}}}
\put(-35,24){\makebox(0,0)[r]{\scriptsize{$\mathring{\mathrmbfit{sort}}$}}}
\put(37,95){\makebox(0,0)[l]{\scriptsize{$\mathring{\mathrmbfit{data}}$}}}
\put(36,24){\makebox(0,0)[l]{\scriptsize{$\mathring{\mathrmbfit{sort}}$}}}
%
\qbezier(0,50)(-8,42)(-16,34)
\qbezier(0,50)(8,42)(16,34)
%
\put(-15,105){\vector(-1,-1){30}}
\put(15,105){\vector(1,-1){30}}
\put(-45,45){\vector(1,-1){30}}
\put(45,45){\vector(-1,-1){30}}
\put(-5,-40){\makebox(0,0){\footnotesize{$
\mathrmbf{DOM}
\underset{\textstyle{
= \mathrmbf{LIST}{\times_{\mathrmbf{Set}^{\!\scriptscriptstyle{\Uparrow}}}}\mathrmbf{CLS}}}
{{
= \mathrmbf{Dom}^{\scriptscriptstyle{\Uparrow}}
= \bigl(\mathrmbf{Cxt}{\,\Uparrow\,}\mathrmbf{Dom}\bigr)
}}
$}}}
%
%
\end{picture}
\end{tabular}}}
%
&
%
{{\begin{tabular}{c}
\setlength{\unitlength}{0.52pt}
\begin{picture}(100,180)(-40,-70)
\put(0,118){\makebox(0,0){\footnotesize{$\mathrmbf{DB}$}}}
\put(60,60){\makebox(0,0){\footnotesize{$\mathrmbf{DOM}^{\mathrm{op}}$}}}
\put(-50,60){\makebox(0,0){\footnotesize{$\mathrmbf{SET}$}}}
\put(0,0){\makebox(0,0){\footnotesize{$\mathrmbf{SET}$}}}
\put(-35,95){\makebox(0,0)[r]{\scriptsize{$\mathring{\mathrmbfit{key}}$}}}
\put(-35,24){\makebox(0,0)[r]{\scriptsize{$\mathrmbfit{1}_{\mathrmbf{SET}}$}}}
\put(37,95){\makebox(0,0)[l]{\scriptsize{$\mathring{\mathrmbfit{dom}}^{\mathrm{op}}$}}}
\put(36,24){\makebox(0,0)[l]{\scriptsize{$\mathring{\mathrmbfit{tup}}$}}}
\put(0,58){\makebox(0,0){{$\xRightarrow{\;\,{\tau}\,}$}}}
\put(-15,105){\vector(-1,-1){30}}
\put(15,105){\vector(1,-1){30}}
\put(-45,45){\vector(1,-1){30}}
\put(45,45){\vector(-1,-1){30}}
%
\put(-105,-28){\makebox(0,0)[l]{\footnotesize{$
\mathrmbf{DB} = \mathrmbf{Tbl}^{\!\scriptscriptstyle{\Downarrow}}
=     \bigl({(\mbox{-})}^{\mathrm{op}}{\,\Downarrow\,}\mathrmbf{Tbl}\bigr)
$}}}
%
%
%
\end{picture}
\end{tabular}}}
\\ & \\
$\mathring{\mathrmbf{Dom}} = \int_\mathrmit{data}:
\mathrmbf{Cls}\xrightarrow{\,\hat{\mathrmbfit{dom}}\;}\mathrmbf{Adj}$ 
& 
$\mathrmbf{Db} = \int_\mathrmit{data}:
\mathrmbf{Cls}^{\mathrm{op}}\!\xrightarrow{\,\hat{\mathrmbfit{db}}\;}\mathrmbf{Adj}$
\\ & \\
  {\footnotesize{$\mathrmbf{Schemed}\,\;\mathrmbf{Domains}$}}
& {\footnotesize{$\mathrmbf{Databases}$}}
\end{tabular}}}
\end{center}
\caption{Diagram/Comma/Fibered Contexts}
\label{fig:diag:comma:cxts}
\end{figure}

\comment{
.\newline\mbox{}\hfill
$\mathrmbf{Db}
= \int_\mathrmit{data}:
\mathrmbf{Cls}^{\mathrm{op}}\!\xrightarrow{\,\hat{\mathrmbfit{db}}\;}\mathrmbf{Adj}$.
\hfill\mbox{}
\newline\mbox{}\hfill
$\mathring{\mathrmbf{Dom}}
= \int_\mathrmit{data}:
\mathrmbf{Cls}\xrightarrow{\,\mathring{\mathrmbfit{dom}}\;}\mathrmbf{Adj}$.
\hfill\mbox{}
\newline
}



%
\comment{
\begin{table}
\begin{center}
{{\scriptsize{$\begin{array}
{|@{\hspace{3pt}}c@{\hspace{5pt}}l@{\hspace{6pt}}c@{\hspace{3pt}}|}
\multicolumn{1}{c}{\text{\scriptsize{\textsf{morphism}}}}
&
\text{\scriptsize{\textsf{bridge}}}
&
\multicolumn{1}{c}{\text{\scriptsize{\textsf{context}}}}
\rule[-5pt]{1pt}{0pt}
\\

\comment{
\multicolumn{3}{l}{\textsf{Schema}}
\\\hline
{\langle{\mathrmbf{R}_{2},\mathrmbfit{S}_{2}}\rangle} 
\xrightarrow{{\langle{\mathrmbfit{R},\,\sigma}\rangle}}
{\langle{\mathrmbf{R}_{1},\mathrmbfit{S}_{1}}\rangle}
&
\mathrmbfit{S}_{2}\xRightarrow{\;\sigma\;\,}\mathrmbfit{R}{\,\circ\,}\mathrmbfit{S}_{1}
: \mathrmbf{R}_{2}\rightarrow\mathrmbf{List}
&
\mathrmbf{LIST}
\\\hline
{\langle{\mathrmbf{R}_{2},\mathrmbfit{S}_{2},X_{2}}\rangle} 
\xrightarrow{\;{\langle{\mathrmbfit{R},\hat{\varphi},f}\rangle}\;}
{\langle{\mathrmbf{R}_{1},\mathrmbfit{S}_{1},X_{1}}\rangle}
&
\mathrmbfit{S}_{2}
\xRightarrow{\;\grave{\varphi}\;\,}
\mathrmbfit{R}{\,\circ\,}\mathrmbfit{S}_{1}\circ\grave{\mathrmbfit{list}}_{f}
&
\mathring{\mathrmbf{List}}
\\\hline
\multicolumn{3}{c}{}
\\
}

\multicolumn{3}{l}{\textsf{Schemed Domain}}
\\\hline
{\langle{\mathrmbf{R}_{2},\mathrmbfit{Q}_{2}}\rangle} 
\xrightarrow{{\langle{\mathrmbfit{R},\,\zeta}\rangle}}
{\langle{\mathrmbf{R}_{1},\mathrmbfit{Q}_{1}}\rangle}
&
\mathrmbfit{Q}_{2}\xRightarrow{\;\zeta\;\,}\mathrmbfit{R}{\,\circ\,}\mathrmbfit{Q}_{1}
: \mathrmbf{R}_{2}\rightarrow\mathrmbf{Dom}
&
\mathrmbf{DOM}
\\
{\langle{\mathrmbf{R}_{2},\mathrmbfit{S}_{2},\mathrmbfit{A}_{2}}\rangle}
\xrightarrow{{\langle{\mathrmbfit{R},\sigma,\alpha}\rangle}}
{\langle{\mathrmbf{R}_{1},\mathrmbfit{S}_{1},\mathrmbfit{A}_{1}}\rangle}
&
\mathrmbfit{S}_{2}\xRightarrow{\;\sigma\;\,}\mathrmbfit{R}{\,\circ\,}\mathrmbfit{S}_{1}
: \mathrmbf{R}_{2}\rightarrow\mathrmbf{List}
&
\\
&
\mathrmbfit{A}_{2}\xRightarrow{\;\alpha\;\,}\mathrmbfit{R}{\,\circ\,}\mathrmbfit{A}_{1}
: \mathrmbf{R}_{2}\rightarrow\mathrmbf{Cls}
&
\textit{projection}
\\\hline
{\langle{\mathrmbf{R}_{2},\mathrmbfit{S}_{2},\mathcal{A}_{2}}\rangle} 
\xrightarrow{\;{\langle{\mathrmbfit{R},\hat{\varphi},f,g}\rangle}\;}
{\langle{\mathrmbf{R}_{1},\mathrmbfit{S}_{1},\mathcal{A}_{1}}\rangle}
&
\mathrmbfit{S}_{2}
\xRightarrow{\;\grave{\varphi}\;\,}
\mathrmbfit{R}{\,\circ\,}\mathrmbfit{S}_{1}\circ\grave{\mathrmbfit{dom}}_{{\langle{f,g}\rangle}}
&
\mathring{\mathrmbf{Dom}}
\\\hline
\multicolumn{3}{c}{}
\\
\multicolumn{3}{l}{\textsf{Relational Database}}
\\\hline
{\langle{\mathrmbf{R}_{2},\mathrmbfit{T}_{2}}\rangle} 
\xleftarrow{{\langle{\mathrmbfit{R},\,\xi}\rangle}}
{\langle{\mathrmbf{R}_{1},\mathrmbfit{T}_{1}}\rangle}
&
\mathrmbfit{T}_{2}\xLeftarrow{\;\,\xi\;}\mathrmbfit{R}^{\mathrm{op}}\!{\circ\,}\mathrmbfit{T}_{1}
: \mathrmbf{R}_{2}^{\mathrm{op}}\!\rightarrow\mathrmbf{Tbl}
&
\mathrmbf{DB}
\\
{\langle{\mathrmbf{R}_{2},\mathrmbfit{Q}_{2},\mathrmbfit{K}_{2},\tau_{2}}\rangle} 
\xleftarrow{{\langle{\mathrmbfit{R},\zeta,\kappa}\rangle}} 
{\langle{\mathrmbf{R}_{1},\mathrmbfit{Q}_{1},\mathrmbfit{K}_{1},\tau_{1}}\rangle}
&
\mathrmbfit{Q}_{2}\xRightarrow{\;\zeta\;\,}\mathrmbfit{R}{\circ\,}\mathrmbfit{Q}_{1}
: \mathrmbf{R}_{2}\rightarrow\mathrmbf{Dom}
&
\\
&
\mathrmbfit{K}_{2}\xLeftarrow{\;\,\kappa\;}\mathrmbfit{R}^{\mathrm{op}}\!{\circ\,}\mathrmbfit{K}_{1}
: \mathrmbf{R}_{2}^{\mathrm{op}}\!\rightarrow\mathrmbf{Set}
&
\textit{projection}
\\\hline
{\langle{\mathrmbf{R}_{2},\mathrmbfit{T}_{2},\mathcal{A}_{2}}\rangle} 
\xleftarrow{\;{\langle{\mathrmbfit{R},\hat{\psi},f,g}\rangle}\;}
{\langle{\mathrmbf{R}_{1},\mathrmbfit{T}_{1},\mathcal{A}_{1}}\rangle}
&
\mathrmbfit{T}_{2}
\xLeftarrow{\;\,\grave{\psi}\;}
\mathrmbfit{R}^{\mathrm{op}}\!{\circ\,}\mathrmbfit{T}_{1}\circ\acute{\mathrmbfit{tbl}}_{{\langle{f,g}\rangle}}
&
\mathrmbf{Db}
\\
{\langle{\mathrmbf{R}_{2},\mathrmbfit{S}_{2},\mathcal{A}_{2},\mathrmbfit{K}_{2},\tau_{2}}\rangle}
\xleftarrow{\;{\langle{\mathrmbfit{R},\hat{\varphi},f,g,\kappa}\rangle}\;}
{\langle{\mathrmbf{R}_{1},\mathrmbfit{S}_{1},\mathcal{A}_{1},\mathrmbfit{K}_{1},\tau_{1}}\rangle}
&
\mathrmbfit{S}_{2}
\xRightarrow{\;\grave{\varphi}\;\,}
\mathrmbfit{R}{\circ\,}\mathrmbfit{S}_{1}\circ\grave{\mathrmbfit{dom}}_{{\langle{f,g}\rangle}}
&
\\
&
\mathrmbfit{K}_{2}
\xLeftarrow{\;\,\kappa\;}
\mathrmbfit{R}^{\mathrm{op}}\!{\circ\,}\mathrmbfit{K}_{1}
: \mathrmbf{R}_{2}^{\mathrm{op}}\!\rightarrow\mathrmbf{Set}
&
\textit{projection}
\\\hline
\end{array}$}}}
\end{center}
\caption{Summary of Morphisms}
\label{tbl:summary-morphs}
\end{table}
}
%

%

%
\begin{table}
\begin{center}
{\fbox{\scriptsize{\setlength{\extrarowheight}{2pt}
{\begin{tabular}
{|@{\hspace{3pt}}l@{\hspace{10pt}}c@{\hspace{10pt}}l@{\hspace{5pt}}|}
\hline
&
${\langle{\mathrmbf{R}_{2},\mathrmbfit{Q}_{2}}\rangle} 
\xrightarrow{{\langle{\mathrmbfit{R},\,\varsigma}\rangle}}
{\langle{\mathrmbf{R}_{1},\mathrmbfit{Q}_{1}}\rangle}$
&
$\mathrmbf{DOM}$
\\\hline\hline
&
${\langle{\mathrmbf{R}_{2},\mathrmbfit{S}_{2}}\rangle} 
\xrightarrow{\;{\langle{\mathrmbfit{R},\varphi}\rangle}\;}
{\langle{\mathrmbf{R}_{1},\mathrmbfit{S}_{1}}\rangle}$
&
$\mathring{\mathrmbf{Dom}}(\mathcal{A})$
\\\hline\hline
&
{\footnotesize{$
{\langle{\mathrmbf{R}_{2},\mathrmbfit{S}_{2},\mathcal{A}_{2}}\rangle}
\xrightarrow{{\langle{\mathrmbfit{R},\hat{\varphi},f,g}\rangle}}
{\langle{\mathrmbf{R}_{1},\mathrmbfit{S}_{1},\mathcal{A}_{1}}\rangle}
$}}
&
$\mathring{\mathrmbf{Dom}}$
\\\hline
\end{tabular}}}}}
\end{center}
\caption{Schemed Domain Morphisms}
\label{tbl:sch:dom:morphs}
\end{table}
\begin{table}
\begin{center}
{\fbox{\scriptsize{\setlength{\extrarowheight}{2pt}
{\begin{tabular}
{|@{\hspace{3pt}}l@{\hspace{10pt}}c@{\hspace{10pt}}l@{\hspace{5pt}}|}
\hline
&
${\langle{\mathrmbf{R}_{2},\mathrmbfit{T}_{2}}\rangle} 
\xleftarrow{{\langle{\mathrmbfit{R},\,\xi}\rangle}}
{\langle{\mathrmbf{R}_{1},\mathrmbfit{T}_{1}}\rangle}$
&
$\mathrmbf{DB}$
\\\hline
&
{\footnotesize{$
{\langle{\mathrmbf{R}_{2},\mathrmbfit{Q}_{2},\mathrmbfit{K}_{2},\tau_{2}}\rangle} 
\xleftarrow{{\langle{\mathrmbfit{R},\varsigma,\kappa}\rangle}}
{\langle{\mathrmbf{R}_{1},\mathrmbfit{Q}_{1},\mathrmbfit{K}_{1},\tau_{1}}\rangle}
$}}
&
(proj)
\\\hline\hline
&
${\langle{\mathrmbf{R}_{2},\mathrmbfit{T}_{2}}\rangle} 
\xleftarrow{\;{\langle{\mathrmbfit{R},\psi}\rangle}\;}
{\langle{\mathrmbf{R}_{1},\mathrmbfit{T}_{1}}\rangle}$
&
$\mathrmbf{Db}(\mathcal{A})$
\\\hline
&
{\footnotesize{$
{\langle{\mathrmbf{R}_{2},\mathrmbfit{S}_{2},\mathrmbfit{K}_{2},\tau_{2}}\rangle}
\xleftarrow{{\langle{\mathrmbfit{R},\varphi,\kappa}\rangle}}
{\langle{\mathrmbf{R}_{1},\mathrmbfit{S}_{1},\mathrmbfit{K}_{1},\tau_{1}}\rangle}
$}}
&
(proj)
\\\hline\hline
&
${\langle{\mathrmbf{R}_{2},\mathrmbfit{T}_{2},\mathcal{A}_{2}}\rangle} 
\xleftarrow{\;{\langle{\mathrmbfit{R},\hat{\psi},f,g}\rangle}\;}
{\langle{\mathrmbf{R}_{1},\mathrmbfit{T}_{1},\mathcal{A}_{1}}\rangle}$
&
$\mathrmbf{Db}$
\\\hline
&
{\footnotesize{$
{\langle{\mathrmbf{R}_{2},\mathrmbfit{S}_{2},\mathcal{A}_{2},\mathrmbfit{K}_{2},\tau_{2}}\rangle}
\xleftarrow{{\langle{\mathrmbfit{R},\hat{\varphi},f,g,\hat{\kappa}}\rangle}}
{\langle{\mathrmbf{R}_{1},\mathrmbfit{S}_{1},\mathcal{A}_{1},\mathrmbfit{K}_{1},\tau_{1}}\rangle}
$}}
&
(proj)
\end{tabular}}}}}
\end{center}
\caption{Relational Database Morphisms}
\label{tbl:rel:db:morphs}
\end{table}
%

.

\newpage
\subsubsection{Passages}\label{sub:sub:sec:pass}

%
%
The table projection passages (RHS) below
are cocontinuous \cite{kent:fole:era:tbl},
either by using fibrations ($\scriptstyle{\int}$)
or by using comma contexts ($\scriptstyle{\downarrow}$).
%
\begin{table}
\begin{center}
{{\footnotesize{\begin{tabular}{|@{\hspace{10pt}}r@{\hspace{20pt}}l@{\hspace{10pt}}l@{\hspace{10pt}}|}
\hline
$\mathrmbf{LIST}=
\mathrmbf{List}^{\!\scriptscriptstyle{\Uparrow}} 
\xrightarrow[{(\text{-})}{\,\circ\,}\mathrmbfit{sort}]{\;\mathrmbfit{sort}^{\!\scriptscriptstyle{\Uparrow}}\;}
\mathrmbf{\mathrmbf{Set}}^{\!\scriptscriptstyle{\Uparrow}}$
&
$\mathrmbf{List}\xrightarrow{\;\mathrmbfit{sort}\;}\mathrmbf{Set}$
&
$\overset{\int}{}$
\\\hline
$\mathrmbf{DOM}=
\mathrmbf{Dom}^{\!\scriptscriptstyle{\Uparrow}} 
\xrightarrow[{(\text{-})}{\,\circ\,}\mathrmbfit{data}]
{\mathring{\mathrmbfit{data}}{\,\triangleq\,}\mathrmbfit{data}^{\!\scriptscriptstyle{\Uparrow}}\;}
\mathrmbf{Cls}^{\!\scriptscriptstyle{\Uparrow}}$
&
$\mathrmbf{Dom}\xrightarrow{\mathrmbfit{data}}\mathrmbf{Cls}$ 
&
$\overset{\int}{}$
\\
$\mathrmbf{DOM}=
\mathrmbf{Dom}^{\!\scriptscriptstyle{\Uparrow}} 
\xrightarrow[{(\text{-})}{\,\circ\,}\mathrmbfit{sign}]
{\mathring{\mathrmbfit{sign}}{\,\triangleq\,}\mathrmbfit{sign}^{\!\scriptscriptstyle{\Uparrow}}\;}
\mathrmbf{List}^{\!\scriptscriptstyle{\Uparrow}}$
&
$\mathrmbf{Dom}\xrightarrow{\mathrmbfit{sign}}\mathrmbf{List}$ 
&
$\overset{\downarrow}{}$
\\\hline
$\mathrmbf{DB}^{\mathrm{op}}
{\,\cong\,}
(\mathrmbf{Tbl}^{\mathrm{op}})^{\!\scriptscriptstyle{\Uparrow}}
\xrightarrow
[{(\text{-})^{\mathrm{op}}}{\,\circ\,}\mathrmbfit{dom}]
{\mathring{\mathrmbfit{dom}}{\,\triangleq\,}\mathrmbfit{dom}^{\!\scriptscriptstyle{\Uparrow}}}
\mathrmbf{Dom}^{\!\scriptscriptstyle{\Uparrow}}
=\mathrmbf{DOM}$
&
$\mathrmbf{Tbl}^{\mathrm{op}}\xrightarrow{\;\mathrmbfit{dom}\;}\mathrmbf{Dom}$
&
$\overset{\int}{}$
\\
$\mathrmbf{DB}^{\mathrm{op}}
{\,\cong\,}
(\mathrmbf{Tbl}^{\mathrm{op}})^{\!\scriptscriptstyle{\Uparrow}}
\xrightarrow
[{(\text{-})^{\mathrm{op}}}{\,\circ\,}\mathrmbfit{sign}]
{\mathring{\mathrmbfit{sign}}{\,\triangleq\,}\mathrmbfit{sign}^{\!\scriptscriptstyle{\Uparrow}}}
\mathrmbf{List}^{\!\scriptscriptstyle{\Uparrow}}
=\mathrmbf{LIST}$
&
$\mathrmbf{Tbl}^{\mathrm{op}}\xrightarrow{\;\mathrmbfit{sign}\;}\mathrmbf{List}$
&
$\overset{\downarrow}{}$
\\
$\mathrmbf{DB}^{\mathrm{op}}
\cong \bigl(\mathrmbf{Tbl}^{\mathrm{op}}\bigr)^{\!\scriptscriptstyle{\Uparrow}}
\xrightarrow[{(\text{-})^{\mathrm{op}}}{\,\circ\,}\mathrmbfit{data}]
{\mathring{\mathrmbfit{data}}{\,\triangleq\,}\mathrmbfit{data}^{\!\scriptscriptstyle{\Uparrow}}}
\mathrmbf{Cls}^{\!\scriptscriptstyle{\Uparrow}}$
&
$\mathrmbf{Tbl}^{\mathrm{op}}\xrightarrow{\!\mathrmbfit{data}\;}\mathrmbf{Cls}$
&
$\overset{\int}{}$
\\\hline
\end{tabular}}}}
\end{center}
\caption{Projection Passages}
\label{defs:proj:pass}
\end{table}
%
%
%
%
Hence,
we have various commuting diagrams as in Prop.~\ref{prop:passage:lim:colim}.

\newpage


%
\begin{corollary}\label{cor:db:join:sch:ref}\mbox{}
The signed domain of the join (limit) of a database is 
the sum (colimit) of the underlying schemed domain.
The signature of the colimit of a schemed domain is
the sum (colimit) of the underlying schema.
Hence,
the signature of the join of a database is the
reference signature (colimit) of the underlying schema. 
%
\begin{center}
{{\begin{tabular}{c}
\setlength{\unitlength}{0.5pt}
\begin{picture}(360,230)(-180,-115)
\put(50,100){\makebox(0,0)[r]{\footnotesize{$\mathrmbf{DB} = 
\bigl({(\mbox{-})}{\,\Downarrow\,}\mathrmbf{Tbl}\bigr)$}}}
\put(70,0){\makebox(0,0)[r]{\footnotesize{$\mathrmbf{DOM}^{\mathrm{op}} = 
\bigl({(\mbox{-})}{\,\Uparrow\,}\mathrmbf{Dom}\bigr)^{\mathrm{op}}$}}}
\put(70,-100){\makebox(0,0)[r]{\footnotesize{$\mathrmbf{LIST}^{\mathrm{op}} = 
\bigl({(\mbox{-})}{\,\Uparrow\,}\mathrmbf{List}\bigr)^{\mathrm{op}}$}}}
\put(160,100){\makebox(0,0){\footnotesize{$\mathrmbf{Tbl}$}}}
\put(165,0){\makebox(0,0){\footnotesize{$\mathrmbf{Dom}^{\mathrm{op}}$}}}
\put(165,-100){\makebox(0,0){\footnotesize{$\mathrmbf{List}^{\mathrm{op}}$}}}
\put(90,110){\makebox(0,0){\scriptsize{$\mathrmbfit{lim}$}}}
\put(100,-10){\makebox(0,0){\scriptsize{$\mathrmbfit{colim}^{\mathrm{op}}$}}}
\put(100,-110){\makebox(0,0){\scriptsize{$\mathrmbfit{colim}^{\mathrm{op}}$}}}
\put(40,50){\makebox(0,0)[r]{\scriptsize{$\mathring{\mathrmbfit{dom}}^{\mathrm{op}} = 
\bigl({(\mbox{-})}{\,\Downarrow\,}\mathrmbfit{dom}^{\mathrm{op}}\bigr)$}}}
\put(165,50){\makebox(0,0)[l]{\scriptsize{$\mathrmbfit{dom}^{\mathrm{op}}$}}}
\put(-165,0){\makebox(0,0)[r]{\scriptsize{$\mathring{\mathrmbfit{sch}}^{\mathrm{op}} = 
\bigl({(\mbox{-})}{\,\Downarrow\,}\mathrmbfit{sign}\bigr)^{\mathrm{op}}$}}}
\put(40,-50){\makebox(0,0)[r]{\scriptsize{$\mathring{\mathrmbfit{sign}}^{\mathrm{op}} = 
\bigl({(\mbox{-})}{\,\Uparrow\,}\mathrmbfit{sign}\bigr)^{\mathrm{op}}$}}}
\put(223,0){\makebox(0,0)[l]{\scriptsize{$\mathrmbfit{sign}^{\mathrm{op}}$}}}
\put(165,-50){\makebox(0,0)[l]{\scriptsize{$\mathrmbfit{sign}^{\mathrm{op}}$}}}
\put(65,100){\vector(1,0){60}}
\put(65,0){\vector(1,0){60}}
\put(65,-100){\vector(1,0){60}}
\put(0,85){\vector(0,-1){70}}
\put(160,85){\vector(0,-1){70}}
\put(0,-15){\vector(0,-1){70}}
\put(160,-15){\vector(0,-1){70}}
\put(-160,85){\line(0,-1){170}}
\put(-145,85){\oval(30,30)[tl]}
\put(-145,-85){\oval(30,30)[bl]}
\put(-140,-100){\vector(1,0){0}}
\put(215,85){\line(0,-1){170}}
\put(200,85){\oval(30,30)[tr]}
\put(200,-85){\oval(30,30)[br]}
\put(195,-100){\vector(-1,0){0}}
\end{picture}
\end{tabular}}}
\end{center}
%
For any database $\mathcal{D}$
with schema $\mathcal{S}=\mathring{\mathrmbfit{sch}}(\mathcal{D})$
and join table $\mathrmbfit{T} = \mathrmbfit{lim}(\mathcal{D})$,
the signature of $\mathrmbfit{T}$ is the reference signature of $\mathcal{S}$:
$\mathrmbfit{sign}(\mathrmbfit{T})
=\mathrmbfit{colim}(\mathcal{S})$.
\end{corollary}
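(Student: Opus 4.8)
The plan is to assemble the result from the commutativity of the square diagram displayed in the statement, together with the continuity/cocontinuity facts already established for the projection passages. First I would recall the building blocks: by Prop.~\ref{prop:passage:lim:colim} applied to the cocontinuous passage $\mathrmbf{Tbl}^{\mathrm{op}}\xrightarrow{\mathrmbfit{dom}}\mathrmbf{Dom}$ we get $\mathrmbfit{colim}^{\mathrm{op}}{\,\circ\,}\mathrmbfit{dom}^{\mathrm{op}}\cong\mathring{\mathrmbfit{dom}}^{\mathrm{op}}{\,\circ\,}\mathrmbfit{colim}^{\mathrm{op}}$ — equivalently $\mathrmbfit{lim}$ in $\mathrmbf{DB}$ followed by $\mathrmbfit{dom}$ agrees with $\mathring{\mathrmbfit{dom}}^{\mathrm{op}}$ followed by $\mathrmbfit{colim}^{\mathrm{op}}$ in $\mathrmbf{DOM}^{\mathrm{op}}$. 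This gives the first assertion: the signed domain of the join (limit) of a database is the colimit of the underlying schemed domain diagram. Here I would note that $\mathrmbf{DB}=\mathrmbf{Tbl}^{\scriptscriptstyle{\Downarrow}}$ and $\mathrmbf{DOM}=\mathrmbf{Dom}^{\scriptscriptstyle{\Uparrow}}$, so taking the opposite of the comma construction turns ``$\mathrmbfit{lim}$'' on the database side into ``$\mathrmbfit{colim}$'' on the schemed-domain side, which is exactly why the diagram has $\mathrmbfit{lim}$ on top and $\mathrmbfit{colim}^{\mathrm{op}}$ below.

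Second I would treat the lower square, which concerns $\mathrmbf{Dom}\xrightarrow{\mathrmbfit{sign}}\mathrmbf{List}$. From \S2.3 of the {\ttfamily FOLE} Table paper (invoked in Prop.~\ref{prop:pullback:sch:dom:var}), $\mathrmbfit{sign}$ is the projection from a comma/pullback context and hence is cocontinuous; so again by Prop.~\ref{prop:passage:lim:colim}, $\mathring{\mathrmbfit{sign}}^{\mathrm{op}}{\,\circ\,}(\text{something})\cong(\text{something}){\,\circ\,}\mathrmbfit{sign}^{\mathrm{op}}$, i.e. the signature of the colimit of a schemed domain is the colimit of the underlying schema diagram. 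Combining the two squares by vertical pasting — the standard observation that a composite of commuting naturality squares commutes — yields the outer rectangle: $\mathrmbfit{sign}$ of $\mathrmbfit{lim}$ of a database equals $\mathrmbfit{colim}$ of $\mathring{\mathrmbfit{sch}}$ of that database, which is precisely the final displayed equation $\mathrmbfit{sign}(\mathrmbfit{T})=\mathrmbfit{colim}(\mathcal{S})$ with $\mathcal{S}=\mathring{\mathrmbfit{sch}}(\mathcal{D})$ and $\mathrmbfit{T}=\mathrmbfit{lim}(\mathcal{D})$. The composite vertical passage on the database side is $\mathring{\mathrmbfit{sch}}^{\mathrm{op}}=\mathring{\mathrmbfit{dom}}^{\mathrm{op}}{\,\circ\,}\mathring{\mathrmbfit{sign}}^{\mathrm{op}}$, matching the left-hand curved arrow in the figure.

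The main obstacle I anticipate is bookkeeping with the ${}^{\mathrm{op}}$'s and the $\scriptscriptstyle\Uparrow$ versus $\scriptscriptstyle\Downarrow$ diagram contexts: one must be careful that a limiting cone in $\mathrmbf{DB}$ really does project, under $\mathring{\mathrmbfit{dom}}$, to a \emph{colimiting} cocone of the schemed-domain diagram in $\mathrmbf{DOM}$ (not a limit), and that this is consistent with the fact that $\mathrmbf{DB}\xrightarrow{}\mathrmbf{Cxt}$ and $\mathrmbf{DOM}\xrightarrow{}\mathrmbf{Cxt}$ have the \emph{same} shape-context projection (so the shape of the colimit reference signature is the colimit of shapes, as expected). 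The cleanest way to avoid confusion is to phrase everything in terms of the passages of Tbl.~\ref{defs:proj:pass} and Prop.~\ref{prop:passage:lim:colim}, never unwinding to explicit cones; then the proof is just: (i) cite cocontinuity of $\mathrmbfit{dom}$ and $\mathrmbfit{sign}$ from \cite{kent:fole:era:tbl}, (ii) apply Prop.~\ref{prop:passage:lim:colim} twice to get the two commuting squares, (iii) paste them, and (iv) read off the equality $\mathrmbfit{sign}(\mathrmbfit{T})=\mathrmbfit{colim}(\mathcal{S})$. A secondary, minor point worth a sentence is that all the contexts involved ($\mathrmbf{Tbl}$, $\mathrmbf{Dom}$, $\mathrmbf{List}$ and their diagram contexts) are complete and cocomplete — already recorded in Prop.~\ref{prop:lim:colim:db:tbl}, Prop.~\ref{prop:lim:colim:dom}, Prop.~\ref{prop:prop:lim:colim:schema} — so all the limits and colimits named in the statement actually exist.
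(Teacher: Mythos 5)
Your proposal is correct and follows essentially the same route as the paper: the paper's proof simply cites Prop.~\ref{prop:passage:lim:colim} together with the completeness of $\mathrmbf{Tbl}$, the cocompleteness of $\mathrmbf{Dom}$ and $\mathrmbf{List}$, and the cocontinuity of $\mathrmbfit{dom}$ and $\mathrmbfit{sign}$, which is exactly the two-squares-then-paste argument you spell out. Your additional care about the ${}^{\mathrm{op}}$ bookkeeping and the existence of the relevant (co)limits is a faithful unpacking of the paper's terser statement, not a different method.
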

\begin{proof}
This is an instance of the result in Prop.~\ref{prop:passage:lim:colim},
since 
$\mathrmbf{Tbl}$ is a complete context,
$\mathrmbf{Dom}$ and $\mathrmbf{List}$ are cocomplete contexts, 
and the passages
$\mathrmbf{Tbl}^{\mathrm{op}}\xrightarrow{\;\mathrmbfit{dom}}\mathrmbf{Dom}$ and
$\mathrmbf{Dom}\xrightarrow{\;\mathrmbfit{sign}\;}\mathrmbf{List}$ 
are cocontinuous.
\hfill\rule{5pt}{5pt}
\end{proof}
%
%
%


%
\begin{table}
\begin{center}
{\fbox{\footnotesize{\begin{tabular}{r@{\hspace{20pt}}l}
$\mathrmbf{LIST}
= \mathrmbf{List}^{\!\scriptscriptstyle{\Uparrow}}
\xrightarrow{\;\mathrmbfit{lim},\mathrmbfit{colim}\;}\mathrmbf{List}$
&
\textit{schemas}
\vspace{.4mm}
\\\hline
$\mathrmbf{DOM}
= \mathrmbf{Dom}^{\!\scriptscriptstyle{\Uparrow}}
\xrightarrow{\;\mathrmbfit{lim},\mathrmbfit{colim}\;}\mathrmbf{Dom}$
&
\\
$\mathring{\mathrmbf{Dom}} = \int_\mathrmit{data}\!\hat{\mathrmbfit{dom}}
\xrightarrow{\;\mathrmbfit{lim},\mathrmbfit{colim}\;} \mathrmbf{Dom}$
&
\textit{schemed domains}
\vspace{.8mm}
\\\hline
$\mathrmbf{DB}
= \mathrmbf{Tbl}^{\scriptscriptstyle{\Downarrow}}
\xrightarrow{\;\mathrmbfit{lim},\mathrmbfit{colim}\;}\mathrmbf{Tbl}$
&
\\
$
\mathrmbf{Db} = \int_\mathrmit{data}\!\hat{\mathrmbfit{db}}
\xrightarrow{\;\mathrmbfit{lim},\mathrmbfit{colim}\;} 
\mathrmbf{Tbl}$
&
\textit{databases}
\end{tabular}}}}
\end{center}
\caption{Lim (Colim) Passages}
\label{tbl:colim:lim:pass}
\end{table}
%

\subsubsection{Bridges}
\label{sub:sec:bridges}

Tbl.\,\ref{bridge:descr}
list the bridges defined and used in this paper
and
Tbl.\,\ref{adjoints:composites}
lists the bridge adjoints and bridge composites.
\begin{table}
\begin{center}
{\fbox{\footnotesize{\setlength{\extrarowheight}{1.6pt}
\begin{tabular}[t]
{|
l@{\hspace{16pt}}l@{\hspace{16pt}}l@{\hspace{16pt}}l@{\hspace{16pt}}l
|}
\hline
$\mathrmbf{DOM}$
&
\S\,\ref{sub:sec:sch:dom:gen}
&
Fig.\,\ref{fig:sch:dom:mor}
&
$\varsigma :
\mathrmbfit{Q}_{2}\Rightarrow\mathrmbfit{R}{\;\circ\;}\mathrmbfit{Q}_{1}$
&
\comment{
\\\cline{3-5}
$\mathrmbf{CLS}$
&
\S\,\ref{sub:sub:sec:dom:proj}
&
&
$\gamma :
\mathrmbfit{C}_{2}\Rightarrow\mathrmbfit{R} \circ \mathrmbfit{C}_{1}$
&
}
\\\hline
$\mathrmbf{Dom}(\mathcal{A})$
&
\S\,\ref{sub:sub:sec:dom:typ:dom:lower}
&
Fig.\,\ref{fig:sch:dom:mor:A}
&
$\varphi : \mathrmbfit{S}_{2}\Rightarrow\mathrmbfit{R} \circ \mathrmbfit{S}_{1}$
&
\\\cline{2-5}
$\mathrmbf{Dom}$
&
\S\,\ref{sub:sec:rel:sch:typ:dom} 
&
Fig.\,\ref{fig:sch:dom:mor:typ:dom}
&
$\acute{\phi} : 
\mathrmbfit{S}_{2}
\Leftarrow
\mathrmbfit{R}{\,\circ\,}\mathrmbfit{S}_{1}
{\,\circ\,}
\acute{\mathrmbfit{dom}}_{{\langle{f,g}\rangle}}$
&
\\\cline{4-5}
&
& 
&
$\grave{\phi} : 
\mathrmbfit{S}_{2}{\,\circ\,}\grave{\mathrmbfit{dom}}_{{\langle{f,g}\rangle}}
\Leftarrow
\mathrmbfit{R}{\,\circ\,}\mathrmbfit{S}_{1}$
&

\comment{
\\\cline{3-5}
$\mathrmbf{Set}$
&
\S\,\ref{sub:sec:append:tuple}
& 
Fig.\,\ref{fig:tup:brid:adj}
&
{\footnotesize{$
\acute{\tau}_{{\langle{f,g}\rangle}} :
{f^{\ast}}^{\mathrm{op}}{\circ\;}\mathrmbfit{tup}_{\mathcal{A}_{2}}
\Leftarrow
\mathrmbfit{tup}_{\mathcal{A}_{1}}
$}}
&
\\\cline{4-5}
&
& 
&
{\footnotesize{$
\grave{\tau}_{{\langle{f,g}\rangle}} :
\mathrmbfit{tup}_{\mathcal{A}_{2}}
\Leftarrow
{\scriptstyle\sum}_{f}^{\mathrm{op}}{\circ\;}\mathrmbfit{tup}_{\mathcal{A}_{1}}$}}
&
}

\\\hline\hline
$\mathrmbf{DB}$
&
\S\,\ref{sub:sec:rel:db:gen} 
&
Fig.\,\ref{fig:db:mor:gen}
&
$\xi :
\mathrmbfit{T}_{2}\Leftarrow\mathrmbfit{R}^{\mathrm{op}}{\circ}\mathrmbfit{T}_{1}$.
&
\\\cline{3-5}
$\mathrmbf{SET}$
&
&
&
$\kappa :
\mathrmbfit{K}_{2}\Leftarrow\mathrmbfit{R}^{\mathrm{op}}{\circ\;}\mathrmbfit{K}_{1}$
&
\\\hline
$\mathrmbf{Db}(\mathcal{A})$
&
\S\,\ref{sub:sub:sec:rel:db:typ:dom:lower}
&
Fig.\,\ref{fig:db:A:mor:adj}
&
$\psi :
\mathrmbfit{T}_{2}\Leftarrow\mathrmbfit{R}^{\mathrm{op}}{\circ}\mathrmbfit{T}_{1}$
&
\\\cline{2-5}
$\mathrmbf{Db}$
&
\S\,\ref{sub:sub:sec:rel:db:typ:dom:upper}
& 
Fig.\,\ref{fig:db:mor:Db}
&
$\acute{\psi} :
\mathrmbfit{T}_{2}\Leftarrow\mathrmbfit{R}^{\mathrm{op}}\circ\mathrmbfit{T}_{1}\circ\acute{\mathrmbfit{tbl}}_{{\langle{f,g}\rangle}}$
&
\\\cline{4-5}
&
& 
&
$\grave{\psi} : 
\mathrmbfit{T}_{2}\circ\grave{\mathrmbfit{tbl}}_{{\langle{f,g}\rangle}}\Leftarrow\mathrmbfit{R}^{\mathrm{op}}\circ\mathrmbfit{T}_{1}$
&
\\\cline{3-5}
%
&
& 
Fig.\,\ref{fig:db:mor:Db}
&
{\footnotesize{$
\acute{\chi}_{{\langle{f,g}\rangle}} :
\acute{\mathrmbfit{tbl}}_{{\langle{f,g}\rangle}}\circ\mathrmbfit{inc}_{\mathcal{A}_{2}}
\Leftarrow
\mathrmbfit{inc}_{\mathcal{A}_{1}}
$}}
&
\\\cline{4-5}
&
& 
&
{\footnotesize{$
\grave{\chi}_{{\langle{f,g}\rangle}} : 
\mathrmbfit{inc}_{\mathcal{A}_{2}}
\Leftarrow
\grave{\mathrmbfit{tbl}}_{{\langle{f,g}\rangle}}\circ\mathrmbfit{inc}_{\mathcal{A}_{1}}
$}}
&
\\\hline
\end{tabular}}}}
\end{center}
\caption{Bridges}
\label{bridge:descr}
\end{table}
%





%
\begin{table}
\begin{center}
{\fbox{\footnotesize\setlength{\extrarowheight}{4pt}
$\begin{array}{|@{\hspace{5pt}}l@{\hspace{15pt}}l@{\hspace{5pt}}|}
\hline
\multicolumn{1}{|l}{\text{\bfseries levo}} & \multicolumn{1}{l|}{\text{\bfseries dextro}}
\\ \hline\hline
\acute{\varphi}:\mathrmbfit{S}_{2}\Rightarrow\mathrmbfit{R}\circ\mathrmbfit{S}_{1}\circ{f^{\ast}}
& 
\grave{\varphi}:\mathrmbfit{S}_{2}\circ{{\Sigma}_{f}}\Rightarrow\mathrmbfit{R}\circ\mathrmbfit{S}_{1}
\\
\acute{\varphi}=(\mathrmbfit{S}_{2}\circ\eta_{f})\bullet(\grave{\varphi}\circ{f^{\ast}})
& 
\grave{\varphi}=(\acute{\varphi}\circ{{\Sigma}_{f}})\bullet(\mathrmbfit{R}\circ\mathrmbfit{S}_{1}\circ\varepsilon_{f})
\\ \hline
\grave{\iota}_{{\langle{f,g}\rangle}} :
{f^{\ast}}{\circ\;}\mathrmbfit{inc}_{\mathcal{A}_{2}}
\Rightarrow
\mathrmbfit{inc}_{\mathcal{A}_{1}} 
&
\acute{\iota}_{{\langle{f,g}\rangle}} :
\mathrmbfit{inc}_{\mathcal{A}_{2}}
\Rightarrow 
{{\Sigma}_{f}}{\circ\;}{f^{\ast}} 
\\
\grave{\iota}_{{\langle{f,g}\rangle}} = 
\bigl({f^{\ast}}{\;\circ\;}\acute{\iota}_{{\langle{f,g}\rangle}}\bigr)
{\;\bullet\;}
\bigl(\varepsilon_{{\langle{f,g}\rangle}}{\;\circ\;}\mathrmbfit{inc}_{\mathcal{A}_{1}}\bigr)
&
\acute{\iota}_{{\langle{f,g}\rangle}} = 
\bigl(\eta_{{\langle{f,g}\rangle}}{\;\circ\;}\mathrmbfit{inc}_{\mathcal{A}_{2}}\bigr)
{\;\bullet\;} 
\bigl({{\Sigma}_{f}}{\;\circ\;}\grave{\iota}_{{\langle{f,g}\rangle}}\bigr)
\\ \hline
\multicolumn{2}{|c|}{
\varsigma:\mathrmbfit{Q}_{2}\Rightarrow\mathrmbfit{R}{\,\circ\,}\mathrmbfit{Q}_{1}
}
\\
\multicolumn{2}{|c|}{
(\acute{\varphi} \circ \mathrmbfit{inc}_{\mathcal{A}_{2}}) 
\bullet 
(\mathrmbfit{R} \circ \mathrmbfit{S}_{1}  \circ \grave{\iota}_{{\langle{f,g}\rangle}})
= \varsigma = 
(\mathrmbfit{S}_{2} \circ \acute{\iota}_{{\langle{f,g}\rangle}})
\bullet 
(\grave{\varphi} \circ \mathrmbfit{inc}_{\mathcal{A}_{1}}) 
}
%
\\ \hline\hline
%
\acute{\psi}:\mathrmbfit{T}_{2}\Leftarrow\mathrmbfit{R}^{\mathrm{op}}\circ\mathrmbfit{T}_{1}\circ\acute{\mathrmbfit{tbl}}_{{\langle{f,g}\rangle}}
&
\grave{\psi}:\mathrmbfit{T}_{2}\circ\grave{\mathrmbfit{tbl}}_{{\langle{f,g}\rangle}}\Leftarrow\mathrmbfit{R}^{\mathrm{op}}\circ\mathrmbfit{T}_{1}
\\
\acute{\psi}=(\grave{\psi}\circ\acute{\mathrmbfit{tbl}}_{{\langle{f,g}\rangle}})\bullet(\mathrmbfit{T}_{2}\circ\varepsilon_{{\langle{f,g}\rangle}})
&
\grave{\psi}=(\mathrmbfit{R}^{\mathrm{op}}\circ\mathrmbfit{T}_{1})\circ\eta_{{\langle{f,g}\rangle}}\bullet(\acute{\psi}\circ\grave{\mathrmbfit{tbl}}_{{\langle{f,g}\rangle}})
\\ \hline
\acute{\chi}_{{\langle{f,g}\rangle}}:\acute{\mathrmbfit{tbl}}_{{\langle{f,g}\rangle}}\circ\mathrmbfit{inc}_{\mathcal{A}_{2}}\Leftarrow\mathrmbfit{inc}_{\mathcal{A}_{1}}
&
\grave{\chi}_{{\langle{f,g}\rangle}}:\mathrmbfit{inc}_{\mathcal{A}_{2}}\Leftarrow\grave{\mathrmbfit{tbl}}_{{\langle{f,g}\rangle}}\circ\mathrmbfit{inc}_{\mathcal{A}_{1}}
\\
\acute{\chi}_{{\langle{f,g}\rangle}}=(\eta_{{\langle{f,g}\rangle}}\circ\mathrmbfit{inc}_{\mathcal{A}_{1}})\bullet(\acute{\mathrmbfit{tbl}}_{{\langle{f,g}\rangle}}\circ\grave{\chi}_{{\langle{f,g}\rangle}})
&
\grave{\chi}_{{\langle{f,g}\rangle}}=(\grave{\mathrmbfit{tbl}}_{{\langle{f,g}\rangle}}\circ{\chi}_{{\langle{f,g}\rangle}})\bullet(\varepsilon_{{\langle{f,g}\rangle}}\circ\mathrmbfit{inc}_{\mathcal{A}_{2}})
\\ \hline
\comment{
\acute{\tau}_{{\langle{f,g}\rangle}}:(f^{\ast})^{\mathrm{op}} \circ \mathrmbfit{tup}_{\mathcal{A}_{2}}\Leftarrow\mathrmbfit{tup}_{\mathcal{A}_{1}} 
&
\grave{\tau}_{{\langle{f,g}\rangle}}:\mathrmbfit{tup}_{\mathcal{A}_{2}}\Leftarrow{\scriptstyle\sum}_{f}^{\mathrm{op}} \circ \mathrmbfit{tup}_{\mathcal{A}_{1}} 
\\
\acute{\tau}_{{\langle{f,g}\rangle}}=(\varepsilon_{f}^{\mathrm{op}} \circ \mathrmbfit{tup}_{\mathcal{A}_{1}})\bullet((f^{\ast})^{\mathrm{op}} \circ \grave{\tau}_{{\langle{f,g}\rangle}})
&
\grave{\tau}_{{\langle{f,g}\rangle}}=({\scriptstyle\sum}_{f}^{\mathrm{op}} \circ \acute{\tau}_{{\langle{f,g}\rangle}})\bullet(\eta_{f}^{\mathrm{op}} \circ \mathrmbfit{tup}_{\mathcal{A}_{2}})
\\ \hline
}
\multicolumn{2}{|c|}{
\xi:\mathrmbfit{T}_{2}\Leftarrow\mathrmbfit{R}^{\mathrm{op}}{\!\circ}\mathrmbfit{T}_{1}
}
\\
\multicolumn{2}{|c|}{
(\acute{\psi}\circ\mathrmbfit{inc}_{\mathcal{A}_{1}})
\bullet
(\mathrmbfit{T}_{2}\circ\acute{\chi}_{{\langle{f,g}\rangle}})
=
\xi
=
(\mathrmbfit{R}^{\mathrm{op}}{\!\circ}\mathrmbfit{T}_{1})
\circ
\grave{\chi}_{{\langle{f,g}\rangle}}
\bullet
(\grave{\psi}\circ\mathrmbfit{inc}_{\mathcal{A}_{2}})
}
%
\\ \hline\hline
%
\acute{\tau}_{{\langle{f,g}\rangle}}:(f^{\ast})^{\mathrm{op}} \circ \mathrmbfit{tup}_{\mathcal{A}_{2}}\Leftarrow\mathrmbfit{tup}_{\mathcal{A}_{1}} 
&
\grave{\tau}_{{\langle{f,g}\rangle}}:\mathrmbfit{tup}_{\mathcal{A}_{2}}\Leftarrow{\scriptstyle\sum}_{f}^{\mathrm{op}} \circ \mathrmbfit{tup}_{\mathcal{A}_{1}} 
\\
\acute{\tau}_{{\langle{f,g}\rangle}}=(\varepsilon_{f}^{\mathrm{op}} \circ \mathrmbfit{tup}_{\mathcal{A}_{1}})\bullet((f^{\ast})^{\mathrm{op}} \circ \grave{\tau}_{{\langle{f,g}\rangle}})
&
\grave{\tau}_{{\langle{f,g}\rangle}}=({\scriptstyle\sum}_{f}^{\mathrm{op}} \circ \acute{\tau}_{{\langle{f,g}\rangle}})\bullet(\eta_{f}^{\mathrm{op}} \circ \mathrmbfit{tup}_{\mathcal{A}_{2}})
%
%
\\ \hline
\multicolumn{2}{|c|}{
\kappa{\;\bullet\;}\tau_{2}
= 
(\mathrmbfit{R}^{\mathrm{op}}{\circ\;}\tau_{1}){\;\bullet\;}
(\hat{\varphi}^{\mathrm{op}}{\;\circ\;}\hat{\tau}_{{\langle{f,g}\rangle}})
=
(\mathrmbfit{R}^{\mathrm{op}}{\circ\;}\tau_{1}){\;\bullet\;}
{\mathring{\mathrmbfit{tup}}_{{\langle{f,g}\rangle}}(\hat{\varphi})}
}
\\ 
\multicolumn{2}{|c|}{
(\mathrmbfit{R}^{\mathrm{op}}{\circ\;}\mathrmbfit{S}_{1}^{\mathrm{op}}{\!\circ\;}\acute{\tau}_{{\langle{f,g}\rangle}})
{\;\bullet\;}
(\acute{\varphi}^{\mathrm{op}}{\!\circ\;}\mathrmbfit{tup}_{\mathcal{A}_{2}})
=
\hat{\varphi}^{\mathrm{op}}{\;\circ\;}\hat{\tau}_{{\langle{f,g}\rangle}}
= 
(\grave{\varphi}^{\mathrm{op}}{\!\circ\;}
\mathrmbfit{tup}_{\mathcal{A}_{1}})
{\;\bullet\;}
(\mathrmbfit{S}_{2}^{\mathrm{op}}{\!\circ\;}\grave{\tau}_{{\langle{f,g}\rangle}})
}
\\ \hline
\end{array}$}}
\end{center}
\caption{Bridge Adjoints and Composites}
\label{adjoints:composites}
\end{table}
%



%
\comment{
\begin{table}
\begin{center}
{\scriptsize\setlength{\extrarowheight}{4pt}$\begin{array}{|@{\hspace{5pt}}l@{\hspace{15pt}}l@{\hspace{5pt}}|}
\multicolumn{1}{l}{\text{\bfseries levo}} & \multicolumn{1}{l}{\text{\bfseries dextro}}
\\ \hline
\acute{\psi}:
\mathrmbfit{T}_{2}\Leftarrow\mathrmbfit{T}_{1}{\,\circ\,}\acute{\mathrmbfit{tbl}}_{{\langle{f,g}\rangle}}
&
\grave{\psi}:
\mathrmbfit{T}_{2}{\,\circ\,}\grave{\mathrmbfit{tbl}}_{{\langle{f,g}\rangle}}\Leftarrow\mathrmbfit{T}_{1}
\\
\acute{\psi}=
(\grave{\psi}{\,\circ\,}\acute{\mathrmbfit{tbl}}_{{\langle{f,g}\rangle}})
{\,\bullet\,}
(\mathrmbfit{T}_{2}{\,\circ\,}\varepsilon_{{\langle{f,g}\rangle}})
&
\grave{\psi}=
(\mathrmbfit{T}_{1}{\,\circ\,}\eta_{{\langle{f,g}\rangle}})
{\,\bullet\,}
(\acute{\psi}{\,\circ\,}\grave{\mathrmbfit{tbl}}_{{\langle{f,g}\rangle}})
\\ \hline
\acute{\chi}_{{\langle{f,g}\rangle}}:
\acute{\mathrmbfit{tbl}}_{{\langle{f,g}\rangle}}{\,\circ\,}\mathrmbfit{inc}_{\mathcal{A}_{2}}\Leftarrow\mathrmbfit{inc}_{\mathcal{A}_{1}}
&
\grave{\chi}_{{\langle{f,g}\rangle}}:
\mathrmbfit{inc}_{\mathcal{A}_{2}}\Leftarrow\grave{\mathrmbfit{tbl}}_{{\langle{f,g}\rangle}}{\,\circ\,}\mathrmbfit{inc}_{\mathcal{A}_{1}}
\hfill\star
\\
\acute{\chi}_{{\langle{f,g}\rangle}}=
(\eta_{{\langle{f,g}\rangle}}{\,\circ\,}\mathrmbfit{inc}_{\mathcal{A}_{1}})
{\,\bullet\,}
(\acute{\mathrmbfit{tbl}}_{{\langle{f,g}\rangle}}{\,\circ\,}\grave{\chi}_{{\langle{f,g}\rangle}})
&
\grave{\chi}_{{\langle{f,g}\rangle}}=
(\grave{\mathrmbfit{tbl}}_{{\langle{f,g}\rangle}}{\,\circ\,}\acute{\chi}_{{\langle{f,g}\rangle}})
{\,\bullet\,}
(\varepsilon_{{\langle{f,g}\rangle}}{\,\circ\,}\mathrmbfit{inc}_{\mathcal{A}_{2}})
\\ \hline
\multicolumn{2}{|c|}{\xi:
\mathrmbfit{T}_{2}{\,\circ\,}\mathrmbfit{inc}_{\mathcal{A}_{2}}
\Leftarrow\mathrmbfit{T}_{1}{\,\circ\,}\mathrmbfit{inc}_{\mathcal{A}_{1}}}
\\
\multicolumn{2}{|c|}{\xi=
(\acute{\psi}{\,\circ\,}\mathrmbfit{inc}_{\mathcal{A}_{1}})
{\,\bullet\,}
(\mathrmbfit{T}_{2}{\,\circ\,}\acute{\chi}_{{\langle{f,g}\rangle}})
=
(\mathrmbfit{T}_{1}{\,\circ\,}\grave{\chi}_{{\langle{f,g}\rangle}})
{\,\bullet\,}
(\grave{\psi}{\,\circ\,}\mathrmbfit{inc}_{\mathcal{A}_{2}})}
\\ \hline\hline
\acute{\varphi} = 
\acute{\psi}^{\mathrm{op}}{\,\circ\;}\mathrmbfit{dom}_{\mathcal{A}_{2}} 
&
\grave{\varphi} = 
\grave{\psi}^{\mathrm{op}}{\,\circ\;}\mathrmbfit{dom}_{\mathcal{A}_{2}} 
\\
\acute{\varphi}:
\mathrmbfit{Q}_{2}\Rightarrow\mathrmbfit{Q}_{1}{\,\circ\,}{\acute{\mathrmbfit{dom}}_{{\langle{f,g}\rangle}}}
& 
\grave{\varphi}:
\mathrmbfit{Q}_{2}{\,\circ\,}{\grave{\mathrmbfit{dom}}_{{\langle{f,g}\rangle}}}\Rightarrow\mathrmbfit{Q}_{1}
\\
\acute{\varphi}=
(\mathrmbfit{Q}_{2}{\,\circ\,}\eta_{{\langle{f,g}\rangle}})
{\,\bullet\,}
(\grave{\varphi}{\,\circ\,}{\acute{\mathrmbfit{dom}}_{{\langle{f,g}\rangle}}})
& 
\grave{\varphi}=
(\acute{\varphi}{\,\circ\,}{\grave{\mathrmbfit{dom}}_{{\langle{f,g}\rangle}}})
{\,\bullet\,}
(\mathrmbfit{Q}_{1}{\,\circ\,}\varepsilon_{{\langle{f,g}\rangle}})
\\ \hline
\mathrmbfit{dom}_{\mathcal{A}_{2}}{\;\circ\;}\grave{\iota}_{{\langle{f,g}\rangle}} =
\acute{\chi}_{{\langle{f,g}\rangle}}^{\mathrm{op}}{\;\circ\;}\mathrmbfit{dom} 
&
\mathrmbfit{dom}_{\mathcal{A}_{2}}{\;\circ\;}\acute{\iota}_{{\langle{f,g}\rangle}} =
\grave{\chi}_{{\langle{f,g}\rangle}}^{\mathrm{op}}{\;\circ\;}\mathrmbfit{dom}
\hfill\star
\\
\grave{\iota}_{{\langle{f,g}\rangle}} :
\grave{\mathrmbfit{dom}}_{{\langle{f,g}\rangle}}{\circ\;}\mathrmbfit{inc}_{\mathcal{A}_{2}}
\Rightarrow
\mathrmbfit{inc}_{\mathcal{A}_{1}} 
&
\acute{\iota}_{{\langle{f,g}\rangle}} :
\mathrmbfit{inc}_{\mathcal{A}_{2}}
\Rightarrow 
\acute{\mathrmbfit{dom}}_{{\langle{f,g}\rangle}}{\circ\;}\mathrmbfit{inc}_{\mathcal{A}_{1}} 
\\
\grave{\iota}_{{\langle{f,g}\rangle}} = 
\bigl(\grave{\mathrmbfit{dom}}_{{\langle{f,g}\rangle}}{\;\circ\;}\acute{\iota}_{{\langle{f,g}\rangle}}\bigr)
{\;\bullet\;}
\bigl(\varepsilon_{{\langle{f,g}\rangle}}{\;\circ\;}\mathrmbfit{inc}_{\mathcal{A}_{1}}\bigr)
&
\acute{\iota}_{{\langle{f,g}\rangle}} = 
\bigl(\eta_{{\langle{f,g}\rangle}}{\;\circ\;}\mathrmbfit{inc}_{\mathcal{A}_{2}}\bigr)
{\;\bullet\;} 
\bigl(\acute{\mathrmbfit{dom}}_{{\langle{f,g}\rangle}}{\;\circ\;}\grave{\iota}_{{\langle{f,g}\rangle}}\bigr)
\\ \hline
\multicolumn{2}{|c|}{\zeta 
= \xi^{\mathrm{op}}{\,\circ\;}\mathrmbfit{dom}} 
\\
\multicolumn{2}{|c|}{\zeta:
\mathrmbfit{Q}_{2}{\,\circ\,}\mathrmbfit{inc}_{\mathcal{A}_{2}}
\Rightarrow
\mathrmbfit{Q}_{1}{\,\circ\,}\mathrmbfit{inc}_{\mathcal{A}_{1}}}
\\
\multicolumn{2}{|c|}{\zeta = 
(\acute{\varphi}{\;\circ\;}\mathrmbfit{inc}_{\mathcal{A}_{2}})
{\;\bullet\;}(\mathrmbfit{Q}_{1}{\;\circ\;}\grave{\iota}_{{\langle{f,g}\rangle}})
= 
(\mathrmbfit{Q}_{2}{\;\circ\;}\acute{\iota}_{{\langle{f,g}\rangle}})
{\;\bullet\;}
(\grave{\varphi}{\;\circ\;}\mathrmbfit{inc}_{\mathcal{A}_{1}})}
\\ \hline\hline
\acute{\kappa}_{{\langle{f,g}\rangle}}\;=\;\mathrmbf{1}:
\acute{\mathrmbfit{tbl}}_{{\langle{f,g}\rangle}}{\;\circ\;}\mathrmbfit{key}_{\mathcal{A}_{2}}
\Leftarrow
\mathrmbfit{key}_{\mathcal{A}_{1}} 
&
\grave{\kappa}_{{\langle{f,g}\rangle}}:
\mathrmbfit{key}_{\mathcal{A}_{2}}
\Leftarrow
\grave{\mathrmbfit{tbl}}_{{\langle{f,g}\rangle}}{\;\circ\;}\mathrmbfit{key}_{\mathcal{A}_{1}} 
\hfill\star
\\
\acute{\kappa}_{{\langle{f,g}\rangle}}\;=\;\mathrmbf{1} =
(\eta_{{\langle{f,g}\rangle}}{\;\circ\;}\mathrmbfit{key}_{\mathcal{A}_{1}})
{\;\bullet\;}
(\acute{\mathrmbfit{tbl}}_{{\langle{f,g}\rangle}}{\;\circ\;}\grave{\kappa}_{{\langle{f,g}\rangle}})
&
\grave{\kappa}_{{\langle{f,g}\rangle}} =
\cancel{(\grave{\mathrmbfit{tbl}}_{{\langle{f,g}\rangle}}{\;\circ\;}\acute{\kappa}_{{\langle{f,g}\rangle}})}
{\;\bullet\;}
(\varepsilon_{{\langle{f,g}\rangle}}{\;\circ\;}\mathrmbfit{key}_{\mathcal{A}_{2}})
\\ \hline\hline
\acute{\tau}_{{\langle{f,g}\rangle}}:
\acute{\mathrmbfit{dom}}_{{\langle{f,g}\rangle}}^{\mathrm{op}}{\circ\,}\mathrmbfit{tup}_{\mathcal{A}_{2}}\Leftarrow\mathrmbfit{tup}_{\mathcal{A}_{1}} 
&
\grave{\tau}_{{\langle{f,g}\rangle}}:
\mathrmbfit{tup}_{\mathcal{A}_{2}}\Leftarrow\grave{\mathrmbfit{dom}}_{{\langle{f,g}\rangle}}^{\mathrm{op}}{\circ\,}\mathrmbfit{tup}_{\mathcal{A}_{1}} 
\hfill\star
\\
\acute{\tau}_{{\langle{f,g}\rangle}}=
(\varepsilon_{{\langle{f,g}\rangle}}^{\mathrm{op}}{\circ\,}\mathrmbfit{tup}_{\mathcal{A}_{1}})
\bullet
({\acute{\mathrmbfit{dom}}_{{\langle{f,g}\rangle}}}^{\mathrm{op}}{\circ\,}\grave{\tau}_{{\langle{f,g}\rangle}})
&
\grave{\tau}_{{\langle{f,g}\rangle}}=
(\grave{\mathrmbfit{dom}}_{{\langle{f,g}\rangle}}^{\mathrm{op}}{\circ\,}\acute{\tau}_{{\langle{f,g}\rangle}})
{\,\bullet\,}
(\eta_{{\langle{f,g}\rangle}}^{\mathrm{op}}{\circ\,}\mathrmbfit{tup}_{\mathcal{A}_{2}})
\\ \hline
\multicolumn{2}{l}{\star \text{``The {\ttfamily FOLE} Table'' \cite{kent:fole:era:tbl}}}
\end{array}$}
\end{center}
\caption{Adjoints and Composites: fixed $\mathrmbf{R}$}
\label{R-adjoints}
\end{table}
}

\end{document}